\documentclass[11pt]{article}
\usepackage{amsmath,amssymb,amsthm,amsfonts}
\usepackage{latexsym}
\usepackage{epsfig}
\usepackage[right=0.8in, top=1in, bottom=1in, left=0.8in]{geometry}
\usepackage{setspace}
\usepackage{bbold}
\usepackage{mathtools}
\usepackage{dirtytalk}
\usepackage{enumerate}
\usepackage{color,soul}
\usepackage{stackengine}
\usepackage{algpseudocode}
\usepackage{algorithm}
\usepackage{multicol}
\usepackage{extarrows}
\usepackage{helvet}
\usepackage{array}
\usepackage{bigstrut}

\usepackage{scalerel,stackengine}
\stackMath
\newcommand\reallywidehat[1]{%
\savestack{\tmpbox}{\stretchto{%
  \scaleto{%
    \scalerel*[\widthof{\ensuremath{#1}}]{\kern-.6pt\bigwedge\kern-.6pt}%
    {\rule[-\textheight/2]{1ex}{\textheight}}%
  }{\textheight}%
}{0.5ex}}%
\stackon[1pt]{#1}{\tmpbox}%
}

\def\compactify{\itemsep=0pt \topsep=0pt \partopsep=0pt \parsep=0pt}
\let\latexusecounter=\usecounter
\newenvironment{enumerate*}
  {\def\usecounter{\compactify\latexusecounter}
   \begin{enumerate}}
  {\end{enumerate}\let\usecounter=\latexusecounter}

\newcommand{\N}{{\mathbf{N}}}

\newcommand{\R}{{\mathbb{R}}}

\newcommand{\eps}{\epsilon}
\newcommand{\tO}{\tilde{O}}

\newcommand{\piComp}{{\overleftarrow{\pi}}}

\newcommand{\SC}{{\mathcal{S}}}

\newcommand{\RC}{{\mathcal{R}}}
\newcommand{\XC}{{\mathcal{X}}}
\newcommand{\YC}{{\mathcal{Y}}}
\newcommand{\TC}{{\mathcal{T}}}
\newcommand{\one}[1]{{\mathbb 1}\left[{#1}\right]}
\newcommand{\Sw}{S_w}
\newcommand{\wbase}{\gamma}

\newcommand{\A}{{\mathcal{A}}}

\newtheorem{theorem}{Theorem}[section]
\newtheorem*{theorem*}{Theorem}

\newtheorem{corollary}[theorem]{Corollary}
\newtheorem{lemma}[theorem]{Lemma}

\newtheorem{definition}[theorem]{Definition}
\newtheorem{claim}[theorem]{Claim}
\newtheorem{fact}[theorem]{Fact}

\makeatletter
\newtheorem*{rep@theorem}{\rep@title}
\newcommand{\newreptheorem}[2]{%
\newenvironment{rep#1}[1]{%
 \def\rep@title{#2 \ref{##1}}%
 \begin{rep@theorem}}%
 {\end{rep@theorem}}}
\makeatother
\newreptheorem{theorem}{Theorem}

\DeclareMathOperator{\E}{{\mathbb E}}

\newcommand{\norm}[1]{\left\lVert#1\right\rVert}
\newcommand{\normo}[1]{\norm{#1}_1}

\newcommand{\un}{{\ensuremath {\mathbf u}}}
\newcommand{\cc}{{\mathbf c}}
\newcommand{\basec}{{\mathfrak c}}

\usepackage[pdftex,dvipsnames]{xcolor}  %
\usepackage[utf8]{inputenc}
\usepackage{amsmath}
\usepackage{hyperref}
\usepackage{cryptocode}   
\graphicspath{ {img/} }
\usepackage[colorinlistoftodos,prependcaption,textsize=tiny]{todonotes}
\newcommandx{\unsure}[2][1=]{\todo[linecolor=red,backgroundcolor=red!25,bordercolor=red,#1]{#2}}
\newcommandx{\change}[2][1=]{\todo[linecolor=blue,backgroundcolor=blue!25,bordercolor=blue,#1]{#2}}
\newcommandx{\info}[2][1=]{\todo[linecolor=OliveGreen,backgroundcolor=OliveGreen!25,bordercolor=OliveGreen,#1]{#2}}
\newcommandx{\improvement}[2][1=]{\todo[linecolor=Plum,backgroundcolor=Plum!25,bordercolor=Plum,#1]{#2}}
\newcommandx{\thiswillnotshow}[2][1=]{\todo[disable,#1]{#2}}

\newcommand{\ns}[1]{}
\newcommand{\aanote}[1]{}

\newcommand{\remove}[1]{}

\DeclareMathOperator{\Poi}{Poi}

\DeclareMathOperator{\den}{{d}}
\DeclareMathOperator{\relden}{{rd}}

\newcommand{\totalmu}{n}

\newcommand{\normalizedEllOne}[1]{\tfrac{#1}{\|#1\|_1}}

\def\poly{\operatorname{poly}}

\renewcommand{\bf}{\normalfont \bfseries}
\renewcommand{\sc}{\normalfont \scshape}

\renewcommand{\paragraph}[1]{\vspace{0.7ex}\noindent{\bf #1}}

\newcommand{\hkappa}{{\widehat{\kappa}}}

\newcommand{\NC}{\mathcal{N}}

\newcommand{\PC}{\mathcal{P}}
\newcommand{\CC}{\mathcal{C}}
\newcommand{\CD}{\mathcal{D}}

\newcommand{\Kappa}{\mathit{K}}

\newcommand{\WC}{\mathcal{X}}
\newcommand{\HC}{\mathcal{Y}}
\newcommand{\IC}{\mathcal{I}}
\newcommand{\JC}{\mathcal{J}}
\newcommand{\VC}{\mathcal{V}}

\newcommand{\QC}[0]{\mathcal{Q}}
\newcommand{\EC}[0]{\mathcal{E}}

\newcommand{\BC}{\mathcal{B}}

\newcommand{\cad}{\mathtt{ad}}
\newcommand{\ed}{\mathtt{ed}}
\newcommand{\distG}[0]{\mathfrak{D}}
\DeclareMathOperator{\dist}{dist}
\DeclareMathOperator{\dd}{\mathtt{dd}}

\renewcommand{\d}{{\mathsf{d}}}

\newcommand{\1}[0]{\mathbb{1}}

\title{
  Edit Distance in Near-Linear Time: it's a Constant
  Factor\footnote{An extended abstract of this submission appeared in
    the {\em Proceedings of the 61st Annual IEEE Symposium on Foundations
    of Computer Science}. Research supported in part by NSF grants (CCF-1617955 and CCF-1740833),
and Simons Foundation (\#491119). Also, this research was supported in part by a grant from the Columbia-IBM center for Blockchain and Data Transparency, and by JPMorgan Chase \& Co. Any views or opinions expressed herein are solely those of the authors listed, and may differ from the views and opinions expressed by JPMorgan Chase \& Co. or its affiliates.}}
\author{Alexandr Andoni\\Columbia University\\\texttt{andoni@cs.columbia.edu} \and Negev Shekel Nosatzki\\Columbia University\\\texttt{ns3049@columbia.edu}}

\begin{document} 

\maketitle

\begin{abstract}
We present an algorithm for approximating the edit distance between
two strings of length $n$ in time $n^{1+\eps}$ up to a constant
factor, for any $\eps>0$. Our result completes a research direction
set forth in the recent breakthrough
paper~\cite{chakraborty2018approximating}, which showed the first
constant-factor approximation algorithm with a (strongly)
sub-quadratic running time.  The recent results
\cite{koucky2019constant, brakensiek2019constant} have shown
near-linear time algorithms that obtain an additive approximation,
near-linear in $n$ (equivalently, constant-factor approximation when
the edit distance value is close to $n$). In contrast, our algorithm
obtains a constant-factor approximation in near-linear time for any
input strings.

In contrast to prior algorithms, which are mostly recursing over
smaller substrings, our algorithm gradually smoothes out the local
contribution to the edit distance over progressively larger
substrings. To accomplish this, we iteratively construct a distance
oracle data structure for the metric of edit distance on all
substrings of input strings, of length $n^{i\epsilon}$ for
$i=0,1,\ldots,1/\eps$. The distance oracle approximates the edit
distance over these substrings in a certain average sense, just enough
to estimate the overall edit distance.

\end{abstract}

\thispagestyle{empty}
\newpage
\setcounter{page}{1}

\newpage
{\hypersetup{linkcolor=black}
\tableofcontents
}
\newpage

\section{Introduction}

Edit distance is a classic distance measure between sequences that
takes into account the (mis)alignment of strings. Formally, edit
distance between two strings of length $n$ over some alphabet $\Sigma$
is the number of insertions/deletions/substitutions of characters to
transform one string into the other.
Being of
key importance in several fields, such as computational biology and
signal processing, %
computational problems
involving the edit distance were studied extensively.

Computing edit distance is also a classic dynamic programming problem, with
a quadratic run-time solution. It has proven to be a poster challenge
in a central theme in TCS: %
improving the run-time from polynomial %
towards close(r) to linear. Despite significant research attempts 
over many decades,
 little progress was obtained,
with a $O(n^2/
\log^2 n)$ %
run-time algorithm \cite{MP80} remaining the fastest %
one 
known to
date. See also the surveys of \cite{Navarro01} and
\cite{Sah-Encyclopedia}. With the emergence of the fine-grained
complexity field, researchers crystallized the reason why beating
quadratic-time is hard by connecting it to the Strong Exponential Time
Hypothesis (SETH) \cite{BI15-edit} (and even more plausible
conjectures \cite{abboud2016simulating}).

Even before the above hardness results, researchers started
considering faster algorithms that approximate edit distance. A
linear-time $\sqrt{n}$-factor approximation follows immediately from
the exact algorithm of \cite{ukkonen1985algorithms, Myers86, LMS98}, which
runs in time $O(n + d^2)$, where $d$ is the edit distance between the
input strings. Subsequent research improved the approximation factor,
first to $n^{3/7}$ \cite{BJKK04}, then to $n^{1/3+o(1)}$ \cite{BES06},
and to $2^{\tilde O(\sqrt{\log n})}$ \cite{AO-edit} (based on the $\ell_1$ embedding of~\cite{OR-edit}). In the regime of
$O(n^{1+\eps})$-time algorithms, the best approximation is $(\log
n)^{O(1/\eps)}$ \cite{AKO-edit}. Predating some of this work was the
sublinear-time algorithm of \cite{BEK+03} achieving $n^\eps$
approximation when $d$ is large.

In a recent breakthrough, \cite{chakraborty2018approximating} showed
that one can obtain constant-factor approximation in
$O(n^{12/7})$ time. Subsequent developments \cite{koucky2019constant,
  brakensiek2019constant} give $O(n^{1+\eps})$-time algorithms for
computing edit distance up to an {\em additive $n^{1-g(\eps)}$ term}
and $f(1/\eps)$-factor approximation, for some non-decreasing
functions $f,g$, and any $\eps>0$.

\paragraph{Our main result} %
is a $n^{1+\eps}$ algorithm for
computing the edit distance %
up to a constant approximation.

\begin{theorem}
  \label{thm:main}
  For any $\eps>0$, $n\ge 1$, and alphabet $\Sigma$, there is a randomized algorithm that, given two
  strings $x,y\in \Sigma^n$, approximates the edit
  distance between $x$ and $y$ in $O(n^{1+\eps})$ time up to
  $f(1/\eps)$-factor approximation, where $f(1/\eps)$ depends solely
  on $\eps$.
\end{theorem}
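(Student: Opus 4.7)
The plan is to build on the hierarchical block-decomposition framework of \cite{chakraborty2018approximating} and its near-linear-time refinements \cite{koucky2019constant,brakensiek2019constant}, and to replace the additive $n^{1-g(\eps)}$ slack of those variants by a purely multiplicative constant. At the outer level, we partition $x$ into $n/s$ disjoint blocks $X_1,\ldots,X_{n/s}$ of length $s=n^{\Theta(\eps)}$ and reduce the global edit-distance computation to (i) producing, for each $j$, an approximate distance profile from $X_j$ to the relevant length-$s$ windows of $y$, and (ii) stitching these profiles together via a shortest-path DP on a sparse DAG whose vertices are $x$-block boundaries paired with shifted offsets in $y$, and whose edges carry the local estimates. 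Step (ii) is essentially standard once step (i) is available; the real work is to realize step (i) recursively without either incurring superlinear work or accumulating additive loss.

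The new ingredient for step (i) is a two-regime treatment that prevents the additive term from ever entering the recursion. When the true block-to-window distance is below a threshold $\tau=\tau(s,\eps)$, we compute it exactly with the Landau--Vishkin-style $O(s+\tau^2)$ algorithm, which kills precisely the small-distance regime where constant-factor approximation degenerates into additive slack. When the distance exceeds $\tau$, we recurse on sub-blocks of size $s'=s^{1-\Theta(\eps)}$ and reassemble with the same sparse DP used at the top level. Combined with the triangle inequality, a careful choice of $\tau$ and $s'$ ensures that each of the $O(1/\eps)$ recursion levels only blows up distances by a factor $(1+\delta(\eps))$, so the total distortion is a constant $f(1/\eps)$.

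The main obstacle is keeping the number of candidate block-window pairs small at every level: a naive bound of $(n/s)\cdot n$ is too expensive. Following \cite{chakraborty2018approximating}, I would use an approximate proximity oracle (based on sampled substrings and triangle-inequality pruning) to retain only $\tilde O(n)$ pairs per level whose distance is plausibly below the regime threshold. The delicate point is that this oracle must itself be implemented recursively \emph{without} reintroducing the additive error we are trying to eliminate; this is achieved by asking the oracle only to distinguish distances differing by a constant multiplicative factor, a task that can be solved correctly w.h.p.\ using $s^{O(\eps)}$ samples per block together with the already-established exact base case for short edit scripts. Summing work across the $O(1/\eps)$ levels then gives total running time $n^{1+O(\eps)}$, and the final distortion is the product of the per-level constants, depending only on $1/\eps$; rescaling $\eps$ yields the theorem as stated.
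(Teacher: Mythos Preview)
Your proposal has a genuine gap at the step you yourself flag as ``delicate'': the proximity oracle that retains only $\tilde O(n)$ block--window pairs per level while making only constant-factor (not additive) errors. In the prior work you cite, the additive $n^{1-g(\eps)}$ term does \emph{not} arise from inaccurate distance estimates on identified pairs; it arises because the sampling/triangle-inequality pruning can \emph{miss} the correct match entirely for a non-negligible fraction of blocks. Concretely (this is the running example in Section~\ref{sec::prior_obstacles}): suppose an $n^{-0.01}$ fraction of $x$-blocks are ``sparse'', each having a unique cheap $y$-window, while the remaining blocks are dense. Triangle-inequality pruning has no leverage on a sparse block (there is nothing nearby to transfer information from), and random anchor sampling hits each such block only with probability $n^{-\Omega(1)}$. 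Your Landau--Vishkin base case computes distances accurately once a candidate pair is in hand, but it does nothing to help \emph{identify} which of the $n$ possible windows to pair a sparse block with; and your oracle, as described, is exactly the mechanism that in \cite{koucky2019constant,brakensiek2019constant} drops these sparse blocks and produces the additive slack. Saying the oracle ``only needs to distinguish constant multiplicative factors'' does not resolve this: distinguishing $\ed\le c$ from $\ed>2c$ for a single block against all $n$ windows in $n^{o(1)}$ time is itself the hard problem.

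A second, related issue is that you partition $x$ into $n/s$ \emph{disjoint} blocks. As discussed in Section~\ref{sec::prior_obstacles}, restricting start positions to a grid of spacing $s'$ forces an additive rounding error of order $s'\cdot n/s$ from misaligned $y$-windows; to make this multiplicative you must in effect work with \emph{all} $n$ overlapping intervals, which your DP/stitching step does not accommodate. The paper's route is quite different from your outline: it maintains an approximate metric $\distG_w$ on the full set $\IC_w$ of $2n$ overlapping intervals, builds a sparse graph $G_w$ via a coloring/potential scheme (the Interval Matching Algorithm of Section~\ref{sec::int_match}) that explicitly tracks and bounds the mass of ``corrupted'' sparse intervals across levels, and uses a separate $\cad_w$ construction (Section~\ref{sec:CADalgo}) engineered to satisfy the triangle inequality so that the pruning is sound. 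None of this machinery is replaceable by a small-distance exact base case.
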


While we do not derive the function $f(1/\eps)$ explicitly, we note
that it is doubly exponential in $1/\eps$. We present a technical
overview of our approach in Section~\ref{sec:technical}, after setting
up our notations in Section~\ref{sec:setup}. The proof of the main
theorem will follow in subsequent sections, in particular the
top-level algorithm and its main guarantees are in
Section~\ref{sec:topAlgo}.

\subsection{Related work}

A quantum algorithm for edit distance was introduced in
\cite{boroujeni2018approximating}. Some of the basic elements of the
algorithmic approach are related to
\cite{chakraborty2018approximating} (and the algorithm in this
paper). Another recent related paper is \cite{GRS19}, who obtain
$3+\eps$ approximation in $\tilde O(n^{1.6})$ time; independently, the
first author obtained a slightly worst time for the same
approximation~\cite{andoni18-edit3}. Similarly, independently,
\cite{chakraborty2018online} and \cite{andoni18-edit3} extended
the constant-factor edit distance algorithm from
\cite{chakraborty2018approximating} to solve the text searching
problem.

Sublinear time algorithms have drawn renewed
attention~\cite{goldenberg2019sublinear, kociumaka2020sublinear,
  brakensiek2020simple, bringmann2022almost, gkks22}; see also
earlier~\cite{BEK+03, BJKK04, AO-edit}. Another related line of work has
been on computing edit distance for the semi-random models of
input~\cite{AK-smoothed, kuszmaul2019efficiently}. Parallel (MPC)
algorithms were developed
in~\cite{boroujeni2018approximating,hajiaghayi2019massively}.

Progress on edit distance algorithms also inspired the first
non-trivial algorithms for approximating the longest common
subsequence (LCS)
\cite{hajiaghayi2019approximating,rubinstein2019approximation,rubinstein2020reducing,
bringmann2021linear},
     leading to a linear time, $n^{o(1)}$-approximation algorithm \cite{andoni2021estimating,
nosatzki2021approximating}. Also of note is \cite{rubinstein2020reducing} which shows that a $O(1)$-factor
approximation to edit distance implies a $2-\Omega(1)$ factor approximation
to LCS over a binary alphabet in (essentially) the same time.

\subsection{Acknowledgements}
We would like to thank FOCS and SICOMP anonymous reviewers for helpful comments and suggestions that helped improve this paper.

\section{Preliminaries: Setup and Notations}
\label{sec:setup}

Fix a pair of strings $(x,y) \in \Sigma^n \times \Sigma^n$ for which
we care to estimate the edit distance. 
We define $\ed_n(x,y)$ as half the number
of insertions/deletions to transform one string into the other. Note
that the standard edit
distance (allowing substitutions) can be reduced to this case (see, e.g.,~\cite{tiskin2008semi}). When length $n$ is clear from the context, we omit the subscript.

$\Sw$ is the set of integer powers of 2 up to $w$: namely, $\Sw=\{1,2,4,8\ldots
w\}\cup \{1/2,1/4\ldots 1/2n\}$.

$[n]$ denotes the set $\{1,2,3,\ldots n\}$
throughout the paper except where stated explicitly otherwise
(notably, in Section~\ref{sec:CADalgo}).

When describing intuitive parts, we sometimes use $O^*(f(n))$ to
denote $f(n)\cdot n^{O(\eps)}$ (where $\eps$ is the small constant
from the algorithm).

Finally, we use the standard notion of {\em with high probability}
(whp), meaning with probability at least $1-n^{-C}$ for large enough
constant $C>1$.

\subsection{Intervals}

An interval is a substring $x[i:j]\triangleq x_ix_{i+1}\ldots
x_{j-1}$, for $i,j\in[n]$, where $i\le j$ (i.e., starting at $i$ and
ending at $j-1$, of length $j-i$).

For $i \in [n]$, let $X_{i,w}$ ($Y_{i,w}$) denote the interval of $x$
($y$) of length $w$ starting at position $i$. Let $\WC_w, \HC_w$ the
set of all such $X_{i,w}$ and $Y_{j,w}$ strings respectively. We use
$\IC_w = \WC_w \cup \HC_w$ to denote all $x$ and $y$ axis
intervals. When clear from context, we drop subscript $w$.

By convention, if $i\not\in [1,n-w]$, we pad $X_i/Y_i$ with a default
character, say, \$. Also $Y_{\bot,w}$ is a string of unique
characters. In particular, for all distance functions $\tau_w(\cdot,
\cdot)$ on two length-$w$ strings in this paper, we define
$\tau_w(X_{i,w}, Y_{\bot,w})=w$; e.g., $\ed_w(X_{i,w},Y_{\bot,w})=w$.

Usually, by $I\in \IC_w$ we refer not only to the corresponding
substring but also to the ``meta-information'', in particular the string
it came from, start position, and length (e.g., for $I=X_{i,w}$, the
meta-information is $x,i,w$). This difference will be clear from
context or stated explicitly.

In particular, the notation
$I+j$, for an interval $I$ and integer $j$,
represents the interval $j$ positions to the right; e.g.,  if
$I=X_{i,w}$, then $I+j=X_{i+j,w}$.

\paragraph{Alignments.} An alignment between $x$ and $y$ %
is a function $\pi:[n]\to
[n]\cup \{\bot\}$, which is injective and strictly monotone on
$\pi^{-1}([n])$. The set of all such alignments is called $\Pi$. Note
that $\ed(x,y)=\min_{\pi\in \Pi} \sum_{i\in[n]} \ed_1(x_i,y_{\pi(i)})$
(recall that, by convention, $\ed_1(c,y_\bot)=1$ for all $c\in
\Sigma$).

It is convenient for us to think of $\pi$ as function from $\IC
\rightarrow \IC$, via the following extension.  For a given input
alignment $\pi: \XC \rightarrow \YC \cup \{ \bot \}$, its extension
$\widehat{\pi}: \IC \rightarrow \IC \cup \{ \bot \}$ is:

$$\widehat{\pi}[I] = \begin{cases}
\pi[I] & I \in \XC \\
\pi^{-1}[I] & I \in \YC
\end{cases},$$
where $\widehat\pi[X_{i,w}]$ means $Y_{\pi(i),w}$, and $\widehat\pi[Y_{j,w}]$
means $X_{\pi^{-1}(j),w}$, with $\pi^{-1}(j)=\bot$ if there's no $i$
with $\pi(i)=j$.
Throughout this paper, we overload notation to use $\pi$ for the
extension $\widehat\pi$ as well. We also define $\piComp(i)$ as the minimum $\pi(j)$, $j\ge i$, which is
defined ($\neq \bot$).

Finally, we also define $\pi(i)\triangleq i$ when $i< 1$ and $i>n$ for convenience.

\subsection{Interval distances}

Our algorithms will use distances/metrics over intervals in
$\IC_w$. One important instance is the {\em alignment distance},
denoted $\cad_w(\cdot,\cdot)$. At a high level, $\cad_w(\cdot,\cdot)$ is a
distance metric that approximates edit distance on length-$w$
intervals. We discuss $\cad(\cdot,\cdot)$ metric in
Section~\ref{sec:topAlgo} as well as~\ref{sec:CADalgo}.

\begin{definition}[Neighborhood]
Fix $c \geq 0$ and $I \in \IC_w$. The {\em $c$-neighborhood} of $I$ is
the set $\NC_c(I) = \{ J \in \IC_w \mid \cad(I,J) \leq c \}$, i.e. all
$x$ and $y$ intervals which are $c$-close to $I$ in terms of their
alignment distance.
\end{definition}

\begin{definition}[Ball of intervals]
A {\em ball of intervals} is a
set of consecutive intervals in either $\XC_w$ or $\YC_w$ (i.e., it's
a ball in the metric where distance between $X_i$ and $X_j$ is
$|i-j|$). The {\em smallest enclosing ball} of a set $\SC$ is the
minimal ball $\BC \supseteq \SC$.
\end{definition}

\paragraph{Average approximation for an optimal alignment.}
A common theme in our algorithm is constructing metrics on $\IC_w$
approximating $\ed_w$ in a certain ``average
sense''. In particular, this differs from the standard notion of
approximation in that the upper bound holds only on average, and for an
optimal alignment $\pi\in \Pi$. Formally, we define:

\begin{definition}[Align-approximation]
  \label{def:alignApp}
	Fix space $(\IC_w,\d)$ over $\IC_w$ (which is
        often a metric space, but need not be). We
        say $\d$ {\em $T$-align-approximates} $\ed$ if the following holds:
	\begin{enumerate}
		\item For all $I,J \in \IC_w$: $\d(I,J)\ge\ed_w(I,J)$.
		\item 
		$
\min_{\pi\in \Pi}\sum_{i\in [n]}
    \tfrac{1}{w}\d(X_{i},Y_{\pi(i)})\le T\cdot \ed(x,y).
    $

	\end{enumerate}
\end{definition}

\subsection{Operations on sets and the $*$ notation}

By convention, applying numerical functions to a set refers to the sum
over all set items; e.g., $f(S)=\sum_{i\in S} f(i)$.  When applying
set operators on other sets, we use the union; e.g., $\pi(\SC) =
\cup_{I \in \SC} \pi(I)$ and $\NC_c(\SC) = \cup_{I \in \SC}
\NC_c(I)$. Abusing notation, we use $f(S)$ even when $S$ is not fully
contained in the domain of $f$ --- in which case, we simply ignore
elements outside the domain. Any exception to the above will be
clearly specified.

We also use the notation $*$ as argument of a function, by which we
mean a vector of all possible entries. E.g., $f(*)$ is a vector of
$f(i)$ for $i$ ranging over the domain of $f$ (usually clear from the
context). Similarly, $f(*_\RC)$ means a vector of $f(i)$ for $i$
satisfying property $\RC$. Overloading notation, sometimes $f(*_\RC)$
will also mean a vector of $f(i)$ for all $i$ in the domain, with
coordinates $i\notin\RC$ being zeroed-out.

\section{Technical Overview}
\label{sec:technical}

\subsection{Prior work and main obstacles}
\label{sec::prior_obstacles}
As our natural starting point is the breakthrough $O(n^{12/7})$-time
algorithm of \cite{chakraborty2018approximating} (and related~\cite{boroujeni2018approximating}), we first describe
their core ideas as well as the challenges to obtaining a near-linear
time algorithm.  In particular, we highlight two of their enabling
ideas. At a basic level, their algorithm computes edit distance
$\ed_n(x,y)$ by computing $\ed_w$ between various length-$w$ {\em
  intervals} (substrings) of $x,y$ recursively, and then uses
edit-distance-like dynamic programming on intervals to put them back
together. The main algorithmic thrust is to reduce the number of
recursive $\ed_w$ computations: e.g., if the intervals are of length
$w$, and we only consider non-overlapping intervals, there are still
$n/w\times n/w$ calls to $\ed_w$, each taking at best $\Omega(w)$
time.  Hence, \cite{chakraborty2018approximating} employ two ideas to
do this efficiently: 1) use the triangle inequality to deduce distance
between pairs of intervals for which we do not directly estimate
$\ed_w$, 2) two nearby $x$-intervals (e.g., consecutive) are likely to
be matched into two nearby $y$-intervals (also consecutive) under the
optimal edit distance alignment $\pi$. (The earlier quantum
result~\cite{boroujeni2018approximating} employed the first idea
already, but relied on a quantum component instead of the second
idea.) Indeed, these ideas are enough to
reduce the number of recursive calls from $(n/w)^2$ to $\approx
(n/w)^{1.5}$.

One big challenge in the above is that, in general, one has to
consider all, overlapping intervals from $x,y$, of which there are $n$
--- since, in an optimal $\ed_n$ alignment, an $x$-interval might have
to match to a $y$-interval whose start position is far from an integer
multiple of $w$. An alternative perspective is that if one considers
only a restricted set of interval start positions, say every $s\le w$
positions in $y$, then one obtains an extra {\em additive} error of
about $s\cdot n/w$ from the ``rounding'' of start positions in
$y$. That's the reason that a bound of $(n/w)^{1.5}$ recursive calls
did not transform into $n^{1.5}$ runtime
in~\cite{chakraborty2018approximating}: to compute edit distance when
$\ed<n^{1-\Omega(1)}$, they employ a standard (exact) $\tilde
O(n+\ed^2(x,y))$ algorithm~\cite{ukkonen1985algorithms, Myers86}.

Recent improvements by \cite{koucky2019constant,
  brakensiek2019constant} showed how to reduce the number of recursive
calls to $\approx n/w$, but some fundamental obstacles remained. The
linear number of recursive calls was leveraged to obtain near-linear
time but with an additive approximation only: when $\ed(x,y)\ge
n^{1-\delta}$, the overall runtime is $n^{1+f(\delta)}$ for some
increasing function $f$.

In particular, in addition to the aforementioned challenge, a new
challenge arose: to be able to reduce to near-linear number of
recursive $\ed_w$ calls, the algorithms from \cite{koucky2019constant,
  brakensiek2019constant} might miss a large fraction of ``correct''
matches. In particular this fraction is $\approx n^{-\delta}$, which
results in an additive error of $\approx n^{1-\delta}$.  To put this
into perspective, for $w=\sqrt{n}$, if we allow an additive error
$n^{1-\delta}$, then it suffices to analyze $b=n^{0.5 + O(\delta)}$
intervals (which barely overlap) and misclassify $b \cdot n^{-\delta}$
of them. We use the following example to showcase the challenges: 

\vspace{3mm}

\noindent
\framebox{\parbox{\dimexpr\linewidth-2\fboxsep-2\fboxrule}{\itshape%
\textbf{A running example illustrating the challenges}

Consider an instance
where $\Delta = n^{-0.01}$ fraction of intervals in $\XC_w$ are
``sparse'' --- have a single cheap match in $\YC_w$ --- and the rest
of the intervals are {\em dense} (they are close to many other
intervals). Assume further that such sparse intervals are spread
around in multiple ``sparse sections'' (sequences of consecutive
sparse intervals). }}

\vspace{3mm}

Note that if we can afford large additive errors,
we can simply {\em ignore} all these sparse intervals (certifying them
at max cost $w$) and output the distance based on the dense intervals
only, with at most $n\Delta=n^{0.99}$ additive approximation. To avoid
this, one must first identify some sparse intervals (since the dense
intervals do not provide sufficient information about the sparse
sections). Even if we manage to find some of the sparse intervals
efficiently, we still need to apply knowledge of the location of such
intervals to deduce information on other intervals which might be in
completely different areas in the string. 
We will return to this
example later.

Below we describe the high-level approach to our algorithm, including
how we overcome these obstacles. We note that, while our algorithm is
based on the two key ideas from~\cite{chakraborty2018approximating},
the high-level algorithm departs from the general approach undertaken
in \cite{chakraborty2018approximating, koucky2019constant,
  brakensiek2019constant}. That said, some algorithmic steps are
similar to those developed in~\cite{koucky2019constant,
  brakensiek2019constant}. We do not rely on previous results (for any
distance regime) such as~\cite{ukkonen1985algorithms}.

\subsection{Our high-level approach}

While there are many ideas going in overcoming the above challenges,
one common theme is {\em averaging over the local proximity of
  intervals}. In particular, the algorithm proceeds by, and analyzes
over, ``average characteristics'' of various intervals of $x,y$, in a
``smooth'' way. For example decisions for a fixed interval
$I\in\IC_w$, such as whether something is close, or something is
matched, are done by considering the statistics collected on nearby
intervals (to the left/right of $I$ in the corresponding
string). While we expand on our technical ideas below, this is the
guiding principle to keep in mind.

Addressing the first challenge, we consider intervals (of fixed length
$w$) at all $n$ starting positions, i.e., the entire set $\IC_w$. Note
that recursion becomes prohibitive: we can't perform even $n$ edit
distance evaluations each taking $\Omega(w)$ time ($w$ is set to be
$\approx n^{1-\eps}$). Instead, our top-level algorithm iterates
bottom--up over all interval lengths $w=\wbase, \wbase^2,\ldots n$,
where $\wbase=n^\eps$, and for each $w$ computes a good-enough
approximation to {\em the entire metric} $(\IC_w, \ed_w)$. Recall that
$\IC_w=\WC_w\cup \HC_w$ consists of all $w$-length intervals
(substrings); i.e., $|\IC_w|=2n$. The metric, termed
$\distG_w(\cdot,\cdot)$, will be accessible via a distance oracle
(fast data structure), with $n^\eps$ query time, and will approximate
the distance between most of the pairs (an $x$-interval, $y$-interval)
that participate in an optimal alignment in an average
sense. Specifically, for $\pi$ ranging over all alignments, we will have
that $\min_\pi \tfrac{1}{w}
\distG_w(X_{i,w},Y_{\pi[i],w})=\Theta(\ed(x,y))$. Formally, we say
$\distG_w$ {\em $O(1)$-align-approximates $\ed$} (see
Def.~\ref{def:alignApp}).

In each iteration, we build $\distG_w$ using $\distG_{w'}$, where
$w'=w/\wbase$. Conceptually we do so in two phases. First, we build
another metric on $w$-length strings, $(\IC_w,\cad_w)$,  
accessible via a fast distance oracle, that uses
$\gamma^{O(1)}=n^{O(\eps)}$ time and $\distG_{w'}$ oracle
calls. Crucially, $\cad_w$ will similarly align-approximate
$\ed$. Second, equipped with a fast oracle for $\cad_w$ (itself
using $\distG_{w'}$), we build an ``efficient representation'' for the
entire metric $(\IC_w,\cad_w)$, while using only $n^{1+O(\eps)}$ calls
to $\cad_w$ oracle. Naturally, this ``efficient representation'' will
not be able to capture the entire $\cad_w$ metric (that would require
$\gg n$ query complexity), but it will capture just enough to preserve
the edit distance between $x$ and $y$---again, formally,
align-approximate $\ed$. Then we build an efficient distance
oracle for this efficient representation, which will yield the desired
metric $\distG_w$. Note that the final approximation to $\ed(x,y)$ is
computed by (essentially) querying $\distG_n(X_{1,n}, Y_{1,n})$.

In particular, the ``efficient representation'' of $\cad_w$ is a
weighted graph $G_w$ with vertex set $\IC_w$ and $n^{1+\eps}$ edges,
such that the shortest path between $I,J\in\IC_w$ approximates
$\cad_w(I,J)$, again, in an average sense for an optimal alignment. In
particular, the shortest path distance is non-contracting, and
non-expanding for interval pairs that ``matter'', i.e., which are part
of the optimal alignment $\pi$ corresponding to $\ed(x,y)$. An edge
$(I,J)$ of the graph $G_w$ will always correspond to an explicit call
to $\cad_w(I,J)$; and the main question in constructing $G_w$ is
deciding {\em which} $n^{1+\eps}$ pairs to compute $\cad_w$ for.

Once we have the graph $G_w$, we build a fast distance oracle data
structure on it to obtain the metric $\distG_w$. In particular, our fast
distance oracle is merely an embedding of the shortest path metric on $G_w$ into
$\ell_\infty^{d}$, where $d=|\IC|^{\eps}$, incurring an approximation
of $O(1/\eps)$, via \cite{Matousek1996}. We note that we cannot use
some other common distance oracle, such as, e.g.,
\cite{Thorup:2005:ADO:1044731.1044732,Chechik:2014:ADO:2591796.2591801},
because they do not guarantee that the resulting output is actually a
metric, and in particular, that it satisfies the triangle inequality,
which is crucial for us (as mentioned above). We remark that
this particular step is somewhat reminiscent of the approach from~\cite{AO-edit},
who similarly build an efficient representation for the metric
$(\IC_w,\ed_w)$ using metric embeddings. However, the similarity
ends here: first \cite{AO-edit} used Bourgain's embedding into $\ell_1$,
which incurs $\Theta(\log n)$ distortion, and second, more
importantly, the construction of $G_w$ was altogether different
(incurring a much higher approximation).

Computing the graph $G_w$ itself is the most algorithmically novel
part of our approach, and is termed {\em Interval Matching Algorithm},
as it corresponds to matching intervals that are close in $\cad_w$
distance.  This algorithmic part should be thought of as the analogue
of the algorithm deciding for which pairs of intervals to
(recursively) estimate the edit distance
in~\cite{chakraborty2018approximating}.

We sketch the Interval Matching Algorithm next in this technical
overview. We also sketch how to compute the $\cad_w$ distance in
$n^{O(\eps)}$ time, which presents its own new challenges, especially
to guarantee its metric properties.

\subsection{Interval matching algorithm}
\label{sec:toMatching}

The main task here is to efficiently compute a graph $G_w$ that
approximates $\cad_w$, in an average sense over an optimal alignment
$\pi$.
Specifically, to generate $G_w$, we iterate
  over all costs $c \in S_w$ (powers of 2), and for
  each such cost, we then generate a (sub-)graph $G_{w,c}$, with edges
  of weight $\Theta(c)$. The following is the main guarantee: for any
  fixed alignment $\pi$, for any
  pair $(I,\pi[I])\in \IC_w^2$ at a distance $\cad(I,\pi[I])\le c$, we generate a
  1- or 2-hop path for it in $G_{w,c}$---except for $O(k_c)$ such
  pairs $(I,\pi[I])$ where $k_c$ is the number of pairs with
  $\cad(I,\pi[I])> c$ (i.e., the ``error'' increases by at most a
  constant factor). The union\footnote{When there are multiple 
    edges $(I,J)$ from different $c$'s, we naturally take the minimum-weight
    edge---i.e., the smallest distance certificate.}  of such graphs
  $G_{w,c}$ yields the final graph $G_w$. Below we focus on a single
  scale graph $G_{w,c}$, which is supposed to capture nearly all pairs
  $(I,\pi[I])$ where $\cad_w(I,\pi[I])\le c$. We refer to such a pair as
  a {\em $\pi$-matchable pair} $(I,\pi[I])$.

At its core, our algorithm can be thought of as a {\em partitioning}
algorithm, where we partition $\IC_w$ into sets of intervals, such
that for nearly all $\pi$-matchable pairs $(I,\pi[I])$, both intervals
belong to the same set. We start with a single set of intervals and we
iteratively partition the set into progressively more refined
partitions (consisting of smaller parts), with the goal of keeping
$\pi$-matchable pairs $I,\pi[I]$ together. (This algorithm will use
a significant amount of notation, and, while this high-level overview
will mention only a fraction of them, the reader may refer to the
Table~\ref{tab:notation-new} in Sec.~\ref{sec::match-correctness} for some important definitions and
formulas.)

In particular, the matching algorithm proceeds in $\approx 1/\epsilon$
steps. In each step $t$, for $\lambda=n^\eps$, we generate $\lambda^t$
parts, each of size $\lesssim n/\lambda^{t}$. To construct a part, we sample a random interval $A$, termed
{\em anchor}, and estimate $\cad_w(A,I)$ for all other intervals
$I$ in its part, generating a {\em cluster} of
intervals at distance $O(c)$ from the anchor. The main desideratum is
that the two intervals from a $\pi$-matchable pair $I,\pi[I]$ are either
both close to $A$ or both far from $A$, and hence always remain together (this is
related to the triangle inequality idea
from~\cite{chakraborty2018approximating}). However, this cannot be
guaranteed, and ensuring this desideratum is a major challenge for us,
which we will address later. For now, in order to build intuition, we
first develop our ideas under the following
the assumption, that the desideratum holds:
\begin{center}
{\em Perfect Neighborhood Assumption} (PNA): \\any two intervals are at
distance either $\le c$ or $\omega(c)$; hence %
$\NC_{O(c)}(I)=\NC_{c}(I)$.
\end{center}

Every anchor will generate precisely one part, of target size. Notice that if the cluster is sufficiently large (i.e., $\gtrsim n/\lambda^{t}$), we are basically done, and in fact, is where our algorithm ``converges'' (as will be described later). Otherwise, 
we use the cluster to construct one part (set) by taking the clustered intervals together
with their {\em local extensions}: intervals around the clustered intervals
(to the left/right of the clustered ones). The parameters are
set up such that the resulting part has size $\lesssim n/\lambda^{t}$.
Note that the iterative nature of the process helps ensure the runtime:
As the  
size of parts  decreases with step $t$, we can afford to use
more anchors. In particular, at step $t$, we start with $\lambda^{t-1}$
partitions, each of size about $n/\lambda^{t-1}$, and hence, for each
of $\lambda^t$ anchors, we need to estimate $\cad$ distance to
$n/\lambda^{t-1}$ intervals (in its part), for an overall of $n\lambda$ distance
computations.

\begin{figure}[h!]
\centering
\includegraphics[width=0.8\textwidth]{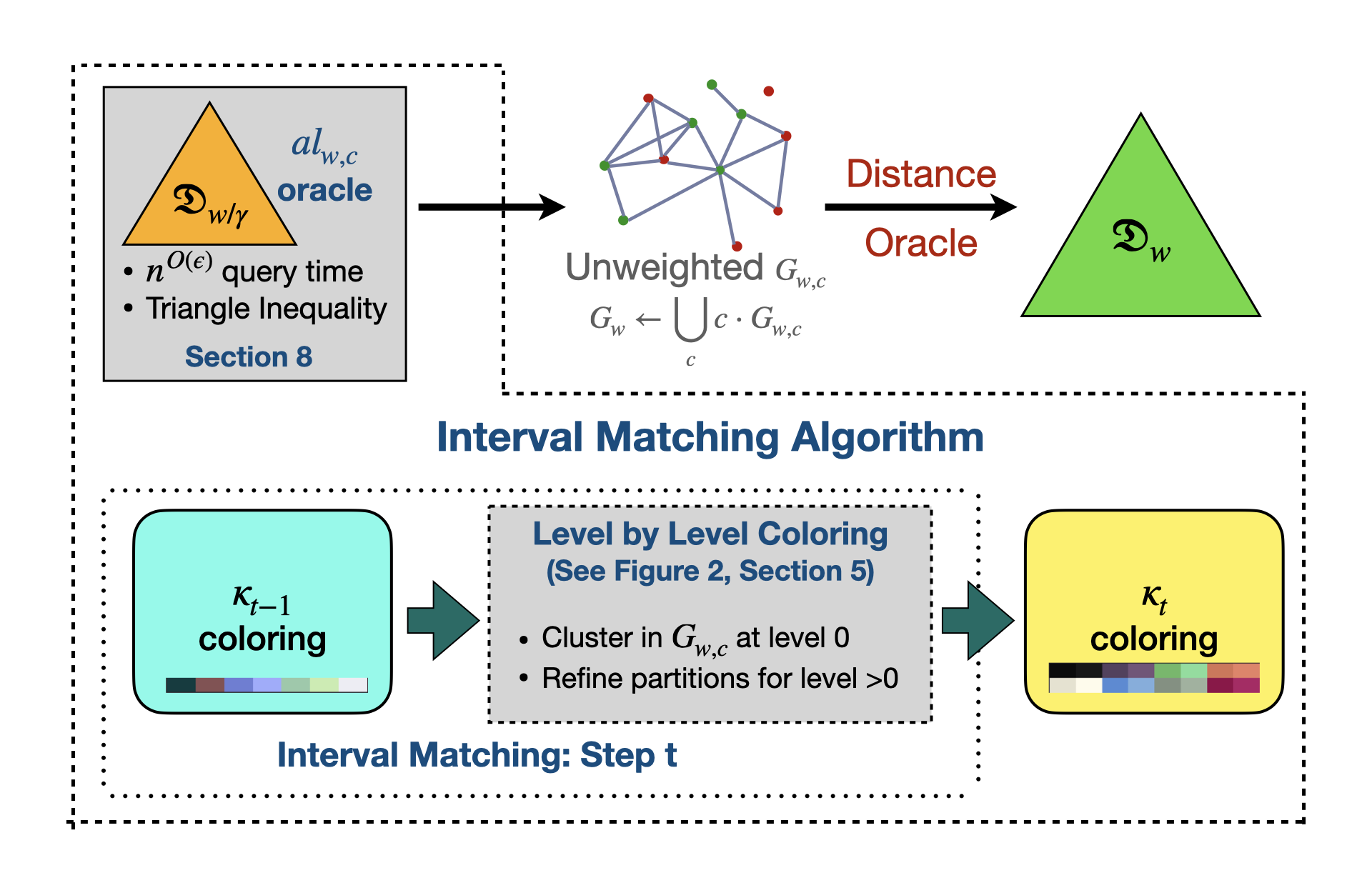}
\caption{High Level Algorithm Scheme (Section \ref{sec:topAlgo})}
\end{figure}

A direct implementation of partitions as above however runs in various
issues, yielding additive errors. In particular, it only guarantees to
correctly partition ``any fixed $\pi$-matchable pair with some
probability''; instead of the needed ``with some probability, all
except a few $\pi$-matchable pairs are partitioned correctly'' (akin
to the ``for each'' vs ``for all'' guarantees). For the latter goal,
bounding the ``except a few'' so that it's only a $O(1)$-factor
approximation, we use the notion of {\em corruption}, defined
later.

\vspace{2mm}
\paragraph{Colorings.}
To describe a partition, we use the slightly generalized concept of a
{\em coloring}: a coloring $\kappa$ is a mapping from each interval
$I\in\IC$ to a {\em distribution of colors} in a color-set $\nu$,
where a fixed color should be thought of as a part. We denote the
mapping by $\mu_\kappa: \IC\times \nu \rightarrow [0,1]$. For each
interval $I$, we require $\|\mu_\kappa(I,*)\|_1=1$, i.e., we think of
the interval $I$ as being split into fractions each assigned to a
part: fraction $\mu_\kappa(I,\chi)>0$ is assigned to color
$\chi$. While under the ``perfect neighborhood assumption'',
standard partitions are sufficient (i.e., $\mu_\kappa\in\{0,1\})$), fractional
colorings will be crucial for removing the assumption later.

Most of the colors $\chi \in \nu$ correspond to a part constructed
from a fixed anchor (i.e., its cluster of intervals together with the
local extension) with the exception of two special ``colors'':
$\nu=[\lambda^t]\cup\{\bot,\un\}$. First, the color $\bot$ that
corresponds to the {\em already-matched} intervals, i.e., intervals
for which we've already added a short path to their $\pi$-match in the
graph $G_{w,c}$ (typically ``dense'' intervals that have already
``converged''). Second, the \un-color (``uncolored'') consists of
intervals which so far have failed to be captured in a part and remain
``tbd'' (here, the progress will be that such intervals gain a certain
``sparsity'' properties, improving the chances to be colored later).

\vspace{2mm}
\paragraph{Coloring construction via potentials.} To construct a new,
more refined step-$t$ coloring (from the step-$(t-1)$ coloring), our algorithm assigns {\em
  potential scores} to clustered intervals. Using these potential
scores, we assign colors\footnote{We'll often just use the verb
  ``color'' to describe that process.} to other nearby intervals in
their proximity, as suggested above.  The main intuition is that a
$\pi$-matchable pair $(I,\pi[I])$ typically has a large set of other
$\pi$-matchable $(J,\pi[J])$ in its respective proximity (i.e., to
the left/right).

How large of a ``proximity'' a cluster can color depends on the size
of the $\cad$-neighborhood of $A$ (its cluster size). To quantify
this, we introduce the notion of {\em density} of an interval $I$ of
color $\chi$, termed $\den(I,\chi)$: the measure $\mu$ of its
$\cad$-neighborhood $\NC_{c}(I)$ that share the color $\chi$.  If an interval
$I$ (and hence its aligned $\pi[I]$) is ``dense'' (large $\NC_c$),
then we have a higher probability to cluster such a pair to an anchor;
but we can only afford a small extension for each one (i.e., each
clustered interval is used to color few other proximal intervals). In
contrast, ``sparse'' matches will be clustered with a small
probability, but can be used to generate large extensions in their
proximity.

In particular, to compute the new step-$t$ coloring, we color the intervals gradually
in levels, indexed by $l=0,1,\ldots, 1/\eps$, each level taking care
of a density scale. In each level $l$, we define potentials
$\phi$ and $\varphi$. First, for each anchor $A$ corresponding to
color $\chi$, we allocate
potential $\phi(I,\chi)\approx \tfrac{n}{d \cdot \lambda^t}$ to each clustered
interval $I$ (i.e., $I$ at distance $O(c)$ from $A$ in the same
``part''\footnote{More precisely, to the fraction of $I$ that shares a
  specific sampled color $\chi$ with $A$. } as $A$) of density
$d$. Next, we define a derivative potential $\varphi(*,\chi)$ by splitting
the allocated potential $\phi(I,\chi)$ 
 across the $\un$-colored intervals in a
proximity ball of radius $\zeta=\zeta(l)\approx n^{\eps l}$ 
around each
clustered interval $I$.   
At the end of each level, we augment %
$\mu_\kappa(I)$ by replacing
some of the $\un$-color mass with other colors, proportional to the
potential vector $\varphi(I,*)$ (to be discussed further later)\ns{was:At the end of each level, we add color to an interval $I$ by moving
some of the $\un$-color mass into proper colors, proportional to the
potential $\varphi(I,*)$ (to be discussed further later)}. Overall, the following is a high-level diagram of algorithm
computation from a $(l-1)$-level coloring $\kappa$ to an ``amended''
$l$-level coloring $\hat\kappa$ (all at the same step $t$):
$$
\mu_\kappa
\quad \xlongrightarrow{\text{clustering}} \quad\phi_\kappa\quad
\xlongrightarrow{\text{extension: splitting in the proximity ball}} \quad\varphi_\kappa\quad
\xlongrightarrow{\text{minhash (under PNA)}}\quad \mu_{\hat\kappa}
$$

A more detailed diagram and a step-by-step coloring example (both require more detailed technical setup) are presented, respectively, in Section~\ref{sec::int_match}, Figure~\ref{fig::level_coloring} and Figure~\ref{fig::level_potentials}\ns{need to fix wrong reference for Figure~\ref{fig::level_potentials} some reason...}. 

In each level $l\ge1$, our goal is to color all intervals of density
in a certain range $[n^{-\epsilon},1]\cdot d_l$, where $d_l\approx
\tfrac{n/\lambda^t}{\beta^l}$, for $\beta=n^\eps$,
as long as there are
sufficiently many intervals of that density range
overall\footnote{Notice there can be multiple matches in a ball,
  hence the quantity we care about (and bound) is the {\em relative
    density} which is the ratio between ``global'' and ''local''
  densities.}.
  At level
$l=0$, corresponding to the highest density for step $t$, we generate a 2-hop path for each pair of intervals in $A$'s cluster by ``planting a star'' in $G$; i.e., adding an
edge
between the anchor $A$ and each corresponding clustered
interval $I$. Then, we mark (fraction of) $I$ as ``already-matched'' with
the color $\bot$. In other levels, we color intervals found through
extensions of clusters using $\varphi$.

The remaining case --- ``sparse'' intervals that we did not yet color via an
anchor cluster extension as above --- will be addressed by the careful
use of $\un$-color, described next. We remark that, at the end of step
$t$, there may be left some pairs of small densities which we still could
not color, and are left \un-colored, and we will show a
bound on those as well. We now expand on the latter.

\vspace{2mm}
\paragraph{Controlling sparse sections: the $\un$-color.} 
In order to carry out the level-by-level coloring, we use the special
color \un\ (for un-colored). This color should be thought of as a
``part'' in the partition as well. At the beginning of a step, at level $l=0$,
all (fractions of) intervals which are not ``already-matched'' are assigned the $\un$ color, and (fractions of) intervals are moved from
$\un$-color to ``standard'' colors $\in[\lambda^t]$ as levels
progress. 
The $\un$-color helps with three
aspects. First, it provides a way to track {\em sparse intervals} which
cannot yet be colored and hence left pending
for future levels.  Second, if some sparse sections of intervals are
never colored in the current step,
then these intervals
will remain $\un$-colored (and hence, at the end of the current step,
form a part that is also bounded in size). Third, and more
nuanced, it allows us to ``group together'' sparse sections of
intervals that are far apart (in their starting index).

In particular, such grouping of far intervals is done using the
aforementioned ``proximity balls'', formally defined via {\em
  $\Lambda$-balls}: $\Lambda_\kappa^\zeta(I)$ is the smallest interval
ball around $I$ containing $\zeta\approx \beta^l$ \un-colored $\ell_1$-mass on both left and right of $I$. Note that
$\Lambda$-balls can contain a significantly larger set of intervals
than $\zeta$ (if the in-between intervals are mostly colored $\neq \un$). At the
same time, the ball contains at most $2\zeta$ mass of \un-colored
intervals, meaning that the potential $\phi(I)$ is distributed to a
mass $\mu \leq 2\zeta$ of intervals, ensuring that, were $I$ to be
``corrupted'' (e.g., $\pi[I]=\bot$, or the pair $(I,\pi[I])$ happens to be already
separated), we will incur only an $O(1)$ factor of total {\em corrupted
  potential} to $\varphi$'s of intervals in the $\Lambda$-ball of $I$.

To showcase the use of $\un$-color, consider again the running example
introduced in Section \ref{sec::prior_obstacles}. In the early steps $t$, our algorithm will first
color the dense intervals (i.e., via clustering/proximity balls, at lower levels $l$), leaving the sparse sections mostly unaffected, all
colored in
$\un$
(during such early steps, the dense intervals are partitioned into
progressively smaller parts while most of the sparse ones remain
$\un$-colored). Now consider a step $t$ where the part sizes so far are
$\lesssim n/\lambda^{t-1}\approx \Delta n$ and we sample $\gtrsim 1/\Delta$
anchors. At the lower levels $l$, the dense intervals will be
partitioned further (continuing the process from the previous steps)
and assigned a color $\chi\in[\lambda^t]$. However, when we reach the
high levels $l$, and $\zeta\approx \beta^l= n^{\eps l}$ is close to $\Delta n$, some
fraction of  
the sparse intervals will be clustered.
Furthermore, since at that point the dense intervals are already colored
(and have little $\un$-color),
the $\Lambda$-balls around the clustered sparse intervals will be wide
and cover most of the $\un$-colored sparse intervals. That allows
us to finally
partition the sparse intervals into smaller parts  as well.

\vspace{2mm}
\paragraph{Keeping track of errors in analysis: Corruption.}
To measure and bound errors, in particular, $\pi$-matched interval
that are separated, we use a formal notion of corruption. First we
define what it means for a (interval, color) pair to be corrupted.  
Below, $F\gg1$ is a ``distortion'' factor (used for the non-PNA case),
which one can think of being $F=2$ for now.

\begin{definition}[Corrupted pairs]
  \label{def::corruption}
	Fix alignment $\pi \in \Pi$, interval $I \in \IC$, distortion
        $F\ge 1$, and graph $G$ on $\IC$. For a color $\chi \in \nu$, %
        we say $(I,\chi)$ is a {\em ($F,\pi,G,c,\kappa$)-corrupted pair} if any of the following holds:
	\begin{enumerate*}
		\item $\pi[I] = \bot$;
		\item  $\cad(I,\pi[I]) > c$;
		\item $\chi \neq \bot$ and $\mu_\kappa(I,\chi)>F\cdot \mu_\kappa(\pi[I],\chi)$; or
		\item  $\chi = \bot$ and $(I,\pi[I])$ are at hop-distance $>2$ in
        $G$.
	\end{enumerate*}
\end{definition}

For each interval $I \in \IC$, we also define {\em corruption parameter}
$\xi_F^{\kappa,\pi,G,c}(I)\in [0,1]$ as follows:
\begin{equation}
  \label{eqn:xiDef}
  \xi_F^{\kappa,\pi,G,c}(I) = \sum_{\chi: (I,\chi) \text{ is } (F,\pi,G,c,\kappa) \text{-corrupted pair}} \mu_{\kappa}(I,\chi).
\end{equation}

In particular, an interval $I$ is fully corrupted (in a coloring
$\kappa$) if it does not have its $\pi$-matchable counterpart; and
otherwise $I$ is corrupted by the total $\ell_1$ color-mass of
$\mu_\kappa(I,*)$ where there is insufficient corresponding mass in
$\mu_\kappa(\pi[I],*)$ (intuitively, the distribution of colors is too
different).  While our statements hold for any alignment in $\Pi$, we
only care about a fixed optimal alignment $\pi$, a single graph
$G=G_{w,c}$ and a fixed cost $c$. Hence, for ease of exposition, we
say $I$
is {\em $F$-corrupted} and the corruption
is $\xi^{\kappa}_F(I)\triangleq\xi^{\kappa,\pi,G,c}_F(I)$.  Our main
goal is to bound the total corruption $\xi^\kappa_F \triangleq
\xi_F^\kappa(\IC)$, and in particular show it grows by at most a constant
factor in any level/step. Our algorithm runs for a constant number of
levels/steps, and hence finishes with %
corruption which is
proportional to the number of intervals without a $\pi$-matchable
counterpart (starting corruption), upper-bounded by $O(\tfrac{w}{c}
\cdot \ed(x,y))$.  Also, at the end of the interval matching
algorithm, all intervals are ``already matched'', i.e., all mass is on
$\mu_\kappa(\IC,\bot)$, and hence the un-corrupted intervals have a
2-hop path to their $\pi$-match.

To bound the corruption growth, we also introduce the parameter
$\rho(I)$, which measures the ``local amount of corruption'' of an
$\un$-colored interval, based on the nearby corrupted intervals. In
particular, $\rho(I)$ is defined for a $\Lambda$-ball around $I$ as
the ratio of the corruption to the $\un$-mass inside the
$\Lambda$-ball (formally defined in Section
\ref{sec:twoKeyLemmas}). One can observe that for any fixed
$\zeta$ radius of $\Lambda$, the sum of $\rho$ over $\un$-colored
intervals is proportional to the total sum of corruption.

\vspace{2mm}
\paragraph{Completing the algorithm under the perfect neighborhoods
  assumption (PNA).}
Once we compute the palettes $\varphi$ of all intervals (as a function
of the sampled anchors), we then use them to update $\mu$ for the next level.
For illustrative purposes, we now complete the algorithm under the
PNA, although our general algorithm will differ significantly from the
PNA one. Recall that under PNA, all intervals are either at
$\cad$-distance $\le c$ or $\gg c$, and hence the intervals form equivalence classes according to their $c$-neighborhood.

Under PNA, we are guaranteed that the uncorrupted $\pi$-matchable
pairs will get similar potentials ---in fact, $\phi(I)$ and
$\phi(\pi[I])$ are precisely equal (and non-zero whenever they are
clustered by an anchor). More importantly, if we consider the $\varphi$
palettes of $I,\pi[I]$, which gather the contributions from clusters
containing $I,\pi[I]$ in their proximity ball, then one can prove that
(the average) $\ell_1$ distance between the two $\varphi$ palettes is
bounded as a function of the ``local corruption'', namely $\rho(I)$
and $\rho(\pi[I])$.

Using the $\ell_1$-distance property of $\varphi$, we can assign a
single color $\chi \in \nu$ to each interval (i.e., $\mu_\kappa(I,*)$
has support one), obtaining disjoint partitions.  To generate such a
color (for each interval) we can use a random {\em weighted min-wise
  hash function} $h \sim \mathcal{H}$ for $\R^\nu$ (say, using
\cite{Char}) and use it to partition the vectors $\varphi(I,*)$ of all
$\un$-colored intervals $I$. Specifically, sample a minhash
$h:\R^\nu\to \nu$ and set the updated coloring to be
$\mu_\hkappa(I,h(\varphi(I))) \gets 1$ (the rest are 0) for all $I$
for which $\mu_\kappa(I,\un) = 1$ at the end of the previous level.

For a glimpse of the analysis, recall that our overall goal is to make
each part of the partition smaller (for runtime complexity) with only a
constant-factor corruption growth (for correctness); also, we care
only to partition areas with large mass of intervals with density in
some range $[n^{-\eps},1]\cdot d_l$ (in a fixed level $l$). To control the size of
parts, we cannot afford to assign the same color to 
 too many
intervals; %
hence we drop from $\varphi$ all colors of potential below
some fixed threshold $o(n^{-\epsilon})$, ensuring that each color appears
in the $\varphi$ palettes of at most
$\tfrac{n^{1+O(\epsilon)}}{\lambda^{t-1}}$ intervals. For bounding the
corruption, we note that minhash gives us a bound proportional to the
Jaccard distance between $\varphi(I)$ and $\varphi(\pi[I])$, while the
$\rho$ bound we have is in $\ell_1$ distance. Showing these bounds are close is where the \un-color  
plays a central role. First, consider an interval $I$ such that $\Omega(\epsilon)$
portion of its proximity ball $\Lambda$ is composed of intervals of
density $\in [n^{-\eps},1]\cdot d_l$, where $\Lambda$ is of radius
$\zeta$. Then the palette $\varphi(I)$ has $\ell_1$ mass
$\Omega(\epsilon)$ whp after thresholding since: 1) some intervals in
$\Lambda$ will be clustered whp (they are dense enough), and 2) once
clustered, the generated potential is large enough to pass the
threshold (since they are not too dense). In this case, we are done
(without using the color $\un$): the Jaccard distance is proportional
to the $\ell_1$ distance between $\varphi(I)$ and $\varphi(\pi[I])$,
and hence the probability of separating $I$ from $\pi[I]$ is bounded
by the ``local corruption'' (which overall is bounded by the total
corruption). Second, consider the case when $1 - o(\epsilon)$ portion
of intervals in the $\Lambda$ ball are outside the aforementioned
density range. Then we add mass to $\varphi(\cdot, \un)$ of intervals in the ball
$\Lambda$, filling it up to reach $\|\varphi(I)\|_1 =
\Omega(\epsilon)$ --- this will increase the probability that such
intervals are mapped (again) to $\un$ (this increases corruption by a factor $\leq 2$).
This process also guarantees $\Lambda$ balls at the next level $l$ have
$(1-o(\epsilon l))\cdot \zeta$ mass of {\em sparse intervals}, i.e., of
density $\lesssim \tfrac{n/\lambda^t}{\beta^l}$. %
One can
then prove that, after running this process for $\approx 1/\epsilon$
levels, the set of intervals corresponding to each color, including
$\un$, is of size $\lesssim n/\lambda^t$ only.

Since we could not directly extend the minhash construction to the
general non-PNA case, we do not present this construction in the
paper, but rather use it as an intuition for its ``robust'' version as
we describe next.

\subsection{Imperfect neighborhoods}

To eliminate the perfect neighborhood assumption (PNA), we must rely
on the weaker form of transitivity instead, from the triangle
inequality: $\NC_{c}(I) \subseteq 
\NC_{2c}(J)\subseteq \NC_{3c}(I)$ for any $J\in \NC_c(I)$.  Note that the usual ideas to deal
with such ``weaker transitivity'' do not seem applicable here. For instance,
if we pick the threshold of ``close'' in the cluster construction to be uniformly random $\in [c,O(c)]$,
there's still a constant probability of separating $I$ from $\pi[I]$. One
could instead apply the more nuanced metric random partitions, such as
from~\cite{MendelN06}, which would {\em partition} the metric
$(\IC_w,\cad_w)$ (thus putting us back into the perfect neighborhood
assumption), with the probability of $I,\pi[I]$ ending up in the same
part being $\ge n^{-\eps}$ --- which has been useful in other contexts by
repeating such partition $\approx n^{\eps}$ times.  However, such a
process results in a random partition retaining only $n^{1-\eps}$
$\pi$-matched pairs, which is not enough to reconstruct even those matched
pairs (intuitively, the strings are ``too corrupted'', as if the edit
distance is $(1-o(1))\cdot n$), making it inapplicable for our algorithm (here again, this challenge would be more manageable if additive approximation were allowed). Overall, dealing with
imperfect neighborhoods proved to be a substantial challenge for us,
and we develop several first-of-a-kind tools specifically to deal with
it.

Eventually, we still sample a cost $c_i$ from some ordered set
$E_c=\{c_1,c_2,\ldots\}\subset [c,O(c)]$. Since we want that the cost
$c_i$ satisfies that $\NC_{c_{i}}(\NC_{c_{i}}(I)) \subseteq
\NC_{c_{i+1}}(I)$, we set $E_c$ of costs to be exponentially-growing,
i.e., $c_i = \Theta(c) \cdot 3^i$. The formal definition is in Section
\ref{sec::int_match}.

\vspace{2mm}
\paragraph{Distortion Resilient Distance.}
Relying purely on triangle inequality forces us to assign somewhat
different potential scores to $I$ and $\pi[I]$, hence we will quantify
the ratio between the two, 
referring to it as ``distortion''.

While it may be tempting to try to keep the distortion to a constant,
it turns out one cannot do that without introducing super-constant
factor corruption growth (number of pairs with a large distortion), which is prohibitive for us. 
To control corruption, we allow the distortion (i.e., the multiplicative difference) between $\phi(I)$ and $\phi(\pi[I])$ for $\pi$-matchable pairs can be as high as $n^{\alpha}$ for some
small constant $\alpha>0$. However, such a distortion makes it impossible to
obtain a 
bound on $\ell_1$ distance between $\varphi$'s of a $\pi$-match, which is proportional to $\rho$. 
Instead, we deal with such distortion by employing
a {\em distortion resilient} (robust) version of $\ell_1$. 

\begin{definition}\label{def::dd}
	Fix $p,q \in \R^k_+$. We define the {\em $F$-distortion resilient distance},
	$$\dd_F(p,q) = \sum_{i:p_i > F\cdot q_i} p_i$$.
\end{definition}

This function allows us to define and control {\em corruption} of
(interval, color) pairs by differentiating distortion (which captures
multiplicative errors) from corruption (which captures additive
ones).
 As part of our analysis, we will show several basic properties
of the $\dd$ distance and develop {\em $\dd$-preserving
  soft-transformations}, which will replace the hard thresholds from the
minhash construction.
Intuitively, $\dd$ replaces the use of the $\ell_1$/Jaccard metric on the vectors
$\phi$ and $\varphi$, which was a key enabler for using minhash under
PNA. 
However, $\dd$ is not a metric in any reasonable sense (it's not
even symmetric), 
rendering the minhash construction obsolete (e.g., it is unreasonable to expect
any kind of LSH under $\dd$).

\vspace{2mm}
\paragraph{Assigning potential to (interval, color) pairs.}
Since maintaining equivalence classes is essential for our
construction, we analyze pairs of interval and colors in $\IC \times
\nu$ (which, combinatorially, can be thought of as ``fractions of
intervals'').
Thus, when we increase the $\phi$-potential of a clustered pair $(I,
\chi')$, we do so proportionally to its $\mu$ mass, meaning we set the
potential to $\approx\mu(I,\chi') \tfrac{n}{d \cdot \lambda^t}$. Similarly,
splitting the $\phi$ potential to $\un$-colored-pairs in $\Lambda$
balls (i.e., assigning $\varphi$) is done in a pro-rated fashion,
weighted according to respective $\mu(\cdot,\un)$ masses. \ns{maybe
  write something about $\Lambda$-balls and $\un$-pairs are defined
  w.r.t to output coloring, but $\phi$ assigned to pairs of old
  coloring?}\aanote{not sure this is the place to do it--don't we do
  that elsewhere?}

\vspace{2mm}
\paragraph{Assigning $\un$ potential: pivot sampling.} Having so far
discussed assigning of the non-$\un$ colors in $\varphi$ of (fractions
of) intervals, we now discuss how to assign $\un$-color in
$\varphi_\kappa$ and, eventually, amended coloring $\mu_{\widehat
  \kappa}$. It may be tempting to merely subtract the assigned
fraction of non-$\un$-color from the $\un$-color mass, but this would
result in additive errors (for the color $\un$), while $\dd$ only
allows for multiplicative distortion. In particular, we need $\un$
colors to agree up to a fixed distortion as well (to avoid more
corruption), and hence we compute the new $\un$-color mass directly,
via a different technique for explicitly measuring {\em sparseness}
(which is the central purpose of $\un$-color). To accomplish this
measurement, we developed a procedure called {\em pivot sampling},
which somewhat resembles the way we assign potentials for the non-$\un$
colors. First, we downsample $\IC \times \nu$ into a smaller set of
{\em pivots} $\VC$. Second, we {\em approximate the density} of each
pivot in $\VC$, for each possible cost in $E_c$, thus generating {\em
  $\theta$-potential} scores for each pivot. Third and last, such
$\theta$ potential is split among the intervals in a $\Lambda$-ball in
a similar fashion to how we split $\phi$, generating
$\varphi(\cdot,\un)$ potentials to {\em intervals in sparse
  areas}. This rather involved process, specific to dealing with
imperfect neighborhoods, requires much care to be able to control: (1)
corruption of \un-colors; (2) balance of palettes $\varphi$ (as we
describe next); (3) sparsity guarantees for $\un$-color (part) at the end of
the step $t$; and (4) computational efficiency of such sampling
mechanism.

\vspace{2mm}
\paragraph{Amending a coloring in a level, using $\varphi$.} While $\dd$ is
a convenient analytical tool for bounding corruption, it lacks the
basic properties to allow coordinated sampling between $\pi$-matchable
pairs. Instead of sampling a color from $\varphi(I,*)$ (as was done
under PNA), we add {\em all} colors in $\varphi(I,*)$ to the amended
coloring $\mu_\hkappa(I,*)$. To maintain a distribution of colors, we
first combine $\varphi$ with pre-existing non-$\un$-colors in $\mu$,
and then normalize to have $\ell_1$-mass of $\mu_\kappa(I,\un)$, i.e.,
what ``remains to be colored'', thus ensuring that the overall amended
$\mu_{\hat \kappa}(I,*)$ is a distribution. As in the PNA case, we
need to bound extra {\em corruption from normalization} by ensuring
that the palettes $\varphi$ have constant norms. While this analysis
for the PNA solution is immediate (by construction), here, instead, we
employ several combinatorial arguments that analyze mass of pairs with
certain density over certain set of costs, eventually showing that in
each level, we either add sufficient regular colors (corresponding to
anchors/clusters) or \un-colors to all intervals while maintaining guarantees
(1)--(4) above.

\vspace{2mm}
\paragraph{Controlling the growth of distortion.} Our arguments require that
throughout the matching phase, the $\dd$ distortion $F$ is bounded by
$n^{o(\epsilon)}$ (in particular, to maintain control over the aforementioned soft-transformations). Many of our algorithmic steps generate extra distortion. To control both distortion (multiplicative error) and corruption (additive error),
we parametrize maximum distortion $F=F(t,l)$ for each step/level a
priori, and bound corruption $\xi_F^{\kappa}$ at each step/level using
the pre-determined distortion parameter $F=F(t,l)$. The final
approximation factor is a function of the maximum cost in
$\tfrac{1}{c}E_c$ (which is further determined by the ``base
distortion'' $F(1,0) = n^\alpha$), together with the corruption factor
we show in each step.
At the end of the day, a distortion $F$ bounded by $n^{o(\eps)}$
allows us to carry out the above arguments (i.e., some of the above
arguments can only work under small distortion $F$).
\ns{Maybe add/ derive the exact $f(\epsilon)$ bound here?}

\subsection{The metrics $\cad_w$}

We now briefly discuss the algorithm for computing the $\cad_w(I,J)$
distance, using oracle calls to $\distG_{w/\wbase}$ metric. This
metric is used to compute distances when building the graph $G_w$, in
the Interval Matching algorithm. Note that the latter makes
$n^{1+O(\eps)}$ oracle calls to $\cad$, and hence the algorithm has to
run in time $n^{O(\eps)}=\poly(\wbase)$.

Intuitively, $\cad_w(I,J)$ is meant to capture the following distance,
which should be thought of as an extension of the edit distance over an
alphabet with the metric $(\IC_{w'},\distG_{w'})$ where $w'=w/\wbase$:\footnote{$I+i$
  means the interval starting $i$ positions to the right of the start of $I$.}
$$
\min_\pi \sum_{i\in[w]} \tfrac{1}{w'}\distG_{w'}(I+i,
J+\pi(i)),
$$ where $\pi$ ranges
over all alignments of indexes of $I$ to indexes of $J$.
One can show that essentially, if $\distG(I,J)=\ed(I,J)$, then, for $I=X_{i,w},
    J=Y_{j,w}$, the above distance is between $\Omega(\ed(X_{i,w},Y_{j,w}))$
    and $O(\ed(X_{i,2w},Y_{j,2w}))$.\footnote{While we are not aware
      of an explicit proof of this statement, it is in the spirit of
      statements that appeared in, e.g., \cite[Lemma 5]{OR-edit},
      \cite[Lemma 3.2]{AK-smoothed}, \cite[Theorem 3.3]{AKO-edit}.}

However, this distance function is hard to compute fast:
not only it is as hard as computing edit
distance on $w$-length strings, but even linear time (in
$w\gg\poly(\wbase)$) is too much for us. In particular, it does not use the
fact that $\distG_{w'}(I+i, J+\pi(i))$ captures the information
of blocks of length $w'$. Hence, it is natural to approximate the
above by considering a ``rarefication'' of the above sum as follows:
\begin{equation}
  \label{eqn:cadRarefied}
  \min_\pi \sum_{i\in[\wbase]}
\tfrac{1}{w'}\distG_{w'}(I+iw',
J+\pi(iw')).
\end{equation}
However, the latter will not satisfy the triangle inequality --- which
is crucial in the Interval Matching Algorithm --- and in fact is not
even symmetric: e.g., if the optimal $\pi(i)=i+1$, the $\cad_w(J,I)$
would be using $\distG$ on completely different arguments. This is
especially an issue since $\distG$ may substantially over-estimate
$\ed$ on some of the pairs (and hence ``shift by one'' can change the
distance a lot).

Indeed, ensuring triangle inequality is the main challenge for
defining and computing $\cad_w$ here. We manage to define an
appropriate distance $\cad$, satisfying triangle inequality %
for ``one scale
only'' metrics $\cad_{w,c}$, designed for distances in the range
$\approx [c,\wbase c]$, which turns out to be enough for the Interval
Matching algorithm.

First, we note that we have 
 two different algorithms, corresponding to two distinct distance formulations:
 (i) for large distance regime, where $c>w/\wbase$; and (ii) for small distance regime, where $c\le
w/\wbase$.
The reason there's a big 
difference
between the two cases is that when $c>w/\wbase$, the alignment $\pi$
may have a large displacement $|i-\pi(i)|\ge w/\wbase$, bigger than
the length of ``constituent'' intervals for which we have the base
metric $\distG_{w'}$. Hence, for the ``large distance'' regime, when
$c\ge w/\wbase$, we uses a slightly different (and simpler) algorithm
that runs in time $\approx \poly(w/c)$, and hence is only good when
$c$ is sufficiently large.

Finally, we sketch the harder, $\poly(\wbase)$-time algorithm, for not-so-large
$c$. The idea is to allow alignment shifts in both intervals.
More formally, let $T=\wbase^3$ and let $\A$ be the set of functions $A=(A_x,A_y)$ where $A_x,A_y: [-\wbase,\wbase]
\rightarrow \{0,1,\ldots,T-1\}$
 are non-decreasing functions with $A_x[-\wbase]=A_y[-\wbase]=0$ and
$A_x[\wbase]=A_y[\wbase]$. We define the distance
$\cad_{w,c}(I,J)$, to be, where $\theta$ is essentially $\approx c/w$ and $\mathfrak{S}_i = \{0,1,\ldots,3T - A_x[i] - A_y[i]%
-1\}$:
\begin{align*}
&\cad_{w,c}(I,J)\triangleq
\min_{A\in \A} \\
&\left(A_x[\wbase] + A_y[\wbase]\right)\theta w' + \sum_{i \in [-\wbase,\wbase)} \tfrac{1}{T}\sum_{\Delta \in \mathfrak{S}_i} \distG_{w'} \left(I + w'(i + \theta (\Delta+ A_x[i])),J + w'(i + \theta(\Delta + A_y[i]))\right).
\end{align*}

\ns{Perhaps to explain the shifts, we can split up the def by defining\\
 $\widehat{\mu}_{A,i,\theta}(I,J) = \tfrac{1}{T}\sum_{\Delta \in \mathfrak{S}_i} \distG_{w'} \left(I + w'(i + \theta (\Delta+ A_x[i]),J + w'(i + \theta(\Delta + A_y[i])\right)$ \\ as ``sum over shifts'', and then defining\\ $\cad_{w,c}(I,J)=
\min_{A\in \A} A_x[\wbase-1] + A_y[\wbase-1] + \sum_{i \in [-\wbase,\wbase)} \widehat{\mu}_{A,i,\theta}(I,J)$\\ as the standard alignment sum?}

\ns{Alex, please read the below 2 amended sentences}
Intuitively, ignoring $\Delta$-sum (i.e., think $\Delta=0$), we
obtain an alignment of $I$ to $J$ where the starting positions (of
$w'$-length intervals) are close to multiples of $w'$ in {\em both}
strings (as opposed to only one string, as in
Eqn.~\eqref{eqn:cadRarefied}). While allowing such an alignment is
enough for ensuring symmetry, it is still not enough to ensure triangle inequality. Consider intervals $I,J,K \in \IC$ where we want to guarantee that $\cad(I,K) \leq \cad(I,J) + \cad(J,K)$. Apriori, there is no way to ensure an optimal alignment between $\cad(I,J)$ and $\cad(J,K)$ will use the same shifts and hence makes it hard to offset the distance of large blocks in $\cad(I,K)$. To solve the inconsistencies between the shifts, we use the $\Delta$-sum over all possible shifts (this is yet another
instance of ``averaging it out''). The last definition can also be
computed in $\poly(\wbase)$ time by a standard dynamic programming.

Nonetheless, the following issue remains:
think of the case when
$\cad(I,J)$ and $\cad(J,K)$ use the maximally-allowed values of the alignment
(namely $\wbase^2$), in which case $\cad(I,K)$ cannot use the natural
composition of the two alignments (since it's out of bounds). To solve this issue, we upper bound each $\cad_{w,c}$ with a maximal value $\gg c$ (which solves the triangle inequality issue), and define the (non-metric) distance $\cad_w(\cdot,\cdot)$ as the summation over all costs $c \in S_w$ which can be upper-bounded by $\cad_{w,c}$ (formal definition in section \ref{sec:CADalgo}).

Finally, we remark that, at the end of the day, we cannot guarantee a per-pair
upper bound on $\cad(I,J)$, but only on average, and only when
comparing $X_{i,w}$ against $Y_{\pi(i),w}$ (although the triangle
inequality is true everywhere). This is, nonetheless, just enough for
estimating $\ed(x,y)$.

\section{Top Level Algorithm}
\label{sec:topAlgo}

We now describe our ``top-level'' algorithm. We assume here that the
first $(1-o(1))n$ positions of $x$ and $y$ are equal; we can remove
this assumption by padding $x,y$ with some fixed unique character \$,
increasing the size of $x,y$ by a factor of, say, $O(\log n)$.

Our algorithm consists of $\log_\wbase n$ iterations, where $\wbase=n^\eps$. %
For each
$w=\wbase^i$, $i\in [\log_\wbase n]$, we construct the metric
$(\IC_w,\distG_w)$, which {\em align approximates} the metric
$(\IC_w,\ed_w)$ (see Def.~\ref{def:alignApp}). Each
iteration consists of two components: the {\em alignment distance
algorithm}, and the {\em interval matching algorithm}, described in later
sections. Below we assume that $n$ is a power of $\wbase$, which is
without loss of generality (as we can increase $n$ appropriately by
padding the strings).

\vspace{2mm}

\paragraph{Alignment Distance algorithm.} Assuming oracle access to
$(\IC_{w/\wbase},\distG_{w/\wbase})$, the metric constructed at the
previous iteration, 
our alignment algorithm is an oracle for computing the distance
$\cad_w(\cdot,\cdot)$ on $w$-length intervals. In fact, we have
$O(\log n)$ such distance measures, $\cad_{w,c}$, one for each target
cost scale $c\in \Sw=\{1,2,4,\ldots w\}\cup \{1/2,1/4,\ldots, 1/2n\}$. Each such function
$\cad_{w,c}(\cdot,\cdot)$ evaluation is an edit-distance-like dynamic
programming of size $\poly(\wbase)$, and overall can be computed using
$\poly(\wbase)=n^{O(\eps)}$ time and oracle calls to
$\distG_{w/\wbase}$.

Note that this algorithm does not run directly, but instead is used as
an oracle inside the matching algorithm described next.

\vspace{2mm}

\paragraph{Interval Matching algorithm.} We construct
a weighted graph $G_w$ on $\IC_w$, such that the shortest path distance in $G_w$
approximates the $\cad_w$ distance on intervals. Again, this won't be
achieved for all pairs of intervals, but only for interval pairs that
``matter'', i.e., that are in an optimal alignment for $\ed(x,y)$. The
graph $G_w$ is the union of edges of the graphs $G_{w,c}$, for $c\in \Sw$, each of them
align approximating $\cad_{w,c}$ at ``scale $c$''.  Constructing the graphs
$G_{w,c}$ is the heart of the matching algorithm.

\vspace{3mm}

Once we have the graph $G_w$, we build a fast distance oracle data
structure on it, using the $\ell_\infty$ embedding of
\cite{Matousek1996}, and whose output is the desired metric
$\distG_w$. Overloading the notation, we call $\distG_w$ both the
distance oracle data structure as well as the metric it produces.  In
particular, \cite{Matousek1996} shows how one can embed any $n$-point
metric into $\ell_\infty$ of dimension
$d=O(\tfrac{1}{\eps}n^{\eps}\log n)$ while incurring $O(1/\eps)$
distortion only. Once we have such an embedding, we can compute
distance between two points by evaluating $\ell_\infty$ distance in
dimension $d$. Furthermore the embedding itself can be computed by running
$d$ single-source shortest path (SSSP) computations. Hence, we obtain
the following Theorem~\ref{thm::distanceOracle} as an immediate
corollary of \cite{Matousek1996} together with a standard $\tilde
O(m)$-time SSSP algorithm.

\begin{theorem}[corollary of \cite{Matousek1996}]\label{thm::distanceOracle}
For any constant $\eps>0$, given any weighted graph $G$ on $n$
nodes and $m$ edges, we can build a distance oracle data structure with the
following properties:
\begin{itemize}
\item
supports distance queries: given $u, v$, output $\distG_G(u,v)$ which
is a $O(1/\eps)$-factor approximation to the shortest path distance
between $u,v$ in the graph;
\item
  $\distG_G(u,v)$ is a ($n$-point) metric;
\item
  runtime per query is $\tilde O_\eps(n^{\eps})$;
\item
  data structure uses $\tilde O_\eps(n^{1+\eps})$ space, and pre-processing time is $\tilde O_\eps(m n^{\eps})$.
\end{itemize}
\end{theorem}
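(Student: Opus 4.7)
The plan is to build the oracle as an embedding of the shortest path metric of $G$ into $\ell_\infty^d$ with $d = \tilde O(n^{2/\alpha})$, using a Bourgain/Matoušek-style construction. The metric property and the constant-time $\ell_\infty$ computation will give the required oracle properties essentially for free, leaving distortion and preprocessing as the technical content.

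First, for the embedding I would fix $k = \lceil \alpha/2 \rceil$ and, for each $i \in \{1, 2, \ldots, k\}$, draw $O(n^{2/\alpha} \log n)$ independent random subsets $A_{i,j} \subseteq V$ by including each vertex with probability $n^{-i/k}$. For every $A_{i,j}$ define the coordinate $f_{i,j}(u) = \dist_G(u, A_{i,j})$, computed as the distance from $u$ to the nearest vertex of $A_{i,j}$. Stack these coordinates to form the embedding $f : V \to \R^d$ with $d = \tilde O(n^{2/\alpha})$, and define the oracle's output as $\distG_G(u,v) \triangleq \|f(u) - f(v)\|_\infty$. Since $\ell_\infty$ is a norm, $\distG_G$ is automatically a (pseudo-)metric; to get a genuine $n$-point metric one just breaks ties, so property (2) is immediate.

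Next, for distortion I would argue both directions separately. Non-contraction is standard: every coordinate $f_{i,j}$ is $1$-Lipschitz in the graph metric (since $|\dist_G(u,A) - \dist_G(v,A)| \le \dist_G(u,v)$), so $\|f(u)-f(v)\|_\infty \le \dist_G(u,v)$. Non-expansion (distortion $\le \alpha$ up to lower-order terms) is the heart of the argument: for any fixed $u,v$, a Matoušek-style layered analysis considers the balls $B(u,r_i)$ at a geometric sequence of radii $r_i$, and shows that with high probability over the random sampling, for some scale $i$ there is a set $A_{i,j}$ hitting the smaller ball while missing the larger one around the other endpoint. Taking the union bound over all $\binom{n}{2}$ pairs uses exactly the $\log n$ factor in the number of sampled sets per scale, and yields that $\|f(u)-f(v)\|_\infty \ge \Omega(\dist_G(u,v)/\alpha)$ simultaneously for all pairs, giving property (1).

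For the resource bounds I would proceed as follows. Preprocessing: for each sampled set $A_{i,j}$ run a single multi-source Dijkstra from $A_{i,j}$ in $\tilde O(m)$ time to get $f_{i,j}(\cdot)$ at every vertex; summed over all $d$ sets this is $\tilde O(md) = \tilde O(m n^{2/\alpha})$, matching (4). Storing $f$ as an $n \times d$ table gives space $\tilde O(n^{1+2/\alpha})$, matching the space bound, and each query is an $\ell_\infty$ of two $d$-dimensional vectors, running in $\tilde O(d) = \tilde O(n^{2/\alpha})$ time, matching (3). The main subtle step is the distortion analysis of the layered sampling; this is the only part that requires genuine argument beyond routine bookkeeping, and I would either carry it out in full following Matoušek's original layered probabilistic argument or, since the theorem is cited as \cite{Matousek1996}, simply invoke that paper and only verify that its stated parameters translate into the algorithmic bounds (1)--(4) above.
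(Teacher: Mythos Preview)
Your proposal is correct and matches the paper's approach exactly: the paper does not prove this theorem but cites \cite{Matousek1996} and, in the technical overview, explicitly states that the distance oracle is ``merely an embedding of the shortest path metric on $G_w$ into $\ell_\infty^{d}$, where $d=|\IC|^{\eps}$, incurring an approximation of $O(1/\eps)$.'' Two cosmetic remarks: you have swapped the labels ``non-contraction'' and ``non-expansion'' (the 1-Lipschitz bound $\|f(u)-f(v)\|_\infty \le \dist_G(u,v)$ is non-expansion), and since the paper needs $\distG_G \ge \dist_G$ you would output $\alpha \cdot \|f(u)-f(v)\|_\infty$ rather than the raw $\ell_\infty$ value, but neither affects the substance.
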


\paragraph{Top-level algorithm} is described in Algorithm~\ref{alg::ed}.
At the beginning, when $w=\wbase$, we use the metric $(\IC_1,
\distG_1)$, which is just the metric on all positions in $x$ and $y$,
where two positions are at distance 0 iff the positions contain the
same character, and 1 otherwise. At the end, when $w=n$, we can
extract the distance between $x$ and $y$, which is our final
approximation. The algorithm \textsc{MatchIntervals}, described in
Section~\ref{sec::int_match}, returns an unweighted graph
$G_{w,c}$. The full graph $G_w$ for the scale $w$, is obtained by
union of the graph edges of $G_{w,c}$ each scaled by $C_mc$, over all
$c\in S_w$, together with some extra edges.

\begin{algorithm}[H]
\caption{EstimateEditDistance$(x,y,\eps,%
n)$}
\label{alg::ed}
\begin{algorithmic}
\Function{EstimateEditDistance}{$x,y,\eps,%
n$}
\State Fix $C_m$ to be the constant from Theorem~\ref{thm::matching_guarantee}.
\State $\wbase \gets n^{\eps}$.
\State $\distG_1$ is a data structure that, given two positions into
$x$ and/or $y$,
outputs 0 iff the characters in those positions are equal and 1 otherwise.
\For{$w \in \{ \wbase, \wbase^2, \ldots, n\}$}
\State Let $\WC_{w}, \HC_{w}$ be sets of all $w$-length intervals on
$x$-axis and $y$-axis respectively, with $\IC_w = \WC_{w} \cup \HC_{w}$.
\For{ $c \in \Sw$ }
\State $G_{w,c} \gets \textsc{MatchIntervals}(\distG_{w/\wbase}, c)$.
\EndFor
\State $G_{w} \gets \cup_{c\in \Sw} C_m \cdot c\cdot G_{w,c}$ and add edges
$(X_{i,w},X_{i+1,w}), (Y_{i,w},Y_{i+1,w})$ with unit cost for all $i$. 
\State $\distG_{w} \gets $ data structure from Theorem~\ref{thm::distanceOracle} on
graph $G_{w}$ for approximation $10/\eps$.
\EndFor
\State \Return $\distG_n(X_{i,n},Y_{i,n})$ for a randomly
chosen $i\in[n]$.
\EndFunction
\end{algorithmic}
\end{algorithm}

\subsection{Main guarantees}

The guarantees of the algorithm follow from the following
two central theorems.

\aanote{edited statement slightly: with whp, and $pi$ moved} 
\begin{theorem}[\textsc{MatchIntervals}; see Sections~\ref{sec::int_match},~\ref{sec::match-correctness},~\ref{sec:keyCorrectness},~\ref{sec:runtime}]\label{thm::matching_guarantee}
  Fix $\eps>0$, $w, n \in \N$, and  cost $c\in \Sw$. Suppose $(\IC_w,\cad_{w,c})$ is a metric,
  for which we have query access running in time $T_{\cad}$. Then, the
  algorithm \textsc{MatchIntervals} builds an
  undirected graph $G_{w,c}$, over intervals $\IC_w$, such that:
  \begin{enumerate}
  \item \label{it:matchingDistances} For all edges $(I,J)\in
    G_{w,c}$, we have $\cad_{w,c}(I,J) \leq C_m\cdot c$, where
    $C_m=C_m(\eps)$ is a constant.
  \item \label{it:matchingError}
   For
  any alignment $\pi\in \Pi$, with high probability:
  	 $|\{ i\mid
\dist_{G_{w,c}}(X_{i,w},Y_{\pi(i),w}) >2\}| \leq C_m \cdot |\{ i\mid
          \cad_{w,c}(X_{i,w},Y_{\pi(i),w}) > c \}|$\footnote{Recall from the preliminaries that
            $\cad_{w,c}(X_i,Y_{\bot})=w$.},
        where
          $\dist_{G_{w,c}}$ is the hop-distance in ${G_{w,c}}$.
  	\item \label{it:matchingRuntime}
          The runtime of the algorithm is $O(T_{\cad} \cdot
        n^{1+O(\eps)})$ whp.
  \end{enumerate}
\end{theorem}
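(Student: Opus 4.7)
The plan is to prove the three clauses by constructing the graph $G_{w,c}$ via the iterative coloring algorithm sketched in Section~\ref{sec:toMatching}, and then analyzing each clause separately. The construction runs for $t = 1, 2, \ldots, O(1/\eps)$ steps, where at step $t$ we refine a coloring $\kappa_t$ of $\IC_w$ into $\lambda^t$ color classes (with $\lambda = n^\eps$), and inside each step we run $l = 0, 1, \ldots, O(1/\eps)$ levels corresponding to different density regimes. Every edge added to $G_{w,c}$ is inserted as a direct $\cad_{w,c}$ query between an anchor $A$ and a clustered interval $I$ at distance $O(c)$; this automatically yields Clause~\ref{it:matchingDistances}, with $C_m$ being the largest constant factor we accumulate across all steps, levels, and the exponentially growing cost ladder $E_c \subset [c, O(c)]$ used to enforce $\NC_{c_i}(\NC_{c_i}(\cdot)) \subseteq \NC_{c_{i+1}}(\cdot)$.

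For Clause~\ref{it:matchingError}, the plan is to track the $F$-corruption $\xi_F^{\kappa_t} = \xi_F^{\kappa_t,\pi,G_{w,c}}(\IC)$ of Equation~\eqref{eqn:xiDef}, under a pre-determined distortion schedule $F = F(t,l) = n^{o(\eps)}$. The starting corruption is $O(k_c)$ since a pair $(I,\pi[I])$ is initially corrupted only if $\pi[I] = \bot$ or $\cad_{w,c}(I,\pi[I]) > c$. The bulk of the argument is to show that at each level--step the corruption grows by at most a constant factor. This is done by analyzing the three transitions $\mu_\kappa \to \phi_\kappa \to \varphi_\kappa \to \mu_{\hat\kappa}$: the clustering transition is controlled by the triangle inequality on $\cad_{w,c}$ and the exponential cost ladder; the $\Lambda$-ball extension is controlled via the local-corruption parameter $\rho(I)$, using that a fully corrupted interval spreads only $O(1)$ total $\phi$-mass into any $\Lambda$-ball; and the amendment transition is controlled by showing $\varphi(I,*)$ and $\varphi(\pi[I],*)$ are close in $\dd_F$ on average, where the pro-rating by $\mu(\cdot,\un)$ and the pivot-sampling mechanism for the $\un$-mass together ensure $\|\varphi(I,*)\|_1 = \Omega(\eps)$ so that normalization to $\mu_\kappa(I,\un)$ contributes only a bounded multiplicative blow-up. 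After $O(1/\eps^2)$ level--step rounds, all mass lies on color $\bot$; since assigning color $\bot$ to $I$ corresponds to inserting a 2-hop path between $I$ and $\pi[I]$ in $G_{w,c}$, the remaining $F$-uncorrupted pairs are exactly those with hop-distance $\le 2$, and the final $\xi_F^\kappa$ is $O(k_c)$ by the compounded constant.

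For Clause~\ref{it:matchingRuntime}, the accounting is: at step $t$ there are $\lambda^t$ anchors, each estimating $\cad_{w,c}$ against the $O(n/\lambda^{t-1})$ intervals in its containing part, giving $n\lambda = n^{1+\eps}$ queries per step. Summed over $O(1/\eps)$ steps, $O(1/\eps)$ levels, and the $|E_c|=O(\log n)$ sub-scales inside the cost ladder, this totals $n^{1+O(\eps)}$ queries of cost $T_{\cad}$ each. The auxiliary work --- computing $\Lambda$-balls, pivot sampling, thresholding and normalizing $\varphi$, and amending $\mu_\kappa$ --- can be carried out within the same $n^{1+O(\eps)}$ budget using standard balanced-tree data structures on the running $\un$-mass function, since each $\Lambda$-ball and potential update touches $\polylog(n)$ nodes per interval involved.

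The hardest step will be the corruption analysis of the amendment transition $\varphi_\kappa \to \mu_{\hat\kappa}$ in the absence of the perfect neighborhood assumption. Since $\dd_F$ is neither symmetric nor a metric, the clean min-hash argument sketched for the PNA case does not apply; instead one must show that adding the \emph{entire} $\varphi(I,*)$ palette into $\mu_{\hat\kappa}(I,*)$, with normalization to $\mu_\kappa(I,\un)$-mass, preserves $\dd_F$-proximity of $\pi$-matchable pairs while simultaneously enforcing (i) bounded corruption of the $\un$-color produced by pivot sampling, (ii) balance of the palettes $\varphi$, (iii) the step-$t$ sparsity invariant for $\un$-colored mass, and (iv) computational efficiency. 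Establishing these four guarantees jointly, and combining them with the combinatorial argument over the density ranges $[n^{-\eps},1]\cdot d_l$ that drives the level-by-level refinement, is what will carry the bulk of the technical weight.
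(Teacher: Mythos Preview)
Your proposal is correct and follows essentially the same approach as the paper: the three-phase decomposition (Clause~\ref{it:matchingDistances} immediate from the algorithm; Clause~\ref{it:matchingError} via tracking the corruption $\xi_F^\kappa$ across steps and levels using the $\phi\to\varphi\to\mu_{\hat\kappa}$ pipeline, local corruption $\rho$, pivot sampling for $\un$, and balance/sparsity invariants; Clause~\ref{it:matchingRuntime} via the $\lambda^t$ anchors against parts of size $\approx n/\lambda^{t-1}$) matches the paper's structure, including your correct identification of the amendment transition under $\dd_F$ as the crux. Two small corrections to flag: $|E_c|=1/\eta$ is an $\eps$-dependent constant, not $O(\log n)$; and the per-step corruption bound holds only with constant probability (via Markov on the expectation), so the paper runs $O(\log n)$ independent colorings per step and takes the best one---you should make this amplification explicit.
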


As described above, using the algorithm from the above theorem, we
build a graph $G_w$, which is the union of scale graphs $G_{w,c}$. Then we
take $\distG_w$ to be the fast distance oracle of the shortest path on
the graph $G_w$, using Theorem~\ref{thm:cadGuarantees}.

Next theorem says that, given access to $\distG_{w/\wbase}$, we can
compute $\cad_{w,c}(I,J)$ for any two intervals $I,J\in \IC$, which
corresponds to the ``natural extension'' of $\distG$ from
length-$w/\wbase$ to length-$w$ substrings.

\begin{theorem}[alignment distance $\cad$; see Section~\ref{sec:CADalgo}]
\label{thm:cadGuarantees}
Fix $w$ and $w'=w/\wbase$, and suppose we have a data structure for a
metric $\distG_{w'}$ that $C$-align-approximates $\ed$ for
some constant $C\ge 1$, while also $\distG_{w'}(X_{i,w'}, X_{i+1,w'})\le C$
for all $i\in[n]$ (and same for $Y$ intervals).  Then, for any $c\in
\Sw$, the algorithm from Section~\ref{sec:CADalgo} defines
a function $\cad_{w,c}(\cdot,\cdot)$ on $\IC_w\times \IC_w$ with the following properties:
\begin{enumerate}
\item
  $(\IC_w,\cad_{w,c})$ is a metric;
\item
  For all $X_i,Y_j\in \IC_w$, $\cad_{w,c}(X_i,Y_j)\ge \min\{\ed_w(X_i,Y_j),c\sqrt{\wbase}\}$;
\item Define $\cad_w(X_i,Y_j)\triangleq \sum_{c\in \Sw}
  c\cdot\one{\cad_{w,c}(X_i,Y_j)\ge c}$. Then, $\cad_{w}$ $O(C)$-align-approximates $\ed$.
  \item 
  For all $I,J\in \IC_w$, $\cad_{w,c}(I,J)$
  can be computed using $\tilde O(\wbase^{O(1)})$ time and queries to $\distG_{w'}$.

\end{enumerate}
\end{theorem}

We remark that $\cad_w$ is not guaranteed to be a metric, which is the
reason why we use $\cad_{w,c}$ in the theorem statement. Also, the
algorithm from Section~\ref{sec:CADalgo} requires no further preprocessing.

\subsection{Proof of Theorem~\ref{thm:main}}

To prove Theorem~\ref{thm:main}, we just combine the above two
theorems,~\ref{thm::matching_guarantee}
and~\ref{thm:cadGuarantees}. In particular, the inductive hypothesis
is that, for $w=\wbase^i$, where $i\in[\log_\wbase n]$, 
the distance oracle data structure $\distG_{w}$ outputs a metric
$\distG_w$ with the following properties, for some constant
$C_w=C(\eps,\log_\wbase w)$, whp:
\begin{enumerate}
\item
 $\distG_w$ $C_w$-align-approximates $\ed$;
\item
  $\distG_{w}(X_{i,w}, X_{i+1,w})\le 10/\eps$ (and same for $Y$ intervals).
\end{enumerate}

Base case: for $w=1$, this is immediate by construction of $\distG_1$.

Now assume the inductive hypothesis for $w'=w/\wbase$ and we need to
prove it for $w$. By inductive hypothesis, $\distG_{w'}$ satisfies
hypothesis of Theorem~\ref{thm:cadGuarantees}, and hence we can apply
it to obtain an oracle query to metrics $\cad_{w,c}$; each oracle
query takes $O(\wbase^{O(1)})$ time. Let $\pi$ be optimizer for
$\min_\pi \sum_i \cad_w(X_{i,w},Y_{\pi(i),w})$ (guaranteed to
align-approximate $\ed$). 

Define $\tau(\cdot,\cdot)$ to be the distance in the graph $G_w$ constructed in the algorithm. We will prove below that $\tau$ is a metric
satisfying the above properties. Hence, once we build a fast
distance oracle $\distG_w$ on the graph $G_w$ (using
Theorem~\ref{thm::distanceOracle}),
its output metric
$\distG_w$ satisfies $\tau\le \distG_w\le O(\tau/\eps)$, and hence the
inductive hypothesis.

To prove the first property of the inductive hypothesis, consider any
two intervals $I, J$, and the shortest path $v_1,\ldots v_k$ between
them, where $v_1=I$ and $v_k=J$. We have that $(v_i,v_{i+1})$ is an
edge in some graph $G_{w,c_i}$, or is an extra edge of cost $1$; call
$E$ the set of the latter $i$'s. Hence the cost $\tau(I,J)=\sum_{i\not
  \in E} C_mc_i+|E|$. For $i\not\in E$, by
Theorem~\ref{thm::matching_guarantee} and
Theorem~\ref{thm:cadGuarantees}, we have that $C_mc_i\ge
\cad_{w,c_i}(v_i,v_{i+1})\ge \ed_w(v_i,v_{i+1})$ (note that the other
part of the min cannot happen as $C_m\ll \sqrt{\gamma}$). Also, for $i\in E$, we have that
$1=\tau(v_i,v_{i+1})\ge \ed_w(v_i,v_{i+1})$ as $\ed_w(X_i,X_{i+1})\le1$
(and same for $Y$'s). Hence
$\distG(I,J)\ge \tau(I,J)=C_m\sum_{i\not \in E} c_i+|E|\ge
\ed_w(v_1,v_2)+\ldots+\ed_w(v_{k-1},v_k)\ge \ed_w(I,J)$.

Next, we note that $\tau(I,J)$ is upper bounded by
$2C_m\min_{c\in S_w} c+ \sum_{c\in \Sw} 4\cdot C_mc\cdot \one{\dist_{G_{w,c}}(I,J)>2}=C_m/n+4C_m\sum_{c\in \Sw} c\cdot \one{\dist_{G_{w,c}}(I,J)>2}$. Hence:
  $$
  \sum_{i\in [n]} \tau(X_{i,w},Y_{\pi(i),w})
\le C_m+
  4C_m\sum_{c\in \Sw}  \sum_{i\in [n]}  c\cdot
  \one{\dist_{G_{w,c}}(X_{i,w},Y_{\pi(i),w})>2}.
  $$

  For fixed $c\in S_w$, we have that, by Theorem~\ref{thm::matching_guarantee}:
  $$
  \sum_{i\in [n]} 
  \one{\dist_{G_{w,c}}(X_{i,w},Y_{\pi(i),w})>2}
  \le
C_m \cdot %
\sum_i \one{\cad_{w,c}(X_{i,w},Y_{\pi(i),w})>c}.%
  $$
  Therefore,
  $$
  \sum_{i\in [n]} \tau(X_{i,w},Y_{\pi(i),w})\le O\left(1+\sum_{c\in \Sw}
  \sum_{i\in[n]}c\cdot
  \one{\cad_{w,c}(X_{i,w},Y_{\pi(i),w})>c}\right)\le O\left(1+\sum_{i\in[n]} \cad_w(X_{i,w},Y_{\pi(i),w})\right).
  $$
  Since $\pi$ is the optimizer for the right-hand-side, using Theorem~\ref{thm:cadGuarantees} again,
  together with the inductive hypothesis for $w'=w/\wbase$, we
  conclude (where constant depends on $\eps$ and $\log_\gamma w$):
  $$
  \sum_{i\in [n]} \distG_w(X_{i,w},Y_{\pi(i),w})
  \le
  O\left(\sum_{i\in [n]} \tau(X_{i,w},Y_{\pi(i),w})\right)
  \le
  O\left(1+\sum_{i\in [n]} \cad_w(X_{i,w},Y_{\pi(i),w})\right)
  $$
  $$
  \le
  O\left(\min_\pi \sum_i \ed_w(X_{i,w},Y_{\pi(i),w})\right),
  $$
  since we can assume wlog that $\ed(x,y)\ge 1$ (checking the opposite
  is immediate), thus completing
  the proof of the inductive hypothesis.

The second property is immediate by construction of the graph $G_w$
and the fact that approximation of the distance oracle $\distG_w$ is
taken to be $10/\eps$.

Now we argue that the final output produced by the top-level algorithm
is a constant factor approximation. Consider the $\distG_w$ guarantees
for $w=n$, and fix the minimizing $\pi$, and constant $C_n=C_w$. For a
random index $i\in[n]$, with probability at least $0.9$, we have that:
1) $i\in[1,n-o(n)]$, 2) $\pi(i)\neq \bot$, and
$\distG(X_{i,n},Y_{\pi(i),n})\le O(C_n)\cdot \ed(x,y)$. Furthermore
note that $|i-\pi(i)|\le C_n\cdot \ed(x,y)$, and hence,
$\distG(X_{i,n},Y_{i,n})\le
\distG(X_{i,n},Y_{\pi(i),n})+\tfrac{10}{\eps}\cdot|i-\pi(i)|\le
O(C_n/\eps)\cdot \ed(x,y)$. Also, since $i\le n-o(n)$, we have that
$\distG(X_{i,n},Y_{i,n})\ge
\ed(X_{i,n},Y_{i,n})=\ed(x,y)$.

Concluding, the algorithm produces a $O(C_n/\eps)$ approximation to
$\ed(x,y)$, with probability $\ge 0.9$. Note that $C_n$ is a constant,
depending on $\eps$, as we have only a constant number of iterations,
each incurring a constant factor approximation.

The runtime guarantee follows from time guarantees of
Theorems~\ref{thm::distanceOracle}, \ref{thm::matching_guarantee},
and \ref{thm:cadGuarantees}. In particular, by
Theorems~\ref{thm::distanceOracle} and~\ref{thm:cadGuarantees}, the
runtime $T_\cad$ is $\tilde O(\gamma^{O(1)})\cdot
n^{O(\eps)}=n^{O(\eps)}$. Hence runtime in the matching algorithm, for
every fixed $w$, is $n^{1+O(\eps)}$. Thus the graph $G_w$ has size
$n^{1+O(\eps)}$ and the preprocessing time of
Theorem~\ref{thm::distanceOracle} is $n^{1+O(\eps)}$ as well. Overall,
we have a constant number of $w$'s to consider, and thus we obtain a
runtime of $n^{1+O(\eps)}$.

\section{Interval Matching Algorithm}\label{sec::int_match}

In this section we describe our main interval matching algorithm, used
to prove Theorem \ref{thm::matching_guarantee}.  The correctness and
runtime complexity analysis will follow in Sections
\ref{sec::match-correctness} and \ref{sec:runtime} respectively.

Our matching algorithm iterates over a constant number of steps $t$,
each iterating over a constant number of levels $l$.  The following
\textsc{MatchIntervals} algorithm is the main loop over steps.  We note that, in order to maintain high probability
statements, in each step we output $O(\log n)$ output coloring for
each input coloring.
 
 \begin{minipage}{\textwidth}
 \begin{algorithm}[H]
\caption{Matching Algorithm}
\label{alg::matching.main}	
    \hspace*{\algorithmicindent} \textbf{Input:} \hspace{2mm} Base cost $\basec$ \\
    \hspace*{\algorithmicindent} \textbf{Output:} A matching graph $G=G_{w,\basec}$.
\begin{algorithmic}[1]
\Function{MatchIntervals}{$\basec$}
	\State $\Kappa_0 \gets $ \{initialize a new coloring of a single color $\un$ assigned with mass 1 to all $\IC$\}.
	\State $G \gets$ unweighted and undirected graph with nodes $I \in \IC_w$ and no edges. 
	\For {$t=1,2,\ldots$}
		\State $\Kappa_t \gets \emptyset$.
		\For {$\{\kappa' \in \Kappa_{t-1} \mid \mu_{\kappa'}(\IC,\nu \setminus \bot) > 0 \}$}\label{alg::halt_cond}
			\For {$O(\log n)$ times}
				\State $(G,\Kappa_t) \gets (G,\Kappa_t) \uplus$\footnote{By $\uplus$, we refer to the coordinate-wise union, meaning we add the output edges to $G$ and coloring $\kappa$ to $\Kappa_t$.}%
					$\textsc{MatchStep}(\basec, \kappa',t)$.
			\EndFor
			\EndFor
			\State \textbf{break if} $\Kappa_t = \emptyset$.
		\EndFor
	\State \Return $G$.
\EndFunction
\end{algorithmic}
\end{algorithm}
 \end{minipage}
 
 \vspace{3mm}
 
It remains to describe the \textsc{MatchStep} algorithm, which
iterates over a constant number of levels. In each such level, the
algorithm updates the coloring $\kappa$ under construction, to obtain
the ``amended'' coloring $\widehat\kappa$, while also using the ``step
input coloring'' $\kappa'$, obtained at the end of the previous step.
At the end of the iteration over the levels, the algorithm
\textsc{MatchStep} produces a number of edges to add to the graph as
well as a coloring.  The pseudo-code for \textsc{MatchStep} is
presented in Alg.~\ref{alg::matching.new2}, after a detailed
description of the mechanisms and subroutines of \textsc{MatchStep}.

\subsection{Components of \textsc{MatchStep}: setup and notations}

We first introduce basic notions used in our algorithm. To help the
reader keep track of the many definitions and notations, we summarize
them in Table~\ref{tab:notation-setup} and
Table~\ref{tab:notation-new} for quick reference (the latter table is
in Section~\ref{sec::match-correctness}).
To avoid confusion, we use $\basec$ to denote $c$ from the theorem
statement. Also, to simplify notation, we refer to the oracle
$\cad_{w,\basec}$ simply as $\cad$.  Note that $\cad$ algorithm uses
the data structure $\distG_{w/\lambda}$, which we assume all our
algorithms in this section have access to.

\vspace{2mm}
\paragraph{The set of costs $E_\basec$.}
For each fixed {\em base cost $\basec\in S_w$}, we use a fixed set of costs
$E_\basec\subset \basec\cdot [1,O(1)]$. The set $E_\basec$ is defined as $\{c_1, c_2,\ldots
 c_{1/\eta}\}$, where $c_i = \tfrac{\basec}{\eps \cdot \alpha} \cdot 3^i$, %
 for small constants $\alpha, \eta > 0$ (to be fixed later).

\vspace{2mm}
\paragraph{Colorings.}
 Recall our definition of {\em coloring} $\kappa$ %
 over color-set $\nu$ as a mapping from intervals to {\em
   distribution of colors} in $\nu$, denoted by
 $\mu_\kappa: \IC\times \nu \rightarrow [0,1]$.

Our construction analyzes pairs of intervals and colors $(I,\chi) \in
\IC \times \nu$. For a set $\RC \subseteq \IC \times \nu$ of (interval, color) pairs, $\mu_\kappa(\RC)$ is the $\ell_1$ mass of colors in $\RC$, i.e. $\mu_\kappa(\RC) \triangleq \sum_{(I,\chi) \in \RC} \mu_\kappa(I,\chi)$. We often use the shorthand {\em pairs} when referring to (interval, color) pairs.

We use colorings to partition $\IC$ into smaller (overlapping)
parts. For a color $\chi$, we denote its part by $\PC_\kappa
^\chi \triangleq
\{ I \in \IC \mid \mu_\kappa(I,\chi) > 0\}$.

Finally, we equip each $\mu_\kappa$ with a data structure that allows
efficient sampling from it (Theorem~\ref{thm::ds_sampling}). The reader should henceforth %
consider that sampling takes time proportional
to the output size, up to $\poly(\log n)$ factors.

\vspace{2mm}
\paragraph{Proximity balls $\Lambda$.}
For coloring $\kappa$ and $\zeta \in \N$, we define
$\Lambda_\kappa^\zeta(I)$ as the %
largest interval ball around $I$
containing at most $\zeta$ \un-colored $\ell_1$-mass on each of left and right
of $I$, where $\mu_\kappa(I,\un)$ is counted on both sides. %
An exception to the above is for $\zeta=0$, when $\Lambda_\kappa^0(I) = \{ I \}$.
We note that for all $I \in \IC$, $\zeta \geq 1$, we have
$\mu_\kappa(\Lambda_\kappa^\zeta(I),\un) \leq 2\zeta$.

Extending the data structure for $\mu_\kappa$ above, we also use the data
structure from Theorem~\ref{thm::ds_sum} to be able to compute the
boundaries of any $\Lambda_\kappa^\zeta(I)$, given $I,\zeta$, in
$\poly(\log n)$ time.

\vspace{2mm}
\paragraph{Interval and Pair densities.}
For $I \in \IC$, $\kappa$, color $\chi \in \nu$ and interval set $\SC
\subseteq \IC$, we define a density parameter $\den_\kappa(I,\chi,\SC)_{\hat c} \triangleq \mu_\kappa(\NC_{\hat c}(I) \cap \SC,\chi)$. We  use the shorthand $\den_\kappa(I,\chi)_{\hat c} \triangleq \den_\kappa(I,\chi,\IC)_{\hat c}$.
We also define the density
vector $\den_\kappa(I,\chi,\SC)
\in \R_+^{E_\basec}$ such that:
$$\den_\kappa(I,\chi,\SC) \triangleq [\den_\kappa(I,\chi,\SC)_{\hat c}]_{\hat c \in E_\basec}$$

We also define {\em relative density} as follows:
\begin{definition}[Relative density.]
  Fix $I \in \IC$, $\kappa$, color $\chi$ and an interval set $\SC \ni I$. The relative density $\relden_\kappa(I,\chi,\SC) \in [1,\infty)^{E_\basec}$ is the density of $I$ w.r.t. each cost $\hat{c} \in E_\basec$, divided by such density restricted to the set $\SC$, i.e.,
		$\relden_\kappa(I,\chi,\SC) \triangleq \den_\kappa(I,\chi) \oslash \den_\kappa(I,\chi,\SC)$.\footnote{$\oslash$ denotes the Hadamard coordinate-wise division.}
\end{definition}

By convention, we set the relative density of empty colors to 1 (or $\infty$ if empty in $\SC$ only). Note
that $\relden_\kappa(I,\chi,\SC) \geq 1$ and is monotonically
decreasing in $\SC$. Both are important properties that will be used
when we bound density mass in ``growing balls of intervals''.

\vspace{5mm}

\begin{table}[h!]
\vspace{-1ex}
\caption{Summary of matching algorithm setup notations.}
  \label{tab:notation-setup}
 \centering
 {\renewcommand{\arraystretch}{1.1}
   \begin{tabular}{|l|l|}
    \hline
    {\bf Notation} 	&	{\bf Description}\\
   \hline
   
      \multicolumn{2}{| >{\bf}c |}
{Intervals}
        \\
    \hline
    $\XC, \XC_w$  & space of all $w$-length intervals of $x$.    \\
        \hline
    $\YC, \YC_w$  & space of all $w$-length intervals of $x$.    \\
        \hline
    $\IC, \IC_w$  & space of all $w$-length intervals $= \XC \cup \YC$.    \\
    \hline

      \multicolumn{2}{| >{\bf}c |}
{Distances}
        \\
        \hline
    $(\IC,\cad_w),(\IC,\cad_{w,\basec})$  &
    alignment-distances. $\cad_w$ align-approximates
    $\ed$. $\cad_{w,\basec}$ is a metric.    \\
    \hline
    $(\R_+^d,\dd_F)$  & distortion resilient distance for $F>1$. $\dd_F(p,q) = \sum_{i:p_i > F\cdot q_i} p_i$   \\
        \hline

   \multicolumn{2}{| >{\bf}c |}
 	{Costs, Neighborhood}
        \\
    \hline
        $\basec$& the current base cost for which we are building the
        current graph $G_{w,\basec}$
 	\\
    \hline
        $S_w$& set of possible base costs $\{1/2n,\ldots
 	1/2,1,2,4,\ldots w\}$
 	\\
    \hline
        $E_\basec, c_i$ & $c_i= \tfrac{\basec}{\alpha \cdot \epsilon} \cdot 3^i$ and
    $E_\basec=\{c_1,\ldots c_{1/\eta}\}$, for a small constant $\eta$, dependent on
    $\epsilon$ only
    \\
        \hline
$\NC_c(I)$ & neighborhood of radius (cost) $c$: the set of all $J\in\IC$ with $\cad(I,J)\le c$
    \\
    \hline

   \multicolumn{2}{| >{\bf}c |}
{Colorings}
        \\
    \hline
    $\nu$ & $[\lambda^t]\cup \{\un, \bot\}$ color set
    \\
    \hline
    $\kappa'$ & input coloring obtained from the step $t-1$
    \\
    \hline
    $\kappa$ & current state of output coloring (at step $t$, level $l$)
           \\
 	\hline
    $\mu_\kappa(I,\chi)$ & mass of $(I,\chi)$ in
    $\kappa$; $\mu_\kappa(I,*)$ is a probability distribution
    \\
       \hline    
       $\PC_\kappa
^\chi$ & the support of $\mu_\kappa(*,\chi)$ (is a set of intervals).
    \\
       \hline
   \multicolumn{2}{| >{\bf}c |}
 	{Densities, Proximity Balls}
        \\
    \hline
    $\den_\kappa(I,\chi,\SC)_{\hat c}$ & $= 
    \mu_\kappa(\NC_{\hat c}(I) \cap \SC,\chi)$ is the $\hat c$-density of $(I,\chi)$ in interval set $\SC$   
    \\
    \hline
    $\den_\kappa(I,\chi,\SC)$ & $= 
  \den_\kappa(I,\chi,\SC)_{*}$ is a vector of densities,
    where $*$ ranges over all
    costs in $E_\basec$
    \\
    \hline
    $\relden_\kappa(I,\chi,\SC)$ & $= \den_\kappa(I,\chi, \IC) \oslash \den_\kappa(I,\chi,\SC)$
    is a vector obtained by coordinate-wise division
    \\
    \hline

    $\Lambda_\kappa^\zeta(I)$ & largest interval ball around $I$
containing $< \zeta$ \un-colored $\ell_1$-mass on each side
of $I$    
\\
        \hline
\end{tabular}}
\end{table}

Our main matching algorithm uses estimates of densities of (intervals,
color) pairs. In order to estimate the densities fast, we use standard
sampling, as implemented by
algorithms $\textsc{ApproxDensity}$ and
$\textsc{ApproxRelativeDensity}$
(Alg.~\ref{alg::matching.approx_densities}), whose guarantees are as
follows.

\begin{lemma}[Approximating Densities, proof in Section~\ref{sec:keyCorrectness}]\label{lm::approx_density}
	Fix interval $I \in \IC$, interval ball $S \subseteq \IC$,
        color $\chi$ in coloring $\kappa$, and cost $c\in\E_\basec$.
        Then, recalling that $T_{\cad}$ is the runtime of a
        $\cad(\cdot,\cdot)$ oracle call: 
	\begin{enumerate}
		\item For any given minimal density $\den_m > 0$, the algorithm
                  $\textsc{ApproxDensity}$ 
                  outputs $\widehat{\den} \in [\Omega(1),1] \cdot \max\{\den_\kappa(I,\chi,\SC)_c,
                  \den_m\}$
                  whp, in time $T_{\text{D}} =
                   T_{\cad} \cdot  \tO\left(\tfrac{\mu_\kappa(\SC,\chi)}{\den_m} + 1\right)$.
		\item Assume $I \in \SC$. For any given minimal
                  relative density $\relden_m \geq 1$, the algorithm
                  $\textsc{ApproxRelativeDensity}$ 
                 outputs $\widehat{\relden} \in [\Omega(1),1] \cdot \max\{\relden_\kappa(I,\chi,\SC)_c,
                  \relden_m\}$ 
                  whp, in time 
                  $$T_{\text{RD}} =
                  T_{\cad} \cdot \tO\left(\tfrac{1}{\relden_m}\cdot\tfrac{\mu_\kappa(\IC,\chi)}{\mu_\kappa(I,\chi)}
                  +\tfrac{\mu_\kappa(\SC,\chi)}{\mu_\kappa(I,\chi)}\right).$$
	\end{enumerate}

\end{lemma}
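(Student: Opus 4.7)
\textbf{Proof proposal for Lemma \ref{lm::approx_density}.} The plan is straightforward importance sampling: to approximate the mass of color $\chi$ in $\NC_c(I)\cap \SC$, sample intervals from $\SC$ according to the distribution $\mu_\kappa(\cdot,\chi)$ using the sampling data structure of Theorem \ref{thm::ds_sampling}, and test membership in $\NC_c(I)$ by a single oracle call to $\cad_{w,c}(I,\cdot)$, which costs $T_{\cad}$. The rest is Chernoff bookkeeping, with the thresholds $\den_m$ and $\relden_m$ playing the role of ``floor'' values that let us stop sampling once we know we are below them.

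For part 1, let $p = \den_\kappa(I,\chi,\SC)_c/\mu_\kappa(\SC,\chi)$, the probability that a sample drawn from $\mu_\kappa(\cdot,\chi)$ restricted to $\SC$ lands in $\NC_c(I)$. I would draw $N = \Theta(\log n\cdot \mu_\kappa(\SC,\chi)/\den_m)$ such samples, let $X$ denote the number that are $\cad$-close to $I$, and output $\hat d := \max\{X\cdot \mu_\kappa(\SC,\chi)/N,\ \den_m\}$. If $\den_\kappa(I,\chi,\SC)_c\ge \den_m$ then $\E[X] = Np \ge \Theta(\log n)$, so by a multiplicative Chernoff bound $X = \Theta(\E[X])$ whp and $\hat d$ is a constant-factor approximation to $\den_\kappa(I,\chi,\SC)_c$. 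If $\den_\kappa(I,\chi,\SC)_c < \den_m$, the upper Chernoff tail still yields $X \le O(N\den_m/\mu_\kappa(\SC,\chi))$ whp, so the empirical estimate is $O(\den_m)$ and clamping from below produces $\Theta(\den_m)$. The runtime is $N\cdot(T_{\cad}+\polylog n) = T_{\cad}\cdot\tO(\mu_\kappa(\SC,\chi)/\den_m+1)$, matching $T_D$.

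For part 2, I would estimate the numerator $D := \den_\kappa(I,\chi)_c$ and denominator $D_\SC := \den_\kappa(I,\chi,\SC)_c$ separately by invoking part 1 twice, then return $\max\{\hat D/\hat D_\SC,\ \relden_m\}$. The key observation is that $I\in\SC$ and $\cad(I,I)=0$ give the free lower bound $D_\SC\ge \mu_\kappa(I,\chi)$, which lets me invoke part 1 on $\SC$ with floor $\den_m := \mu_\kappa(I,\chi)$, yielding a constant-factor estimate $\hat D_\SC$ of $D_\SC$ in time $T_{\cad}\cdot\tO(\mu_\kappa(\SC,\chi)/\mu_\kappa(I,\chi))$. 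For the numerator I invoke part 1 on $\IC$ with floor $\den_m := \relden_m\cdot \mu_\kappa(I,\chi)$, in time $T_{\cad}\cdot\tO(\mu_\kappa(\IC,\chi)/(\relden_m\mu_\kappa(I,\chi)))$. If the true $\relden\ge \relden_m$ then $D\ge \relden_m D_\SC\ge \relden_m\mu_\kappa(I,\chi)$, so the numerator estimate $\hat D$ is a constant-factor approximation to $D$, and therefore $\hat D/\hat D_\SC\approx \relden$. Otherwise $D< \relden_m D_\SC$; part 1 only guarantees $\hat D = O(\relden_m\mu_\kappa(I,\chi))$ by its upper-tail behavior, and since $\hat D_\SC = \Omega(\mu_\kappa(I,\chi))$, the ratio is $O(\relden_m)$, so clamping up to $\relden_m$ yields $\Theta(\relden_m) = \Theta(\max\{\relden,\relden_m\})$. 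Adding the two runtimes gives the claimed $T_{RD}$.

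The main obstacle is in part 2: a priori, a ratio estimator requires both terms to be individually well-concentrated, which would seem to demand sample sizes inversely proportional to each of $D$ and $D_\SC$, potentially blowing up if either is tiny. The trick that makes the claimed bound go through is the observation $D_\SC\ge \mu_\kappa(I,\chi)$, which (i) caps the denominator's sample complexity at $\mu_\kappa(\SC,\chi)/\mu_\kappa(I,\chi)$, and (ii) provides the natural threshold $\relden_m\cdot \mu_\kappa(I,\chi)$ on $D$ below which we may safely give up estimating the numerator and just report $\relden_m$. The only other subtlety is implementation-level: the sampling data structure of Theorem \ref{thm::ds_sampling} must support drawing from $\mu_\kappa(\cdot,\chi)$ restricted to an interval ball $\SC$ in $\polylog(n)$ time per sample, which is standard given the coloring representation and is absorbed into the $\tO(\cdot)$.
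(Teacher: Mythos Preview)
Your proposal is correct and matches the paper's own proof essentially line for line: the paper likewise uses weighted subsampling plus Chernoff for part (1), and for part (2) invokes part (1) twice with the same two floors $\den_m=\mu_\kappa(I,\chi)$ and $\den_m=\relden_m\cdot\mu_\kappa(I,\chi)$, relying on exactly the observation $D_\SC\ge \mu_\kappa(I,\chi)$ that you highlighted. The only cosmetic difference is that the paper's \textsc{ApproxDensity} samples via independent Bernoulli inclusion (each $I'\in\SC$ kept with probability $\gamma\mu_\kappa(I',\chi)$, $\gamma=\Theta(\log n/\den_m)$) rather than i.i.d.\ draws from the normalized distribution, but the concentration argument and runtime are identical.
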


\begin{algorithm}[H]
\caption{Matching Phase: Approximating Densities}
\label{alg::matching.approx_densities}
    \hspace*{\algorithmicindent} \textbf{Input:} \hspace{2mm} interval
    $I$, color $\chi$, interval set $\SC$, cost $c$, additive parameter
    $\den_m$\ /\ $\relden_m$ and access to $\mu_\kappa$ \\
    \hspace*{\algorithmicindent} \textbf{Output:} $\Theta(\max\{\den_\kappa(I,\chi,\SC)_c,
                  \den_m\})$ and $\Theta(\max\{\relden_\kappa(I,\chi,\SC)_c,
                  \relden_m\})$ respectively
\begin{algorithmic}[1]
\Function{ApproxDensity}{$I, \chi, \SC, \den_m, c, \mu_\kappa$}
\State $\gamma \gets O\left(\tfrac{\log n}{\den_m}\right)$ %
\State $\SC' \gets$ Sample each $I' \in \SC$ independently with probability $\min\{\gamma \cdot \mu_{\kappa}(I',\chi),1\}$.%
\State $\SC^* \gets \NC_{c}(I) \cap \SC'$ by computing $\cad(I,I')$ for all $I' \in \SC'$ and keeping all $c$-matches.
\State $\widetilde{\SC^*} \gets \{I' \in S^* \mid \gamma \cdot \mu_{\kappa}(I',\chi) > 1\}$.
\State \Return $\Theta\left(\max\left\{\mu_\kappa(\widetilde{\SC^*},\chi) + \tfrac{1}{\gamma}\cdot|\SC^* \setminus \widetilde{\SC^*}|,\den_m\right\}\right)$. %
 \Comment $\Theta \gets$ rescaling to get $[\Omega(1),1]$-approx.
\EndFunction

\Function{ApproxRelativeDensity}{$I, \chi, \SC, \relden_m, c,\mu_\kappa$}
\State $\hat{d} \gets \textsc{ApproxDensity}(I, \SC, \mu_\kappa(I,\chi), c,\mu_\kappa)$.
\State $\hat{D} \gets \textsc{ApproxDensity}(I, \IC, \relden_m \cdot \mu_\kappa(I,\chi), c,\mu_\kappa)$.
\State \Return $\Theta\left(\max\left\{\tfrac{\hat{D}}{\hat{d}},\relden_m\right\}\right)$. %
 \Comment $\Theta \gets$ rescaling to get $[\Omega(1),1]$-approx.
\EndFunction

\end{algorithmic}
\end{algorithm}

\vspace{2mm}
\paragraph{Soft transformations.}
We define a couple of ``soft'' transformations used by the algorithm:
soft thresholding, and soft quantile. Their purpose is to replace
``hard'' thresholds, thus balancing complexity vs correctness.

The {\em soft thresholding} transformation helps
us with preserving sparsity of palettes. For $\delta \in (0,1]$ and $\gamma > 0$, define $T^{q}_{\delta,\gamma}:
\R^d_+ \rightarrow \R^d_+$ to be the transformation:

$$T^{q}_{\delta,\gamma}(x)_i \triangleq \begin{cases}
	x_i	&	x_i \geq \gamma \\
	0	& x_i < \delta \gamma \\
	\gamma \cdot (\tfrac{x_i}{\gamma})^q	&	\text{otherwise.}
\end{cases}$$

The basic intuition is that $T^{q}_{\delta,\gamma}$ softens the
threshold at about $\gamma$ to decay (polynomially) between $\gamma$
and $\delta\gamma$, when it becomes 0. Our algorithms use a few thresholding transformations with different
parameters.

Also, define the {\em soft quantile} transformation $Q_{\delta,s,F}: \R^d_+ \rightarrow
\R$, for parameters $\delta, s\in[0,1]$, and $F\ge 1$:
$$Q_{\delta,s,F}(x) \triangleq \max_{J \subseteq [d]} a_{|J|} \cdot \min_{j \in J} x_j; 
\hspace{5mm} 
\text{ where } a_i = \begin{cases}
	1	& i \geq s\cdot d \\
	0	& i < (s-\delta)d \\
	1/F^{sd-i}	&	\text{otherwise.}
\end{cases}$$

We use one such transformation: $Q_l(x) \triangleq
Q_{\tfrac{\epsilon}{4},1-\tfrac{(l+1)\epsilon}{2},n^{O(\alpha^2)}}(x)$. The
intuition is that $Q_l$ is a smoothing between
$\tfrac{(l+1)\epsilon}{2}$ fractional rank element from $x$ (in
sorted order), to $\tfrac{(l+1)\epsilon}{2}+\eps/4$ fractional rank.

We discuss and prove the properties of these soft transformations in
Section~\ref{sec:keyCorrectness}.

\subsection{Components of \textsc{MatchStep}: main coloring procedure}

To amend the coloring $\kappa$, we compute a set of potential scores, $\phi, \varphi$,
and $\theta$. In particular, we sample {\em anchor} intervals to
define $\phi$ potential scores. Then, such scores are divided over
$\Lambda$-balls centered at the anchors and $\cad$-close intervals, to
generate $\varphi$ palettes to all other intervals. We also sample
{\em pivots} to define $\theta$ potential scores, and use it to
augment $\varphi$.  Lastly, we use $\varphi$ palettes to amend
coloring $\kappa \rightarrow \hkappa$ for the next level. We describe
this procedure in detail next.

\paragraph{Steps and levels.}  
In each step $t \in [\log_\lambda n]$, where $\lambda= n^\epsilon$,
for a given input coloring $\kappa'$, we produce $O(\log n)$ output
colorings. The goal for step $t$ is that for each interval $I \in
\IC$, either: (1) we cluster it together with $\pi[I]$ and mark them
as ``already matched'', or (2) color $I$ similarly to $\pi[I]$ up to
some {\em bounded distortion} (while also ensuring the color parts are
decreasing with $t$). Here, our goal is to efficiently assign colors
to intervals in $\IC$ (which can be thought of as overlapping parts in
a partition), so that we can compare sampled anchors to some limited
number of other intervals that share the same color. Overall, as $t$
increases, the number of colors in the coloring grows and the size of
each color (number of intervals of that color) becomes smaller,
allowing us to increase the number of sampled anchors.

We maintain the following set of {\em coloring properties} for a
coloring $\kappa$ at a step $t$, which uses the color-set $\nu =
[\lambda^t] \cup \{ \un, \bot\}$, analyzed and proved in later sections:

 \begin{itemize}
 	\item {\em Sparsity:} Each color $\chi \in \nu \setminus
          \{\bot \}$ will be non-zero for (i.e., shared by) few intervals:
          $|\PC_\kappa^\chi| = O(n^{1+O(\eps)}/\lambda^t)$. See Lemma~\ref{lm::partition_size_imp}.
 	\item {\em Correctness:} Coloring $\kappa$ will have {\em
          bounded corruption} (as per Eqn.~\eqref{eqn:xiDef}), meaning ``accumulated error'' is
          bounded by $O(1)$ times the ``original error'',
          $O(\ed(x,y))$. See Lemma~\ref{lm::corruption_growth_imp}.
 \end{itemize}

At each step, our algorithm iterates over {\em levels} $l \in
\{0,\ldots,\log_\beta n\}$, where $\beta=n^\eps$. Intuitively, each
level takes case of different {\em density regimes} of pairs: lower levels will correspond to {\em high-density pairs} and high levels to {\em low-density pairs}. At the lowest level $l=0$, our goal will be to cluster and match high density pairs, and mark them as {\em already-matched}. 
In the subsequent levels, our goal is to color balls of intervals which have large mass of pairs of the corresponding densities.

\paragraph{Anchor Sampling.} In each level $l\ge 0$, for each output color
 $\chi \in \nu_t \setminus \{\un,\bot\}$, we sample an {\em anchor},
which is an interval, color pair $(A, \chi') \in \IC \times \nu_{t-1}
\setminus \{\bot\}$, from the distribution
$\tfrac{\mu_{\kappa'}(*,*)}{|\IC|}$ (if sampled $\chi'=\bot$, we skip
this anchor/$\chi$). Each such anchor is compared to all other
intervals in $\PC_{\kappa'}^{\chi'}$ to form a family of $O_\eps(1)$ clusters
$\CC_\chi$. In level $l=0$, we
add a graph-$G$ edge from $A$ to each clustered interval, and mark such
intervals as {\em already-matched} with $\bot$-color.  In the other
levels $l\ge1$, such clusters are extended to other intervals in the
clusters' {\em proximity}, adding $\chi$ to their palettes $\varphi$,
as described in detail below.

\paragraph{Clustering.}
For each sampled pair $(A, \chi')$, we estimate the $\cad$ distance between $A$ and all other intervals $I
 \in \PC_{\kappa'}^{\chi'}$, using the $\cad$ oracle. We also sample a cost $\hat{c} \sim E_\basec$ 
 uniformly at random.
Next consider subsets $\A_j \subseteq \PC_{\kappa'}^{\chi'}$ defined as $\A_j =
\NC_{\hat{c} + \basec\cdot j}(A) \cap \PC_{\kappa'}^{\chi'}$, for $j \in
\{0,\ldots,j_{\max}\}$ where $j_{\max}=O(1/\alpha)$ ($\alpha$ is still
tbd). While the use of such sampling process will be shown later, the important {\em clustering property} to note here is that for $j < j_{\max}$, we have $\NC_{\basec}(\A_j) \cap \PC_{\kappa'}^{\chi'} \subseteq \A_{j+1}$.
The formal clustering algorithm {\sc ClusterAnchor} is presented in Alg.~\ref{alg::imp.subroutines}.

\begin{algorithm}[H]
\caption{Matching Phase: Clustering around random anchors}
\label{alg::imp.subroutines}	
    \hspace*{\algorithmicindent} \textbf{Input:} \hspace{2mm} Anchor pair $(A,\chi')$, interval set $\RC$, base cost $\basec$ \\
    \hspace*{\algorithmicindent} \textbf{Output:} ``slowly-growing''
    clusters around $A$: $\{(\A_{\hat{c}+\basec j}, \hat{c}+\basec j,j,d_\A)\}_j$ for a random cost $\hat{c} \sim E_\basec$.\\
      \hspace*{\algorithmicindent}\hspace{18mm} where $d_\A$ is an upper-bound for $\{d_{\kappa'}(I,\chi')_{\hat c + \basec j} \mid (I,j) \in \IC \times [j_{\max}]\}$
\begin{algorithmic}[1]
\Function{ClusterAnchor}{$A,\RC, \chi', \basec$}
\State Sample $\hat{c} \sim E_\basec$ uniformly at random.
\State For all $I \in \RC$, compute $c_{A,I}\triangleq\cad(A,I)$.
\State Below, for a parameter $\tau$, we use notation $\A_\tau = \{I : c_{A,I} \leq \tau\}$.
\State \Return $\{(\A_{\hat{c}+\basec j}, \hat{c}+\basec j,j,\mu_{\kappa'}(\A_{3\hat{c}},\chi')) \mid j=0,1,\ldots,j_{\max}\}$.
\EndFunction
\end{algorithmic}
\end{algorithm}

\paragraph{Coloring: assigning $\phi$ potential to clusters.}
Next we assign potentials $\phi$ to the clustered
intervals in $\CC_\chi$.
Later, using potentials $\phi$, we will assign $\varphi$ potential to other nearby
intervals in the proximity of the clustered intervals (as described in Section~\ref{sec:toMatching}).

\ns{amended...}
From the above clustering algorithm, for a fixed output color $\chi$,
and corresponding, sampled
$A,\chi',\hat{c}$, we get a family of clusters
$\CC_\chi \triangleq \{(\A_j,c_{\A_{j}},j,d_\A)\}_j$ holding the clustering
invariant as described above. %
We then define potential $\phi_\kappa(I,\chi)$ 
in new coloring $\kappa$ for all intervals $I\in\cup_j \A_j$, using
Alg.~\ref{alg::imp_matching.process_cluster}. 

Intuitively, for sampled $(A,\chi',\hat c)$, we would like to
distribute $n/\lambda^t$ potential credits ``equally'' among
$I\in\PC_{\kappa'}^{\chi'}\cap \NC_{\hat c}(A)$, namely
$\phi_{\kappa}(I,\chi)=\tfrac{2n}{\lambda^t}\cdot
\tfrac{\mu_{\kappa'}(I,\chi')}{\den_{\kappa'}(A,\chi')_{\hat c}}$
(note that this sums up to $2n/\lambda^t$ over all $I$, and to $2n$ over
all anchors/$\chi$'s). This method however does not satisfy the
necessary $\phi$ properties, requiring couple adjustments. Before
describing the adjustments, we state these necessary properties,
termed {\em $\phi$ scoring
  invariants}, which we will guarantee:

\begin{claim}[$\phi$ Invariants, proved in Section~\ref{sec:keyCorrectness}]\label{cl::phi_uncorrupted}
	Fix step $t$ and level $l$. Fix a color $\chi$ in an output coloring $\kappa$,
          for which we 
          have sampled an anchor pair $(A,\chi'') \in \IC \times \nu_{t-1}
          \setminus \{\bot\}$ where $\chi''$ is some color
          in $\kappa'$, and a sampled cost $\hat c\in E_\basec$. The $\phi$ scores from
        Alg.~\ref{alg::imp_matching.process_cluster} satisfy the following invariants:

 \begin{enumerate}
 	\item {\em Correctness:}
          For any $I
          \in \IC$ and distortion $F\ge 1$, if $(I,\chi')$ is {\em not
            $F$-corrupted pair} (for some fixed alignment $\pi$), then $\dd_{O(F \cdot
            n^{2\alpha})}(\phi_\kappa(I,\chi),\phi_\kappa(\pi[I],\chi))
          \leq n^{-10}$.
 	\item {\em Maximal Contribution}: %
 	For all $(I,\chi') \in \IC \times \nu_{t-1}\setminus \{\bot\}$, the
          expected potential contribution of $(I,\chi')$ to $\phi_\kappa(I,\chi)$ satisfies:
 	$$\E_{A,\chi'',\hat c}\left[\phi_{\kappa}(I,\chi) \cdot
          \1[\chi'=\chi''] \right] = O\left(\tfrac{\mu_{\kappa'}(I,\chi')}{\lambda^{t}}\right).$$

 	\item {\em Minimal Contribution/Balance}: %
 	For all $(I,\chi') \in \IC \times \nu_{t-1}\setminus \{\bot\}$, the
          expected potential contribution of $(I,\chi')$ to $\phi_\kappa(I,\chi)$
          at least equals its mass
          $\mu_{\kappa'}(I,\chi')$ on almost all costs in $E_\basec$:
          in particular, for all but $O(\eta/\alpha)$ fraction of
          costs $\hat c\in E_\basec$:
    $$\E_{A,\chi''}\left[\phi_{\kappa}(I,\chi) \cdot
          \1[\chi'=\chi''] \right] \ge \tfrac{\mu_{\kappa'}(I,\chi')}{\lambda^{t}}.$$
     
	\end{enumerate}
\end{claim}

The first invariant ensures uncorrupted pairs $(A,\chi')$
add uncorrupted $\phi$ potential, in particular that
$I$ and $\pi[I]$ get similar potential
$\phi$. To guarantee the invariant, we use the weaker transitivity property, namely the clustering
property that
$\NC_{\basec}(\A_j)  \cap \PC_{\kappa'}^{\chi'} \subseteq \A_{j+1}$, and
hence that for uncorrupted pairs, $I \in \A_j \Rightarrow \pi[I] \in
\A_{j+1}$ (this is the reason we have multiple clusters $\A_j$ to
start with). In addition, we approximate densities with a threshold lower-bounded by a factor $n^{O(\alpha)}$ from the maximum density of the cluster, so that all approximated densities are within a $n^{O(\alpha)}$ bound (this treshold also helps maintaining efficiency constraints). An additional caveat is that we cannot use this argument
for $j=j_{\max}$. To fix this, our algorithm multiplies the potential of
each $\A_j$ by an exponentially decreasing coefficient $\gamma_j =
n^{-j \alpha}$, which ensures that each meaningful potential added to
$I$ generates a similar potential in $\pi[I]$.

The second invariant ensures that we do not assign too much mass to any pair
(notably a corrupted one). The na\"ive assignment of $\phi$ would
jeopardize this invariant because the corrupted intervals
may be ``$\cad$-centers'', i.e., slightly denser than their neighbors (for
every cost), and hence receive more $\phi$ in expectation. To overcome
this issue, we estimate the density $\den_{\kappa'}(I,\chi')_{c_\A}$ of each
$I \in \A$, denoted
$\widehat{d_{I,\chi'}}$, and add potential proportional to
$1/\widehat{d_{I,{\chi'}}}$ (instead of
$1/\den_{\kappa'}(A,\chi')_{\hat c}$). 

The third invariant guarantees we assign enough potential overall at each step and its importance will become clear later once we discuss the balance of colors. %

The formal algorithm {\sc AssignPhiPotential} appears in Alg.~\ref{alg::imp_matching.process_cluster}.

\begin{algorithm}[H]
\caption{Matching Phase: assign potential to Intervals in $\XC, \YC$}
\label{alg::imp_matching.process_cluster}
    \hspace*{\algorithmicindent} \textbf{Input:} \hspace{2mm} Output color $\chi$, cluster $(\A,\chi')$, density bound $d_\A$, cost $c$, and a ``decaying'' parameter $j$. \\
    \hspace*{\algorithmicindent} \textbf{Output:} Assign $\phi$ potential to all $I \in \A$.    
\begin{algorithmic}[1]

\Function{AssignPhiPotential}{$\chi,\A,d_\A,\chi',c,j, \mu_{\kappa'}$}
\State $\gamma \gets n^{-\alpha \cdot j}$.
\For{ $I \in \A$ }
					\State $\widehat{d_{I,j}} \gets %
					\textsc{ApproxDensity} (I,
                                        \chi', \IC, d_{\A} \cdot n^{-\alpha}, c, \mu_{\kappa'})%
					$.
					\State $\phi_{\kappa}(I,\chi) \gets \phi_{\kappa}(I,\chi) + \gamma \cdot \tfrac{2n}{\lambda^t} \cdot \tfrac{%
					\mu_{\kappa'}(I,\chi')}{\widehat{d_{I,j}}}$.
				        \EndFor
\EndFunction	
\end{algorithmic}
\end{algorithm}

\paragraph{Coloring: assigning $\varphi$ scores to intervals.}
Given $\phi$ potentials, we assign $\varphi$ scores to other
non-clustered intervals in the proximity of the clustered
ones. Intuitively, we would like to assign potential $\varphi$ to each
interval within the ball (in index distance) of fixed radius centered at any
clustered interval. However, sometimes we need to group together far
sections, and hence we define ball radiuses with respect to fixed {\em
  \un-color} mass (in the current coloring $\kappa$).

Specifically, let %
$Z_l = \beta^l \cdot \{1,2,4,8,\ldots,n/\beta^l\}$
be an exponentially growing set of
radiuses. Then for each $\zeta \in Z_l$, we define potential score vector
$\varphi_\kappa^\zeta(
\cdot,\cdot):\IC\times [\lambda^t]\cup \{\un\}\to [0,1]$ as
follows. For $\chi\in[\lambda^t]$, we define:

\begin{equation}\label{eq::varphi_not0}
\varphi_\kappa^\zeta(I,\chi) = T^\cc\left(\mu_\kappa(I,\un) \cdot \tfrac{\beta^{2l}}{\zeta^2} \cdot \tfrac{\phi_{\kappa}(\Lambda_\kappa^\zeta(I),\chi)}{\beta^l}\right)	
\end{equation}
where $T^\cc = T^{\Theta(\epsilon^{-1})}_{\beta^{-1},\beta^{-3}}$. We
define $\varphi(\cdot, \un)$ later (using potentials $\theta$).

We will guarantee the following {\em $\varphi$ scoring properties} for
all $I\in \IC$
except for intervals whose proximity is sufficiently
corrupted (and hence do not need guarantees):

\begin{itemize}
	\item {\em Correctness:} for any $\zeta \in Z_l$, the
          contribution of $\phi$-potential from $F$-uncorrupted pairs
          to $\varphi_\kappa^\zeta(I,*_{\neq \un})$ matches, up to
          $F^{O(1/\epsilon)}$ factor distortion, the contribution to
          $\varphi_\kappa^{\zeta'}(\pi[I],*_{\neq \un})$, for slightly
          larger $\zeta'$, up to an extra additive error $n^{-10}$.  See Lemma~\ref{lm::dd_dist_imp}.
        \item {\em Complexity:} For any color $\chi$, the number of
          $I\in \IC$ with $\varphi_\kappa^\zeta(I,\chi)\neq 0$ is $\tfrac{n
          \beta^{O(1)}}{\lambda^t}$. This will follow from the fact that
          thresholding $T^\cc$  ensures that
          $\varphi^\zeta_\kappa(I,\chi) = 0$ whenever
          $\mu_{\kappa}(I,\un) \cdot
          \phi_\kappa(\Lambda_\kappa^\zeta(I),\chi) < \zeta^2/\beta^{l+O(1)}$. Implicitly in Lemma~\ref{lm::partition_size_imp}.
\end{itemize}

\paragraph{Coloring: assigning the \un-colored $\varphi(\cdot,\un)$.}
We recall that the $\un$-color is used to 1) efficiently partition
intervals of any density together with their corresponding matches in $\pi$,
and 2) ``group'' together sparse sections that might be far apart.

We assign $\varphi(\cdot,\un)$ in a slightly different manner than
$\varphi(\cdot, *_{\neq \un})$. First, we sample a number
of random pairs $\VC$ termed {\em pivots}, directly estimate their
densities, and assign sparsity $\theta$ scores. We then use $\theta$
scores to assign $\varphi(\cdot, \un)$ scores to nearby intervals in a
similar manner to how we used $\phi$ to assign $\varphi(\cdot, *_{\neq
  \un})$.  Our \un-coloring procedure will have the following {\em
  \un-coloring guarantees}:

\begin{itemize}
   	\item {\em Correctness:} The distortion between uncorrupted $(I,\un)$ and
  $(\pi[I],\un)$ is bounded (as for the non-$\un$
  colors). See Lemma~\ref{lm::dd_dist_imp}.
	\item {\em Sparsity:} At level $l$, for any interval $I \in
          \IC$, for $\delta=\varphi^{\beta^l}_\kappa(I,\un)$, there is a set of pairs $\QC$ in the proximity of
          $\Lambda_{\kappa}^{\beta^l}(I)$ %
           with
          $\mu_{\kappa'}(\QC) = \Omega(\delta\beta^l)$ such that
          all pairs in $\QC$ are sparse on majority of possible costs
          $\in E_\basec$. The exact property will be described in the proof of Claim \ref{cl::Q_U_bound}.
	\item {\em Balance:} for every interval $I \in \IC$, we will
          have $\normo{\mu_\kappa(I,*_{\neq \un})+\varphi_\kappa(I,*)}
          = \Omega_\eps(1)$. This ensures that we can re-normalize the
          coloring at level $l$ with only $O(1)$-factor corruption
          blow-up. See Lemma~\ref{lm::min_color_mass}.
\end{itemize}

The high-level idea is as follows. Consider an interval $I$. If there is a set of pairs $(J,\chi')$ in $I$'s proximity of total mass $\delta \beta^l$, 
where for each such pair the relative density
$\relden_{\kappa'}(J,\chi',\Lambda_\kappa^{\beta^l}(J))_{c}$
is at most $\Theta^*(\tfrac{\totalmu}{\lambda^t \cdot \beta^l})$, for
sufficiently many costs $c \in E_\basec$, then we have the ``sparsity
guarantee'' we need for keeping $I$ \un-colored for the next level with $\delta$ mass. However, if $\delta$ is small enough, then we expect to find sufficiently many intervals of the ``right'' density to color $I$ with non-$\un$ colors to obtain the ``balance'' guarantee as above. 

We assign $\varphi(\cdot,\un)$ in three stages. First, we randomly sample a multi-set $\VC$ consisting of
$O^*(\tfrac{n}{\beta^{l}})$ 
pivot pairs
by including a pair $(I,\chi')\in\IC \times
\nu \setminus \{ \bot \}$ with probability $\mu_{\kappa'}(I,\chi')\cdot \beta^{-l}$,
independently, $k=\beta^{O(1)}$ times (i.e., a sample $(I, \chi')$ can
have multiplicity up to $k$). Second, for each pivot $(V,\chi')
\in \VC$, for each possible radius $\zeta \in Z_l$, we generate {\em
  potential sparsity score} $\theta^{\zeta}_l(V,\chi')$, which can be
thought of as ``the mass of colors which $V$ is relatively sparse on, for many costs''. To obtain that, we iterate over all costs $\hat{c} \in E_\basec$ and estimate an upper bound on 
$\reallywidehat{\relden_{\kappa'}(V,\chi',\Lambda_\kappa^{\zeta}(V))_{\hat{c}}}$, using the approximation algorithm \textsc{ApproxRelativeDensity}.

To maintain near-linear runtime overall, we can afford at most
$O^*(\beta^l)$ time for \textsc{ApproxRelativeDensity}
per pivot (on average),  
and hence we set the ``min threshold'' $\relden_m$ parameter to
$\Theta^*(\tfrac{\totalmu}{\lambda^t \cdot \beta^l})$. Also for the
runtime bound, we would need that the local color-mass
$\mu_{\kappa'}(\Lambda_{\kappa}^\zeta(V),\chi')$ is at most $O^*(\beta^l) \cdot \mu_{\kappa'}(V,\chi')$.
When the latter condition doesn't hold, we do not need to do any testing as the relative density will be lower than the bound we care about on average across all potential local $(V,\chi')$ pairs. %

We use the estimate of relative density to generate $\sigma_{V,\chi'} \in [0,1]$ based on
the number of costs with sparse relative density, using the soft
transformations:

$$\sigma_{V,\chi'} = 
Q_l(T^\theta\left(\Gamma_{l,\zeta}(V,\chi')\right)),$$
where $\Gamma_{l,\zeta}(V,\chi')$ is a vector of
dimension $|E_\basec|$ with
$\Gamma_{l,\zeta}(V,\chi')_{c}=
\min\left\{\reallywidehat{\relden_{\kappa'}(V,\chi',\Lambda_\kappa^{\zeta}(V))_{c}}^{-1}\cdot
\frac{\totalmu \cdot n^{4\alpha}}{ \beta^l \cdot \lambda^t},\ 1\right\}$  and $T^\theta = T^{O(1/\epsilon)}_{1/\sqrt{\beta},1}$.

Then we define the {\em sparsity potential score}
$\theta^{\zeta}_\kappa(V,\chi')$ for sampled pivots $V\in\VC$. As
mentioned above, for pivot pairs where we cannot efficiently estimate
$\sigma_{V,\chi'}$, we set the score to 1. In particular, for $m_\VC(V,\chi')$ denoting the multiplicity of $(V,\chi')$ in $\VC$:
$$
\theta_\kappa^{\zeta}(V,\chi') = %
m_\VC(V,\chi') \cdot \begin{cases}
	\sigma_{V,\chi'}	&	\mu_{\kappa'}(\Lambda_{\kappa'}^\zeta(V),\chi') \leq \lambda \cdot \beta^6 \cdot \beta^l	\\
	1	&	\text{otherwise.}
\end{cases}
$$

The algorithm for computing the $\theta$ potential, {\sc
  AssignThetaPotential}, is presented in
Alg.~\ref{alg::imp_matching.assign_theta}.

Finally, we assign $\varphi_\kappa(I,\un)$ as a function of the 
estimated sparsity potential from $\Lambda_\kappa^{\zeta}(I)$,
using the following formula, for each $\zeta\in Z_l$:

\begin{equation}\label{eq::varphi_un}
\varphi_\kappa^\zeta(I,\un) = %
T^\un\left(\frac{\beta^{2l}}{\zeta^2} \cdot \mu_\kappa(I,\un) \cdot \min\left\{\frac{1}{k} \cdot \sum_{\chi'\in \nu \setminus \{\bot \}}\theta_\kappa^{\zeta}(\Lambda_\kappa^{\zeta}(I),\chi'),1\right\}\right),
\end{equation}

where $T^\un = T^{1/\eps}_{1/\beta,\Omega_\epsilon(1)}$, and $k =
\beta^{O(1)}$ is a coefficient which guarantees concentration for all
{\em potential scores} not omitted by the transformation $T^\un$
as above.

\begin{algorithm}[H]
\caption{Matching Phase: assign $\theta$ potential to pivot pair $V, \chi'$.}
\label{alg::imp_matching.assign_theta}
    \hspace*{\algorithmicindent} \textbf{Input:} \hspace{2mm} Pivot $(V,\chi')$, multiplicity $m$, and level $l$ \\
    \hspace*{\algorithmicindent} \textbf{Output:} Assign $\theta$ potential to $(V,\chi')$  
\begin{algorithmic}[1]

\Function{AssignThetaPotential}{$V,\chi',m,l,\mu_{\kappa'}$}
		\State $\relden_m \gets  n^{4\alpha}\cdot\tfrac{\totalmu}{\beta^l \cdot \lambda^t}$.
				\For{$\zeta \in Z_l$}
					\If {$\mu_{\kappa'}(\Lambda_{\kappa}^\zeta(V),\chi') > \lambda \cdot \beta^6 \cdot \beta^l$}\label{alg::theta_treshold}
					\State $\theta_\kappa^\zeta(V,\chi') \gets m$. %
					\Else
					\For {$\hat{c} \in E_\basec$}
						\State $\reallywidehat{\relden_{\kappa'}(V,\chi',\Lambda_{\kappa}^\zeta(V))_{\hat{c}}} \gets \textsc{ApproxRelativeDensity} (V, \Lambda_{\kappa}^\zeta(V), \relden_m, \hat{c},\mu_{\kappa'})$.
						\State $\Gamma_{l,\zeta}(V,\chi')_{\hat{c}} \gets
\min\left\{\relden_m/\reallywidehat{\relden_{\kappa'}(V,\chi',\Lambda_\kappa^{\zeta}(V))_{\hat{c}}}, 1\right\}$.

					\EndFor
  				\State $\theta_\kappa^\zeta(V,\chi') \gets m \cdot Q_l(T^\theta\left(\Gamma_{l,\zeta}(V,\chi')) \right)$.\label{alg::sigma_contribution}
				\EndIf
			\EndFor
\EndFunction	
\end{algorithmic}
\end{algorithm}

\paragraph{Amending the measure $\mu_{\kappa}(I, *)$ to get
  $\mu_{\hat\kappa}(I, *)$.}  At the end of each level $l$, we update
each measure $\mu_{\kappa}(I,*)$ by moving some of the
$\mu_{\kappa}(I,\un)$ mass according to the $\varphi(I,*)$
potential. We need to ensure that we still obtain a distribution at
the end, and hence we do a certain normalization (rescaling) on
$\varphi$. Since such a rescaling can increase the corruption, we need
to ensure that the renormalization rescales the vector by a constant
factor only.  There is a caveat though, that the renomalization factor
is small only when the added $\varphi(I,*)$ potential is large, which
we can only guarantee when $\mu(I,*_{\neq \un})$ is small. As a
result, we recolor using the sum of $\mu$ and $\varphi$ according to
the following formula\footnote{Note that in the formula, the vector
  $\mu_\kappa(I, *_{\neq \un})$ is considered to be the vector with
  $\un$-coordinate zero-ed out.}:

$$\mu_{\hkappa}(I,*) = 
\mu_\kappa(I, *_{\neq \un}) + \mu_\kappa(I, \un) \cdot
\normalizedEllOne{\mu_\kappa(I, *_{\neq \un}) + \sum_{\zeta \in Z_l}
  \varphi_\kappa^\zeta(I,*)}.$$

\begin{algorithm}[H]
\caption{Matching Phase: Color Intervals}
\label{alg::matching.color_new}
    \hspace*{\algorithmicindent} \textbf{Input:} \hspace{2mm} Coloring $\kappa$ and level $l$ \\
    \hspace*{\algorithmicindent} \textbf{Output:} Amended coloring $\widehat{\kappa}$ computed from $\mu_{\kappa}$,$\phi_\kappa$ and $\theta_\kappa$. 
\begin{algorithmic}[1]
\Function{AmendColoring}{$\kappa,l$}
\State Compute the balls $\Lambda_\kappa^\zeta(I)$ for all $I \in \IC$
and $\zeta \in Z_l$ (using algorithm from Thm.~\ref{thm::ds_sum}).  
\State Compute $\varphi_\kappa^\zeta(I,\un)$ as per
Eqn. (\ref{eq::varphi_un}),(\ref{eq::varphi_base_0}) for all $I \in
\IC,\zeta \in Z_l$ (using data structure from Thm.~\ref{thm::ds_sum}).
\State Compute $\varphi_\kappa^\zeta(I,*_{\neq \un})$ as per
Eqn. (\ref{eq::varphi_not0}),(\ref{eq::varphi_base_not0}) for all $I
\in \IC,\zeta \in Z_l$ (using algorithm from Thm.~\ref{thm::ds_matrix}).
\State Compute $\mu_{\hkappa}(I) = %
\mu_\kappa(I,*_{\neq \un}) + \mu_\kappa(I, \un) \cdot
\frac{\mu_\kappa(I, *_{\neq \un}) + \sum_{\zeta \in Z_l}
  \varphi_\kappa^\zeta(I,*)}{\normo{\mu_\kappa(I, *_{\neq \un})+\sum_{\zeta \in Z_l} \varphi_\kappa^\zeta(I,*)}}$ for all $I \in \IC$.\label{alg::color_line}
\State \Return $\hkappa$.
\EndFunction	
\end{algorithmic}
\end{algorithm}

\vspace{-4mm}

\begin{figure}[H]
\caption{The flow of coloring intervals at level $l$}
\label{fig::level_coloring}
\centering
\includegraphics[width=0.98\textwidth]{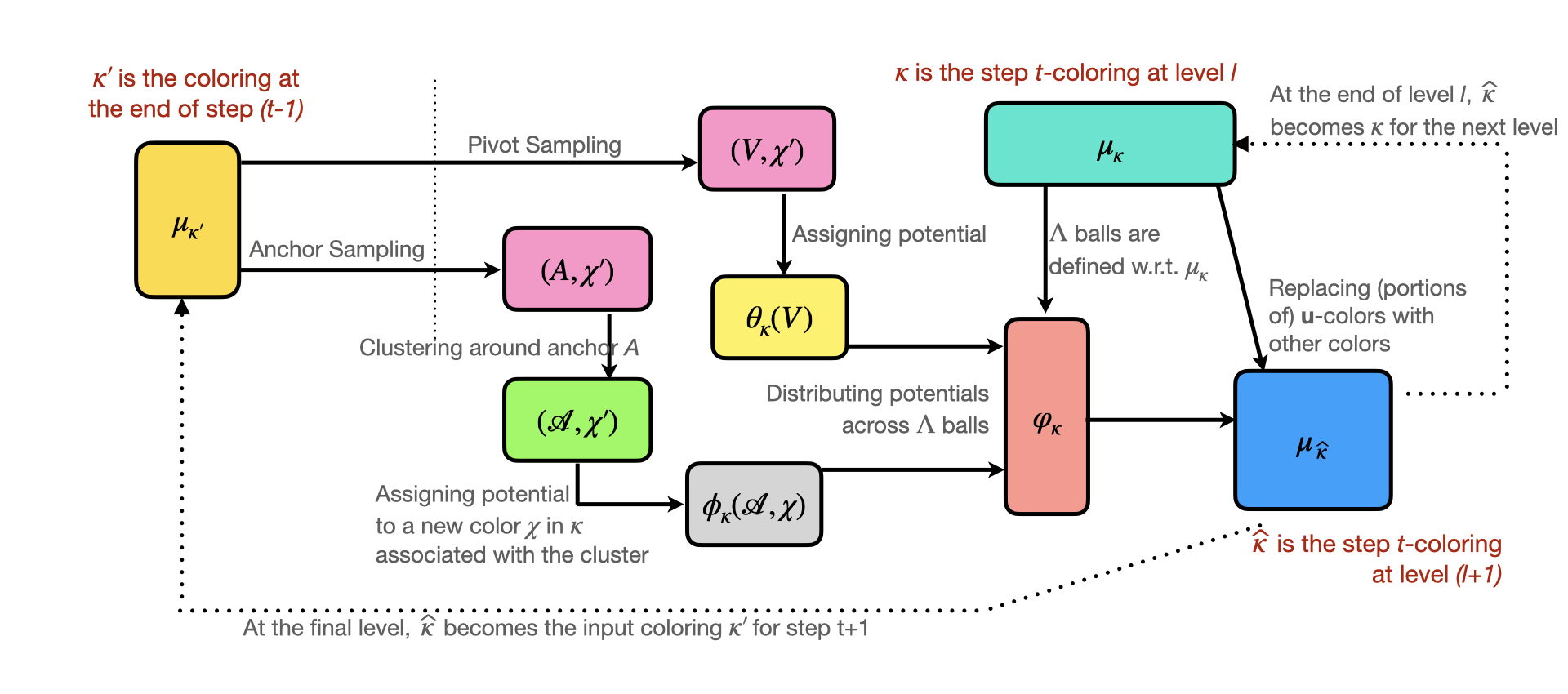}
\vspace{-8mm}
\end{figure}

\begin{figure}[H]
\caption{A step by step coloring example at level $l$. For simplification, all intervals start as $\un$-colored, we use one $\zeta=2$ radius, and omit the pivot sampling, density estimation, and thresholding processes.}
\label{fig::level_potentials}
\fbox{
\begin{tabular}{>{\centering\arraybackslash} m{8.2cm} >{\centering\arraybackslash} m{8.2cm}}
\includegraphics[width=0.49\textwidth]{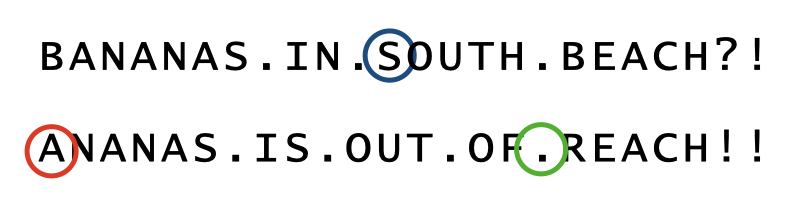} 
&
\includegraphics[width=0.49\textwidth]{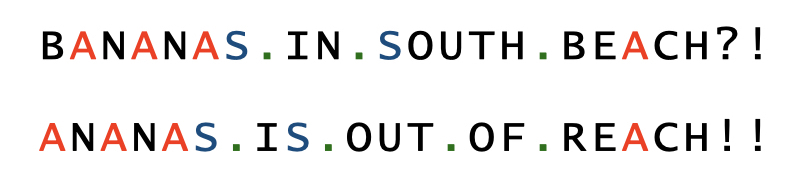} \\
Step 1: Anchor sampling ($\lambda^t = 3$).
&
Step 2: Clustering around anchors.
\\
\includegraphics[width=0.49\textwidth]{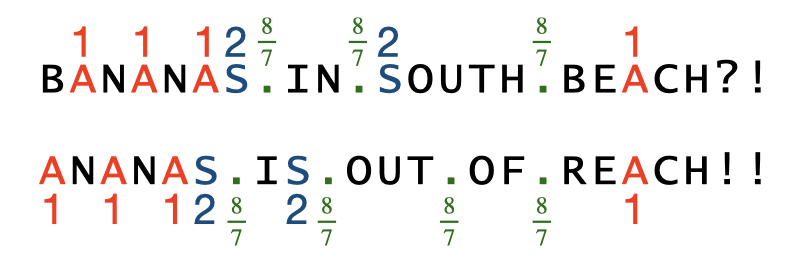} 
&
\includegraphics[width=0.49\textwidth]{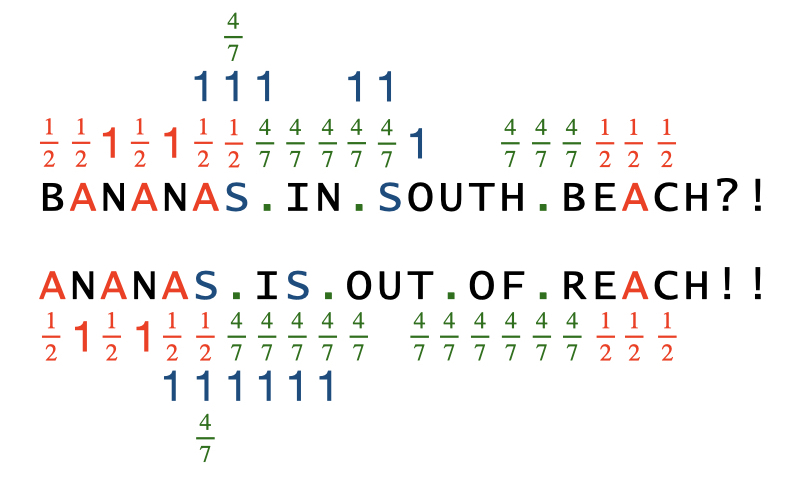} \\
Step 3: Assigning $\phi$ scores to each cluster.
&
Step 4: Assigning $\varphi$ to $\Lambda^\zeta$ balls ($\zeta=2$).
\\
\vspace{3mm}
\includegraphics[width=0.49\textwidth]{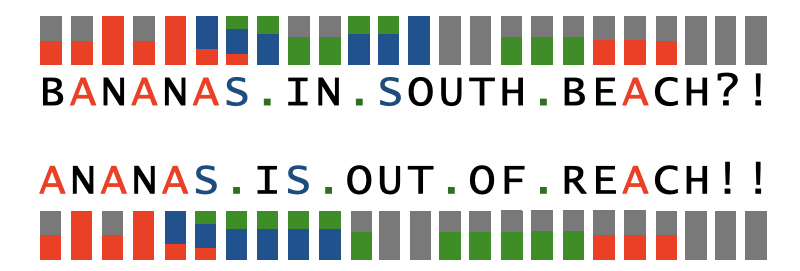} 
&
 \\
Step 5: Coloring $\mu_\hkappa$ by ``adding'' normalized $\varphi$. Gray color is for \color{gray}{\un}. 
&
\\
\end{tabular}}
\end{figure}

\paragraph{Initial level ($l=0$): marking as ``already matched'' $\bot$.} For
each step $t$, at level $l=0$, we assign $\phi$ and $\theta$ potential as
above and use them to mark intervals as ``already matched''
(with color $\bot$), instead of ``regular coloring''.
Hence for any step $t$, we only have the
colors $\un$ and $\bot$ in $\kappa$ at the end of level $l=0$, and the
rest of colors come into play later, starting with level $l=1$. We
use the following potentials, where $Z_0=\{0\}$:

\begin{align}
\varphi_\kappa^0(I,\bot) &= T^\cc\left(\mu_\kappa(I,\un) \cdot
\phi_{\kappa}(I, \nu \setminus \{\un, \bot
\})\right).\label{eq::varphi_base_not0} \\
\varphi_\kappa^0(I,\un) &= T^\un\left(\mu_\kappa(I,\un) \cdot \tfrac{\theta_\kappa^{0}(I,\nu \setminus \{ \bot \})}{k} \right).\label{eq::varphi_base_0}
\end{align}

Overall, the idea here is similar to the general case: any matching from uncorrupted $(I,\chi')$ pairs in $\kappa'$ will generate a 2-hop path in $G$ between $I$ and $\pi[I]$, and hence by projecting all of $\phi$ potential to $\bot$, we do not over-corrupt interval $I$ (in expectation), and maintain the {\em \un-coloring guarantees}. We also note that the $T^\cc$ transformation at level 0 is not required for correctness, but rather enables a more uniform analysis across levels.

The overall algorithm for computing amended coloring, {\sc AmendColoring}, is
presented in Alg.~\ref{alg::matching.color_new}.

\subsection{\textsc{MatchStep} algorithm: main levels loop}
\label{sec:matchingMainLoop}

Finally, we describe the %
overall \textsc{MatchStep} algorithm, using the ingredients presented earlier. The
main algorithm is %
{\sc MatchStep} from
Alg.~\ref{alg::matching.new2}. It uses couple more functions: {\sc
  InitColoring}, in Alg.~\ref{alg::matching.init}, and {\sc Adjust-\un} in Alg.~\ref{alg::matching.adjust}.

\paragraph{Choice of Parameters.} We fix the following parameters, as a function of $n$ and $\epsilon$.
\begin{itemize}
	\item $\beta,\lambda \gets n^{\epsilon}$, ensuring convergence
          in constant  number of rounds while allowing {\em sparse} partitions.
	\item $k \gets \tO(\beta^3)$, an oversampling factor for
          concentration in pivot sampling.
	\item $\alpha \gets \epsilon^{5/\epsilon}$, sufficiently small
          constant to control the blow-up of the distortion $F$
          (noting that the starting distortion is $F_0=n^\alpha$).
	\item $\eta \gets \alpha^2\epsilon^3$, to ensure our set of costs is sufficiently large, avoiding blow-up from $Q$ transformations.
\end{itemize}

\begin{algorithm}[H]
\caption{Matching Step Algorithm}
\label{alg::matching.new2}	
    \hspace*{\algorithmicindent} \textbf{Input:} \hspace{2mm} base cost $\basec$, input coloring $\kappa'$, step $t$. \\
    \hspace*{\algorithmicindent} \textbf{Output:} a matching graph $G$ and %
    an output colorings $\kappa$.%
\begin{algorithmic}[1]
\Function{MatchStep}{%
$\basec, \kappa',t$}
				\State Initialize $\kappa \gets
                                \textsc{InitColoring}(\kappa',t)$.
				\State $G \gets$ unweighted and undirected graph with nodes $I \in \IC$ and no edges. 
                                \For{$l=0,1,\ldots$}
                                \State Break if $l>0$ and $\min\{\mu_\kappa(\XC,\un),\mu_\kappa(\YC,\un)\} < \beta^l$; 
					\State $\phi_{\kappa}, \theta_{\kappa} \gets \vec{0}$.%

					\For {$\chi \in [\lambda^t]
                                  $}
				\State Sample $(A,\chi')$ 
			from the distribution $\tfrac{\mu_{\kappa'}(*,*)}{|\IC|}$.%
				\State $\CC_\chi \gets \begin{cases}
	\textsc{ClusterAnchor}(A,\PC_{\kappa'}^{\chi'},\chi',\basec), &	\hbox{if }\chi' \neq \bot 	\\
	\emptyset,	&	\hbox{if }\chi' = \bot	
 \end{cases}$
				\For {$(\A,c_\A,j,d_\A) \in \CC_\chi$}
				\State Add edge $(A,I)$ to $G$ for each $I \in \A$.
\Comment Only important for level 0.
				\State $\textsc{AssignPhiPotential}(\chi,\A,d_\A,\chi',c_\A,j,\mu_{\kappa'})$.
				\EndFor
				\EndFor

				\State $\VC \gets$ Subsample each $(V,\chi')\in\IC \times \nu \setminus \{\bot\}$   with probability $\tfrac{\mu_{\kappa'}(V,\chi')}{\beta^l}$, independently $k$ times.
				
				\State \textsc{AssignThetaPotential}($V,\chi',m_\VC((V,\chi')),l,\mu_{\kappa'}$) for all $(V,\chi') \in \mathrm{supp}(\VC)$.
										\State $\kappa \gets \textsc{AmendColoring}(\kappa,l)$.
			\EndFor
			\State Set $\mu_\kappa(I,\chi) = 0$ whenever $\mu_\kappa(I,\chi) \in (0,n^{-10})$ and renormalize each $\mu_\kappa(I,*)$ to a distribution.\label{alg::remove_tiny}
			\State $\kappa \gets \textsc{Adjust-\un}(\kappa)$.		
	\State \Return $(G,\kappa)$.
\EndFunction
\end{algorithmic}
\end{algorithm}

The function \textsc{InitColoring} initializes an output coloring. It keeps all $\bot$ potentials from the input coloring $\kappa'$ intact, since those are {\em already matched}, and sets the rest to $\un$.

\begin{algorithm}[H]
\caption{Matching Phase - Init Coloring}
\label{alg::matching.init}
    \hspace*{\algorithmicindent} \textbf{Input:} \hspace{2mm} Input coloring $\kappa'$, step $t$ \\
    \hspace*{\algorithmicindent} \textbf{Output:} a new output coloring $\kappa$ where all $\neq \bot$ mass is set to $\un$.
\begin{algorithmic}[1]
\Function{InitColoring}{$\kappa',t$}
\State Let $\nu \gets [\lambda^t] \cup \{\un,\bot\}$.
\State Initialize new coloring $\kappa$ over $\nu$ by setting $\mu_\kappa=0$.
\State $\mu_\kappa(I,\bot) \gets \mu_{\kappa'}(I,\bot)$ for $I \in \IC$.
\State $\mu_\kappa(I,\un) \gets \mu_{\kappa'}(I, \nu \setminus \bot)$ for $I \in \IC$.
\State \Return $\kappa$.
\EndFunction	
\end{algorithmic}
\end{algorithm}

The function \textsc{Adjust-$\un$} ensures the color $\un$ will have the same {\em sparsity guarantees} as the rest of colors in the end of step $t$. Such guarantees will be discussed in Section  \ref{sec::sparsity}.

\begin{algorithm}[H]
\caption{Matching Phase - Adjust $\un$}
\label{alg::matching.adjust}
    \hspace*{\algorithmicindent} \textbf{Input:} \hspace{2mm} Coloring
    $\kappa$ at the end of step $t$ with bounded $\un$-mass (in
    $\ell_1$ sense). \\
    \hspace*{\algorithmicindent} \textbf{Output:} Adjusted coloring
    $\kappa$ with bounded $\un$-support (in $\ell_0$ sense).
\begin{algorithmic}[1]
\Function{Adjust-$\un$}{$\kappa$}
\If {$\tfrac{\mu_{\kappa}(\XC,\un)}{\mu_{\kappa}(\YC,\un)} \notin [1/\beta,\beta]$}
	\State $\mu_\kappa(I,\bot) \gets\mu_\kappa(I,\bot)+\mu_{\kappa}(I,\un)$ for $I \in \IC$.
	\State $\mu_\kappa(I,\un) \gets 0$ for $I \in \IC$.
\EndIf
\State $\left(\mu_\kappa(I,*_{\neq \un}),\mu_\kappa(I,\un)\right) \gets \tfrac{\left(\mu_{\kappa}(I,*_{\neq \un}),T^\un(\mu_\kappa(I,\un))\right)}{\normo{\left(\mu_{\kappa}(I,*_{\neq \un}),T^\un(\mu_\kappa(I,\un))\right)}}$ for $I \in \IC $.\label{alg::matching.adjust.line}
\State \Return $\kappa$.
\EndFunction	
\end{algorithmic}
\end{algorithm}

\section{Correctness Analysis of the Interval Matching
  Algorithm}\label{sec::match-correctness}

In this section, we prove correctness of the interval matching
algorithm, namely Theorem~\ref{thm::matching_guarantee}, items
\ref{it:matchingDistances} and \ref{it:matchingError}. Item
\ref{it:matchingRuntime} (runtime) is proven in
Section~\ref{sec:runtime} later. Note that item
\ref{it:matchingDistances} is immediate from the algorithm (as we only
add edges if $\cad$ distance is $\le O(\basec)$). Hence we focus on
item \ref{it:matchingError}. To help the reader in the ensuing proofs,
we collect important notations and definitions in
Table~\ref{tab:notation-new} for quick reference.

Our central correctness lemma shows that the ``corruption''
in each level/step grows by at most a constant factor.
Recall the notion of corruption from
Def.~\ref{def::corruption}: $(I,\chi)$ is {$F$-corrupted pair} if either:
	(1) $\pi[I] = \bot$; 
	(2) $\cad(I,\pi[I]) > \basec$; 
	(3) $\chi \neq \bot$ and $\dd_F\left(\mu_\kappa(I,\chi),\mu_\kappa(\pi[I],\chi)\right) > 0$; or
	(4) $\chi = \bot$ and $\dist_G(I,\pi[I]) > 2$.
Also, recall from  Eqn.~\eqref{eqn:xiDef} corruption per interval parameter
$\xi_F^{\kappa}(I) = \sum_{\chi: (I,\chi) \text{ is } F \text{-corrupted pair}} \mu_{\kappa}(I,\chi)$ and the total corruption is defined as $\xi^\kappa_F = \xi_F^\kappa(\IC)$.

The following central lemma bounds corruption growth per level/step:
\begin{lemma}[Corruption growth per level]\label{lm::corruption_growth_imp}
Fix $\eps,\delta \in [0,1]$, and alignment
         $\pi \in \Pi$. Fix step $t$, level $l$, input
        coloring $\kappa'$ (built at the previous step) and output
        coloring $\kappa$ (being built in the current step). 
          Then, \textsc{AmendColoring} at level $l$ amends $\kappa \rightarrow \hkappa$ such that %
          $\xi^{\hkappa}_{\widehat{F}} =
          O\left(\tfrac{\xi^{\kappa'}_{F} + \xi^\kappa_{F}}{\epsilon^{2/\epsilon} \cdot
            \delta}\right) + \tO\left(n^{-8}\right)$ 
         with probability $1-\delta$, where $\widehat{F} = F^{O(1/\epsilon^2)}$.
\end{lemma}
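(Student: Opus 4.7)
I will track how corruption propagates through the three stages of \textsc{AmendColoring}: the $\phi$/$\theta$ potential scores built from the sampled anchors and pivots, the aggregated $\varphi^\zeta$ palettes, and finally the normalized measure $\mu_{\hat\kappa}$. The overall strategy is to show, for each stage, a \emph{distortion} bound (on uncorrupted pairs, the corresponding vectors at $I$ and $\pi[I]$ differ in $\dd_{F'}$ by at most a tiny additive $n^{-10}$ term, with $F'$ growing polynomially in $F$) together with an \emph{expected-corruption} bound on the newly introduced mass at corrupted pairs. The final statement then follows by summing these additive contributions, applying Markov's inequality in the sampling randomness to obtain the $1/\delta$ factor, and tracking the multiplicative blow-up of $F$.

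\textbf{Step 1: from $\kappa'$ to $\phi$ and $\theta$.} For a non-corrupted pair $(I,\chi')$ in $\kappa'$, the $\phi$-invariants of Claim~\ref{cl::phi_uncorrupted} give $\E[\phi_\kappa(I,\chi)\mid\text{good cost}]\in[1,O(1)]\cdot\mu_{\kappa'}(I,\chi')/\lambda^t$ and $\dd_{O(Fn^{2\alpha})}(\phi_\kappa(I,\chi),\phi_\kappa(\pi[I],\chi))\le n^{-10}$. Consequently, the \emph{corrupted} part of $\phi$ has expected $\ell_1$-mass at most a constant times $\xi^{\kappa'}_F/\lambda^t$ per color $\chi$, summing to $O(\xi^{\kappa'}_F)$ in total across all colors in $[\lambda^t]$. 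The $\theta$-potentials built by \textsc{AssignThetaPotential} are handled analogously: the pivot sampling is unbiased, \textsc{ApproxRelativeDensity} preserves relative densities up to a constant factor whp, and the soft-threshold $T^\theta$ together with $Q_l$ only modify values in a $\dd_{n^{O(\alpha^2)}}$-preserving way (relying on the soft-transformation properties from Section~\ref{sec:analysisPrelims}).

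\textbf{Step 2: from $\phi,\theta$ to $\varphi^\zeta$.} The main geometric issue is that $\Lambda_\kappa^\zeta(I)$ is defined via $\mu_\kappa(\cdot,\un)$-mass and so need not map exactly, under $\pi$, to $\Lambda_\kappa^\zeta(\pi[I])$. The fix is the radius slack built into $Z_l$: because $Z_l$ is a dyadic scale, I can pair each $\zeta$ at $I$ with a suitable $\zeta'\in\{\zeta/2,\zeta,2\zeta\}$ at $\pi[I]$ so that $\Lambda^\zeta_\kappa(I)$ is contained in the $\pi$-preimage of $\Lambda^{\zeta'}_\kappa(\pi[I])$ up to a set of intervals whose total $\un$-mass is bounded by the corruption mass intersecting the enlarged ball. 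Splitting $\phi$ pro-rata across these balls therefore shows that $\dd_{F^{O(1)}n^{O(\alpha)}}(\varphi^\zeta_\kappa(I,\chi),\varphi^{\zeta'}_\kappa(\pi[I],\chi))$ is negligible, while the corrupted $\varphi$-mass is at most $O(1)$ times the corruption that was already present. The soft thresholds $T^{\cc}$ and $T^\un$ are used here in the $\dd$-preserving way: values above the ``hard'' threshold pass through, values below the ``zero'' threshold vanish, and the polynomial interpolation between the two never changes $\dd$ by more than a $1/\eps$ factor per application. Each of the $O(1/\eps)$ radii in $Z_l$ and $O(1/\eps)$ quantile ranks inside $Q_l$ contributes a multiplicative $O(1/\eps)$ to distortion, producing the overall $\hat F = F^{O(1/\eps^2)}$.

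\textbf{Step 3: normalization and conclusion.} The amended measure is obtained by renormalizing $\mu_\kappa(I,*_{\neq\un})+\sum_\zeta \varphi^\zeta_\kappa(I,*)$ and scaling it by $\mu_\kappa(I,\un)$. By the balance guarantee, the denominator is $\Omega_\eps(1)$, so renormalization inflates corruption by only $O(1/\eps)$. Summing the additive corruption contributions from Steps~1--2 over the $O(1/\eps)$ levels of thresholding and quantile aggregation inside a single call to \textsc{AmendColoring} gives an expected corruption bound of $O(\gamma/\eps^{2/\eps})+\tilde O(n^{-10})$; Markov in the anchor/pivot sampling randomness yields the probability-$(1-\delta)$ statement with an extra $1/\delta$ factor, and the $n^{-10}$ terms inflate to the stated $\tilde O(n^{-8})$ after union-bounding over intervals. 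The \textbf{hard part} is Step~2: controlling the $\Lambda$-ball mismatch between $I$ and $\pi[I]$ simultaneously for non-$\un$ and $\un$ colors, because the $\un$-mass itself is being redefined by the very operation we are analyzing. The radius slack in $Z_l$, the soft transformations $T^\cc,T^\un,T^\theta$, and the quantile $Q_l$ must all be designed consistently so that the distortion only multiplies — not adds — across levels, and verifying that this is the case is where essentially all of the $\epsilon^{-O(1/\eps)}$ slack in the final bound is spent.
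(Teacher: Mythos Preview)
Your high-level outline is broadly aligned with the paper, but there is a genuine gap in Step~2 that the paper handles with machinery you have not introduced.

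\textbf{The $\Lambda$-ball mismatch is larger than you claim.} You assert that each $\zeta$ at $I$ can be paired with some $\zeta'\in\{\zeta/2,\zeta,2\zeta\}$ at $\pi[I]$. This is not correct: the balls $\Lambda_\kappa^\zeta$ are defined via $\mu_\kappa(\cdot,\un)$-mass, and even for an $F$-uncorrupted pair $(I,\un)$ the $\un$-masses at corresponding positions differ by up to a factor $F$. The paper shows only that $\zeta'=\tilde O(F\zeta)$, and even this is \emph{conditional}: it holds only when a certain ``local corruption'' quantity $\rho_I^\zeta$ is below $m_I/10$; otherwise the per-interval bound is simply $\le m_I$ and is absorbed. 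The ratio $\zeta'/\zeta$ enters the distortion as $(\zeta'/\zeta)^{\Theta(1/\eps)}$ in Lemma~\ref{lm::dd_dist_imp}, which is what ultimately forces $\widehat F=F^{O(1/\eps^2)}$.

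\textbf{The aggregation over intervals is missing.} Your Step~2 asserts that ``the corrupted $\varphi$-mass is at most $O(1)$ times the corruption that was already present,'' but this is a statement about a sum over \emph{all} intervals, and the per-interval bound from Lemma~\ref{lm::dd_dist_imp} is not directly in terms of $\xi^{\kappa'}_F(I)$. The paper introduces the local-corruption measures
\[
\rho^{\kappa,\kappa'}_F(I,\SC)=\tfrac{\mu_\kappa(I,\un)}{\mu_\kappa(\SC,\un)}\,\xi^{\kappa'}_F(\SC),
\qquad
\widetilde{\rho^{\kappa,\kappa}_F}(I,\SC)=\tfrac{\mu_\kappa(I,\un)}{\mu_\kappa(\SC,\un)}\,\xi^{\kappa}_F(\BC_\pi(\SC)),
\]
bounds the per-interval $\dd$ of $\varphi^\zeta$ by $\rho_I^\zeta\triangleq\tfrac{\beta^l}{\zeta}\rho^{\kappa,\kappa'}_F(I,\Lambda^{2\zeta}(I))+\tfrac{1}{\log n}\widetilde{\rho^{\kappa,\kappa}_F}(I,\Lambda^{2\zeta}(I))$, and then proves a separate double-counting claim (Claim~\ref{clm::rho_bound_imp_new}) that $\sum_{I,\zeta}\rho_I^\zeta=O(\xi^{\kappa'}_F+\xi^\kappa_F)$. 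Note in particular the second term: the $\Lambda$-ball mismatch is charged to corruption in $\kappa$ (not $\kappa'$) via the image ball $\BC_\pi(\cdot)$, and the $1/\log n$ weight is needed precisely because a given $\zeta'$ may be selected by up to $|Z_l|\le\log n$ different $\zeta$'s. None of this structure appears in your sketch.

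Finally, the balance guarantee (Lemma~\ref{lm::min_color_mass}) gives $\|\cdot\|_1\ge(\Omega(\eps))^{l+1}$, not a uniform $\Omega_\eps(1)$; the normalization therefore costs $O(\eps^{-l})$, and this is where the $\eps^{-2/\eps}$ in the statement actually comes from.
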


Recalling that $\xi^{\kappa'}_{F}$ is the corruption at the end of the
previous step, and $\xi^{\kappa}_{F}$ is the corruption at the end of
the previous level (in the current step), the above lemma bounds
 the multiplicative growth of the corruption of the amended
coloring $\hat\kappa$, modulo a very small additive term.  While the
rest of the section is devoted to proving this lemma, we first
complete the proof of Lemma~\ref{thm::matching_guarantee},
item~\ref{it:matchingError}, which requires the following fact for preserving $\dd$-distance on summations :

\begin{fact}
  \label{fct:ddTriangle}
  For any $a_1,\ldots a_m, b_1\ldots b_m \in \R_+^d$, we have that:
  $$
  \dd_{2F}(\sum_i a_i, \sum_i b_i)\le 2\sum_i \dd_F(a_i,b_i).
  $$
\end{fact}
\begin{proof}
For $j \in [d]$, let $L_j \subseteq [m]$ be the set of coordinates where $a_{i,j} > F b_{i,j}$. If $\sum_{i\in L_j} a_{i,j}\le \sum_{i \in [m] \setminus L_j} a_{i,j}$,
then $\sum_i a_{i,j} \le 2 \sum_{i \in [m] \setminus L_j} a_{i,j}\le 2F\sum_i b_{i,j}$, and hence
$\dd_{2F}(\sum_i a_{i,j}, \sum_i b_{i,j})=0$. Otherwise,
$$
\dd_{2F}(\sum_i a_{i,j}, \sum_i b_{i,j})\le \sum_i a_{i,j} < 2\sum_{i \in L_j} a_{i,j}=2\sum_i \dd_F(a_{i,j}, b_{i,j}).$$

Summing over all $j \in [d]$, we get:

$$
\dd_{2F}(\sum_i a_{i}, \sum_i b_{i}) = \sum_j \dd_{2F}(\sum_i a_{i,j}, \sum_i b_{i,j}) \leq \sum_j \sum_i 2\dd_{F}( a_{i,j},  b_{i,j}) = 2\sum_i\dd_{F}( a_{i}, b_{i})
$$

as needed.
\end{proof}

We also state the following complexity statement, bounding the size of
parts $\PC_{\kappa'}^\chi$, the set of intervals $I$ with
$\mu_\kappa(I,\chi)>0$ (in ``step input'' coloring $\kappa'$). Its
proof appears in Section~\ref{sec::sparsity}.

\begin{lemma}[Size of color parts]\label{lm::partition_size_imp}
At each step $t$, for each
        color $\chi' \in \nu \setminus \{\bot\}$,
     we have that	$|\PC_{\kappa'}^{\chi'}| =  n \cdot O_\eps(\beta^{5} \cdot
        \lambda^{1-t})$ whp. 
\end{lemma}

An immediate corollary of Lemma~\ref{lm::partition_size_imp} is that
the total number of steps is bounded by $1/\eps+O(1)$ whp.

\begin{proof}[Proof of Lemma~\ref{thm::matching_guarantee},
    item~\ref{it:matchingError} using Lemma~\ref{lm::corruption_growth_imp}]
	Fix step $t$ with input coloring $\kappa'$. Fix $F_t = n^{\alpha /\Theta(\epsilon)^{3t}}$. We first show that for each output coloring $\kappa$ %
	generated at each \textsc{MatchStep} call, we have with probability $1/2$, at the end of step $t$:
	
		$$\xi^{\kappa}_{F_{t+1}} = \epsilon^{-O(1/\epsilon^2)}
        \cdot \xi^{\kappa'}_{F_t} + \tO_\eps(n^{-8}).$$
	
	To do that, we use Lemma \ref{lm::corruption_growth_imp}, to obtain that in each level $l \leq \log_\beta n$ we have $\epsilon^{-2/\epsilon} \cdot 4\epsilon^{-1}$ factor growth in corruption with probability $1-\tfrac{\epsilon}{4}$, and by the union bound we get overall blow-up $\epsilon^{-O(1/\epsilon^2)}$ with probability $3/4$ (as we have $1/\epsilon$ levels). Observe also that removing pairs with mass $< n^{-10}$ in Line~\ref{alg::remove_tiny} introduces at most $F \cdot n^{O(\eps) - 9} < n^{-8}$ additive corruption, since there are $n^{1+ O(\eps)}$ total non-zero pairs and each pair removed can generate at most $F \cdot n^{-10}$ corrupted mass.
	Finally, notice that by Fact~\ref{fct:ddTriangle}, the excess corruption introduced by \textsc{InitColoring} in any new step is bounded by 2.

	Now, notice we
	generate $O(\log n)$ i.i.d colorings in each step
        $t$ for each input coloring $\kappa'$, hence, we must generate a ``good'' coloring $\kappa$ with
        high probability as long as we started with at least one ``good'' coloring from the previous step. Now, for a fixed $\basec$, define $k_\basec \triangleq |\{ i\mid
          \cad_{w,\basec}(X_{i,w},Y_{\pi(i),w}) > \basec \}|$ and notice we start the
        \textsc{MatchIntervals} algorithm with
        $\xi^{\kappa}_{F_1} = k_\basec$ and generate $O_\eps(1)$ blow-up
        per step.

        Finally, since
        \textsc{MatchIntervals} halts when $\mu_\kappa(\IC,\bot) = 2n$ (this is the halting condition from, 
        Line~\ref{alg::halt_cond}), and by the corollary above the total number of steps is
        $O_\eps(1)$, we have
         $\xi_{F_{t}}^\kappa
        = O_\eps(k_\basec)$ for any ``good'' coloring $\kappa$. Furthermore, for any $\pi$-matchable pair $I,
        \pi[I]$ with $\cad_{w,\basec}(I,\pi[I])\le \basec$, if the hop-path between
        $I,\pi[I]$ in $G$ is more than 2, then this pair contributes 1
        to $\xi_{F_{t}}^\kappa=O_\eps(k_\basec)$. We conclude that all but
        $O_\eps(k_\basec)$ $\pi$-matchable pairs have a 2-hop path in $G$ as needed.
\end{proof}

\subsection{Bounding corruption growth per level: two key lemmas}
\label{sec:twoKeyLemmas}

To prove our central correctness Lemma
\ref{lm::corruption_growth_imp}, we introduce the following two key
Lemmas. We refer the reader to Table~\ref{tab:notation-new} for a
quick recap of important quantities and formulas from our algorithm.

In particular, note that the amended coloring is obtained via the
following formula:
$$
\mu_{\hkappa}(I,*)=\mu_\kappa(I, *_{\neq \un}) +
    \mu_\kappa(I,\un)\cdot \normalizedEllOne{\mu_\kappa(I,*_{\neq
        \un})+\sum_{\zeta\in Z_l} \varphi_\kappa^\zeta(I,*)}
    $$

Thus, in order to bound the growth of $\xi^{\hkappa}_{\widehat{F}}$,
we need to bound the quantities: $\dd(\varphi_\kappa^{\zeta}(I,*),
\varphi_\kappa^{\zeta'}(\pi[I],*))$, for some $\zeta,\zeta'$, as well
as the $\ell_1$ normalization from above. These two goals correspond
to the two key lemmas.

To state the key lemmas, we introduce the following measures $\rho$ of
the corruption of intervals from nearby intervals.  First, define
$\BC_\pi(\SC)$ for $\SC \subseteq \IC$ as the smallest enclosing ball
around $\pi[\SC]$. Now, fix an interval $I \in \IC$, output coloring
$\kappa$, and arbitrary coloring $\kappa'$ (which can be either input
or output coloring). Fix distortion $F$, and set $\SC \subseteq \IC$
with $\mu_\kappa(\SC,\un) > 0$. We define ``local'' corruption
measures $\rho^{\kappa,\kappa'}_{F}(I,\SC)$ and
$\widetilde{\rho^{\kappa,\kappa'}_{F}}(I,\SC)$ as follows:

\begin{align*}
\rho^{\kappa,\kappa'}_{F}(I,\SC) &=
\tfrac{\mu_\kappa(I,\un)}{\mu_\kappa(\SC,\un)} \cdot \xi^{\kappa'}_F(\SC),	\\
\widetilde{\rho^{\kappa,\kappa'}_{F}}(I,\SC) &=
\tfrac{\mu_\kappa(I,\un)}{\mu_\kappa(\SC,\un)} \cdot \xi^{\kappa'}_F(\BC_\pi(\SC)).
\end{align*}

The first lemma below bounds the expected distortion-resistant
corruption of $\varphi_\kappa^\zeta(I,\cdot),\varphi_\kappa^{\zeta'}(\pi[I],\cdot)$
for a fixed interval $I$, as a function of its $\rho$ scores of the
neighborhoods of $I,\pi[I]$. The lemma considers arbitrary $\zeta$
(radius around $I$), and a convenient $\zeta'$ (radius around
$\pi[I]$) as a function of $\pi$.

\begin{lemma}\label{lm::dd_dist_imp}
	Fix step $t$, level $l\ge 0$, 
	and
	alignment $\pi \in \Pi$. Fix input coloring $\kappa'$
        (obtained at the end of step $t-1$), 
        and the current coloring $\kappa$ (obtained at the end of
        level $l-1$ or from \textsc{InitColoring}). Fix $F \in [n^\alpha,\beta^{o(\epsilon)}]$. Consider $I \in\IC$,
        where $(I,\un)$ is not $F$-corrupted pair. 
        For %
        $\zeta\in Z_l$, define $\BC_I =
        \Lambda_\kappa^{\zeta}(I)$, $\BC_I^{+\zeta} =
        \Lambda_\kappa^{2\zeta}(I)$, and let $\zeta'\in Z_l$ be such
        that $\pi[\Lambda_\kappa^{3\zeta}(I)] \subseteq
        \Lambda_\kappa^{\zeta'/2}(\pi[I])$. Let $\zeta_+ = \max\{\zeta,1\}$, and similarly $\zeta'_+= \max\{\zeta',1\}$.
        Then, for $\widehat{F} \triangleq
        F^{\Theta(1/\eps^2)}(\zeta'_+/\zeta_+)^{\Theta(1/\eps)}$:
        \begin{enumerate}
		\item  For $l \geq 1$, in expectation over the random choices of
                  $A,\chi',c$ in the algorithm, we have
                  \begin{equation}\E\left[\dd_{\widehat{F}}\left(\varphi_\kappa^{\zeta}(I,
                  *_{\neq \un}),
                  \varphi_\kappa^{\zeta'}(\pi[I],*_{\neq \un})\right)\right] = O
                  \left(\tfrac{\beta^l}{\zeta}\cdot\rho^{\kappa,\kappa'}_{F}(I,\BC_I)
                  + \tfrac{1}{n^{9}}\right).
                  \label{eqn:ddVarphiColor}
                  \end{equation}
 
		\item For $l \geq 0$, with high probability,
		  \begin{equation}
                    \dd_{\widehat{F}}\left(\varphi_\kappa^{\zeta}(I,\un),
                  \varphi_\kappa^{\zeta'}(\pi[I],
                  \un)\right) =
                  O\left(\tfrac{\beta^l}{\zeta_+}\cdot\rho^{\kappa,\kappa'}_{F}(I,\BC_I^{+\zeta})
                  + \tfrac{1}{n^{9}}\right).
                  \label{eqn:ddVarphiU}
                  \end{equation}
	\end{enumerate}

\end{lemma}

The second key lemma argues that normalizing each vector $\mu(I,*)$
does not add more than constant corruption in each level.

\begin{lemma}\label{lm::min_color_mass}
At the end of level $l\ge 0$, for each coloring $\kappa$, we have $\normo{\mu_\kappa(I,*_{\neq \un}) + \sum_{\zeta \in Z_l}\varphi_\kappa^\zeta(I,*)} = \left(\Omega(\epsilon)\right)^{l+1}$ for each $I \in \IC$ with high probability.
\end{lemma}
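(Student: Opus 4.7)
The plan is to prove this by induction on $l$, carrying along a secondary inductive invariant that $\mu_\kappa(I,\un) \geq (\Omega(\epsilon))^l$ at the start of level $l$ for every $I\in \IC$ --- the latter is a consequence of the main bound at level $l-1$ combined with the explicit normalization formula in \textsc{AmendColoring}, since $\mu_{\hkappa}(I,\un)$ is $\mu_\kappa(I,\un)$ multiplied by the ratio $\sum_\zeta \varphi_\kappa^\zeta(I,\un) / \|\mu_\kappa(I,*_{\neq \un}) + \sum_\zeta \varphi_\kappa^\zeta(I,*)\|_1$, and in the ``hard'' regime where the main bound is tight this ratio will be $\Omega(\epsilon)$. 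For each interval $I$, we first dispose of the trivial case where $\mu_\kappa(I,*_{\neq \un})$ already has mass $\geq (\Omega(\epsilon))^{l+1}$ (the non-$\un$ mass alone certifies the bound), so from now on we assume $\mu_\kappa(I,\un)$ is close to $1$ of its inductive lower bound.

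For the base case $l=0$, after \textsc{InitColoring} we have $\mu_\kappa(I,\un)+\mu_\kappa(I,\bot)=1$; in the non-trivial case $\mu_\kappa(I,\un)=\Omega(1)$. We then split on whether $I$ itself gets clustered by some anchor: if so, the contribution $\phi_\kappa(I,\nu\setminus\{\un,\bot\})$ from Claim \ref{cl::phi_uncorrupted} is $\Omega(1/\beta^{O(1)})$, which survives the $T^\cc$ threshold inside $\varphi_\kappa^0(I,\bot)$; otherwise $I$ must be sparse for a large fraction of costs $\hat c\in E_c$, so with high probability pivot sampling generates $\theta_\kappa^0(I,\nu\setminus\{\bot\})/k = \Omega(1)$, in which case $\varphi_\kappa^0(I,\un)=T^\un(\mu_\kappa(I,\un)\cdot \Omega(1))$ passes the $T^\un$-threshold at $\Omega_\epsilon(1)$.

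For the inductive step $l\geq 1$, assuming $\mu_\kappa(I,\un)\geq (\Omega(\epsilon))^l$, we select a radius $\zeta^\star \in Z_l$ for which the proximity ball $\Lambda_\kappa^{\zeta^\star}(I)$ has ``typical'' mass structure (scanning the powers of $2$ in $Z_l$ and using that the bound $\mu_\kappa(\Lambda_\kappa^\zeta(I),\un)\leq 2\zeta$ is tight up to constants). We classify the pairs $(J,\chi')$ in $\Lambda_\kappa^{\zeta^\star}(I)\times \nu_{t-1}$ by density $\den_{\kappa'}(J,\chi')_{\hat c}$ against the critical threshold $\Theta^*((n/\lambda^t)/\beta^l)$. \emph{Case A (clusterable):} a constant fraction of the local $\un$-mass comes from pairs at density $\sim (n/\lambda^t)/\beta^l$ across enough costs. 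Then anchor sampling (over $\lambda^t$ colors) clusters such pairs with high probability; by Claim \ref{cl::phi_uncorrupted}(1) the added $\phi_\kappa(J,\chi)$ aggregates to $\phi_\kappa(\Lambda_\kappa^{\zeta^\star}(I),\chi)=\Omega(\zeta^\star/\lambda^t)$ for some color $\chi$, and plugging into the $\varphi$ formula with the $\beta^{2l}/(\zeta^\star)^2\cdot 1/\beta^l$ scaling yields $\varphi_\kappa^{\zeta^\star}(I,\chi)\geq T^\cc(\mu_\kappa(I,\un)\cdot \Omega(\beta^{-O(1)}))=\Omega(\mu_\kappa(I,\un))$. \emph{Case B (sparse):} otherwise most local $\un$-mass is sparse across at least a $1-(l+1)\eps/2$ fraction of costs $\hat c$; the $k=\beta^{O(1)}$ oversampling in pivot sampling produces enough sparse pivots (Chernoff), each contributing $\sigma_{V,\chi'}=Q_l(T^\theta(\Gamma))=\Omega(1)$ by the choice of $Q_l$. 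Summing and dividing by $k$, $\sum_{\chi'}\theta_\kappa^{\zeta^\star}(\Lambda_\kappa^{\zeta^\star}(I),\chi')/k=\Omega(\zeta^\star/\beta^l)$, so after multiplying by $\beta^{2l}/(\zeta^\star)^2$ and the $T^\un$ threshold we again get $\varphi_\kappa^{\zeta^\star}(I,\un)=\Omega(\mu_\kappa(I,\un))\geq (\Omega(\epsilon))^{l+1}$. A union bound over $I\in\IC$ with the high-probability concentration events completes the induction.

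\textbf{Main obstacle.} The core technical difficulty is formalizing the dichotomy between Case A and Case B in a way that holds for \emph{every} $I\in\IC$ simultaneously. The radii $\zeta\in Z_l$ are spaced by factors of $2$, and the chosen $\zeta^\star$ must be such that (i) the $\Lambda$-ball mass bound $\mu_\kappa(\Lambda_\kappa^{\zeta^\star}(I),\un)\leq 2\zeta^\star$ is essentially tight (so that the algorithmic normalization $\beta^{2l}/(\zeta^\star)^2$ does not wash out either potential), and (ii) the density stratification across costs $\hat c\in E_c$ is coherent enough that $Q_l$'s threshold on the number of ``good'' coordinates is crossed. A secondary subtlety is that $T^\un$ and $T^\cc$ have hard lower cutoffs at $\Omega_\eps(1)$ and $1/\beta^3$ respectively; one must verify that the estimated potentials are multiplicatively larger than these cutoffs even when $\mu_\kappa(I,\un)$ has already shrunk to $(\Omega(\epsilon))^l$, which is what dictates the final $(\Omega(\epsilon))^{l+1}$ rather than a tighter bound.
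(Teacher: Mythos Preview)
Your inductive skeleton matches the paper's, but the dichotomy in your inductive step has a genuine gap. The negation of Case~A (``a constant fraction of local $\un$-mass has density $\sim (n/\lambda^t)/\beta^l$'') does \emph{not} yield Case~B (``most local $\un$-mass is sparse below that threshold''): the remaining pairs could just as well be \emph{too dense}, above the target window, and then neither mechanism fires --- anchor clustering would assign $\phi$ too small to survive $T^\cc$, while pivot sampling only produces $\theta$ for sparse pairs. Nothing in your argument rules this out for the $\un$-colored pairs sitting in $\Lambda_\kappa^{\beta^l}(I)$.

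The paper closes this gap by a different split and an extra claim (Claim~\ref{cl::density_range_set}) that reaches back to level $l-1$. The split is on the computed quantity $\theta_\kappa^{\beta^l}(\Lambda_\kappa^{\beta^l}(I),\nu\setminus\{\bot\})$: if it is $\ge \tfrac{b_l}{50}k$ then $\varphi^{\beta^l}(I,\un)$ passes $T^\un$. If it is small, the key observation is that the current $\un$-mass in $\Lambda_\kappa^{\beta^l}(I)$ was \emph{produced by} $\varphi_{\kappa^{-1}}^\zeta(\cdot,\un)$ at level $l-1$, which via Claim~\ref{cl::Q_U_bound}(1) forces a mass $\Omega(b_l\beta^l)$ of pairs with relative density at most the level-$(l-1)$ threshold $n^{3\alpha}\beta^2\cdot n/(\lambda^t\beta^l)$. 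Pairing this \emph{upper} bound on density with the level-$l$ smallness of $\theta$ (which via Claim~\ref{cl::Q_U_bound}(2) bounds the mass \emph{below} the level-$l$ threshold) sandwiches $\Omega(b_l\beta^l)$ mass into the window $[1,\beta^2]\cdot n^{3\alpha}n/(\lambda^t\beta^l)$ on $\tfrac{\epsilon}{4\eta}$ costs; exactly this window is the hypothesis of Claim~\ref{clm::dense_set_imp}, which then lower-bounds $\|T^\cc(\phi_\kappa(\Lambda_\kappa^{\beta^l}(I),*)/\beta^l)\|_1$. Your proposal never invokes the level-$(l-1)$ origin of the $\un$-mass, and without that chaining the sandwich cannot be formed. (Two smaller issues: your secondary invariant $\mu_\kappa(I,\un)\ge(\Omega(\epsilon))^l$ for all $I$ is false --- intervals colored $\bot$ at level $0$ have zero $\un$-mass; and your base-case split ``whether $I$ gets clustered'' is a random event, whereas the paper splits on the deterministic density condition so that Claim~\ref{clm::dense_set_imp} gives a whp bound.)
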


The proofs of Lemmas \ref{lm::dd_dist_imp} and
\ref{lm::min_color_mass} are involved and appear in 
Section~\ref{sec:keyCorrectness}.
We prove Lemma~\ref{thm::matching_guarantee} using these two key
lemmas in Section~\ref{sec:proofCorruptionGrowthLevel} after
introducing a few useful facts.

\subsection{Proof of Lemma~\ref{lm::corruption_growth_imp} from key
  correctness lemmas: corruption growth per level}
\label{sec:proofCorruptionGrowthLevel}

Before continuing with the correctness analysis, we establish some
auxiliary statements. 

\paragraph{Distortion Resilient Distance Properties.}
We show some properties of Distortion Resilient Distance from Def. \ref{def::dd}. Recall that $\dd_F(p,q) = \sum_{i:p_i > F\cdot q_i} |p_i|$. First, we show multiplication by scalars.

\begin{fact}\label{ft::dd_scalar}
	Fix $p,q \in \R^n_+$ and scalars $a, b \in \R_+$. We have
		$\dd_F(a\cdot p,b\cdot q) = a\cdot \dd_{F\cdot \tfrac{b}{a}}(p,q)$
\end{fact}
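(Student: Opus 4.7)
The plan is to unfold the definition of $\dd_F$ on the left-hand side and then rescale the selection criterion by $a$. Concretely, I would start from
\[
\dd_F(a\cdot p,\, b\cdot q) \;=\; \sum_{i:\, a\cdot p_i \,>\, F\cdot b\cdot q_i} a\cdot p_i,
\]
using that $(a\cdot p)_i = a p_i \ge 0$ (so the absolute value is just $a p_i$, since $a \in \R_+$ and $p \in \R^n_+$).

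Next I would simplify the indicator. Since $a > 0$, dividing both sides of $a p_i > F b q_i$ by $a$ is an equivalence, yielding $p_i > F\cdot (b/a) \cdot q_i$. Substituting this rewritten condition into the sum and pulling the constant $a$ out gives
\[
\sum_{i:\, a p_i \,>\, F b q_i} a\cdot p_i \;=\; a\cdot \sum_{i:\, p_i \,>\, (Fb/a)\cdot q_i} p_i \;=\; a\cdot \dd_{F\cdot b/a}(p,q),
\]
where in the last step I just recognize the definition of $\dd$ with the rescaled distortion parameter $F\cdot b/a$.

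There is essentially no obstacle: the statement is a one-line rescaling identity that follows directly from the definition, with positivity of $a,b,p,q$ ensuring both that the absolute value is transparent and that dividing by $a$ preserves the inequality. The only thing worth a line of commentary, should I want to be extra careful, is the edge case $b=0$: then $Fb/a = 0$ and the condition $p_i > 0 \cdot q_i$ simply selects the support of $p$, which matches the left-hand side $\dd_F(ap, 0) = \sum_{i: a p_i > 0} a p_i$.
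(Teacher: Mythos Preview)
Your proof is correct and is essentially identical to the paper's own proof: both unfold the definition of $\dd_F$, divide the selection inequality through by $a$, and pull the scalar $a$ out of the sum to recognize $\dd_{Fb/a}(p,q)$. Your added remarks on positivity and the $b=0$ edge case are a bit more careful than the paper's one-line computation, but the argument is the same.
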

\begin{proof}

		$$\dd_F(a\cdot p,b\cdot q) = \sum_{i:\tfrac{a p_i}{b q_i} > F} |a p_i| = a \sum_{i:\tfrac{p_i}{q_i} > \tfrac{bF}{a}} |p_i| = a\cdot \dd_{F\cdot \tfrac{b}{a}}(p,q)$$
\end{proof}

We also show a bound on the $\ell_1$ normalization of vectors.

\begin{fact}\label{ft::dd_norm}
	Fix $p,q \in \R^n_+$. We have,
	$$\dd_{2 \cdot F^2}\left(\tfrac{p}{\normo{p}},\tfrac{q}{\normo{q}}\right) + \dd_{2 \cdot F^2}\left(\tfrac{q}{\normo{q}},\tfrac{p}{\normo{p}}\right) \leq 4\cdot \left(\tfrac{\dd_{F}(p,q)}{\normo{p}} + \tfrac{\dd_{F}(q,p)}{\normo{q}}\right)$$
\end{fact}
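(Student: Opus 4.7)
The plan is to reduce the fact to the unnormalized version via Fact \ref{ft::dd_scalar}, and then split into cases based on the ratio of $\ell_1$ norms $r \triangleq \|p\|_1/\|q\|_1$. Applying Fact \ref{ft::dd_scalar} with $a = 1/\|p\|_1$ and $b = 1/\|q\|_1$ gives
$$\dd_{2F^2}\!\left(\tfrac{p}{\|p\|_1},\tfrac{q}{\|q\|_1}\right) = \tfrac{1}{\|p\|_1}\,\dd_{2F^2 r}(p,q), \qquad \dd_{2F^2}\!\left(\tfrac{q}{\|q\|_1},\tfrac{p}{\|p\|_1}\right) = \tfrac{1}{\|q\|_1}\,\dd_{2F^2/r}(q,p),$$
so the task reduces to bounding $\tfrac{1}{\|p\|_1}\dd_{2F^2 r}(p,q) + \tfrac{1}{\|q\|_1}\dd_{2F^2/r}(q,p)$ by four times the right-hand side of the fact.

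First I would handle the \emph{balanced} case $r \in [\tfrac{1}{2F}, 2F]$. Here both effective thresholds satisfy $2F^2 r \geq F$ and $2F^2/r \geq F$, and since raising the threshold in $\dd_{\cdot}$ can only remove indices from the sum, we get $\dd_{2F^2 r}(p,q) \leq \dd_F(p,q)$ and $\dd_{2F^2/r}(q,p) \leq \dd_F(q,p)$. This immediately bounds the LHS by the RHS (in fact without even needing the factor $4$).

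Next I would handle the \emph{unbalanced} case $r > 2F$ (the case $r < 1/(2F)$ is symmetric). The key observation is that for \emph{any} $p,q$, indices with $p_i \leq F q_i$ contribute at most $F\|q\|_1$ to $\|p\|_1$, so $\dd_F(p,q) \geq \|p\|_1 - F\|q\|_1$. When $r>2F$, this yields $\dd_F(p,q) \geq \|p\|_1/2$, hence $4\cdot \dd_F(p,q)/\|p\|_1 \geq 2$. Since the LHS is a sum of two $\dd$-distances between probability vectors, and each such distance is trivially bounded by the $\ell_1$-mass of the first argument (which is $1$), the LHS is at most $2$, and the bound follows.

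The only mild obstacle is making sure no edge case is missed in the case split — in particular, verifying that in the unbalanced regime one genuinely gets a slack of $2$ on the right (not just on one of the two terms) from the $\dd_F(p,q) \geq \|p\|_1 - F\|q\|_1$ bound, so that the trivial upper bound $\mathrm{LHS} \le 2$ suffices. Everything else is a direct algebraic substitution using Fact \ref{ft::dd_scalar} together with the monotonicity of $\dd_F$ in $F$.
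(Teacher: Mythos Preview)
Your proposal is correct and follows essentially the same approach as the paper's proof: both apply Fact~\ref{ft::dd_scalar} to pull out the normalizing constants, then split into the balanced case $r\in[1/(2F),2F]$ (where monotonicity of $\dd$ in the threshold suffices) and the unbalanced case (where $\dd_F(p,q)\ge \tfrac12\|p\|_1$ makes the RHS at least $2$ while the LHS is trivially at most $2$). Your write-up even makes the inequality $\dd_F(p,q)\ge\|p\|_1-F\|q\|_1$ explicit, which the paper leaves implicit.
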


\begin{proof}
	By fact \ref{ft::dd_scalar}, we have that,
	$$\dd_{2 \cdot F^2}\left(\tfrac{p}{\normo{p}},\tfrac{q}{\normo{q}}\right) + \dd_{2 \cdot F^2}\left(\tfrac{q}{\normo{q}},\tfrac{p}{\normo{p}}\right) = \tfrac{1}{\normo{p}} \cdot \dd_{2 \cdot F^2 \cdot \tfrac{\normo{p}}{\normo{q}}}\left(p,q\right) + \tfrac{1}{\normo{q}} \cdot \dd_{2 \cdot F^2 \cdot \tfrac{\normo{q}}{\normo{p}}}\left(q,p\right)$$
	
	Now, consider a case where $\tfrac{\normo{p}}{\normo{q}} \in
        [1/2F,2F]$, then we have that, using the observation that
        $\dd_F(a,b)$ is decreasing in $F$:
		$$\dd_{2 \cdot F^2}\left(\tfrac{p}{\normo{p}},\tfrac{q}{\normo{q}}\right) + \dd_{2 \cdot F^2}\left(\tfrac{q}{\normo{q}},\tfrac{p}{\normo{p}}\right) \leq \tfrac{1}{\normo{p}} \cdot \dd_{F}\left(p,q\right) + \tfrac{1}{\normo{q}} \cdot \dd_{F}\left(q,p\right)$$
		
	Otherwise, assume w.l.o.g $\tfrac{\normo{p}}{\normo{q}} > 2F$,
        then on one hand we have that $\dd_{F}\left(p,q\right) \geq
        0.5\normo{p}$, and hence the RHS of the claim is at least $4 \cdot 0.5 = 2$. On the other hand, $\dd_{2F^2}\left(\tfrac{p}{\normo{p}},v\right)\leq 1$ for any $v$, and similarly for $q$, and hence the LHS is at most 2 and the claim follows.
	
\end{proof}

\vspace{2mm}
\paragraph{Alternative Counting for $\Lambda$ balls.} When bounding certain parameters, we use the following simple combinatorial claim:

\begin{claim}\label{cl::lambda_ball_func}
	Fix coloring $\kappa$, parameter $\zeta \geq 1$ and function $f: \IC \rightarrow \R_+$. Then,
	
	$$
	\sum_{I \in \IC} \mu_\kappa(I,\un) \cdot f(\Lambda_\kappa^\zeta(I)) = \sum_{I \in \IC} \mu_\kappa(\Lambda_\kappa^\zeta(I),\un) \cdot f(I) \leq 2\zeta f(\IC).
	$$
\end{claim}
\begin{proof}
	The proof follows immediate by counting the contribution of each $f(I)$ to elements in $\Lambda_\kappa^\zeta(I)$. Formally:
	
	$$
	\sum_{I \in \IC} \mu_\kappa(I,\un) \cdot f(\Lambda_\kappa^\zeta(I)) = \sum_{I \in \IC} \mu_\kappa(I,\un) \cdot \sum_{J \in \Lambda_\kappa^\zeta(I)} f(J) = \sum_{I \in \IC}  \sum_{J \in \IC} \1[\mu(\text{SEB}(I,J)\footnote{SEB$(I,J)$ = smallest enclosing ball containing both intervals},\un) \leq \zeta] \cdot \mu_\kappa(I,\un) f(J)
	$$
	
	Now, by change of summation, such quantity equals

	$$
	 \sum_{J \in \IC}  \sum_{I \in \IC} \1[\mu(\text{SEB}(I,J),\un) \leq \zeta] \cdot \mu_\kappa(I,\un) f(J) = \sum_{J \in \IC} f(J) \cdot \sum_{I \in \Lambda_\kappa^\zeta(J)} \mu_\kappa(I,\un) = \sum_{J \in \IC} \mu_\kappa(\Lambda_\kappa^\zeta(J),\un) \cdot f(J).
	$$
	
	And since $\mu_\kappa(\Lambda_\kappa^\zeta(J),\un) \leq 2\zeta$, we conclude:
	
		$$
	\sum_{I \in \IC} \mu_\kappa(I,\un) \cdot f(\Lambda_\kappa^\zeta(I)) = \sum_{I \in \IC} \mu_\kappa(\Lambda_\kappa^\zeta(I),\un) \cdot f(I) \leq 2\zeta f(\IC).
	$$
	
	as needed.
	
\end{proof}

We now proceed to the proof of Lemma
\ref{lm::corruption_growth_imp}. Recall that Lemma~\ref{lm::corruption_growth_imp} states that, in a
fixed level, the new coloring $\hkappa$ (amended from $\kappa$)
satisfies $\xi^{\hkappa}_{\widehat{F}} =
O\left(\tfrac{\xi^{\kappa'}_{F} +
  \xi^\kappa_{F}}{\epsilon^{2/\epsilon} \cdot \delta}\right) +
\tO\left(n^{-8}\right)$ with probability $1-\delta$, where
$\widehat{F} = F^{O(1/\epsilon^2)}$. For reader's convenience, we
include a summary table of notations.

\begin{table}[hp]
\vspace{-1ex}
\caption{Summary of main notations and their definitions.} %
  \label{tab:notation-new}
 \centering
{\renewcommand{\arraystretch}{1.1}
  \begin{tabular}{|l|l|}
    \hline

               \multicolumn{2}{| >{\bf}c |}
 	{Parameters}
    \\
    \hline
    \multicolumn{2}{| c |}
 	{
    \begin{tabular}{p{0.02\textwidth}|p{0.47\textwidth}|p{0.02\textwidth}|p{0.47\textwidth}}
    $\eps$ & Arbitrary small constant; complexity is
      $n^{1+O(\epsilon)}$. &   $\alpha$ & $=\eps^{5/\epsilon}$. We define $F_0 = n^\alpha$.    \\
    $\lambda$ & $=n^\epsilon$. Sample size growths by factor $\lambda$
      each step. &  $\eta$ &     $=\alpha^2\epsilon^3$. The set of
      costs $E_\basec$ has size $1/\eta$.
    \\
      $\beta$ &     $=n^\epsilon$. Base radius
    growths by $\beta$ factor each
    level. &   $k$ &     $=\tO(\beta^3)$, a pivot ``oversampling'' factor.
    \\
    \end{tabular}}
    \\
    \hline

       \multicolumn{2}{| >{\bf}c |}
 	{Transformations}
    \\
    \hline
    $T^q_{\delta,\gamma}(x)$ & 
    \begin{tabular}{p{0.3\textwidth}|p{0.52\textwidth}}
        = $\begin{cases}
	x_i	&	x_i \geq \gamma \\
	0	& x_i < \delta \gamma \\
	\gamma \cdot (\tfrac{x_i}{\gamma})^q	&	\text{Otherwise}
\end{cases}$	
&
    \begin{tabular}{p{0.05\textwidth}|p{0.45\textwidth}}
    $T^\cc$ & $=T^{\Theta(1/\epsilon)}_{1/\beta, 1/\beta^3}$
 	\\

     $T^\theta$ &     $ = T^{O(1/\epsilon)}_{1/\sqrt{\beta},1}$
    \\
         $T^\un$ &     $=T^{O(1/\eps)}_{1/\beta, \Omega_\eps(1)}$
    \\
    \end{tabular}
    \end{tabular}
    \\
    \hline
$Q_{\delta,s,F}(x)$ & $= \max_{J \subseteq [d]} a_{|J|} \cdot \min_{j
      \in J} x_j$ \hspace{5mm} where $a_i = \begin{cases}
        1       & i \geq s\cdot d \\
        0       & i < (s-\delta)d \\
        1/F^{sd-i}        &       \text{Otherwise}
\end{cases}$
    \\
    \hline
    $Q_l(x)$ & $=
    Q_{\tfrac{\epsilon}{4},1-\tfrac{(l+1)\epsilon}{2},n^{\alpha^2}}(x)$
 	\\
 	    \hline
       \multicolumn{2}{| >{\bf}c |}
 	{Potentials}
 	\\
    \hline

        $\phi_\kappa(I,\chi)$ &
    $=\mu_{\kappa'}(I,\chi')\tfrac{2n}{\lambda^t}\sum_{j=0}^{j_{\max}}
    \tfrac{n^{-\alpha j}}{\widehat{d_{I,j}}}$,
    is the basic potential from matching anchor $(A,\chi',\hat c)$
    \\
    $\widehat{d_{I,j}}$ &
    $=\Theta\left(\max\{d(I,\chi')_{\hat{c}+j\basec},n^{-\alpha} \cdot d(A,\chi')_{3\hat{c}}\}\right)$, is the approximated density for a matched interval $I$
    \\
    \hline
    $\varphi_\kappa^0(I,\bot)$ & $= T^\cc\left(\mu_\kappa(I,\un) \cdot
\phi_{\kappa}(I, \nu \setminus \{\un, \bot
\})\right)$
    \\
    \hline
    $\Lambda_\kappa^\zeta(I)$ & smallest interval ball containing
    $\zeta$ of $\ell_1$-mass of \un-color in $\mu$ on the left and right of $I$
    \\
    \hline
    $\varphi_\kappa^\zeta(I,*_{\neq \un})$ & $=T^\cc\left(\mu_\kappa(I,\un)\cdot
    \tfrac{\beta^{2l}}{\zeta^2}\cdot
    \tfrac{\phi_\kappa(\Lambda_\kappa^\zeta(I),*)}{\beta^l}\right)$
    \\
    \hline
    $\Gamma_{l,\zeta}(V,\chi')$ & a vector of costs in $E_\basec$ with
 $\Gamma_{l,\zeta}(V,\chi') =
\min\left\{\reallywidehat{\relden_{\kappa'}(V,\chi',\Lambda_\kappa^{\zeta}(V))}^{-1}\cdot
n^{4\alpha}\cdot
\frac{\totalmu}{ \beta^l \cdot \lambda^t}, 1\right\}$
    \\
    \hline
    $\sigma_{V,\chi'}$ &  $= 
    Q_l(T^\theta\left(\Gamma_{l,\zeta}(V,\chi')\right))$
    \\
    \hline
    $\VC$ & multiset of pivots; $=$ subsampling $(V, \chi') \in \IC \times
\nu \setminus \{ \bot \}$ $k$ times with prob. $\mu_{\kappa'}(V,\chi')\cdot\beta^{-l}$
\\
\hline
$m_\VC((V,\chi'))$ & multiplicity of $(V,\chi')$ in $\VC$
    \\
    \hline
$\theta_\kappa^{\zeta}(V,\chi')$ & $= 
m_\VC((V,\chi')) \cdot \begin{cases}
	\sigma_{V,\chi'}	&	\mu_{\kappa'}(\Lambda_{\kappa'}^\zeta(V),\chi') \leq \lambda \cdot \beta^6 \cdot \beta^l	\\
	1	&	\text{Otherwise}
\end{cases}$
    \\
    \hline
    $Z_l, l>0$ & $=\beta^l\cdot\{1,2,4,8,\ldots, n/\beta^l\}$, is the set of radiuses $\zeta$
    \\
    $Z_0$ %
    & $=\{0\}.$
    \\
    \hline
    $\varphi_\kappa^\zeta(I,\un)$, $\zeta>0$ & $=
    T^\un\left(\mu_\kappa(I,\un)\cdot \frac{\beta^{2l}}{\zeta^2} \cdot  \min\left\{\frac{1}{k} \cdot \sum_{\chi'\in \nu \setminus \{\bot \}}\theta_\kappa^{\zeta}(\Lambda_\kappa^{\zeta}(I),\chi'),1\right\}\right)$ 
\\
$\varphi_\kappa^\zeta(I,\un), \zeta=0$ & $= T^\un\left(\mu_\kappa(I,\un) \cdot \tfrac{\theta_\kappa^{0}(I,\nu \setminus \{ \bot \})}{k} \right)$
    \\
    \hline
    $\mu_{\hkappa}(I,*)$ & $=\mu_\kappa(I, *_{\neq \un}) +
    \mu_\kappa(I,\un)\cdot \normalizedEllOne{\mu_\kappa(I,*_{\neq
        \un})+\sum_{\zeta\in Z_l} \varphi_\kappa^\zeta(I,*)}$ (always a
    distribution over $\chi$'s)
    \\
    \hline
       \multicolumn{2}{| >{\bf}c |}
 	{Error measures}
\\
\hline
    $\xi_F^{\kappa}(I)$ & $=\begin{cases} 1 & \pi[I]=\bot
  \text{ or } \cad(I,\pi[I])>\basec\\
  \dd_F(\mu_\kappa(I,*_{\neq \bot}),\mu_\kappa(\pi[I],*_{\neq \bot})) + \mu_\kappa(I,\bot) \1[\text{dist}_G(I,\pi[I])>2] & \text{Otherwise}\end{cases}$
    \\
    \hline
$\rho^{\kappa,\kappa'}_{F}(I,\SC)$ &
  $=\tfrac{\mu_\kappa(I,\un)}{\mu_\kappa(\SC,\un)} \cdot \xi^{\kappa'}_F(\SC)$

\\ \hline
$\widetilde{\rho^{\kappa,\kappa'}_{F}}(I,\SC)$ &
$=\tfrac{\mu_\kappa(I,\un)}{\mu_\kappa(\SC,\un)} \cdot \xi^{\kappa'}_F(\BC_\pi(\SC))$,
where $\BC_\pi(\SC)$ for $\SC \subseteq \IC$ is
the smallest enclosing ball around $\pi[\SC]$
    \\
    \hline
  \end{tabular}
  }
\end{table}

\newcommand{\IpiI}[0]{\{I,\pi[I]\}}

\begin{proof}[Proof of Lemma \ref{lm::corruption_growth_imp} using Lemmas \ref{lm::dd_dist_imp}, \ref{lm::min_color_mass}]

Fix distortion $F$, level $l$, and consider an interval $I \in \IC
\setminus \pi^{-1}[\bot]$ with $\cad(I,\pi[I])\le c$.  Let $m_I =
\mu_\kappa(I,\un)$ and $m_{\pi[I]}= \mu_\kappa(\pi[I],\un)$.  
\ns{changed here such that either $I, \pi[I]$ are corrupted}If
either $(I,\un)$ or $(\pi[I],\un)$ are corrupted --- meaning $\dd_F(m_I,m_{\pi[I]}) + \dd_F(m_{\pi[I]},m_I) > 0$ and
hence $\dd_F(m_I,m_{\pi[I]}) + \dd_F(m_{\pi[I]},m_I) \geq \max\{m_I,m_{\pi[I]}\}$ --- then $\xi_F^\kappa(\IpiI)\ge \tfrac{m_I + m_{\pi[I]}}{2}$
(merely from $\un$-color). The new coloring $\hkappa$, obtained at
the end of level $l$, updates at most $m_I + m_{\pi[I]}$ total mass. Hence, 
we have:

\begin{align}
  \label{eqn:xiAtLeastMI}
  \xi_{2F}^{\hkappa}(\IpiI)-\xi_F^{\kappa}(\IpiI) &\le \sum_{I' \in \IpiI}\dd_{2F}(\mu_{\hkappa}(I',*_{\neq\bot}),\mu_{\hkappa}(\pi[I'],*_{\neq\bot})) + (\mu_{\hkappa}(I',\bot) - \mu_{\kappa}(I',\bot)) \\
  &\le
\sum_{I' \in \IpiI} \dd_{F}(\mu_{\kappa}(I',*_{\neq\bot}),\mu_{\kappa}(\pi[I'],*_{\neq\bot}))+3m_{I'} \\
&\le
O(\xi_F^{\kappa}(\IpiI)).
\end{align}

For the rest we consider intervals $I$ such that both $(I,\un)$ and $(\pi[I],\un)$ are not $F$-corrupted, i.e.,
$\tfrac{m_I}{m_{\pi[I]}} \in [1/F,F]$.

To bound the new corruption $\xi_{\hat F}^{\hkappa}(I)$, we analyze
two cases: the base case of level $l=0$, where our main goal is to
match intervals and the corruption added is for {\em already matched}
pairs (colored $\bot$) without a short path in $G$, and the general
case of $l\ge1$, and our goal is to move $\mu$-mass from $\un$-color
to $\chi\in[\lambda^t]$ color, in similar proportion for both $I$ and
$\pi[I]$. In both cases, we analyze the expectation of
$\xi^{\hkappa}_{\widehat{F}}$ as a function of
$\xi^{\kappa'}_{F},\xi^{\kappa}_{F}$.

Recall our coloring update procedure replaces the $\un$-mass
of $I$ by vector
$\mu_\kappa(I,\un) \cdot
\frac{\mu_\kappa(I,*_{\neq\un}) + \sum_{\zeta \in Z_l}
  \varphi_\kappa^\zeta(I,*)}{\normo{\mu_\kappa(I,*_{\neq\un})+\sum_{\zeta
      \in Z_l} \varphi_\kappa^\zeta(I,*)}}$.

\vspace{2mm}

\paragraph{Base case of level $l=0$.} \ns{changed}At the beginning of the step, we
have that $\mu_\kappa(I,\bot)=\mu_{\kappa'}(I,\bot)$ and $\mu_\kappa(I,\un)=\mu_{\kappa'}(I,\nu\setminus\bot)$.
Furthermore, at level 0, we amend only the
$\un$ and $\bot$ color potentials. For $\un$, we invoke Lemma \ref{lm::dd_dist_imp} (2) and obtain (whp) that 
$$\dd_{\widehat{F}}\left(\varphi_\kappa^{0}(I,\un),
                  \varphi_\kappa^{0}(\pi[I],
                  \un)\right) =
                  O\left(\rho^{\kappa,\kappa'}_{F}(I,\{I\})
                  + \tfrac{1}{n^{10}}\right) = O\left(\xi^{\kappa'}_{F}(I)
                  + \tfrac{1}{n^{10}}\right).$$
                For $\bot$, observe that $\varphi_\kappa^0(I,\bot) \leq
\phi_\kappa(I,\nu \setminus \{
\un, \bot \})$ at level 0. Now, consider an {\em
  $F$-uncorrupted pair} $(I,\chi') \in \IC \times \nu \setminus \{
\bot \}$ that has been clustered by the anchor at color $\chi$. By Claim~\ref{cl::phi_uncorrupted} (1), we have $\dd_{O(F \cdot n^{2\alpha})}(\phi_\kappa(I,\chi),\phi_\kappa(\pi[I],\chi))\le
 n^{-10}$ whp. Notice also that whenever $\phi_\kappa(I,\chi) > 0$, we add an edge $(A,I)$ to $G$ which is uniquely identified by color $\chi$. Therefore, one of the following must hold:
 \begin{enumerate}
 	\item We add both edges $(A,I)$ and $(A,\pi[I])$ to $G$; or
 	\item The total contribution of {\em $F$-uncorrupted pairs} to $\varphi_\kappa^0(I,\bot)$ is $O(n^{-9})$.
 \end{enumerate}
 
	Focusing on corrupted pairs $(I,\chi')$, note that there is
 $\xi^{\kappa'}_{F}$  corrupted
    mass in $\IC$;
    i.e., total $\mu_{\kappa'}$ of $(I,\chi')$ with
    $\dd_F(\mu_{\kappa'}(I,\chi'),\mu_{\kappa'}(\pi[I],\chi'))>0$ is
    at most $\xi^{\kappa'}_{F}$.
    Recall that we sample $\lambda^t$ anchors %
    and by Claim~\ref{cl::phi_uncorrupted} (2), each such anchor generates, in expectation, potential $O(\tfrac{\mu_{\kappa'}(I,\chi')}{\lambda^t})$.
    
    Therefore, for each
    $(I,\chi')$ pair, the added potential over all anchors is a
    r.v. with expectation $O(\mu_{\kappa'}(I,\chi'))$. Overall, we obtain the expected contribution of corrupted pairs to $\varphi_\kappa^0(I,\bot)$ is $O(\xi^{\kappa'}_{F}(I))$. Define $u_I
= \mu_\kappa(I,*_{\neq\un}) +
\varphi_\kappa^0(I,*)$ and $Z_I \triangleq \1[\dist_G(I,\pi[I])>2]$. Notice that,

\begin{align*}
	\E\left[\xi^{\hkappa}_{\widehat{F}}(I)\right] &\leq \tfrac{1}{\normo{u_I}}\left(Z_I \cdot
 \E[\mu_{\kappa}(I,\bot) + \varphi_\kappa^0(I,\bot)] %
 + \E\left[\dd_{\widehat{F}}\left(\mu_{\kappa}(I,\un) +\varphi_\kappa^{0}(I,\un),
                  \mu_{\kappa}(\pi[I],\un) + \varphi_\kappa^{0}(\pi[I],
                  \un)\right)\right]\right) \\
                  &\stackrel{Fact~\ref{fct:ddTriangle}}{\leq} \tfrac{1}{\normo{u_I}}\cdot O\left(\xi^{\kappa}_{F}(I) + \xi^{\kappa'}_{F}(I)
                  + \tfrac{1}{n^{9}}\right).
                  \end{align*}
                                    
                  as there are only $\un$ and $\bot$ colors to
                  consider at $l=0$. By
Lemma~\ref{lm::min_color_mass}, we have $\normo{u_I} =
\Omega(\epsilon)$. Taking into account Eqn.~\eqref{eqn:xiAtLeastMI} for corrupted $(I,\un)$, we therefore have in expectation
$\E[\xi^{\hkappa}_{\widehat{F}}(I) - \xi^{\kappa}_{F}(I)] =
O\left(\tfrac{1}{\eps}(\xi^{\kappa}_{F}(I) + \xi^{\kappa'}_{F}(I) +
n^{-9})\right)$ overall for each interval. Summing over all intervals %
we conclude $\E[\xi^{\hkappa}_{\widehat{F}}] = O\left(\tfrac{1}{\eps}(\xi^{\kappa}_{F} + \xi^{\kappa'}_{F} + n^{-8})\right)$. 

\vspace{2mm}

\paragraph{General case.} Consider level $l\ge1$. 
Intuitively, the new corruption
added to $I$ is driven by 1) vectors $\varphi$ that differ between $I$
and $\pi[I]$ (in $\dd$ sense), and 2) normalization. For a fixed $\zeta\in Z_l$, we shall bound the %
quantity
$\dd_{%
F'}(\varphi_\kappa^\zeta(I), \varphi_\kappa^{\zeta'}(\pi[I]))$, %
for some $\zeta'\in Z_l, F'\ge F$, by the following quantity in expectation, up
to a constant:
\begin{equation}\label{eq::rho_I_zeta}
\rho_I^\zeta \triangleq \tfrac{\beta^l}{\zeta} \cdot
\rho^{\kappa,\kappa'}_{F}(I,\Lambda_\kappa^{2\zeta}(I)) + \tfrac{1}{\log n} \cdot
\widetilde{\rho^{\kappa,\kappa}_{F}}(I,\Lambda_\kappa^{3\zeta}(I))
\ge \tfrac{m_I}{4\zeta}\left(\tfrac{\beta^l}{\zeta}\cdot
\xi_F^{\kappa'}(\Lambda_\kappa^{2\zeta}(I))+
\tfrac{1}{\log n}\xi_F^{\kappa}(\BC_\pi(\Lambda_\kappa^{3\zeta}(I)))\right).
\end{equation}

\ns{new argument}

To bound $\E[\dd_{
F'}(\varphi_\kappa^\zeta(I), \varphi_\kappa^{\zeta'}(\pi[I]))]$ by $O(\rho_I^\zeta)$, we first show such quantity is bounded by $O(m_I)$. Indeed:
	\begin{align*}
	\E[\dd_{
F'}(\varphi_\kappa^\zeta(I), \varphi_\kappa^{\zeta'}(\pi[I]))] 
&\leq \E\left[\normo{\varphi_\kappa^\zeta(I)}\right]\\
 &\stackrel{\varphi \text{ def}}{\leq}
\E\left[T^\un \left(\mu_\kappa(I,\un)\cdot
    \tfrac{\beta^{2l}}{\zeta^2}\right) +\normo{T^\cc\left(\mu_\kappa(I,\un)\cdot
    \tfrac{\beta^{2l}}{\zeta^2}\cdot
    \tfrac{\phi_\kappa(\Lambda_\kappa^\zeta(I),*)}{\beta^l}\right)}\right] \\
    &\leq
\E[m_I + m_I\cdot
    \tfrac{1}{\zeta}\cdot
    \phi_\kappa(\Lambda_\kappa^\zeta(I),\nu)] \\
    &\stackrel{Claim~\ref{cl::phi_uncorrupted}}{=}
m_I + m_I\cdot
    \tfrac{1}{\zeta}\cdot
   O(\zeta) \\   
    &=
O(m_I).    	
\end{align*}

From the above, if
$\rho_I^\zeta \geq \tfrac{m_I}{10}$, then
$\E[\dd_{
F'}(\varphi_\kappa^\zeta(I), \varphi_\kappa^{\zeta'}(\pi[I]))] = O(\rho_I^\zeta)$.
Hence we assume $\rho_I^\zeta <
\tfrac{m_I}{10}$ in the rest. 

Let $\zeta'\in Z_l$ be the smallest such that
$\pi[\Lambda_\kappa^{3\zeta}(I)] \subseteq
\Lambda_\kappa^{\zeta'/2}(\pi[I])$.
Note that $\zeta' = O(F \cdot
\zeta\cdot \log n)$ --- as
otherwise, this would mean
$\mu_\kappa(\BC_\pi(\Lambda_\kappa^{3\zeta}(I)),\un) = \omega(F \cdot \zeta \cdot \log
n)$, and since \aanote{should it be $3\zeta$ instead of $2\zeta$?}\ns{yes.}
$\mu_\kappa(\Lambda_\kappa^{3\zeta}(I),\un) \leq 6\zeta$ (by
definition of $\Lambda$-balls), this in turn implies most of
$\BC_\pi(\Lambda_\kappa^{3\zeta}(I)) \times \{\un\}$ is $F$-corrupted, and hence: %
$$\xi_F^{\kappa}(\BC_\pi(\Lambda_\kappa^{3\zeta}(I)) > 0.5 \cdot \mu_\kappa(\BC_\pi(\Lambda_\kappa^{3\zeta}(I)),\un) = \omega(\zeta\cdot F \cdot \log
n) \stackrel{Eqn.~\eqref{eq::rho_I_zeta}}{\Rightarrow}
\rho_I^\zeta > \omega(m_I),$$
which is a contradiction.

By Lemma
\ref{lm::dd_dist_imp}, we have that, for some
$F'=F^{\Theta(1/\epsilon^2)}$ satisfying $F' \ge
(F\zeta'/\zeta)^{\Theta(1/\epsilon^2)}\cdot 2\log n$:

	\begin{align*}
	  \E\left[\dd_{\tfrac{F'}{2\log n}}\left(\varphi_\kappa^{\zeta}(I),\varphi_\kappa^{\zeta'}(\pi[I])\right)\right] &=
		\E\left[\dd_{\tfrac{F'}{2\log n}}\left(\varphi_\kappa^{\zeta}(I,
                  *_{\neq \un}),\varphi_\kappa^{\zeta'}(\pi[I], *_{\neq \un})\right) + \dd_{\tfrac{F'}{2\log n}}\left(\varphi_\kappa^{\zeta}(I,\un),\varphi_\kappa^{\zeta'}(\pi[I],\un)\right)\right] \\
		&= O(\rho_I^\zeta + n^{-9}).
	\end{align*}

        We now aggregate the contribution from all $\zeta$'s, obtaining:

\begin{align*}
	\E\left[\dd_{F'}\left(\sum_{\zeta\in
            Z_l}\varphi_\kappa^{\zeta}(I),\sum_{\zeta\in Z_l}\varphi_\kappa^{\zeta}(\pi[I])\right)\right] 
	&\stackrel{\text{Fact}~\ref{ft::dd_scalar}}{=} \E\left[\dd_{\tfrac{F'}{|Z_l|}}\left(\sum_{\zeta}\varphi_\kappa^{\zeta}(I),|Z_l| \cdot \sum_{\zeta}\varphi_\kappa^{\zeta}(\pi[I])\right)\right]	 \\
	&\stackrel{\text{Fact}~\ref{fct:ddTriangle}}{\leq} 2\E\left[\sum_{\zeta}\dd_{\tfrac{F'}{2|Z_l|}}\left(\varphi_\kappa^{\zeta}(I),\sum_{\zeta}\varphi_\kappa^{\zeta}(\pi[I])\right)\right] \\
	&\leq 2\sum_{\zeta}\E\left[\dd_{\tfrac{F'}{2\log n}}\left(\varphi_\kappa^{\zeta}(I),\varphi_\kappa^{\zeta'}(\pi[I])\right)\right] \\
	&= O\left(\sum_{\zeta}(\rho_I^\zeta + n^{-9})\right).
\end{align*}

Finally, we consider the contribution from normalization. Define $u_I
= \mu_\kappa(I,*_{\neq\un}) + \sum_{\zeta \in Z_l}
\varphi_\kappa^\zeta(I)$ and $v_I = \mu_\kappa(\pi[I],*_{\neq\un}) +
\sum_{\zeta \in Z_l} \varphi_\kappa^\zeta(\pi[I])$.  Using Fact~\ref{ft::dd_norm} and Lemma~\ref{lm::min_color_mass} to bound the
normalization, we have:
	\begin{align}\nonumber
		\E\left[\dd_{8F'^2}(\normalizedEllOne{u_I},\normalizedEllOne{v_I}) + \dd_{8F'^2}(\normalizedEllOne{v_I},\normalizedEllOne{u_I})\right] &\leq O\left(\tfrac{1}{\epsilon^{l+1}}\right) \cdot \E\left[\dd_{2F'}(u_I,v_I)+\dd_{2F'}(v_I,u_I)\right] \\
		&\stackrel{Fact~\ref{fct:ddTriangle}}{=} O\left(\tfrac{1}{\epsilon^{l+1}}\right) \cdot \left(\xi_F^\kappa(I) + \xi_F^\kappa(\pi[I]) +\sum_{\zeta \in Z_l} (\rho_I^\zeta + \rho_{\pi[I]}^\zeta+ n^{-9})\right).\label{eq::corr_I}
	\end{align}

We now compute the aggregate contribution $\xi_{\hat F}^{\hat
  \kappa}(\IC)$ over all intervals $I\in\IC$, where $\hat F = 8F'^2
\cdot F = F^{\Theta(1/\epsilon^2)}$. For this, we need to also bound
the sum of all $\rho_I^\zeta$ quantities as a function of total corruption
$\xi_F^\kappa$, which we do in the next claim.

\begin{claim}\label{clm::rho_bound_imp_new}
 Fix $l>0$ and assume $\mu_\kappa(\YC,\un), \mu_\kappa(\XC,\un)  \geq 1$. %
 Then, $\sum_{I \in \IC}\sum_{\zeta\in Z_l} \rho_I^\zeta
   = O(\xi_F^{\kappa'}+\xi_F^{\kappa})$.
\end{claim}

\begin{proof}
Recall that $\rho_I^\zeta
=\tfrac{\beta^l}{\zeta}\rho^{\kappa,\kappa'}_{F}(I,\Lambda_\kappa^{2\zeta}(I)) +
\tfrac{1}{\log n}\widetilde{\rho^{\kappa,\kappa}_{F}}(I,\Lambda_\kappa^{3\zeta}(I))$.
We bound the sum of each term separately. First, let
$r_\XC=\min\{2\zeta-1,\mu_\kappa(\XC,\un)\}$ and $r_\YC=\min\{2\zeta-1,\mu_\kappa(\YC,\un)\}$. Then $\mu_\kappa(\Lambda_\kappa^{2\zeta}(I), \un) \in
        [r_\XC,2(r_\XC + 1)]$ for $I\in\XC$ (and similarly for $\YC$) and:
	
	\begin{align*}
          \sum_{I \in \IC}
        \rho^{\kappa,\kappa'}_{F}(I,\Lambda_\kappa^{2\zeta}(I))
        &\le
        \sum_{I \in \XC} \tfrac{\mu_\kappa(I,\un)}{r_\XC} \cdot
        \xi_F^{\kappa'}(\Lambda_\kappa^{2\zeta}(I))
        +
        \sum_{I \in \YC} \tfrac{\mu_\kappa(I,\un)}{r_\YC} \cdot
        \xi_F^{\kappa'}(\Lambda_\kappa^{2\zeta}(I))
       \\
       &
       \stackrel{Claim~\ref{cl::lambda_ball_func}}{\leq} \tfrac{1}{r_\XC}\sum_{I \in
          \XC}  \mu_\kappa(\Lambda_\kappa^{2\zeta}(I),\un)
        \cdot \xi_F^{\kappa'}(I)
       + \tfrac{1}{r_\YC}\sum_{I \in
          \YC}  \mu_\kappa(\Lambda_\kappa^{2\zeta}(I),\un)
        \cdot \xi_F^{\kappa'}(I)
       \\
       &\le \tfrac{1}{r_\XC}\sum_{I \in \XC}
       2(r_\XC + 1) \cdot \xi_F^{\kappa'}(I)
       +
        \tfrac{1}{r_\YC}\sum_{I \in \YC}
       2(r_\YC + 1) \cdot \xi_F^{\kappa'}(I)
       \\
       &< 3
        \xi_F^{\kappa'}.
        \end{align*}

	Summing over $\zeta\in Z_l$, we get
        $\sum_{I \in
          \IC}\sum_{\zeta\in Z_l}
        \tfrac{\beta^l}{\zeta}\rho^{\kappa,\kappa'}_{F}(I,\Lambda_\kappa^{2\zeta}(I))=
        O(\xi_F^{\kappa'})$. %
        For the second term, defining %
        $ \widetilde{\xi_F^{\kappa}}(I) \triangleq \xi_F^{\kappa}(\piComp^{-1}(I)) = \xi_F^{\kappa}(\{I' \in \IC \mid \piComp(I') = I\})$\footnote{Recall $\piComp(I) = \pi(I + j)$, where $j\ge 0$, is the minimal one such that $\pi(I + j) \neq \bot$.}:
	
		\begin{align*}
          \sum_{I \in \IC}
        \widetilde{\rho^{\kappa,\kappa}_{F}}(I,\Lambda_\kappa^{3\zeta}(I))
        &\le
        \sum_{I \in \XC} \tfrac{\mu_\kappa(I,\un)}{r_\XC} \cdot
        \xi_F^{\kappa}(\BC_{\pi}(\Lambda_\kappa^{3\zeta}(I)))
        +
        \sum_{I \in \YC} \tfrac{\mu_\kappa(I,\un)}{r_\YC} \cdot
        \xi_F^{\kappa}(\BC_{\pi}(\Lambda_\kappa^{3\zeta}(I)))
       \\
        &\le
        \sum_{I \in \XC} \tfrac{\mu_\kappa(I,\un)}{r_\XC} \cdot
       \widetilde{\xi_F^{\kappa}}(\Lambda_\kappa^{3\zeta}(I))
        +
        \sum_{I \in \YC} \tfrac{\mu_\kappa(I,\un)}{r_\YC} \cdot
         \widetilde{\xi_F^{\kappa}}(\Lambda_\kappa^{3\zeta}(I))
       \\
       &
     \stackrel{Claim~\ref{cl::lambda_ball_func}}{\leq}  
     \tfrac{1}{r_\XC}\sum_{I \in
          \XC}  \mu_\kappa(\Lambda_\kappa^{3\zeta}(I),\un)
        \cdot  \widetilde{\xi_F^{\kappa}}(I)
       + \tfrac{1}{r_\YC}\sum_{I \in
          \YC}  \mu_\kappa(\Lambda_\kappa^{3\zeta}(I),\un)
        \cdot  \widetilde{\xi_F^{\kappa}}(I)
       \\
     &\le\tfrac{1}{r_\XC}\sum_{I \in
          \YC}  \mu_\kappa(\Lambda_\kappa^{3\zeta}(\piComp(I)),\un)
        \cdot \xi_F^{\kappa}(I)
       + \tfrac{1}{r_\YC}\sum_{I \in
          \XC}  \mu_\kappa(\Lambda_\kappa^{3\zeta}(\piComp(I)),\un)
        \cdot \xi_F^{\kappa}(I)
       \\
       &\le \tfrac{1}{r_\XC}\sum_{I \in \XC}
       3(r_\XC+1) \cdot \xi_F^{\kappa}(I)
       +
        \tfrac{1}{r_\YC}\sum_{I \in \YC}
       3(r_\YC+1) \cdot \xi_F^{\kappa}(I)
       \\
       &< 4
        \xi_F^{\kappa}.
                \end{align*}

	Summing over $\zeta\in Z_l$, we get
        $\sum_{I\in\IC}\sum_{\zeta\in Z_l} \tfrac{1}{\log
          n}\widetilde{\rho^{\kappa,\kappa}_{F}}(I,\Lambda_\kappa^\zeta(I))\le
        \tfrac{|Z_l|}{\log
        n}\cdot 4\xi_F^{\kappa}=O(\xi_F^{\kappa})$.
	By summing up both terms, the claim follows.
\end{proof}

Finally, we can estimate the amount of ``new corruption'', $\sum_{I\in
  \IC}\dd_{\widehat F}(\mu_{\hkappa}(I,*),\mu_{\hkappa}(\pi[I],*))-\xi_F^\kappa$, by summing Eqn. (\ref{eq::corr_I}) over all \ns{should be uncorrupted, perhaps give them a new set for the ones we don't use, like $\IC^\xi$} intervals $I \in
\IC$, and using Claim~\ref{clm::rho_bound_imp_new} (noting we only call \textsc{AmendColoring} when $\mu_\kappa(\YC,\un), \mu_\kappa(\XC,\un) \geq \beta^l = \omega(1)$ for level $l>0$).
	
	\begin{align*}
	\E[\xi_{\widehat F}^{\hkappa}] &=
        \sum_{I \in \IC}\E[\xi_{\widehat F}^{\hkappa}(I)] \\
        &\stackrel{Fact~\ref{fct:ddTriangle}}{\le} 2\sum_{I\in \IC}\E\left[\dd_{\widehat F/2}(\mu_{\kappa}(I,*_{\neq \un}),\mu_{\kappa}(\pi[I],*_{\neq \un}))+\dd_{\widehat F/2}(m_I\normalizedEllOne{u_I},m_{\pi[I]}\normalizedEllOne{v_I})\right]
        \\
       &\stackrel{ Eqn.~\eqref{eq::corr_I}}{\le} 2\xi_F^{\kappa}+\sum_{I
          \in \IC}  O\left(\tfrac{1}{\epsilon^l}\right) \cdot
        \left(\xi_F^\kappa(I) + \xi_F^\kappa(\pi[I])+ \sum_{\zeta \in
          Z_l} (\rho_I^\zeta  + \rho_{\pi[I]}^\zeta + n^{-9})\right)
        \\
	&= 2\xi_F^{\kappa}+O\left(\tfrac{1}{\epsilon^l}\right) \cdot \sum_{I \in \IC} \left(\xi_F^\kappa(I) + \sum_{\zeta \in Z_l} (\rho_I^\zeta + n^{-9})\right) \\
        &\stackrel{Claim~\ref{clm::rho_bound_imp_new}}{\le}
        2\xi_F^{\kappa}+O\left(\tfrac{1}{\epsilon^l}\right)
        \cdot\left(\xi_F^\kappa+\xi_F^{\kappa'}+\tilde O(n^{-8})\right)
        \\
	&= O\left(\tfrac{1}{\eps^{2/\epsilon}}\right) \cdot \left(\xi_F^\kappa + \xi_F^{\kappa'}\right)+ \tO(n^{-8}).
	\end{align*}
	
	The soundness of the Lemma is obtained using the Markov inequality.
\end{proof}

\section{Proof of Key Correctness Lemmas \ref{lm::dd_dist_imp},
  \ref{lm::min_color_mass}}
\label{sec:keyCorrectness}

We now prove the key correctness Lemmas \ref{lm::dd_dist_imp},
\ref{lm::min_color_mass}. Before proceeding, we develop several
supporting statements, in particular:
\begin{itemize}
\item
  We argue how soft operators are resilient under the $\dd$ distance;
\item
  We provide a proof for density approximation algorithm, Lemma \ref{lm::approx_density};
\item
  We provide a proof for $\phi$ color assignment properties, Claim \ref{cl::phi_uncorrupted}.
\end{itemize}

\paragraph{Soft operators are resilient under DD.} We show that
$\dd$ distance behaves nicely under the soft operators of
thresholding $T$ and quantile $Q$ (which %
are used precisely for that
reason).

\begin{claim}\label{cl::T_trans}
For any $x,y\ge0$, we have	$\dd_{F^q}(T^{q}_{\delta,\gamma}(x),T^{q}_{\delta,\gamma}(y)) < \dd_F(x,y) + \gamma(\delta F)^q$.
\end{claim}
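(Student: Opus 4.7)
The plan is a careful case analysis based on which of the three branches of $T^q_{\delta,\gamma}$ applies to each of $x$ and $y$. First, I would dispose of the trivial case: if $T^q(x) \le F^q\,T^q(y)$, then $\dd_{F^q}(T^q(x),T^q(y))=0$ and there is nothing to prove. So from now on assume $T^q(x) > F^q\,T^q(y)$, in which case the left-hand side equals $T^q(x)$.

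Two elementary facts will carry most of the work. (a)~$T^q$ is monotone nondecreasing (so $y<\delta\gamma$ forces $T^q(y)=0$, etc.). (b)~For $q\ge 1$, $T^q(x) \le x$ for every $x\ge 0$: this is obvious on the first and third branches, and on the soft branch ($\delta\gamma\le x<\gamma$) follows from $(x/\gamma)^q\le x/\gamma$. Both $F\ge 1$ and $q\ge 1$ are assumed throughout (they match the uses of $T^q$ in the algorithm; $T^q$ is applied with $q=O(1/\eps)$).

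Now I would split on $y$:
\begin{itemize}
\item \textbf{Case $y<\delta\gamma$:} Here $T^q(y)=0$. If $x>Fy$, then $\dd_F(x,y)=x\ge T^q(x)$ and we are done. Otherwise $x\le Fy<F\delta\gamma$; depending on which branch of $T^q$ applies to $x$, we have $T^q(x)\in\{0,\gamma(x/\gamma)^q,x\}$, and using $x/\gamma<F\delta$ one checks in each sub-case that $T^q(x)<\gamma(F\delta)^q$.
\item \textbf{Case $\delta\gamma\le y<\gamma$:} Rewrite $F^q\,T^q(y)=\gamma(Fy/\gamma)^q$, so the hypothesis is $T^q(x)>\gamma(Fy/\gamma)^q$. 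If $Fy\ge\gamma$, convexity of $z\mapsto z^q$ on $[1,\infty)$ gives $(Fy/\gamma)^q\ge Fy/\gamma$, hence $T^q(x)>Fy$; combined with $T^q(x)\le x$ this yields $x>Fy$ and $\dd_F(x,y)=x\ge T^q(x)$. If $Fy<\gamma$, I would rule out the possibility $x\le Fy$ by noting that in that range $x<\gamma$, so $T^q(x)\in\{0,\gamma(x/\gamma)^q\}$; the inequality $T^q(x)>F^q\,T^q(y)=F^q\gamma(y/\gamma)^q$ would then reduce to $(x/y)^q>F^q$, i.e.\ $x>Fy$, a contradiction. Hence $x>Fy$ (when $T^q(x)>0$) and we conclude as before; when $T^q(x)=0$ the bound is trivial.
\item \textbf{Case $y\ge\gamma$:} Then $T^q(y)=y$ and $T^q(x)>F^q y\ge Fy\ge\gamma$ (using $F,F^q\ge 1$), which in turn forces $x\ge\gamma$ and $T^q(x)=x$. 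Again $\dd_F(x,y)=x=T^q(x)$.
\end{itemize}

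Collecting the three cases gives $T^q(x)\le \dd_F(x,y)+\gamma(\delta F)^q$ (strictly, because each sub-case produces a strict inequality once one branch boundary is hit), which is exactly the claim. The main obstacle is the second sub-case of Case~$y<\delta\gamma$: one has to argue $T^q(x)<\gamma(\delta F)^q$ separately depending on whether $F\delta\lessgtr 1$, and the sign of $F\delta-1$ together with $q\ge1$ is precisely what makes the term $\gamma(\delta F)^q$ the correct additive slack.
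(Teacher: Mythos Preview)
Your proof is correct and uses essentially the same argument as the paper, just with the case split organized differently: the paper branches first on whether $x>Fy$ (and then on whether $t_y>0$), whereas you branch first on which regime of $T^q$ the point $y$ falls into. Both routes rest on the same two facts you isolate, namely $T^q(x)\le x$ for $q\ge 1$ and the implication ``$x\le Fy$ and $t_y>0$ $\Rightarrow$ $t_x\le F^q t_y$''; your Case~$\delta\gamma\le y<\gamma$ is exactly a fleshed-out version of what the paper compresses into the phrase ``by definition of the $T$ transformation.'' Your explicit handling of the sub-case $x\ge\gamma$ when $y<\delta\gamma$ (where one needs $F\delta>1$ and $q\ge1$ to get $x<F\delta\gamma\le\gamma(F\delta)^q$) is a detail the paper leaves implicit. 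One cosmetic remark: in Case~$\delta\gamma\le y<\gamma$ with $Fy\ge\gamma$, the inequality $(Fy/\gamma)^q\ge Fy/\gamma$ is simply $z^q\ge z$ for $z\ge1$, $q\ge1$, not convexity per se.
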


\begin{proof}
	Let $t_x = T^{q}_{\delta,\gamma}(x)$ and $t_y =
        T^{q}_{\delta,\gamma}(y)$ We first observe that $t_x \leq x$,
        and hence if $x>Fy$, then $\dd_{F^q}(t_x,t_y)\le t_x\le
        x=\dd_F(x,y)$. Now, suppose $x\le Fy$. If $t_y>0$, then
        $t_x\le F^qt_y$ by definition of $T$ transformation (in
        particular, in the interesting regime when $y\le \gamma$, we
        have $t_y= \gamma\cdot (y/\gamma)^q \ge
        F^{-q}\gamma\cdot (x/\gamma)^q\ge t_x/F^q$).
        Otherwise $t_x\leq \gamma \cdot (\delta F)^q$, and hence
        $\dd_{F^q}(t_x,t_y)\le t_x\le\gamma \cdot (\delta F)^q$.
	\end{proof}

\begin{claim}\label{cl::Q_trans}
	Fix $\delta < 1$ and integers $v,d$ with $v < \delta d$. Fix
        $x, y \in [0,1]^d$ with $\norm{x}_\infty \leq 1$. For any set of
        coordinates $U \subseteq [d]$ of size $d-v$ we have:
$\dd_{F' \cdot
          F^v}(Q_{\delta,s,F}(x),Q_{\delta,s,F}(y)) \leq\dd_{F'}(x_{U},y_{U}) +
        1/F^{\delta d-v}$. 
\end{claim}

\begin{proof}
  Let $q_x = Q_{\delta,s,F}(x)$, $q_y = Q_{\delta,s,F}(y)$.
        Consider the set $J$ which maximizes $q_x$. We have $|J \cap U| \geq |J|-v$ and hence either (1) $a_{|J \cap U|} \geq a_{|J|} \cdot F^{-v}$, in which case 
	\begin{align*}
	\dd_{F' \cdot F^v}(q_x,q_y) 
	&\leq \dd_{F' F^v}(a_{|J|}\cdot \min_{j \in J} x_j,a_{|J \cap U|}\cdot \min_{j \in J \cap U} y_j) \\
	&\leq \dd_{F'}(\min_{j \in J} x_j,\min_{j \in J \cap U} y_j) \\
	&\leq \dd_{F'}(\min_{j \in J \cap U} x_j,\min_{j \in J \cap U} y_j) \\
	&\leq \dd_{F'}(x_{J \cap U},y_{J \cap U}) \\
	&\leq \dd_{F'}(x_{U},y_{U}).		
	\end{align*}

	or (2) $a_{|J|} \leq 1/F^{\delta d-v}$, in which case $\dd_{F' F^k}(q_x,q_y) \leq q_x \leq F^{v-\delta d} \cdot \norm{x}_\infty \leq F^{v-\delta d}$. This concludes the proof.

\end{proof}

\begin{corollary}\label{cr::Q_trans}
  Suppose $v = \Theta(1/\alpha)$ with $\eps d=\Omega(1/\alpha^2)$. 
  Fix $x, y \in [0,1]^d$ with $\norm{x}_\infty \leq 1$. For any set of
  coordinates $U \subseteq [d]$ of size $d-v$ we have:
  $\dd_{F \cdot n^{O(\alpha)}}(Q_{l}(x),Q_{l}(y)) \leq
  \dd_F(x_{U},y_{U}) + n^{-10}$.
\end{corollary}
\begin{proof}
	Apply Claim~\ref{cl::Q_trans} with $F'=F$, $F=n^{C \cdot \alpha^2}$ for some large enough constant $C$.
\end{proof}

\vspace{2mm}
\paragraph{Analysis for Approximating Densities.}
We now analyze the performance of \textsc{ApproxDensity} and \textsc{ApproxRelativeDensity} (Alg.~\ref{alg::matching.approx_densities}), proving Lemma \ref{lm::approx_density}.%

\begin{proof}[Proof of Lemma~\ref{lm::approx_density}]%

For (1), let $d = \den_\kappa(I,\chi,\SC)_c$ and let $\widehat{d} =
\mu_\kappa(\widetilde{\SC^*},\chi) + \tfrac{1}{\gamma}\cdot |\SC^* \setminus \widetilde{\SC^*}|$ be the estimator. Also, let $d' = \den_\kappa(I,\chi,\SC\setminus \widetilde{\SC^*})_c$ and $s = |\SC^* \setminus \widetilde{\SC^*}|$. Note that
$s$ is a sum of independent random variables in
$[0,1]$ with expectation $\gamma d'$. If $d' \ge \Omega(\tfrac{\log
  n}{\gamma})$, then by Chernoff bound, $s$ would concentrate whp:
$s =  \Theta(\gamma d')$. Otherwise, if $d'<O(\tfrac{\log
  n}{\gamma})$, then whp $s/\gamma\le O( \tfrac{\log
  n}{\gamma})=\den_m$ for appropriately chosen constant. Hence the
algorithm outputs whp, up to the $\Theta(\cdot)$ rescaling:
$$\max\{\widehat{d},\den_m\} = \max\{\den_\kappa(I,\chi,\widetilde{\SC^*})_c + \Theta(1)\cdot \max\{\den_\kappa(I,\chi,\SC \setminus \widetilde{\SC^*})_c, \den_m\},\den_m\}= \Theta(1)\cdot\max\{\den_\kappa(I,\chi,\SC)_c, \den_m\}.$$
For run-time, we have from Theorem \ref{thm::ds_sampling}, that generating the set $\SC'$ costs $\tO(\gamma \mu_\kappa(\SC,\chi) + 1) = \tO\left(\tfrac{\mu_\kappa(\SC,\chi)}{\den_m} + 1\right)$, and since $|\SC'|$ is bounded by that amount as well. The total complexity hence is $T_D = T_{\cad} \cdot \tO\left(\tfrac{\mu_\kappa(\SC,\chi)}{\den_m} + 1\right)$.

For (2), we first note that since $I \in \SC$, then $d \geq
\mu_\kappa(I,\chi)$, and since we set $\den_m = \mu_\kappa(I,\chi)$ in
the first call to \textsc{ApproxDensity}, we have that $\widehat{d} =
\Theta(d)$ whp. Also, let $D = \den_\kappa(I,\chi)_c$, noting that $D/d =
\relden_\kappa(I,\chi,\SC)_c$.
Now, consider the second call to \textsc{ApproxDensity}, and let
$\widehat{D}$ be the output; by part (1), we have that
$\widehat{D}=\Theta(1)\cdot \max\{D, \relden_m\mu_\kappa(I,
\chi)\}$.

Combining the bounds on
$\widehat{D}$ and $\widehat{d}$, we have that whp, using that $d \geq \mu_\kappa(I,\chi)$:
$$
\max\left\{\tfrac{\hat{D}}{\hat{d}},\relden_m\right\} 
= \Theta(1)\max\{\tfrac{D}{d}, \tfrac{\relden_m\mu_\kappa(I,\chi)}{d}, \relden_m\} 
= \Theta(1)\cdot\max\{\tfrac{D}{d}, \relden_m\}.
$$

For runtime, we have $T_{RD} = T_D(\hat{d}) + T_D(\hat{D}) + \tO(1) = T_{\cad} \cdot \tO\left(\tfrac{\mu_\kappa(\SC,\chi)}{\mu_\kappa(I,\chi)}+\tfrac{1}{\relden_m}\cdot\tfrac{\mu_\kappa(\IC,\chi)}{\mu_\kappa(I,\chi)}\right)$ as needed.
\end{proof}

\vspace{2mm}
\paragraph{Analysis for $\phi$ invariants.} We now prove the
properties of the $\phi$ potentials, in particular Claim \ref{cl::phi_uncorrupted}.

\begin{proof}[Proof of Claim \ref{cl::phi_uncorrupted}]
Fix pair $(I,\chi')$, and 
$\hat c\in E_\basec$. Assuming that $(I,\chi')$ was clustered by some
anchor $(A,\chi'')$, for fixed $j\in \{0,\ldots j_{\max}\}$, 
let $m_{I,\chi',j} =  n^{-\alpha j} \cdot \tfrac{2\totalmu}{\lambda^t} \cdot
 \tfrac{\mu_{\kappa'}(I,\chi')}{\widehat{d_{I,j}}}$, where $\hat
 c_j=\hat c+j\basec$, and
$\widehat{d_{I,j}}$ the approximation of
 $\max\{\den_{\kappa'}(I,\chi')_{\hat c_j},n^{-\alpha}d_\A\}$ with $d_\A=\den_{\kappa'}(A,\chi'')_{3\hat c}$.
For a color $\chi \in [\lambda^t]$, the contribution of $(I,\chi')$ to
$\phi_\kappa(I,\chi)$ is $\sum_{j \in \{0,\ldots,j_{\max} \}
} m_{I,\chi',j}$, whenever $(I,\chi')$ was clustered by the
anchor at distance $\hat c_j$.

\begin{align*}
          \E_{A,\chi''}\left[m_{I,\chi',j}\cdot \one{(I,\chi') \hbox{ clustered by }
           (A,\chi'',j)}\right] 
          &= %
          \Pr_{A,\chi''}\left[A \in \NC_{\hat c_j}(I) \wedge \chi'=\chi'' \right] \cdot n^{-\alpha j} \cdot \E_{A \in \NC_{\hat c_j}(I)}\left[\tfrac{2\totalmu}{\lambda^t} \cdot \tfrac{%
	        			\mu_{\kappa'}(I,\chi')}{\widehat{d_{I,j}}}\right] \\ 
	          &= n^{-\alpha j} \cdot \sum_{A \in \NC_{\hat c_j}(I)} \tfrac{\mu_{\kappa'}(A,\chi')}{2\totalmu} \cdot \tfrac{2\totalmu}{\lambda^t} \cdot  \E_{A \in \NC_{\hat c_j}(I)}\left[\tfrac{%
	        			\mu_{\kappa'}(I,\chi')}{\widehat{d_{I,j}}}\right] \\
	          &= n^{-\alpha j} \cdot \den_{\kappa'}(I,\chi')_{\hat c_j} \cdot \tfrac{1}{\lambda^t} \cdot  \E_{A \in \NC_{\hat c_j}(I)}\left[\tfrac{%
	        			\mu_{\kappa'}(I,\chi')}{[\Omega(1),1] \cdot \max\{\den_{\kappa'}(I,\chi')_{\hat c_j},n^{-\alpha}d_\A\}}\right]
          \end{align*}	
          
                    The last equality is since $\sum_{A \in \NC_{\hat c_j}(I)}
   \mu_{\kappa'}(A,\chi') = \den_{\kappa'}(I,\chi')_{\hat c_j}$, and that
   $\widehat{d_{I,j}}$ is a constant-factor approximation to
   $\max\{\den_{\kappa'}(I,\chi')_{\hat c_j},n^{-\alpha}d_\A\}$. %
   Summing over all $j$, this immediately gives us the required upper bound
   for item (2), as we need only consider the first component of the $\max$:
    
    $$\E_{A,\chi'',\hat c}\left[\phi_{\kappa}(I,\chi) \cdot
          \1[\chi'=\chi''] \right] = O\left(\tfrac{\mu_{\kappa'}(I,\chi')}{\lambda^{t}}\right).$$
   
   For item (3), recall that $d_\A = \den_{\kappa'}(A,\chi')_{3 \hat c}$ and since $I \in \NC_{\hat c_j}(A)$, we have that $d_\A \leq\den_{\kappa'}(I,\chi')_{6\hat c}$ (by triangle inequality) . %
   Therefore, whenever $\max\{\den_{\kappa'}(I,\chi')_{\hat c_j},d_\A
   \cdot n^{-\alpha}\}> \den_{\kappa'}(I,\chi')_{\hat c_j}$ we have that
   $\den_{\kappa'}(I,\chi')_{\hat c_j}\le n^{-\alpha}\cdot
   \den_{\kappa'}(I,\chi')_{6\hat c}$, which can happen only for at most $O(1/\alpha)$ costs $\hat c \in E_\basec$. Now, consider any other ``good cost'', and consider $j=0$; we get the following bound:

$$
          \E_{A,\chi''}\left[m_{I,0}\cdot \one{(I,\chi') \hbox{ clustered by }
           (A,\chi'',j)}\right]	        			
	        \geq \den_{\kappa'}(I,\chi')_{\hat c_j} \cdot \tfrac{1}{\lambda^t} \cdot  \tfrac{%
	        			\mu_{\kappa'}(I,\chi')}{\den_{\kappa'}(I,\chi')_{\hat c_j}} \\
	        = \mu_{\kappa'}(I,\chi')/\lambda^{t}.
$$
Since $m_{I,0}$ is a lower bound on the total potential, and we are
sampling a cost from a universe of size $|E_\basec| = 1/\eta$, this
implies that for all but $O(\eta/\alpha)$ fraction of costs $\hat c\in
\E_\basec$:
               	$$\E_{A,\chi''}\left[\phi_{\kappa}(I,\chi) \cdot
          \1[\chi'=\chi''] \right] \ge \tfrac{\mu_{\kappa'}(I,\chi')}{\lambda^{t}}$$	
   \aanote{fix the extra factor 2 from above (into the statement?)}\ns{Changed $\phi$ potential, so we don't need the extra factor 2, to be fixed here only...}
as needed for (3).

It remains to prove item (1).  Assume $(I,\chi')$ is {\em not $F$-corrupted}, i.e.,
$\mu_{\kappa'}(I,\chi') \leq F \cdot \mu_{\kappa'}(\pi[I],\chi')$.
       Consider any $j <
        j_{\max}$, and recall the clustering property --- that $\NC_{\basec}(\A_j) \cap
        \PC_{\kappa'}^{\chi'} \subseteq \A_{j+1}$ --- and hence
        $\pi[I] \in \A_{j+1}$. %
	Also, we have that $\widehat{d_{I,j}}$ is at least
        $\Omega(n^{-\alpha}d_{\A}) = \Omega(n^{-\alpha}\den_{\kappa'}(A,\chi')_{3\hat c})$, and hence

        $$\tfrac{\widehat{d_{I,j}}}{\widehat{d_{\pi[I],j+1}}}
        =
        \Omega\left(\tfrac{\max\{d_{\A}n^{-\alpha},\den_{\kappa'}(I,\chi')_{\hat
            c_j}\}}{\max\{d_{\A}n^{-\alpha},\den_{\kappa'}(\pi[I],\chi')_{\hat
            c_{j+1}}\}}\right)
        =
        \Omega\left(\min\{1,\tfrac{d_{\A}n^{-\alpha}}{\den_{\kappa'}(A,\chi')_{3\hat
            c}}\}\right)
        = \Omega(n^{-\alpha}), $$
	where the second derivation is by triangle inequality. Therefore, 
	$\dd_{O(F \cdot n^{2 \alpha})}(m_{I,j},m_{\pi[I],j+1})
        = 0$ for any $j < j_{\max}$. 
        For $j=j_{\max}= O(1/\alpha)$, we have $m_{I,j_{\max}} \leq n^{-11} \cdot \totalmu \leq n^{-10}$, and by summing over all $m_{I,j}$ and $m_{\pi[I],j}$, we obtain $\dd_{O(F \cdot n^{2\alpha})}(\phi_\kappa(I,\chi),\phi_\kappa(\pi[I],\chi)) \leq n^{-10}$ as needed for (1).
        
        This concludes the proof.

\end{proof}

\subsection{Controlling average $\varphi$ corruption: proof of Lemma \ref{lm::dd_dist_imp}}\label{sec::dd_dist_proof}

We prove parts 1 and 2 separately. We use the following notation: for
any $J\in\IC$, define $m_J = \mu_{\kappa}(J,\un)$, and
$\BC_{\pi[I]} = \Lambda_\kappa^{\zeta'}(\pi[I])$. %

\aanote{note that part 2 proves $+1/n^9$ instead of 10 -- need to
  update lemma}\ns{changed.}

\begin{proof}[Proof of Lemma \ref{lm::dd_dist_imp} (1)]  

Let $p_\chi=\sum_{J \in \BC_I}\phi_\kappa(J,\chi)$, $q_\chi=\sum_{J
  \in \BC_{\pi[I]}}\phi_\kappa(J,\chi), v_\chi = \sum_{J \in
  \BC_I}\varphi_\kappa^{\zeta}(J,\chi), u_\chi = \sum_{J \in
  \BC_{\pi[I]}}\varphi_\kappa^{\zeta'}(J,\chi)$. We define vectors
$p,q,v,u \in \R^{\nu \setminus \{\un, \bot\}}$ correspondingly.
Also, let $m_I = \mu_{\kappa}(I,\un)$,
and note that since $(I,\un)$ is not $F$-corrupted pair in $\kappa$, then $\dd_F(m_I,m_{\pi[I]}) = 0$. 
We have, using $T^\cc$, for some $\widehat{F} = F^{\Theta(1/\epsilon^2)}$:
\begin{align*}
	\dd_{\widehat{F}}(v,u) &\le \dd_{\widehat{F}}\left(T^\cc(m_I \cdot \tfrac{\beta^l}{\zeta^2} \cdot p),T^\cc(m_{\pi[I]} \cdot \tfrac{\beta^l}{\zeta'^2} \cdot q)\right) \\
&\stackrel{Claim~\ref{cl::T_trans}}{\leq} \dd_{\widehat{F}^{\Theta(\epsilon)}}\left(m_I \cdot \tfrac{\beta^l}{\zeta^2} \cdot p,m_{\pi[I]} \cdot \tfrac{\beta^l}{\zeta'^2} \cdot q\right) + n^{-10} \\
&\stackrel{Fact~\ref{ft::dd_scalar}}{\leq}
m_I \cdot \tfrac{\beta^l}{\zeta^2} \dd_{\widehat{F}^{\Theta(\eps)}/F \cdot \tfrac{\zeta^2}{\zeta'^2}}(p,q) + n^{-10} \\
&\leq m_I \cdot \tfrac{\beta^l}{\zeta^2} \dd_{F^4}(p,q) + n^{-10}.
\end{align*}

It remains to compute (the expected value of) $\dd_{F^4}(p,q)$, i.e.,
the difference in $\phi$ potential. 
Fix a color (coordinate) $\chi$. If an interval
 $J \in \BC_I$ 
is matched to some anchor/color $(A,\chi')$ pair,  
 and $(J,\chi')$ is not {\em
   $F$-corrupted pair}, then, by Claim \ref{cl::phi_uncorrupted}, similar potential will be
 added to $\pi[J]$ up to distortion factor $O(F \cdot n^{2\alpha}) \leq
 F^4$: specifically,
 $\dd_{F^{4}}(\phi_\kappa(J,\chi),\phi_\kappa(\pi[J],\chi)) \le \dd_{O(Fn^{2\alpha})}(\phi_\kappa(J,\chi),\phi_\kappa(\pi[J],\chi))\le
 n^{-10}$ whp.

 \aanote{remember to fix $\phi$ def in the table}\ns{was this done?}
Now focus on corrupted pairs $(J,\chi')$ where $J\in \BC_I$, $\chi'\in\nu$. Note that
 $\xi^{\kappa'}_F(\BC_I) \leq \tfrac{2\zeta}{m_I}\cdot
 \rho^{\kappa,\kappa'}_{F}(I,\BC_I)$;   
    i.e., total $\mu_{\kappa'}$ of $(J,\chi')$ with
    $\dd_F(\mu_{\kappa'}(J,\chi'),\mu_{\kappa'}(\pi[J],\chi'))>0$ is
    $O\left(\tfrac{\zeta}{m_I}\cdot\rho^{\kappa,\kappa'}_{F}(I,\BC_I)\right)$.
    Also, from Claim~\ref{cl::phi_uncorrupted}, the expected potential
    added for each corrupted pair $(J,\chi')$ for an anchor is 
   $O\left(\tfrac{\mu_{\kappa'}(J,\chi')}{\lambda^{t}}\right)$.
    Since we sample $\lambda^t$ anchor pairs, %
     for each
    $(J,\chi')$ pair, the added potential over all anchors is a
    r.v. with expectation $O(\mu_{\kappa'}(J,\chi'))$.
      Overall, we obtain, using Fact~\ref{fct:ddTriangle}:
      $$\E\left[\dd_{\widehat{F}}(v,u)\right] \leq m_I \cdot \tfrac{\beta^l}{\zeta^2} \E\left[\dd_{F^4}(p,q)\right] + n^{-10} = m_I \cdot \tfrac{\beta^l}{\zeta^2} \cdot O(n^{-9}+\tfrac{\zeta}{m_I}\rho^{\kappa,\kappa'}_{F}(I,\BC_I)) + n^{-10} = O\left( \tfrac{\beta^l}{\zeta} \cdot \rho^{\kappa,\kappa'}_{F}(I,\BC_I)  + n^{-9}\right)$$

    completing the proof of the first part.

 \end{proof}

In the 2nd part, we need to prove Eqn.~\eqref{eqn:ddVarphiU}: that
$\dd_{\widehat{F}}\left(\varphi_\kappa^{\zeta}(I,\un),
\varphi_\kappa^{\zeta'}(\pi[I], \un)\right) =
O\left(\tfrac{\beta^l}{\zeta_+}\cdot\rho + \tfrac{1}{n^{10}}\right)$,
where $\rho= \rho^{\kappa,\kappa'}_{F}(I,\BC_I^{+\zeta})$.

                  We first introduce a
                  central claim.  
                  Define
$a=
\theta_\kappa^{\zeta}(\BC_I,\nu \setminus \{ \bot \})$, and $b=\theta_\kappa^{\zeta'}(\BC_{\pi[I]},\nu \setminus \{ \bot \})$. 
We prove the following later. We note that the case $m_I=0$ trivially
proves the lemma, and hence below $m_I>0$.

\begin{claim}
  \label{clm:ddAB}
Assuming $a=\Omega(\beta^{1.5})$, then, whp,
$\dd_{F^{O(1/\epsilon)}}(a,b) = O\left(\tfrac{k}{m_I}  \cdot \tfrac{\zeta_+}{\beta^{l}}\cdot
\rho + n^{-9}\right)$.
\end{claim}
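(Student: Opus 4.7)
The plan is to decompose $a$ and $b$ as sums of per-pivot contributions and match them term by term via Fact~\ref{fct:ddTriangle}. First, I would partition the pivot pairs contributing to $a$ into (i) those with $(V,\chi')$ $F$-uncorrupted and $V\in\BC_I^{+\zeta}$, for which the buffer choice $\pi[\Lambda_\kappa^{3\zeta}(I)]\subseteq\Lambda_\kappa^{\zeta'/2}(\pi[I])$ guarantees that $\pi[V]$ is itself a pivot candidate for $b$ inside $\BC_{\pi[I]}$, and (ii) the remainder (either $F$-corrupted pairs or pivots whose $\pi$-image escapes $\BC_{\pi[I]}$). Since each pivot contributes $\theta_\kappa^\zeta(V,\chi')\le m_\VC((V,\chi'))\le k$, the expected total contribution of class (ii) is at most $\tfrac{k}{\beta^l}\cdot\xi^{\kappa'}_F(\BC_I^{+\zeta})=O\!\left(\tfrac{k\zeta_+}{m_I\beta^l}\cdot\rho\right)$ by the definition of $\rho^{\kappa,\kappa'}_F(I,\BC_I^{+\zeta})$, matching the target bound.

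For the class (i) contributions, I would argue that $\theta^\zeta_\kappa(V,\chi')$ and $\theta^{\zeta'}_\kappa(\pi[V],\chi')$ agree in $\dd_{F^{O(1/\epsilon)}}$-distance by composing three layers. First, the $F$-uncorruptness of $(V,\chi')$ in $\kappa'$ yields $\mu_{\kappa'}(V,\chi')=\mu_{\kappa'}(\pi[V],\chi')$ up to $F$, so the sampling probabilities and the high-density case-split threshold $\lambda\beta^6\cdot\beta^l$ behave consistently on the two sides. Second, a triangle-inequality argument on $\cad$, mirroring Claim~\ref{cl::phi_uncorrupted}(2), yields that the relative densities $\relden_{\kappa'}(V,\chi',\Lambda_\kappa^\zeta(V))_{\hat c}$ and $\relden_{\kappa'}(\pi[V],\chi',\Lambda_\kappa^{\zeta'}(\pi[V]))_{\hat c}$ agree up to $F\cdot n^{O(\alpha)}$ on all but the $O(1/\alpha)$ ``bad costs''. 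Third, Claim~\ref{cl::T_trans} propagates this bound through $T^\theta$, and Corollary~\ref{cr::Q_trans}, applied to the coordinate subset $U$ of non-bad costs, absorbs the $O(1/\alpha)$ cost-discrepancies into a single $n^{O(\alpha)}$ distortion blow-up across $Q_l$. The additive residues from the \textsc{ApproxRelativeDensity} approximations accumulate, after summing over the $\tilde O(\zeta)$ sampled pivots, into the $n^{-9}$ term.

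Finally, to pass from expectation to ``with high probability'', I would apply Chernoff concentration: the hypothesis $a=\Omega(\beta^{1.5})$ together with the oversampling factor $k=\tilde\Omega(\beta^3)$ ensures that the number of effective pivot contributions comfortably exceeds the $\log n$ threshold for tight concentration of sums of $[0,1]$-bounded independent variables. The main obstacle I anticipate is the third layer of the middle step: Corollary~\ref{cr::Q_trans}'s coordinate-subset $U$ must be chosen consistently across $V$ and $\pi[V]$, which requires identifying bad costs intrinsically (via $\relden$ gaps in $\kappa'$) rather than per pivot side, because any mismatched coordinate would incur an $F^k$ factor from $Q_l$ and push the final distortion well past $F^{O(1/\epsilon)}$.
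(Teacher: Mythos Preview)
Your high-level structure---split the pivot contributions into corrupted and uncorrupted, bound the former by $\rho$, and show the latter transfer---matches the paper's. But the second layer of your class-(i) argument has a genuine gap. You assert that a triangle-inequality argument on $\cad$ yields that $\relden_{\kappa'}(V,\chi',\Lambda_\kappa^\zeta(V))_{\hat c}$ and $\relden_{\kappa'}(\pi[V],\chi',\Lambda_\kappa^{\zeta'}(\pi[V]))_{\hat c}$ agree up to $F\cdot n^{O(\alpha)}$ on all but $O(1/\alpha)$ costs. The triangle inequality does control the \emph{global} densities $\den_{\kappa'}(V,\chi')$ vs.\ $\den_{\kappa'}(\pi[V],\chi')$ this way, but the \emph{local} densities $\den_{\kappa'}(V,\chi',\BC_V)$ and $\den_{\kappa'}(\pi[V],\chi',\BC_{\pi[V]})$ count the $\chi'$-mass of \emph{all} intervals in the respective $\Lambda$-balls, including corrupted ones. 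Corrupted intervals in $\BC_V$ contribute mass to $V$'s local density that need not transfer to anything in $\BC_{\pi[V]}$. So even a perfectly uncorrupted $(V,\chi')$ can have its local-density ratio arbitrarily distorted by corrupted neighbors, and no direct triangle-inequality argument fixes this. Claim~\ref{cl::phi_uncorrupted}(2) is not an adequate template here: it concerns $\phi$, which depends only on a single interval's density in $\IC$, not on local densities inside $\Lambda$-balls.

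The paper handles this with a further decomposition: it defines $\EC\subseteq\BC_I^{+\zeta}\times\nu$ as the set of pairs that are either corrupted \emph{or} have a badly skewed local-density ratio on more than $2/\alpha$ costs, and case-splits on whether $\EC$ dominates $\E[a]$. For pairs outside $\EC$, your layers 1 and 3 (the $T^\theta$ and $Q_l$ propagation) go through essentially as you describe. The real work is bounding $\mu_{\kappa'}(\EC)$ by $O(\tfrac{\zeta_+}{m_I}\cdot\rho)$, which needs two auxiliary claims you did not anticipate: Claim~\ref{clm::den_sets} shows that on the ball \emph{restricted to uncorrupted intervals}, local densities do transfer via triangle inequality; Claim~\ref{clm::dense_skew} then bounds (by a Markov-style averaging) the mass of uncorrupted pairs whose local density is significantly inflated by the corrupted neighbors. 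The same mechanism is needed to handle the hard threshold of Line~\ref{alg::theta_treshold}, which you dismissed as behaving ``consistently on the two sides'': in fact the paper introduces the set $\TC$ of high-local-mass colors and uses another invocation of Claim~\ref{clm::dense_skew} to show that most of the mass in $\BC_I\times\TC$ already has $\sigma=1$, so the threshold discrepancy is harmless. None of this machinery is present in your proposal.
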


\begin{proof}[Proof of Lemma \ref{lm::dd_dist_imp} (2), using
    Claim~\ref{clm:ddAB}]

We have, by definition of $\varphi(I,\un)$, $a,b$, and using
$\BC_I=\Lambda_\kappa^\zeta(I)$ and
$\BC_{\pi[I]}=\Lambda_\kappa^{\zeta'}(\pi[I])$, for level $l>0$:
\begin{align*}
  \dd_{\widehat{F}}(\varphi_\kappa^\zeta(I,\un),\varphi_\kappa^{\zeta'}(\pi[I],\un))
&= \dd_{\widehat{F}}\left(T^\un(m_I \cdot \tfrac{\beta^{2l}}{\zeta^2} \cdot \min\{\tfrac{a}{k},1\}), T^\un(m_{\pi[I]}\cdot \tfrac{\beta^{2l}}{\zeta'^2} \cdot \min\{\tfrac{b}{k},1\})\right) \\
&\stackrel{\dd, T \text{ are monotone}}{\leq}  \dd_{\widehat{F}}\left(T^\un(m_I \cdot \tfrac{\beta^{2l}}{\zeta_+^2} \cdot \tfrac{a}{k}), T^\un(m_{\pi[I]}\cdot \tfrac{\beta^{2l}}{\zeta_{+}'^2} \cdot \tfrac{b}{k})\right) .
\end{align*}

Similarly, for $l=0$, we have using the fact $\beta^{l}=\zeta_+=\zeta'_+ = 1$ at level 0, and that $\BC_I = \{I\}$ and $\BC_{\pi[I]} = \{\pi[I]\}$:

\begin{align*}
  \dd_{\widehat{F}}(\varphi_\kappa^0(I,\un),\varphi_\kappa^{0}(\pi[I],\un))
&= \dd_{\widehat{F}}\left(T^\un(m_I \cdot \tfrac{a}{k}), T^\un(m_{\pi[I]}\cdot \tfrac{b}{k})\right) \\
&=  \dd_{\widehat{F}}\left(T^\un(m_I \cdot \tfrac{\beta^{2l}}{\zeta_+^2} \cdot \tfrac{a}{k}), T^\un(m_{\pi[I]}\cdot \tfrac{\beta^{2l}}{\zeta_{+}'^2} \cdot \tfrac{b}{k})\right) .
\end{align*}

\aanote{using fixed constant 11 -- should perhaps just have in $T^\un$
def}
Note that $T^\un=T^{O(1/\eps)}_{1/\beta,\Omega_\eps(1)}$ transformation zeros-out any $o_\epsilon(1/\beta)$
quantity, and hence if $a = o_\epsilon(k \cdot \beta^{-1.5})$, the statement is trivial. Therefore, from this point on, we assume $a = \Omega_\epsilon(k \cdot \beta^{-1.5}) = \Omega(\beta^{1.5})$. 
We continue to derive, for some $F' =
\widehat{F}^{\Theta(\epsilon)}\cdot \tfrac{\zeta_+^2}{\zeta_+'^2}\cdot 1/F$:

\begin{align*}
  \dd_{\widehat{F}}(\varphi_\kappa^\zeta(I,\un),\varphi_\kappa^{\zeta'}(\pi[I],\un))
&\leq \dd_{\widehat{F}}\left(T^\un(m_I \cdot \tfrac{\beta^{2l}}{\zeta_+^2} \cdot \tfrac{a}{k}), T^\un(m_{\pi[I]}\cdot \tfrac{\beta^{2l}}{\zeta_{+}'^2} \cdot \tfrac{b}{k})\right) \\ 
&\stackrel{Claim~\ref{cl::T_trans}}{\leq}
\dd_{{\hat F}^{11\eps}}\left(m_I \cdot \tfrac{\beta^{2l}}{\zeta_+^2}\tfrac{a}{k},m_{\pi[I]} \cdot \tfrac{\beta^{2l}}{{\zeta_{+}'}^2}\tfrac{b}{k}\right) + n^{-10}
\\
&\stackrel{Fact~\ref{ft::dd_scalar}}{\leq}
\tfrac{m_I}{k} \cdot \tfrac{\beta^{2l}}{\zeta_+^2}\cdot\dd_{{\hat F}^{11\eps}\tfrac{m_{\pi[I]}}{m_I}\tfrac{\zeta_+^2}{\zeta_{+}'^2}}\left(a,b\right) + n^{-10}
\\
&=\tfrac{m_I}{k} \cdot \tfrac{\beta^{2l}}{\zeta_+^2}\dd_{F'}(a,b) + n^{-10}
\\
&\stackrel{Claim~\ref{clm:ddAB}}{=} O\left(\tfrac{\beta^{l}}{\zeta_+} \cdot \rho + n^{-9}\right),
\end{align*}
using $F'=F^{O(1/\eps)}$ and hence $\hat
F=(F'\tfrac{\zeta_{+}'^2}{\zeta_+^2} F)^{\Theta(1/\eps)}=F^{\Theta(1/\eps^2)}(\zeta_{+}'/\zeta_+)^{\Theta(1/\eps)}$.%

\end{proof}

It remains to prove Claim~\ref{clm:ddAB}. We first establish the following
auxiliary claims. First claim argues that for a set $\SC$ of
uncorrupted intervals $I$, the density of $I$ in $\SC$ (local density)
cannot be much larger than the density of $\pi[I]$ in $\pi[\SC]$.
\begin{claim}\label{clm::den_sets}
 	Fix a color $\chi \in \nu \setminus \{\bot\}$ in a coloring $\kappa$. Fix interval sets
        $\SC, \SC_\pi \subseteq \IC$ such that: (1) $\pi[\SC]
        \subseteq \SC_\pi$; and (2) $(I,\chi)$ is {\em
          $F$-uncorrupted} for all $I \in \SC$. Then, for any $I \in
        \SC$, for all but $1/\alpha$ costs $\hat{c}\in E_\basec$ we have
        $\den_\kappa(I,\chi,\SC)_{\hat{c}} \le n^\alpha F \cdot \den_\kappa(\pi[I],\chi,\SC_\pi)_{\hat{c}}$.
 \end{claim}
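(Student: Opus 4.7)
The plan is to go from the density of $I$ at cost $\hat c=c_i$ to the density of $\pi[I]$ at the same cost $c_i$, via the intermediate (shifted) cost $c_{i+1}=3\hat c$: pay a factor $F$ for the ``interval swap'' $J\leftrightarrow \pi[J]$ done via uncorruptedness, and an $n^\alpha$ factor for the ``cost shift'' from $c_{i+1}$ back down to $c_i$. The cost shift step will fail on at most $O(1/\alpha)$ indices $i$, which we count as the allowed exceptions.

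First I would establish the neighborhood inclusion
$\pi[\NC_{\hat c}(I) \cap \SC]\subseteq \NC_{c_{i+1}}(\pi[I])\cap \SC_\pi$.
For any $J\in \NC_{\hat c}(I)\cap \SC$, both $(I,\chi)$ and $(J,\chi)$ are $F$-uncorrupted by hypothesis, so $\pi[I],\pi[J]\neq \bot$ and $\cad(I,\pi[I]),\cad(J,\pi[J])\le c$. Since $\cad_{w,c}$ is a metric (Theorem~\ref{thm:cadGuarantees}), the triangle inequality yields
$\cad(\pi[J],\pi[I])\le c+\hat c+c=\hat c+2c\le 3\hat c=c_{i+1}$,
using $\hat c\ge c_1=6c/\alpha\gg c$ for the small constant $\alpha$. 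Combined with $\pi[\SC]\subseteq \SC_\pi$, this gives the inclusion.

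Next I would use $F$-uncorruptedness together with injectivity of $\pi$ (on $\pi^{-1}([n])$) to translate mass. Since $\mu_\kappa(J,\chi)\le F\cdot \mu_\kappa(\pi[J],\chi)$ for each $J\in \SC$, and distinct $J$'s map to distinct $\pi[J]$'s,
\[
\den_\kappa(I,\chi,\SC)_{\hat c}=\!\!\sum_{J\in \NC_{\hat c}(I)\cap \SC}\!\!\mu_\kappa(J,\chi)\le F\!\!\sum_{J\in \NC_{\hat c}(I)\cap \SC}\!\!\mu_\kappa(\pi[J],\chi)\le F\cdot \den_\kappa(\pi[I],\chi,\SC_\pi)_{c_{i+1}}.
\]

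The main obstacle is the final step: bounding the number of ``bad'' costs $c_i\in E_c$ for which $\den_\kappa(\pi[I],\chi,\SC_\pi)_{c_{i+1}}>n^\alpha\cdot \den_\kappa(\pi[I],\chi,\SC_\pi)_{c_i}$. Let $D_i\triangleq \den_\kappa(\pi[I],\chi,\SC_\pi)_{c_i}$; it is monotone non-decreasing in $i$ and bounded above by $\mu_\kappa(\IC,\chi)\le |\IC|=2n$. By monotonicity there is at most one index with $D_i=0<D_{i+1}$, which we absorb into the exception count. For indices with $D_i>0$ and $D_{i+1}>n^\alpha D_i$, telescoping with monotonicity (for $k$ such bad indices, $D_{1/\eta}\ge n^{k\alpha}\cdot D_{i^*}$ where $i^*$ is the first index with $D_{i^*}>0$) yields $k=O(1/\alpha)$, by the same counting used to bound the number of bad costs for an anchor in the clustering analysis. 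Combining the three steps produces, for all but $O(1/\alpha)$ costs $\hat c=c_i$,
\[
\den_\kappa(I,\chi,\SC)_{\hat c}\le F\cdot \den_\kappa(\pi[I],\chi,\SC_\pi)_{c_{i+1}}\le n^\alpha F\cdot \den_\kappa(\pi[I],\chi,\SC_\pi)_{\hat c},
\]
as claimed.
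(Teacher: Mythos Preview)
Your proof is correct and follows essentially the same approach as the paper: triangle inequality to get the neighborhood inclusion $\pi[\NC_{\hat c}(I)\cap\SC]\subseteq \NC_{\hat c+2c}(\pi[I])\cap\SC_\pi$, $F$-uncorruptedness to transfer mass, and a telescoping bound on the number of costs where the density jumps by more than $n^\alpha$. The only cosmetic differences are that you pass through the shifted cost $c_{i+1}=3\hat c$ rather than $\hat c+2c$ (harmless since $\hat c+2c\le c_{i+1}$), and you are more explicit than the paper about invoking injectivity of $\pi$ and about the telescoping step.
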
 
\begin{proof}
  For each $I \in \SC$, we have $\cad(I,\pi[I]) \leq c$ since  $(I,\chi)$ is {\em $F$-uncorrupted}.
Fix some $I \in \SC$. By triangle inequality, we have that
$\pi[\NC_{\hat{c}}(I) \cap \SC] \subseteq \NC_{\hat{c} + 2\basec}(\pi[I]) \cap
\SC_\pi$, and since all pairs in $\SC \times \{\chi\}$ are {\em
  $F$-uncorrupted}, then $\mu_\kappa(\NC_{\hat{c}}(I) \cap \SC) \leq F
\cdot \mu_\kappa(\pi[\NC_{\hat{c}}(I) \cap \SC])$. Overall, we obtain that that, $\den_\kappa(I, \chi,\SC)_{\hat{c}} \leq F \cdot \den_\kappa(\pi[I],\chi,\SC_\pi)_{\hat{c} + 2\basec}$. Since there can be at most $1/\alpha$ costs where $\den_\kappa(\pi[I],\chi,\SC_\pi)_{\hat{c} + 2\basec} > n^\alpha \cdot \den_\kappa(\pi[I],\chi,\SC_\pi)_{\hat{c}}$, the claim follows.
 \end{proof}
 
Second claim shows that for some interval sets $\SC$, the local density
of $I\in \SC$ in $\SC$ is at least a $1/F_d$ fraction of their density with
respect to another set of interest $\SC'$ (on most costs in
$E_\basec$), except for a mass $\approx 1/F_d$-fraction of the mass of
$\SC'$. This claim will help us both 1) to bound the mass of corrupted
pairs where their local density is significantly skewed by corrupted
pairs which can potentially generate additional corruption, as well as
2) dealing with the sharp threshold of Line \ref{alg::theta_treshold}
in \textsc{AssignThetaPotential} Algorithm.
     
\begin{claim}\label{clm::dense_skew}
Fix interval sets $\SC, \SC' \subseteq \IC$, color $\chi \in \nu \setminus \{\bot \}$ in a coloring
$\kappa$ and density factor $F_d$. Except for a mass of
$O_\eps\left(\tfrac{n^\alpha}{F_d}\cdot \mu_\kappa(\SC',\chi)\right)$ of
$(I,\chi)\in(\SC,\chi)$, for all except $\le 1/\alpha$ costs $\hat{c} \in
E_\basec$, we have $\den_\kappa(I,\chi,\SC)_{\hat{c}}\ge \tfrac{1}{F_d}\cdot \den_\kappa(I,\chi,\SC')_{\hat{c}}$.
\end{claim}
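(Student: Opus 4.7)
The plan is to reduce the claim to a per-cost bound on the bad set via Markov's inequality over $E_c$, and then to establish the per-cost bound by a Vitali-style covering that leverages the geometric spacing $c_{i+1}=3c_i$ in $E_c$ together with the triangle inequality for $\cad$.

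For each $\hat c\in E_c$, let $B_{\hat c}=\{I\in\SC:\den_\kappa(I,\chi,\SC)_{\hat c}<\tfrac{1}{F_d}\den_\kappa(I,\chi,\SC')_{\hat c}\}$, and let $\tilde B$ denote the set of $I\in\SC$ that lie in $B_{\hat c}$ on more than $c_0/\alpha$ costs, for a suitable absolute constant $c_0$. By double counting, $\mu_\kappa(\tilde B,\chi)\leq \tfrac{\alpha}{c_0}\sum_{\hat c\in E_c}\mu_\kappa(B_{\hat c},\chi)$. Since $|E_c|=1/\eta=\Theta_\epsilon(1)$, it suffices to show a per-cost bound $\mu_\kappa(B_{\hat c},\chi)\leq O(n^\alpha/F_d)\cdot\mu_\kappa(\SC',\chi)$, up to excluding, for each interval $I$, the $O(1/\alpha)$ \emph{density-jump} costs $\hat c=c_i$ where $\mu_\kappa(\NC_{c_{i+1}}(I)\cap\SC,\chi)>n^\alpha\mu_\kappa(\NC_{c_{i-1}}(I)\cap\SC,\chi)$. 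The $O(1/\alpha)$ bound on jump costs per $I$ follows, exactly as in the ``bad cost'' analysis for anchors, from the monotonicity of the density in $\hat c$ and the bound $\mu_\kappa(\SC,\chi)\leq n$, so that at most $(\log n)/(\alpha\log n)=1/\alpha$ factor-$n^\alpha$ jumps are possible.

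For the per-cost bound, fix $\hat c=c_i\in E_c$ and restrict to $B_{\hat c}^{\mathrm{good}}\subseteq B_{\hat c}$, the set of $I$ for which $c_i$ is not a density-jump cost. Pick $R\subseteq B_{\hat c}^{\mathrm{good}}$ maximal so that the neighborhoods $\{\NC_{\hat c}(I)\}_{I\in R}$ are pairwise disjoint. Writing $d_I=\den_\kappa(I,\chi,\SC)_{\hat c}$ and $D_I=\den_\kappa(I,\chi,\SC')_{\hat c}$, disjointness of $\{\NC_{\hat c}(I)\cap\SC'\}_{I\in R}$ yields $\sum_{I\in R}D_I\leq \mu_\kappa(\SC',\chi)$, and the defining condition $F_d d_I<D_I$ for $I\in R\subseteq B_{\hat c}$ then gives $\sum_{I\in R}d_I<\mu_\kappa(\SC',\chi)/F_d$. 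On the other hand, for any $J\in B_{\hat c}^{\mathrm{good}}\setminus R$, maximality and the triangle inequality for $\cad$ imply $\cad(I,J)\leq 2\hat c\leq c_{i+1}$ for some $I\in R$, so $B_{\hat c}^{\mathrm{good}}\subseteq \bigcup_{I\in R}\NC_{c_{i+1}}(I)$. Combining with the non-jump hypothesis, $\mu_\kappa(B_{\hat c}^{\mathrm{good}},\chi)\leq \sum_{I\in R}\mu_\kappa(\NC_{c_{i+1}}(I)\cap\SC,\chi)\leq n^\alpha\sum_{I\in R}d_I<n^\alpha\mu_\kappa(\SC',\chi)/F_d$.

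Summing the per-cost bound over $\hat c\in E_c$ and combining with the Markov reduction delivers $\mu_\kappa(\tilde B,\chi)\leq O_\epsilon(n^\alpha/F_d)\cdot\mu_\kappa(\SC',\chi)$. The main obstacle, and the one point requiring care, is coordinating the two sources of slack---the Markov slack accumulated across costs on which $I$ is bad, and the per-interval budget of $\leq 1/\alpha$ jump costs on which the covering step's density-growth bound fails. These are reconciled by choosing the threshold defining $\tilde B$ to be a sufficiently large constant multiple of $1/\alpha$, so that every $I\in\tilde B$ has strictly more ``good'' bad costs than its jump-cost budget, and hence the bound obtained for $B_{\hat c}^{\mathrm{good}}$ dominates. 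A secondary point is that the triangle-inequality step $2\hat c\leq c_{i+1}$ is exactly what forces the ratio $3$ between consecutive costs in $E_c$; if the ratio were smaller, the covering step would fail.
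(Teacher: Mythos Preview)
Your argument is correct and takes a genuinely different route from the paper. The paper proceeds by an expectation/double-counting trick: for each $\hat c$, it computes $\E_{I\sim p_{\SC}}\bigl[\den_\kappa(I,\chi,\SC')_{\hat c}/\den_\kappa(I,\chi,\SC)_{2\hat c}\bigr]$, expands the numerator as a sum over $J\in\NC_{\hat c}(I)\cap\SC'$, uses the triangle inequality to replace the denominator by $\den_\kappa(J,\chi,\SC)_{\hat c}$, and then swaps the order of summation so that the inner sum over $I$ telescopes to exactly $\den_\kappa(J,\chi,\SC)_{\hat c}$; this gives expectation $\le \mu_{\SC'}/\mu_{\SC}$, and a single Markov application per cost plus a union bound over $E_c$ finishes. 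Your Vitali-style packing achieves the same per-cost mass bound by geometry rather than by averaging: the disjoint packing controls $\sum D_I$ directly, and the covering at radius $c_{i+1}$ together with the non-jump hypothesis replaces the paper's $2\hat c\to\hat c$ density comparison. The paper's route is shorter and avoids the covering construction entirely; your route is more hands-on and makes explicit why the factor-$3$ spacing in $E_c$ is needed. One small cosmetic difference: your reconciliation step forces the exceptional-cost bound to be a constant multiple $c_0/\alpha$ rather than exactly $1/\alpha$, whereas the paper's union bound (excluding $I$ bad at \emph{any} cost in the $2\hat c$ sense, then separately counting $\le 1/\alpha$ jump costs) keeps the two slacks cleanly separated and recovers $1/\alpha$ on the nose. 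This is immaterial for the downstream applications, where only $O(1/\alpha)$ is ever used.
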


\begin{proof}
Let $\mu_I = \mu_\kappa(I,\chi)$ and $\mu_{\SC} = \mu_\kappa(\SC_,\chi)=\sum_{I \in \SC} \mu_I$. 	Define
a distribution $p_{\SC}$ over $\SC$ by setting $p_{\SC}(I) = \tfrac{\mu_I}{\mu_{\SC}}$.
For each $\hat{c}\in E_\basec$, we consider the following expectation
$\E_{I \sim p_{\SC}}
\left[\tfrac{\den_\kappa(I,\chi,\SC')_{\hat{c}}}{\den_{\kappa}(I,\chi,\SC)_{2\hat{c}}}\right]$. By triangle inequality, we have that $\den_{\kappa}(I,\chi,\SC)_{2\hat{c}}
        \geq \max_{J \in \NC_{\hat c}(I) \cap \SC'}
        \den_{\kappa}(J,\chi,\SC)_{\hat{c}}$. Hence, we can estimate the expectation as follows:
        
	$$\sum_{I \in \SC} \tfrac{\mu_I}{\mu_{\SC}} \tfrac{\den_\kappa(I,\chi,\SC')_{\hat{c}}}{\den_{\kappa}(I,\chi,\SC)_{2\hat{c}}} = \tfrac{1}{\mu_{\SC}}\sum_{I \in \SC, J \in \SC' : \cad(I,J) \leq \hat{c}} \tfrac{\mu_I\cdot \mu_J}{\den_{\kappa}(I,\chi,\SC)_{2\hat{c}}}\leq \tfrac{1}{\mu_{\SC}}\sum_{I \in \SC, J \in \SC' : \cad(I,J) \leq \hat{c}} \tfrac{\mu_I\cdot \mu_J}{\den_\kappa(J,\chi,\SC)_{\hat{c}}} = \tfrac{1}{\mu_{\SC}}\sum_{J \in \SC'} \mu_J = \tfrac{\mu_{\SC'}}{\mu_{\SC}}.$$

	By Markov Inequality, we have that $\Pr_{I \sim p_{\SC}}
        \left[\tfrac{\den_\kappa(I,\chi,\SC')_{\hat{c}}}{\den_{\kappa}(I,\chi,\SC)_{2\hat{c}}} >
          \tfrac{F_d}{n^\alpha} \right] \leq \tfrac{\mu_{\SC'}}{\mu_{\SC}}
        \cdot \tfrac{n^\alpha}{ F_d}$. By union bound over $\hat{c}\in E_\basec$, there is a mass of $O_\epsilon(\tfrac{\mu_{\SC'}\cdot n^\alpha}{F_d})$ where the inequality occurs for at least one cost. On the other hand, for intervals where the inequality does not occur for any cost, there can be at most $1/\alpha$ costs where $\tfrac{\den_{\kappa}(I,\SC)_{2\hat{c}}}{\den_{\kappa}(I,\SC)_{\hat{c}}} > n^\alpha$ (noting $n$ is an upper bound for $\mu_{\SC}$), and hence for all other costs we have $\den_\kappa(I,\chi,\SC')_{\hat{c}} \leq F_d \cdot \den_\kappa(I,\chi,\SC)_{\hat{c}}$ as needed.

\end{proof}

A corollary is that the sharp threshold of Line~\ref{alg::theta_treshold}
in \textsc{AssignThetaPotential} Algorithm cannot affect the $\theta$ scores too much. For a pair $(J,\chi')$, we use the notation
$\sigma_{J,\chi'}=Q_l \left(T^\theta\left(\Gamma_{l,\zeta}(J,\chi')\right)\right)$ as the quantity in Line~\ref{alg::sigma_contribution} of
\textsc{AssignThetaPotential}. Note that
$\theta_\kappa^\zeta(J,\chi')= m_\VC((J,\chi')) \cdot \sigma_{J,\chi'}$ unless the threshold
of Line \ref{alg::theta_treshold} passes.

\begin{corollary}\label{cor::theta_threshold}
		Fix a level $l > 0$ and $\zeta \in Z_l$.
	Consider an  
	interval ball
	$\QC$ %
with
        $\mu_{\kappa}(\QC,\un) = O(\zeta)$. Let $\QC^{+\zeta} =
        \Lambda_\kappa^\zeta(\QC)$. Then, %
        $\E[\theta_\kappa^{\zeta}(\QC,\nu \setminus \{ \bot \})] < \left(1 + o(1)\right) \cdot \tfrac{k}{\beta^l} \cdot \sum_{(I',\chi') \in \QC^{+\zeta} \times \nu \setminus \{ \bot \}} \mu_{\kappa'}(I',\chi') \cdot \sigma_{I',\chi'}$.
        \end{corollary}

        \begin{proof}

        	Let $\theta_\QC \triangleq \theta_\kappa^{\zeta}(\QC,\nu \setminus \{ \bot \})$.
        	Define the set of colors of large mass in
                $\QC^{+\zeta}$, where  Line \ref{alg::theta_treshold}
                of \textsc{AssignThetaPotential} may pass:
$$\TC = \{ \chi' \in
     \nu \setminus \{ \bot \} \mid \mu_{\kappa'}(\QC^{+\zeta},\chi') > \lambda \cdot \beta^{l+6} \}.$$ 
     Also, define $\bar{\TC} = \nu \setminus \TC \setminus \{ \bot \}$. 
     Notice that %
     pairs $(J, \chi') \in \QC \times \bar{\TC}$ will not pass the threshold from Line \ref{alg::theta_treshold} of \textsc{AssignThetaPotential} and their contribution to $\E[\theta_\QC]$ is %
     $\E[m_\VC((J,\chi'))] \cdot 
\sigma_{J,\chi'} = \tfrac{k}{\beta^l} \mu_{\kappa'}(J,\chi') \cdot
\sigma_{J,\chi'}$.

Now focus on pairs $(J, \chi') \in \QC \times \TC$. We split the ball $\QC^{+\zeta}$ into $h = O(1)$ consecutive balls each of
$\un$-mass $\leq \zeta$ each: call them $\QC^{q}$ for $q\in \{1,2,\ldots,h\}$.
By Lemma
   \ref{lm::partition_size_imp} we have that $\mu_{\kappa'}(\IC,\chi')
   \leq | \PC_{\kappa'}^{\chi'} | = \tO_\eps (\tfrac{n \cdot \lambda}{\lambda^t} \cdot \beta^{5})$. 
    For fixed $q \in [h]$, we invoke Claim \ref{clm::dense_skew}, using
   $\SC=\QC^q$, $\SC' = \IC$, and $F_d = \tfrac{n}{\lambda^t
     \beta^l}$, to obtain that there exists at most $\tO_\eps(n^{\alpha}
   \cdot \lambda \cdot \beta^{l+5}) \stackrel{\TC \text{ def.}}{=} o(\mu_{\kappa'}(\QC^{+\zeta},\chi'))$
   mass of $J\in \QC^q$ where $\relden_{\kappa'}(J,\chi',\QC^q)_{\hat{c}} >
   \tfrac{n}{\lambda^t\beta^l}$ on more than $1/\alpha$
   costs. Note that when $\relden_{\kappa'}(J,\chi',\QC^q)_{\hat{c}} \le
   \tfrac{n}{\lambda^t\beta^l}$ for all $q$, then also $\relden_{\kappa'}(J,\chi',\Lambda^\zeta_\kappa(J))_{\hat{c}} \le
   \tfrac{n}{\lambda^t\beta^l}$.

   Hence for all $J\in \QC$, except for
   $O(1) \cdot o(\mu_{\kappa'}(\QC^{+\zeta},\chi'))$, we have
   $\relden_{\kappa'}(J,\chi',\Lambda^\zeta_\kappa(J))_{\hat{c}} \le
   \tfrac{n}{\lambda^t\beta^l}$ on all but $1/\alpha$ costs $\hat{c}$. The
   latter implies that $\sigma_{J,\chi'} = 1$ whp. This in turn gives us:
    \begin{equation}\label{eq::theta_Q}
    	\sum_{(J,\chi') \in \QC \times \TC} \mu_{\kappa'}(J,\chi') \cdot
\sigma_{J,\chi'} \ge \mu_{\kappa'}(\QC,\TC)- o(\mu_{\kappa'}(\QC^{+\zeta},\TC)).
    \end{equation}

    Now, we combine both cases. Note that, by construction, we have:

$$
\E[\theta_\QC] \le
\sum_{(J,\chi') \in \QC \times \bar{\TC}} %
\E[m_\VC((J,\chi'))] \cdot
\sigma_{J,\chi'} + \sum_{(J,\chi') \in \QC \times \TC} %
\E[m_\VC((J,\chi'))]
$$
$$= \tfrac{k}{\beta^l}\cdot\left(\sum_{(J,\chi') \in \QC \times \bar{\TC}} \mu_{\kappa'}(J,\chi') \cdot
\sigma_{J,\chi'}+\mu_{\kappa'}(\QC,\TC)\right).$$

Using Eqn.~\ref{eq::theta_Q}, we conclude:

     $$\E[\theta_\QC] \le
    \tfrac{k}{\beta^l}\cdot\left(\sum_{(J,\chi') \in
      \QC\times \nu \setminus \{\bot\}}
    \mu_{\kappa'}(J,\chi')\sigma_{J,\chi'}
 +
    o\left(\mu_{\kappa'}(\QC^{+\zeta},\TC)\right)\right)    $$
    $$\le \tfrac{k}{\beta^l}\left(1+o(1)\right)\cdot\left(\sum_{(J,\chi') \in
      \QC^{+\zeta}\times \nu \setminus \{\bot\}}
    \mu_{\kappa'}(J,\chi')\sigma_{J,\chi'}\right),$$
        	
        	as needed.
        	
        \end{proof}

Finally, we are ready to prove Claim~\ref{clm:ddAB}.

\begin{proof}[Proof of Claim~\ref{clm:ddAB}]
\ns{changed the proof. Need to check.}Recall that $a=\sum_{(J,\chi') \in \VC\cap\BC_I \times \nu}
\theta_\kappa^{\zeta}(J,\chi')$. Since $a\ge
\Omega_\eps(\beta^{1.5})$, and is a sum of independent r.v. bounded by 1, we have that $a=\Theta(\E[a])$ whp, hence
we analyze $\E[a]=\sum_{J\in \BC_I,\chi'} \Pr[(J,\chi')\in \VC]\cdot \theta_\kappa^\zeta(J,\chi')$. The plan is to
show that $\E[b] \cdot F^{O(1/\eps)} \ge  \E[a]$ unless error $\rho$ is
sufficiently large. Note that this requires comparing
$\theta_\kappa^\zeta(J,\chi')$ vs $\theta_\kappa^{\zeta'}(\pi[J],\chi')$.

For reasons that will be clear later, we consider $J\in \BC_I^{+\zeta}$. 

Consider an {\em $F$-uncorrupted} pair $(J,\chi') \in \BC_I^{+\zeta} \times \nu \setminus \{ \bot \}$;
i.e., $\dd_F(\mu_{\kappa'}(J,\chi'),\mu_{\kappa'}(\pi[J],\chi')) = 0$.
Let $p = \Gamma_{l,\zeta}(J,\chi')$ and $q =
\Gamma_{l,\zeta'}(\pi[J],\chi')$. %
By the definition of $\Gamma$, we have for a cost $\hat c$:

$$\tfrac{p_{\hat{c}}}{q_{\hat{c}}} =
O\left(\tfrac{\relden_{\kappa'}(\pi[J],\chi',\Lambda_\kappa^{\zeta'}(\pi[J]))_{\hat{c}}}{\relden_{\kappa'}(J,\chi',\Lambda_\kappa^\zeta(J))_{\hat{c}}}
+ 1\right) =
O\left(\tfrac{\den_{\kappa'}(J,\chi',\Lambda_\kappa^\zeta(J))_{\hat{c}}}{\den_{\kappa'}(\pi[J],\chi',\Lambda_\kappa^{\zeta'}(\pi[J]))_{\hat{c}}}\cdot
\tfrac{\den_{\kappa'}(\pi[J],\chi')_{\hat{c}}}{\den_{\kappa'}(J,\chi')_{\hat{c}}} +
1\right),$$

since the two $\relden$ quantities are estimated
up to a constant-factor, and capped at the same value $\relden_m=n^{4\alpha}\tfrac{n}{\lambda^t\beta^l}$.
For the latter fraction, we note that since $\den_{\kappa'}(J,\chi',\SC)_{\hat{c}}$ is monotonic in $\hat{c}$ for any fixed $J,\SC$, and since $(J,\chi')$ is uncorrupted, 
then by triangle inequality, there are at most $1/\alpha$ costs in $E_\basec$ where $\tfrac{\den_{\kappa'}(\pi[J],\chi')_{\hat{c}}}{\den_{\kappa'}(J,\chi')_{\hat{c}}} \geq n^{\alpha}$.

The ratio $\tfrac{\den_{\kappa'}(J,\chi',\Lambda_\kappa^\zeta(J))_{\hat{c}}}{\den_{\kappa'}(\pi[J],\chi',\Lambda_\kappa^{\zeta'}(\pi[J]))_{\hat{c}}}$ is more tricky and we analyze it next.
Define the set $\EC \subseteq \BC_I^{+\zeta} \times \nu$ to be the set of
interval, color pairs $(J,\chi')$ which are either corrupted, or where
$\tfrac{\den_{\kappa'}(J,\chi',\Lambda_\kappa^\zeta(J))_{\hat{c}}}{\den_{\kappa'}(\pi[J],\chi',\Lambda_\kappa^{\zeta'}(\pi[J]))_{\hat{c}}}
> n^{3\alpha}\cdot F$ for more than $2/\alpha$ distinct costs $\hat{c} \in E_\basec$.%

We analyze two cases depending on whether pairs in $\EC$ contribute
significantly to $a$ or not.

\vspace{2mm}

\paragraph{Pairs in $\EC$ contribute insignificantly to $a$.}
For an uncorrupted pair $(J,\chi') \in \{\BC_I^{+\zeta} \times \nu\}
\setminus \EC$, we can combine both inequalities above to obtain:
there exists at most $3/\alpha$ costs in $E_\basec$ where $\tfrac{p_{\hat{c}}}{q_{\hat{c}}}
> n^{4\alpha}\cdot F$. Let $C \subseteq E_\basec$ be the set of all other
costs.

We now use Corollary \ref{cr::Q_trans} with $d=|E_\basec| > O(1/\eps\cdot
1/\alpha^2)$ and $v=3/\alpha$, for some $F'=F^{O(1/\eps)}$: %
\begin{align}
  \dd_{F'}(\sigma_{J,\chi'},\sigma_{\pi[J],\chi'})
        &= \dd_{F'}(Q_l(T^\theta([p_{\hat{c}}]_{\hat{c}\in
    E_\basec})),Q_l(T^\theta([q_{\hat{c}}]_{\hat{c}\in E_\basec})))
  \nonumber
  \\
	&\stackrel{Corollary~\ref{cr::Q_trans}}{\leq} \dd_{F' \cdot n^{-O(\alpha)}}([T^\theta(p_{\hat{c}})]_{\hat{c}\in C},[T^\theta(q_{\hat{c}})]_{\hat{c}\in
          C}) + n^{-10}
  \nonumber
  \\
  &\stackrel{Claim~\ref{cl::T_trans}}{\leq} \dd_{(F'n^{-O(\alpha)})^{O(\eps)}}([p_{\hat{c}}]_{\hat{c}\in C},[q_{\hat{c}}]_{\hat{c}\in C}) + 2\cdot n^{-10}
  \nonumber
  \\
  &\leq  \dd_{F^2 \cdot n^{4\alpha}}([p_{\hat{c}}]_{\hat{c}\in C},[q_{\hat{c}}]_{\hat{c} \in C}) +
  O(n^{-10})
  \nonumber
  \\
  &= O(n^{-10}),
  \label{eqn:ddSigmas}
\end{align}
where last inequality used the lemma assumption that $F\ge
n^{\alpha}$, and the last equality used the definition of $C$ above.

Now, for level $l=0$, we have $\E[a] = k \cdot\left(\sum_{(J,\chi') \in
      \BC_I \times \nu \setminus \{\bot \}}
    \mu_{\kappa'}(J,\chi')\sigma_{J,\chi'}\right)$.
Otherwise, we invoke Corollary~\ref{cor::theta_threshold} for $\QC=\BC_I$ to obtain for all levels:

    \begin{align*}
    \E[a] 
    &\le \tfrac{k}{\beta^l}\left(1+o(1)\right)\cdot\left(\sum_{(J,\chi') \in
      \BC_I^{+\zeta}\times \nu \setminus \{\bot \}}
    \mu_{\kappa'}(J,\chi')\sigma_{J,\chi'}\right) \\
    &\le \tfrac{k}{\beta^l}\left(1+o(1)\right)\cdot\left(\mu_{\kappa'}(\EC)+\sum_{(J,\chi') \in
      \BC_I^{+\zeta}\times (\nu \setminus \{\bot\}) \setminus \EC}
    \mu_{\kappa'}(J,\chi')\sigma_{J,\chi'}\right).
    \end{align*}

    Now, suppose the second term is the dominant one, i.e.,
    
    $$\E[a]\le 3\tfrac{k}{\beta^l}\cdot \sum_{(J,\chi') \in
      \BC_I^{+\zeta}\times (\nu \setminus \{\bot\}) \setminus \EC}
    \mu_{\kappa'}(J,\chi')\sigma_{J,\chi'}.$$ 
    Then, for any $(J,\chi')$
    in the sum, we have
    $\tfrac{\mu_{\kappa'}(J,\chi')}{\mu_{\kappa'}(\pi[J],\chi')}
    \leq F$ (($J,\chi'$) is uncorrupted) and
    $\tfrac{\sigma_{J,\chi'}}{\sigma_{\pi[J],\chi'}} \leq
    {F}^{O(1/\epsilon)}$ unless $\sigma_{J,\chi'}<O(n^{-10})$ (from
    Eqn.~\eqref{eqn:ddSigmas}). Hence, using the assumption that $F\le \beta^{O(\eps)}$
    $$\E[b] \geq
    F^{-O(1/\epsilon)}(\E[a]-O(n^{-9})) \geq \beta^{-1} \cdot
    \Omega_\epsilon(\beta^{1.5}) = \Omega_\epsilon(\beta^{0.5}).$$
    Therefore, we
    have that $b = \Theta(\E[b])$ whp as well, and overall we obtained
    $\dd_{F^{O(1/\epsilon)}}(a,b) = O(n^{-9})$ whp.
    
\vspace{2mm}

\paragraph{Pairs in $\EC$ contribute significantly.} Now assume that
$\E[a]\le 3\tfrac{k}{\beta^l}\cdot\mu_{\kappa'}(\EC)$.

We now bound the color mass $\mu_{\kappa'}(\EC)$.
    For a color $\chi'$, we define $\CC_{\chi'}$ to be the set of 
     corrupted pairs in $\BC_I^{+\zeta} \times \{\chi'\}$, and $\CC := \cup_{\chi'\in\nu} \CC_{\chi'}$. Also let $\JC_{\chi'}$ be the interval set representing the first coordinates of $\CC_{\chi'}$.
     Like in the proof of part (1) of the Lemma
     above, we have $\mu_{\kappa'}(\CC) \le
     \tfrac{2\zeta_+}{m_I}\cdot\rho$ where $\rho=
     \rho^{\kappa,\kappa'}_{F}(I,\BC_I^{+\zeta})$. For level $l=0$, we have $\BC_I = \{ I \}$ and hence $\EC \subseteq \CC$ (when $\BC_I = \{ I \}$, a pair $(I,\chi')$ can only be in $\EC$ if it is corrupted by definition of $\EC$). For $l \geq 1$,
     let $\BC_I^{+2\zeta} = \Lambda_\kappa^{3\zeta}(I)$ and $\BC_{\pi[I]}^{-\zeta'/2} = \Lambda_\kappa^{\zeta'/2}(\pi[I])$.
     We invoke Claim \ref{clm::den_sets} with $\SC=\BC_I^{+2\zeta} \setminus
     \JC_{\chi'}, \SC'=
     \BC_{\pi[I]}^{-\zeta'/2}\supseteq
     \pi[\BC_I^{+2\zeta}]$.
     The claim implies that, for
     $J\in \BC_I^{+\zeta}$ and hence $\BC_{\pi[I]}^{-\zeta'/2}\subseteq\Lambda_\kappa^{\zeta'}(\pi[J])$, we have for all but $1/\alpha$
     costs $\hat c$:
     $$\tfrac{\den_{\kappa'}(J,\chi',\Lambda_\kappa^\zeta(J) \setminus
       \JC_{\chi'})_{\hat{c}}}{\den_{\kappa'}(\pi[J],\chi',\Lambda_\kappa^{\zeta'}(\pi[J]))_{\hat{c}}}
 \le \tfrac{\den_{\kappa'}(J,\chi',\BC_I^{+2\zeta} \setminus
       \JC_{\chi'})_{\hat{c}}}{\den_{\kappa'}(\pi[J],\chi',\BC_{\pi[I]}^{-\zeta'/2})_{\hat{c}}}
 \le n^\alpha\cdot F.$$
     
      Let $\EC_{\chi'}$ be the set of {\em
       uncorrupted} pairs $(J,\chi')\in\BC_I^{+\zeta} \times \{\chi'\}$ with
     $\tfrac{\den_{\kappa'}(J,\chi',\Lambda_\kappa^\zeta(J))_{\hat{c}}}{\den_{\kappa'}(J,\chi',\Lambda_\kappa^\zeta(J)
       \setminus \JC_{\chi'})_{\hat{c}}} > n^{2\alpha}$ on more than
     $1/\alpha$ costs in $E_\basec$. 
    Then, for each uncorrupted
     pair $(J,\chi')\in \BC_I^{+\zeta}\times\{\chi'\}\setminus\EC_{\chi'}\setminus
     \CC_{\chi'}$, except for $\le 2/\alpha$ costs, we have
     $\tfrac{\den_{\kappa'}(J,\chi',\Lambda_\kappa^\zeta(J))_{\hat{c}}}{\den_{\kappa'}(\pi[J],\chi',\Lambda_\kappa^{\zeta'}(\pi[J])])_{\hat{c}}}
     \le n^{3\alpha}\cdot F$.

    Therefore $\EC \subseteq \cup_{\chi'} \EC_{\chi'} \cup \CC$. It
    remains to bound $\mu_{\kappa'}(\cup_{\chi'} \EC_{\chi'}) =
    \sum_{\chi'} \mu_{\kappa'}(\EC_{\chi'})$. Partition
    $\BC_I^{+\zeta}$ into 8 balls each of $\un$-mass $\leq\zeta$, called
    $\BC_I^{q}$ for $q\in\{1,2,\ldots,8\}$. Then, for each $q$ and color
    $\chi' \in \nu$, we invoke Claim \ref{clm::dense_skew} with
    $\SC=\BC_I^q\setminus \JC_{\chi'}$, $\SC'=\JC_{\chi'}$, and
    $F_d=n^{2\alpha}/2$, to obtain that there's only a mass of
    $O(\tfrac{1}{n^{\alpha}} \mu_{\kappa'}(\CC_{\chi'}))$ of $(J,\chi')\in
    \BC_I^q \times \{\chi'\} \setminus \CC_{\chi'}$ having more than $1/\alpha$ costs
    satisfying
    $\tfrac{\den_{\kappa'}(J,\chi',\JC_{\chi'})_{\hat{c}}}{\den_{\kappa'}(J,\chi',\BC_I^q\setminus
      \JC_{\chi'})_{\hat{c}}}>n^{2\alpha}/2$. Now, by definition, we have for each $(J,\chi')\in
    \EC_{\chi'}$ that
    $\tfrac{\den_{\kappa'}(J,\chi',\Lambda_\kappa^\zeta(J))_{\hat{c}}}{\den_{\kappa'}(J,\chi',\Lambda_\kappa^\zeta(J)
      \setminus \JC_{\chi'})_{\hat{c}}} > n^{2\alpha}$ on more than $1/\alpha$ distinct costs, implying that     $\tfrac{d_{\kappa'}(J,\chi',\JC_{\chi'})_{\hat{c}}}{\den_{\kappa'}(J,\chi',\Lambda_\kappa^\zeta(J)
      \setminus \JC_{\chi'})_{\hat{c}}} > n^{2\alpha}-1$, as well as
    that $\tfrac{\den_{\kappa'}(J,\chi',\JC_{\chi'})_{\hat{c}}}{\den_{\kappa'}(J,\chi',\BC_I^q
      \setminus \JC_{\chi'})_{\hat{c}}} > n^{2\alpha}/2$, for $q$
    s.t. $J\in \BC_I^q$ and hence $\BC_I^q\subseteq \Lambda_\kappa^\zeta(J)$. However, as deduced above, such pairs must have mass bounded by $O(\tfrac{1}{n^{\alpha}} \mu_{\kappa'}(\CC_{\chi'}))$, implying that
   $\mu_{\kappa'}(\EC_{\chi'})=o(\mu_{\kappa'}(\CC_{\chi'}))$. Summing
   over all colors in $\nu$, we obtain that $\mu_{\kappa'}(\EC) = (1+o(1))\cdot \mu_{\kappa'}(\CC) = O(\tfrac{\zeta_+}{m_I}\rho)$.

\vspace{2mm}

   Overall, we conclude:
    $$\dd_{F'}(a,b) \leq a\le O(\E[a]) \le  O\left(\tfrac{k}{\beta^l}\cdot
      \tfrac{\zeta_+}{m_I}\rho\right).$$
    
    This completes the proof of Claim~\ref{clm:ddAB}.
\end{proof}

\subsection{Controlling balance of colors: proof of Lemma \ref{lm::min_color_mass}}\label{sec::min_color_proof}

	 \ns{check intuition...} In this section we analyze the quantity $M_I = \mu_\kappa(I,\nu
\setminus \{ \un \})+ \sum_{\zeta \in
  Z_l}\varphi_\kappa^\zeta(I,\nu)$, and in particular, prove that for each level $l$, $M_I = \Omega_{\eps}(1)$. This will imply
        that our normalization of $\mu_{\hat\kappa}$ introduces only a
        constant-factor further error.

Most importantly, analyzing $M_I$ requires understanding
of the quantity $\varphi_\kappa^\zeta(I,\nu)$. At a high level, we
show that one of the following must hold, for $\zeta = \beta^l$ (or
$\zeta=0$ for $l=0$):
\begin{enumerate}
	\item $\varphi_\kappa^\zeta(I,\un) = \Omega_\epsilon(1)$, or
	\item $\varphi_\kappa^\zeta(I,\nu\setminus \{\un\}) = \Omega_\epsilon(1)$, or,
	\item $\mu_\kappa(I,\nu
\setminus \{ \un \}) = \Omega(1)$.
\end{enumerate}

To do so, we proceed in a few
steps. Recall that $\varphi_\kappa^\zeta(I,\nu\setminus \{\un\})$ depends on
$\phi(\Lambda^\zeta_\kappa(I),\nu)$ potential. First, we show how to
       bound $\phi(\SC,\chi)$ for a set $\SC$ as a function of
       $\relden_{\kappa'}(I,*,\SC)$, for an interval $A$ in the
       neighborhood of the sampled anchor $A$ (Claim~\ref{clm::rel_dense_bound_imp}). Second, we use
       this bound to lower bound $\norm{\varphi(I,*_{\neq \un})}_1$ (which applies
       $T^\cc$ to a re-scaling of $\phi(\Lambda^\zeta(I),*)$) as a
       function of $\relden_{\kappa'}(J,\chi',\SC)$ for $J\in \SC$
       where $\SC$ is $\Lambda^\zeta(I)$ ---
       in particular, if enough such $J$ have $\relden$ about
       $\tfrac{n}{\beta^l\lambda^t}$ on a non-trivial fraction of
       costs $c\in E_\basec$, we get such a lower bound (Claim~\ref{clm::dense_set_imp}), as long as $\mu_\kappa(I,\un)$ is not too small (otherwise, item (3) above holds).

       We then focus on $\varphi(I,\un)$ which relies on estimates of
       $\theta(\Lambda_\kappa^{\zeta}(J),\nu \setminus \{ \bot \})$.
       Our third claim bounds the $\theta$ quantity for $J\in \SC$ as
       a function of the mass of $I'\in \SC,\chi'$ such that
       $\relden_{\kappa'}(I',\chi',\SC)_c$ is at most
       $O^*\left(\tfrac{n}{\beta^l\lambda^t}\right)$ on some fraction $\approx s_l$ of
       costs $c\in E_\basec$ (Claim~\ref{cl::Q_U_bound}). Our final fourth claim
       (Claim~\ref{cl::density_range_set}), uses the previous claim (applied to small $\zeta$ balls of the previous level $l-1$) to show that, overall,
       small $\theta(\Lambda^{\zeta}(I))$'s (i.e., insufficient for (1) above to hold) imply that many pairs in $I$'s
       proximity have the density $\relden$ in the right range for
       enough costs (allowing us to obtain (2) using %
       Claim~\ref{clm::dense_set_imp}). Finally, we use these
       four claims to prove Lemma~\ref{lm::min_color_mass}.
       
               Throughout this section, each time we refer to sets of costs,
        we refer to subsets of $E_\basec$, where $\basec$ is the base cost of
        \textsc{MatchIntervals}. Also, when we refer to contribution
        of a sampled cost $c_i \in E_\basec$, we refer to the total
        contribution of all $c_{i,j} = c_i + \basec \cdot j$ where
        $j\in\{0,\ldots j_{\max}\}$ where $j_{\max}=O(1/\alpha)$.

       In the first claim controlling $\phi$, we show that, for a
       fixed new color $\chi$ and chosen anchor $(A,\chi')$, for any
       ball $\SC \subseteq \IC$, the total potential added to $\SC$ is roughly
       concentrated.
       
		\begin{claim}[Controlling $\phi(\SC,\chi)$]\label{clm::rel_dense_bound_imp}
		Fix an interval set $\SC \subseteq \IC$ and output
                color $\chi$. For any
                interval, color pair $(I,\chi')$,
      for  all but
                $O(1/\alpha)$ costs $c_i \in E_\basec$, and 
                for all $A \in \NC_{2c_i}(I)$, we have, where
                $\relden_{c_i}=\relden_{\kappa'}(I,\chi',\SC)_{c_i}$:
                \begin{enumerate}
                	\item If $(A,\chi',c_i)$ was sampled for color $\chi$, then $\phi_\kappa(\SC,\chi) =
                   O(n^{2\alpha}) \cdot
                    \tfrac{\totalmu}{\relden_{c_i} \cdot
                      \lambda^t}$; and
                      \item If $(A,\chi',c_{i+1})$ was sampled for color $\chi$, then $\phi_\kappa(\SC,\chi) =
                   \Omega(n^{-2\alpha}) \cdot
                    \tfrac{\totalmu}{\relden_{c_i} \cdot
                      \lambda^t}$.
                \end{enumerate}
                      
                     \end{claim}

\begin{proof}
	\ns{changed, to check.}
	Let $C_{I,\chi'}^{\text{Good}} \subseteq E_\basec$ be all $c_i$ costs where:
	\begin{itemize}
		\item $\den_{\kappa'}(I,\chi')_{c_i} \cdot n^\alpha \geq \den_{\kappa'}(I,\chi')_{15c_i}$; and
		\item $\den_{\kappa'}(I,\chi',\SC)_{c_i} \cdot n^\alpha \geq \den_{\kappa'}(I,\chi',\SC)_{15c_i}$.
		\item $i$ is not maximal (and hence $c_{i+1}$ may be sampled as well).
	\end{itemize}
	We note that by construction, $15c_i < c_{i+3}$ and since
        $\mu_{\kappa'}(\SC,\chi') \in \{0\} \cup [n^{-10},n]$ for any
        set $\SC$, then there exist at most $66/\alpha + 1$ costs in
        $E_\basec \setminus C_{I,\chi'}^{\text{Good}}$.

        We now show the claim holds for all costs in $C_{I,\chi'}^{\text{Good}}$.
	Fix a sampled anchor, color pair $(A,\chi')$ in coloring
        $\kappa'$ and a cost ${c_i} \in C_{I,\chi'}^{\text{Good}}$. 
        For $J \in \IC$, denote
        $d_{J,i,j} = \den_{\kappa'}(J,\chi')_{c_{i,j}}$. By $\phi$
        definition, if $(A,\chi',c_i)$ was sampled, we have
        $\phi_\kappa(\SC,\chi) = \sum_j n^{-\alpha 
          j} \cdot m_{i,j}$, where $m_{i,j} = \sum_{J \in \NC_{c_{i,j}}(A)
          \cap \SC} \tfrac{\mu_{\kappa'}(J,\chi')}{\widehat{d_{J,i,j}}} \cdot
        \tfrac{2\totalmu}{\lambda^t}$, and
        $\widehat{d_{J,i,j}}$ is constant factor approximation to
        $\max\{d_{J,i,j},\den_m\}$ for
        $\den_m=\den_{\kappa'}(A,\chi')_{3c_i} \cdot n^{-\alpha}$
        (whp, by
        Lemma~\ref{lm::approx_density}). Hence, using the triangle inequality, we have,
         
        $$\widehat{\den_{J,i,j}} \in [\Theta(n^{-\alpha}),\Theta(1)]
        \cdot \den_{\kappa'}(A,\chi')_{3c_i} \subseteq
              [\Theta(n^{-\alpha}) \cdot
                \den_{\kappa'}(I,\chi')_{c_i},\Theta(1) \cdot
                \den_{\kappa'}(I,\chi')_{5c_i}] \subseteq
              [\Theta(n^{-\alpha}),\Theta(n^{\alpha})] \cdot
              \den_{\kappa'}(I,\chi')_{c_i};$$
              
               and similarly,
              
              $$\widehat{\den_{J,i+1,j}} \in   [\Theta(n^{-\alpha}) \cdot
                \den_{\kappa'}(I,\chi')_{3c_i},\Theta(1) \cdot
                \den_{\kappa'}(I,\chi')_{15c_i}] \subseteq 
                  [\Theta(n^{-\alpha}),\Theta(n^{\alpha})] \cdot
                  \den_{\kappa'}(I,\chi')_{c_i}.$$ 
                  
                  \aanote{why is that?}\ns{is it clearer now?} Therefore we have
                  that for each $j$:
                  
        	$$m_{i,j}
	 = \sum_{J \in \NC_{c_{i,j}}(A) \cap \SC} \tfrac{\mu_{\kappa'}(J,\chi')}{\widehat{d_{J,i,j}}} \cdot \tfrac{2\totalmu}{\lambda^t} 
	 \in \tfrac{[\Theta(n^{-\alpha}),\Theta(n^{\alpha})]}{\den_{\kappa'}(I,\chi')_{c_i}} \cdot \tfrac{\totalmu}{\lambda^t}\cdot \sum_{J \in \NC_{c_{i,j}}(A) \cap \SC} \mu_{\kappa'}(J,\chi') 
	 =\tfrac{[\Theta(n^{-\alpha}),\Theta(n^{\alpha})]}{\den_{\kappa'}(I,\chi')_{c_i}} \cdot \tfrac{\totalmu}{\lambda^t}\cdot \den_{\kappa'}(A,\chi',\SC)_{c_{i,j}} 
         $$

	and by a similar argument, we have that if $c_i  \in C_{I,\chi'}^{\text{Good}}$, then $m_{i+1,j}
	 	 \in \tfrac{[\Theta(n^{-\alpha}),\Theta(n^{\alpha})]}{\den_{\kappa'}(I,\chi')_{c_i}} \cdot \tfrac{\totalmu}{\lambda^t}\cdot \den_{\kappa'}(A,\chi',\SC)_{c_{i+1,j}} 
         $ as well.
	
	For (1), we have by triangle inequality that $\den_{\kappa'}(A,\chi',\SC)_{c_{i,j}} \leq \den_{\kappa'}(I,\chi',\SC)_{4c_i} \leq n^\alpha \cdot \den_{\kappa'}(I,\chi',\SC)_{c_i}$ and hence $m_{i,j} = \tfrac{O(n^{2\alpha})}{\relden_{\kappa'}(I,\chi')_{c_i}} \cdot \tfrac{\totalmu}{\lambda^t}$.
	
	For (2), we have by triangle inequality that $\den_{\kappa'}(A,\chi',\SC)_{c_{i+1,j}} \geq \den_{\kappa'}(I,\chi',\SC)_{c_i}$ and hence $m_{i+1,j} = \tfrac{\Omega(n^{-\alpha})}{\relden_{\kappa'}(I,\chi')_{c_i}} \cdot \tfrac{\totalmu}{\lambda^t}$.

	 Summing over all $j$, we obtain the required bounds.

\end{proof}

The next claim helps control $\varphi_\kappa^{\zeta}(I,*_{\neq \un})=T^\cc\left(\gamma\cdot
\tfrac{\phi_\kappa(\Lambda^{\zeta}_\kappa(I),*)}{\beta^l}\right)$
for some parameters $\gamma,\zeta$.
In particular, it shows that if we have a set of pairs $\CD$ with each $(I,\chi') \in \CD$
of the ``right'' density
within some ball $\QC$, then we add sufficient $\phi$ mass to $\QC$,
which will survive the $T^\cc$ thresholding whp.

\begin{claim}[Controlling $\varphi(I,*_{\neq \un})$]\label{clm::dense_set_imp}
	Fix level $l\ge 0$ and parameters $\delta, \gamma, s
        \in[\Omega_\eps(1),1]$
          as well as a set
        $\QC\subseteq \IC$. Consider the (interval, color) set $\CD
        \subseteq \QC \times \nu \setminus \{ \bot \}$ of pairs
        $(I,\chi')$ for which there exists a set of $\tfrac{s}{\eta}$ distinct
        costs $C^*_{I,\chi'} \subseteq E_\basec$, where for each $\hat{c} \in C^*_{I,\chi'}$, we have
        $\relden_{\kappa'}(I,\chi',\QC)_{\hat{c}} \in n^{3\alpha}\cdot\tfrac{\totalmu}{\beta^l \lambda^t} \cdot [1,\beta^2 +
          \1[l=0]\cdot n]$.
        If $\mu_{\kappa'}(\CD) \geq \delta \cdot \beta^l$, then, with high probability, we have at step $l$:
    \begin{enumerate}
    	\item	If $l > 0$, then
    		$\normo{T^\cc\left(\gamma \cdot \tfrac{\phi_\kappa(\QC,*)}{\beta^{l}}\right)}  \geq
        0.9 \delta \gamma \cdot (s - O(\tfrac{\eta}{\alpha})).$
      \item	If $l = 0$, then,
    		$T^\cc\left(\gamma \cdot \phi_\kappa(\QC,\nu)\right)  \geq
        0.9 \delta \gamma \cdot (s - O(\tfrac{\eta}{\alpha})).$
    \end{enumerate}
\end{claim}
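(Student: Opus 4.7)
The plan is to lower-bound $\sum_\chi T^{\phi}(\gamma \phi_\kappa(\QC,\chi)/\beta^l)$ (or, for part (2), the scalar $T^\phi(\gamma \phi_\kappa(\QC,\nu))$) by computing the expected aggregate contribution from ``catch'' events on pairs in $\CD$ across the $\lambda^t$ independent color samples, verifying each such contribution clears the $T^{\phi}$ threshold, and applying Chernoff for high-probability concentration.

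First, isolate the useful costs: for each $(I,\chi') \in \CD$, let $C^{\mathrm{eff}}_{I,\chi'} \subseteq C^*_{I,\chi'}$ be those $\hat c \in C^*_{I,\chi'}$ that are (a) not bad costs for the relevant anchor, and (b) fall in the ``good'' set guaranteed by Claim~\ref{clm::rel_dense_bound_imp}. Each exclusion removes at most $O(1/\alpha)$ of the $1/\eta$ costs in $E_c$, so $|C^{\mathrm{eff}}_{I,\chi'}| \geq s/\eta - O(1/\alpha) = (s - O(\eta/\alpha))/\eta$.

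Second, compute the expected aggregate. For each color $\chi$, the catch event---sampled anchor has color $\chi'$ and lies in $\NC_{\hat c}(I)$ while cost $\hat c$ is sampled---has probability $\eta \cdot \den_{\kappa'}(I,\chi',\IC)_{\hat c}/(2n)$ for each pair $(I,\chi') \in \CD$ and each $\hat c \in C^{\mathrm{eff}}_{I,\chi'}$. Conditional on the catch, Claim~\ref{clm::rel_dense_bound_imp} gives $\phi_\kappa(\QC,\chi) \geq \Omega(n^{-2\alpha}) \cdot n/(\relden_{\kappa'}(I,\chi',\QC)_{\hat c}\cdot \lambda^t)$. Using $\den_{\kappa'}(I,\chi',\IC)_{\hat c}/\relden_{\kappa'}(I,\chi',\QC)_{\hat c} = \den_{\kappa'}(I,\chi',\QC)_{\hat c} \geq \mu_{\kappa'}(I,\chi')$ and summing the product of probability and conditional contribution over all costs in $C^{\mathrm{eff}}_{I,\chi'}$, colors, and pairs in $\CD$:
\[
  \E\Bigl[\textstyle\sum_{\chi} \gamma \phi_\kappa(\QC,\chi)/\beta^l\Bigr] \;\geq\; (1-o(1)) \cdot \gamma \delta (s - O(\eta/\alpha)),
\]
where the $1-o(1)$ absorbs the $n^{-O(\alpha)}$ slippage from Claim~\ref{clm::rel_dense_bound_imp} using the parameter choice $\alpha = \eps^{5/\eps} \ll \eps$.

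Third, argue that each catch clears the $T^{\phi}$ threshold individually. Using the upper bound $\relden_{\kappa'}(I,\chi',\QC)_{\hat c} \leq \beta^2 n^{3\alpha}\cdot n/(\beta^l \lambda^t)$ (for $l>0$), each catch delivers $\gamma \phi_\kappa(\QC,\chi)/\beta^l \geq \Omega(n^{-5\alpha}\gamma/\beta^2) \gg 1/\beta^3$, comfortably above the $T^\phi$ cutoff. Hence $T^\phi$ acts as the identity on catch coordinates, and the expected catch mass bound transfers to $\sum_\chi T^\phi(\gamma \phi_\kappa(\QC,\chi)/\beta^l)$. Finally, Chernoff across the $\lambda^t$ independent anchor samples yields the $0.9$ fraction with high probability, completing (1). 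Part (2) for $l=0$ follows with the same argument, using the relaxed bound $\relden \leq (\beta^2 + n) n^{3\alpha}\cdot n/\lambda^t$ and directly bounding the scalar $\gamma \phi_\kappa(\QC,\nu)$ above the $T^\phi$ threshold.

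The main obstacle is absorbing the $n^{-O(\alpha)}$ slippage in the lower bound of Claim~\ref{clm::rel_dense_bound_imp} into the clean $0.9$ constant of the statement; this is only tenable because $\alpha = \eps^{5/\eps}$ is chosen extremely small, so that $n^{-O(\alpha)}$ behaves as a constant close to $1$ in our regime. A secondary subtlety is that ``bad costs'' depend on the sampled anchor $A$, whereas $C^*_{I,\chi'}$ depends on $I$; both contribute $O(1/\alpha)$ losses, but they must be aggregated into a single $(s - O(\eta/\alpha))$ fraction without double counting.
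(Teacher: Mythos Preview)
Your overall architecture is the same as the paper's---count expected ``catch'' contributions from pairs in $\CD$, verify they clear the $T^\phi$ threshold, then Chernoff---but there is a real gap in how you compute the expectation, and it is exactly the point you flagged as ``the main obstacle.''

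You compute the expected contribution of a catch on $(I,\chi')$ by invoking the lower bound from Claim~\ref{clm::rel_dense_bound_imp}, which only gives $\phi_\kappa(\QC,\chi)\ge \Theta(n^{-2\alpha})\cdot \tfrac{n}{\relden\cdot\lambda^t}$. After multiplying by the catch probability and summing, your lower bound carries a multiplicative $\Theta(n^{-2\alpha})$. You then assert that $n^{-O(\alpha)}$ ``behaves as a constant close to~$1$ in our regime.'' This is false: $\alpha=\eps^{5/\eps}$ is a \emph{constant} (fixed once $\eps$ is fixed), so $n^{-2\alpha}$ is polynomially small in $n$ and cannot be absorbed into the required $0.9$ constant. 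The argument as written yields only $\ge n^{-O(\alpha)}\cdot\delta\gamma(s-O(\eta/\alpha))$, which is not the claim.

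The paper sidesteps this by using a different tool for the \emph{expectation} and reserving Claim~\ref{clm::rel_dense_bound_imp} for pointwise bounds. Specifically, it decomposes $\phi_\kappa(\QC,\chi)=\sum_{J\in\QC}\phi_\kappa(J,\chi)$ and invokes the $\phi$-invariant Claim~\ref{cl::phi_uncorrupted}(1), which gives
\[
\E_{A,\chi'',\hat c}\bigl[\phi_\kappa(I,\chi)\cdot\one{\chi''=\chi'}\,\big|\,\hat c\text{ not bad}\bigr]\ \ge\ \tfrac{\mu_{\kappa'}(I,\chi')}{\lambda^t}
\]
with \emph{no} $n^{-\alpha}$ loss (the loss is absorbed because $\phi$ uses $I$'s own estimated density $\widehat{d_{I,j}}$, so the density factor cancels exactly in expectation). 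Summing over $(I,\chi')\in\CD$ and over the $\lambda^t$ colors then yields $\E\bigl[\sum_\chi m_\chi\bigr]\ge \delta\beta^l(s-O(\eta/\alpha))$ cleanly. Claim~\ref{clm::rel_dense_bound_imp} is still used, but only for (i) the \emph{upper} bound $m_\chi=O(n^{-\alpha}\beta^l)$ needed to make Chernoff bite, and (ii) the \emph{lower} bound on a single catch needed to clear the $T^\phi$ threshold---neither of which feeds into the expectation constant.

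A secondary issue: your decomposition sums, over all $(I,\chi')\in\CD$, the quantity $\Pr[\text{catch }(I,\chi')]\cdot\bigl[\phi_\kappa(\QC,\chi)\,\big|\,\text{catch}\bigr]$. Since catches on different $(I,\chi')$ with the same $\chi'$ are not disjoint events (a single anchor can cluster many such $I$ simultaneously), this sum is not a priori a lower bound on $\E[\phi_\kappa(\QC,\chi)]$. Your subsequent inequality $\den_{\kappa'}(I,\chi',\QC)\ge\mu_{\kappa'}(I,\chi')$ happens to bring the final number back in range, but the intermediate logic is not a valid lower-bounding step. The paper's per-$J$ decomposition of $\phi_\kappa(\QC,\chi)$ avoids this issue entirely.
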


To prove Claim~\ref{clm::dense_set_imp}, we cannot apply Claim~\ref{clm::rel_dense_bound_imp} directly, as the lower bound we get from there is for a different samples, and hence we show first the following useful statistical fact.

\newcommand{\bA}[0]{{\mathbf{A}}}
\newcommand{\bB}[0]{{\mathbf{B}}}
\newcommand{\bX}[0]{{\mathbf{X}}}

\begin{fact}\label{fct::stat_min}
	Let $\bX$ be a non-negative random variable over some probability space and let %
	$\bA$ and $\bB$ be events such that $\Pr[\bA]=\Pr[\bB]$. Fix
        $\gamma \in \R$ such that $\bX \mid \bB \geq \gamma$ (i.e.,
        whenever $\bB$ happens, $\bX\ge \gamma$). %
	Then, 
	$\E[\bX \cdot \1[\bX \geq \gamma]] \geq \E[\bX \cdot \1[\bA]]$.
\end{fact}
\begin{proof}
Since $\Pr[\bA] = \Pr[\bB]$, then also $\Pr[\bA \setminus \bB] = \Pr[\bB \setminus \bA]$. Hence,
	\begin{align*}
	\E[\bX \cdot \1[\bA]] &= \E[\bX \cdot \1[\bA \cap \bB]] + \E[\bX \cdot \1[\bA \setminus \bB]] \\
	&=  \E[\bX \cdot \1[\bA \cap \bB]] + \E[\bX \cdot \1[\bA \setminus \bB] \cdot \1[\bX \geq \gamma]] + \E[\bX \cdot \1[\bA \setminus \bB] \cdot \1[\bX < \gamma]]	\\
	&<  \E[\bX \cdot \1[\bA \cap \bB]] + \E[\bX \cdot \1[\bA \setminus \bB] \cdot \1[\bX \geq \gamma]] + \gamma \cdot \Pr[\bA \setminus \bB]	\\
	&=  \E[\bX \cdot \1[\bA \cap \bB]] + \E[\bX \cdot \1[\bA \setminus \bB] \cdot \1[\bX \geq \gamma]] + \gamma \cdot \Pr[\bB \setminus \bA] \\
	&\leq  \E[\bX \cdot \1[\bA \cap \bB]] + \E[\bX \cdot \1[\bA \setminus \bB] \cdot \1[\bX \geq \gamma]] + \E[\bX \cdot \1[\bB \setminus \bA] ]			\\
	&\leq \E[\bX \cdot \1[\bX \geq \gamma]]
	\end{align*}

\end{proof}

\begin{proof}[Proof of Claim~\ref{clm::dense_set_imp}]
	Note that every output color $\chi$ in
        output coloring $\kappa$ is defined via the iid sampled triplet
        $(A,\chi'',\hat{c}) \sim \IC \times \nu %
        \times E_\basec$. 
		
		Fix $\Delta = \log^2 n$. We define the random variables $\psi_{\chi} \triangleq \min\{\phi_\kappa(\QC,\chi)/\beta^l,1/\Delta\}$ and  %
    $$
    M_\chi \triangleq \begin{cases}
    	T^\cc\left(\gamma \cdot \psi_{\chi} \right)  &	l > 0 \\
    	\gamma \cdot \psi_{\chi}  &	l=0.
 \end{cases}
$$

Note that for $l>0$,
        $\normo{T^\cc\left(\gamma\cdot\tfrac{\phi(\QC,*)}{\beta^{l}}\right)}
        \geq \sum_\chi M_\chi$, and for $l=0$, we have $T^\cc\left(x\right)=x$ for any $x = \Omega_\epsilon(1)$
        and hence it suffices to show whp $\sum_\chi M_\chi \geq 0.9 \delta \gamma \cdot (s - O(\tfrac{\eta}{\alpha}))$.
        
        Now since we sample $\lambda^t$ iid triplets $(A,\chi'',\hat c)$, the quantity $\sum_\chi M_\chi$
		 is a sum of $\lambda^t$ iid r.v. 
             bounded by
                 $\tfrac{1}{\log^2 n}$. Therefore, by
                 Chernoff bound, we obtain w.h.p.: $\sum_\chi M_\chi \geq 0.9\E[\sum_\chi M_\chi] - O_\epsilon\left(\tfrac{1}{\log n}\right) = 0.9\lambda^t \cdot \E[M_\chi] - O_\epsilon\left(\tfrac{1}{\log n}\right)$. Therefore, it suffices to show $\E[M_\chi] \geq \delta \gamma \cdot (s - O(\tfrac{\eta}{\alpha})) \cdot \lambda^{-t}$.

	To do that, we introduce another random variable, $m_\chi$, and thereafter show the following inequalities:
	\begin{itemize}
		\item $\E[m_\chi] \geq \delta \cdot \beta^l \cdot (s - O(\tfrac{\eta}{\alpha})) \cdot  \lambda^{-t}$; and,
		\item $\E[M_\chi] \geq \tfrac{\gamma}{\beta^l} \E[m_\chi]$.

	\end{itemize}
	
	\paragraph{The random variable $m_\chi$.} For each $(I,\chi') \in \CD$, let $C_{I,\chi'}^{\text{Bad}}$ be union of ``bad costs'' from Claim~\ref{cl::phi_uncorrupted} (3) and Claim~\ref{clm::rel_dense_bound_imp} of size $|C_{I,\chi'}^{\text{Bad}}| = O(1/\alpha)$.
	For a fixed output color $\chi$, define the random
        variable $z_{I,\chi'}$ as the contribution of
        $(I,\chi')$ to $\phi_\kappa(I,\chi)$ just from costs in
        $C^*_{I,\chi'} \setminus
        C_{I,\chi'}^{\text{Bad}}$, and let $m_\chi := \sum_{(I,\chi') \in
          \CD}z_{I,\chi'}$. 
    \aanote{should we then say ``contribution to $\phi_\kappa(\CD,\chi)$''
      (or change the summation)?}\ns{fixed as discussed....}
    \vspace{2mm}
    
    \paragraph{The inequality $\E[m_\chi] \geq \delta \cdot \beta^l\cdot (s - O(\tfrac{\eta}{\alpha})) \cdot \lambda^{-t}$.}
      Here we use Claim~\ref{cl::phi_uncorrupted} (3), which guarantees, for each $(I,\chi') \in \CD$:
	$\E_{(A,\chi'',\hat{c}) \sim \IC \times \nu \times
          E_\basec}[\phi_\kappa(I,\chi) \cdot \1[\chi' = \chi''] \mid
          \hat{c} \in E_\basec \setminus C_{I,\chi'}^{\text{Bad}}] \geq
        \mu_{\kappa'}(I,\chi') \cdot \lambda^{-t}$,
        and hence, 
	
$$	
		\E_{(A,\chi'',\hat{c})%
		}\left[m_\chi \right] 
 = \E_{(A,\chi'',\hat{c})%
 }\left[\sum_{(I,\chi') \in \CD}z_{I,\chi'}\right] 
	= \sum_{(I,\chi') \in \CD} \E_{(A,\chi'',\hat{c})%
	}[z_{I,\chi'}] \geq \sum_{(I,\chi') \in \CD} \Pr_{\hat c\in E_\basec}[\hat c\in C^*_{I,\chi'} \setminus C_{I,\chi'}^{\text{Bad}}] \cdot \mu_{\kappa'}(I,\chi') \cdot  \lambda^{-t}
	$$
        since $m_\chi$ only counts contribution for $\hat c\in
        C^*_{I,\chi'} \setminus C_{I,\chi'}^{\text{Bad}}$. Using the fact that $\Pr_{\hat c\in E_\basec}[\hat c\in C^*_{I,\chi'} \setminus C_{I,\chi'}^{\text{Bad}}]\ge (s-O(\eta/\alpha))$ and that $\mu_{\kappa'}(\CD) = \delta \cdot \beta^l$, we obtain,
	
	$$
		\E_{(A,\chi'',\hat{c}) \sim \IC \times \nu \times E_\basec}\left[m_\chi \right] 
	\ge \mu_{\kappa'}(\CD)\cdot (s - O(\tfrac{\eta}{\alpha}))\cdot \lambda^{-t}  
	= \delta \cdot \beta^l \cdot (s - O(\tfrac{\eta}{\alpha})) \cdot \lambda^{-t}.
	$$
	as needed.
	
	\vspace{2mm}
	
   \paragraph{The inequality $\E[M_\chi] \geq \tfrac{\gamma}{\beta^l} \E[m_\chi]$.}
                 Define the event
    $$
    \bA_\CD \triangleq \left\{\text{for sampled }(A,\chi',c_i)\text{
      there exists some }  (I,\chi') \in \CD \text{ s.t. } \cad(I,A) \leq 2c_i \wedge c_i \in C^*_{I,\chi'} \setminus
        C_{I,\chi'}^{\text{Bad}}\right\}.
    $$
 
	Note that on one hand, conditioned on {\em not} $\bA_\CD$, we
        have $m_\chi= 0$
        and hence  $m_\chi = m_\chi \cdot \1[\bA_\CD]$. On the other hand, we have $\relden_{\kappa'}(I,\chi',\QC)_{\hat{c}} \geq n^{3\alpha}\cdot\tfrac{\totalmu}{\beta^l \lambda^t}$ %
        for any
        $(I,\chi') \in \CD$ and $\hat{c} \in C^*_{I,\chi'}$, and
        therefore, whenever
        $\bA_\CD$ happens, by Claim~\ref{clm::rel_dense_bound_imp},
  $\phi_\kappa(\QC,\chi) %
        \leq O(\beta^l/n^{\alpha}) < \beta^l/\Delta$.
        This implies that $\phi_\kappa(\QC,\chi)\cdot \1[\bA_\CD] = \psi_\chi \cdot \1[\bA_\CD] \cdot \beta^l$, and overall:
        
        $$
        \psi_\chi \cdot \1[\bA_\CD] \cdot \beta^l = \phi_\kappa(\QC,\chi)\cdot \1[\bA_\CD] \geq m_\chi \cdot \1[\bA_\CD] = m_\chi
        $$
	
    For $l=0$, %
    this immediately gives 
    $M_\chi = \gamma \psi_\chi \geq \gamma \psi_\chi \cdot \1[\bA_\CD] \geq \tfrac{\gamma}{\beta^l} m_\chi$
     which implies $\E[M_\chi] \geq \tfrac{\gamma}{\beta^l} \E[m_\chi]$ as needed.
    
    For $l>0$, we need to use the upper bound guarantee on the density to ``deal with $T^\cc$ thresholding''. %
    For this we first define the event
    
        $$
    \bB_\CD \triangleq \left\{\text{for sampled
    }(A,\chi',c_{i+1})\text{ there exists some } (I,\chi') \in
    \CD \text{ s.t. } \cad(I,A) \leq 2c_i \wedge c_i \in C^*_{I,\chi'} \setminus
        C_{I,\chi'}^{\text{Bad}}\right\}
    $$   
    
    and notice that $\Pr[\bB_\CD] = \Pr[\bA_\CD]$. Now, since we are guaranteed for any
        $(I,\chi') \in \CD$ and $\hat{c} \in C^*_{I,\chi'}$
that $\relden_{\kappa'}(I,\chi',\QC)_{\hat{c}} \leq n^{3\alpha}\cdot
\beta^2 \cdot \tfrac{\totalmu}{\beta^l \lambda^t}$, then by
Claim~\ref{clm::rel_dense_bound_imp} (2), we have that, whenever 
$\bB_\CD$ holds, $\psi_\chi \geq \tfrac{\Omega(1)}{\beta^2
  n^{5\alpha}} > \beta^{-3}/\gamma$. We now invoke Fact~\ref{fct::stat_min} to obtain $\E[\psi_\chi \cdot \1[\psi_\chi \geq \beta^{-3}/\gamma]] \geq \E[\psi_\chi \cdot \1[\bA_\CD]]$. Since $T^\cc(x) = x$ whenever $x \geq \beta^{-3}$, we also have that $M_\chi \geq \gamma \psi_\chi \cdot \1[\gamma \psi_\chi \geq \beta^{-3}]$. Overall, we obtain:
$$
\E[M_\chi] \geq \E[ \gamma \psi_\chi \cdot \1[\gamma \psi_\chi \geq \beta^{-3}]] \geq \gamma \cdot \E[\psi_\chi \cdot \1[\bA_\CD]] \geq \tfrac{\gamma}{\beta^l} m_\chi
$$

for $l>0$ as well. 

By combining both inequalities, we conclude:
$$\E[M_\chi] \geq \tfrac{\gamma}{\beta^l} \E[m_\chi] \geq \gamma \delta \cdot (s - O(\tfrac{\eta}{\alpha})) \cdot \lambda^{-t}$$

which implies the claim with high probability as needed.

\end{proof}

Next, in order to analyze $\varphi(\cdot,\un)$, we show an auxiliary claim that bounds $\theta_\kappa^{\zeta}(\Lambda_\kappa^{\zeta}(J),\nu
\setminus \{ \bot \})$. This claim will be used for complexity analysis as well.
Consider some interval ball $\QC$ at level
$l$ containing $\zeta$ $\un$-mass in the current coloring $\kappa$. Intuitively, this implies there is a mass of sparse pairs
proportional to $\zeta$ from previous levels. The claim below bounds
$\theta_\kappa^{\zeta}(\Lambda_\kappa^{\zeta}(J),\nu \setminus \{ \bot
\})$, for all $J \in \QC$ generated at level $l$, based on the mass of
sparse pairs around $\QC$ for the current level.

\begin{claim}
  \label{cl::Q_U_bound}
	Fix a level $l\ge0$ and $\delta \geq \beta^{-1.5}$. Fix $\zeta \in Z_l$, and let $s_l = 1-\tfrac{(l+1) \cdot \epsilon}{2}$.
	Consider an %
	interval ball
	$\QC$ %
which, for level $l=0$, contains precisely one interval, and if $l>0$ satisfies
        $\mu_{\kappa}(\QC,\un) \leq \zeta$. Let %
        $\QC^{+2\zeta} =
        \Lambda_\kappa^{2\zeta}(\QC)$. Let $\CD \subseteq \QC \times
        \nu \setminus \{ \bot \}$ be the set of all $(I,\chi') \in
        \QC \times
        \nu \setminus \{ \bot \}$ such that
        $\relden_{\kappa'}(I,\chi',\QC)_{c} \leq
        n^{3\alpha}\tfrac{\totalmu}{\beta^l \cdot \lambda^t}$ on
        at least $\tfrac{s_l}{\eta}$ costs $c\in E_\basec$, and let $\CD^+\subseteq \QC^{+2\zeta} \times \nu \setminus \{ \bot \}$ be the set of $(I,\chi') \in \QC^{+2\zeta} \times \nu \setminus \{ \bot \}$ where
	 $\relden_{\kappa'}(I,\chi',\QC^{+2\zeta})_{c} \leq
        n^{3\alpha}\cdot \beta\cdot\tfrac{\totalmu }{\beta^l \cdot
          \lambda^t}$ on at least $\tfrac{s_l -
          \epsilon/4}{\eta}$ distinct costs. Finally, let $\zeta_+ = \max\{\zeta,1\}$.
        \aanote{for $l=0$, we have $\zeta=0$ so not well defined}\ns{ok now?}

	 \begin{enumerate}
	 	\item If $\mu_{\kappa'}(\CD^{+}) \leq \delta \zeta_+$,
                  then
                  $\theta_\kappa^{\zeta}(\Lambda_\kappa^{\zeta}(J),\nu
                  \setminus \{ \bot \}) < 1.1\delta k \cdot
                  \tfrac{\zeta_+}{\beta^l}$ for each $J \in \QC$ whp.
	 	\item If $\mu_{\kappa'}(\CD) \geq \delta \zeta_+$, then
                  $\theta_\kappa^{\zeta}(\Lambda_\kappa^{\zeta}(J),\nu
                  \setminus \{ \bot \}) \geq 0.9\delta k\cdot
                  \tfrac{\zeta_+}{\beta^l}$ for each $J \in \QC$ whp.
	 \end{enumerate}
	
	        \end{claim}

	\begin{proof}
	Let $\QC^{+\zeta} =
        \Lambda_\kappa^\zeta(\QC)$.
Let $m_\QC = \theta_\kappa^{\zeta}(\QC,\nu \setminus \{ \bot \})$ and
$m_{\QC^{+\zeta}} = \theta_\kappa^{\zeta}(\QC^{+\zeta},\nu
\setminus \{ \bot \})$. Note that $m_\QC \leq
\theta_\kappa^{\zeta}(\Lambda_\kappa^{\zeta}(J),\nu \setminus \{ \bot
\}) \leq m_{\QC^{+\zeta}}$ for each $J \in \QC$, hence it suffices to
bound $m_\QC, m_{\QC^{+\zeta}}$. To compute $m_\QC,m_{\QC^{+\zeta}}$, we first compute their expectations
$\E[m_\QC], \E[m_{\QC^{+\zeta}}]$, and then use Chernoff bound to
conclude that these variables are 
roughly bounded by their expectation whp.

	We first upper-bound $\E[m_{\QC^{+\zeta}}]$.
	For level $l=0$, we have $\E[m_{\QC^{+\zeta}}] = \E[m_{\QC}] = \tfrac{k}{\beta^l} \cdot\left(\sum_{(J,\chi') \in
      \QC \times \nu \setminus \{\bot \}}
    \mu_{\kappa'}(J,\chi')\sigma_{J,\chi'}\right)$.
For levels $l>0$, we use Corollary~\ref{cor::theta_threshold} to obtain 
	$$\E[m_{\QC^{+\zeta}}] < \left(1 + o(1)\right) \cdot \tfrac{k}{\beta^l} \cdot \sum_{(I,\chi') \in \QC^{+2\zeta} \times \nu \setminus \{ \bot \}} \mu_{\kappa'}(I,\chi') \cdot \sigma_{I,\chi'}. $$

	Consider a pair $(I,\chi') \in \QC^{+2\zeta} \times \nu \setminus \{ \bot \}$. If for a cost $c$ we have
         $\relden_{\kappa'}(I,\chi',\QC^{+2\zeta})_{c} > n^{3\alpha} \cdot
         \beta\cdot \tfrac{\totalmu}{\beta^l \cdot \lambda^t}$, then $\Gamma_{l,\zeta}(I,\chi')_c = O(n^\alpha/\beta) < 1/\sqrt{\beta}$, and hence 
         $T^\theta(\Gamma_{l,\zeta}(I,\chi')) = 0$. If the above occurs on
         $\ge\tfrac{1-(s_l-\epsilon/4)}{\eta}$ distinct costs $c$ ---
         i.e., for $I\not \in \CD^+$ --- then
         $%
         \sigma_{I,\chi'}=Q_l \left( T^{\theta}(\Gamma_{l,\zeta}(I,\chi'))\right)=0$
         (based on $Q_l$ transformation). Hence, $\E[m_{\QC^{+\zeta}}]
         \le \mu_{\kappa'}(\CD^+)\cdot \tfrac{k}{\beta^l}\cdot \left(1 + o(1)\right)\leq \delta
         \zeta_+ \cdot \tfrac{k}{\beta^l} \cdot \left(1 + o(1)\right)$ for part 1.
	 
	 Similarly, to lower-bound $\E[m_\QC]$, we observe that for
         any $(I,\chi') \in \QC \times \nu \setminus \{ \bot \}$ with
         density $\relden_{\kappa'}(I,\chi',\QC)_{c} \leq n^{3\alpha}\cdot \tfrac{\totalmu
           }{\beta^l \cdot \lambda^t}$ for a cost
         $c$, we have $\Gamma_{l,\zeta}(I,\chi')_c \ge \Omega(n^\alpha) > 1$ whp. Hence
         $T^\theta(\Gamma_{l,\zeta}(I,\chi')_c) = 1$, and for
         $(I,\chi')$ where the above occurs on
         $\ge\tfrac{s_l}{\eta}$ distinct costs, we have $\sigma_{I,\chi'}=Q_l(T^\theta(\Gamma_{l,\zeta}(I,\chi')))=1$. We conclude $\E[m_\QC]
         \geq \tfrac{k}{\beta^l}\cdot \mu_{\kappa'}(\CD)\ge \delta \zeta_+
         \cdot \tfrac{k}{\beta^l}$  for part 2.
         
Finally, note that $m_\QC,m_{\QC^{+\zeta}}$ are each sum of
independent r.v. in $[0,1]$, and since $\delta \geq \beta^{-1.5} \gg \tfrac{\log n}{k}$, then one can apply Chernoff bound, and obtain that $m_{\QC^{+\zeta}} \leq 1.1 \E[m_{\QC^{+\zeta}}]$ and $m_\QC \geq 0.9 \E[m_\QC]$ with high probability. 
\end{proof}

Finally, the next claim helps control $\varphi(\cdot, \un)$. In particular, it shows that if
$\theta_\kappa^{\beta^l}(\Lambda_\kappa^{\beta^l}(I),\nu \setminus \{
\bot \})$ is too small to ensure that we $\un$-color such interval
with sufficient mass, then there must be a sufficiently large mass of
pairs in $I$'s proximity of the ``right density''. The latter property
will instead ensure that $I$ is colored with non-$\un$ color, as we
will show later.

	\begin{claim}[Small $\theta$ implies many
                            pairs in $I$'s proximity have right
                            density]\label{cl::density_range_set} Fix
          input coloring $\kappa'$, output coloring $\kappa$ at a
          level $l>0$, and let $\kappa^{-1}$ be the coloring $\kappa$
          at the previous level $l-1$. Define $b =
          \min_{I' \in \IC} \mu_{\kappa^{-1}}(I',\nu \setminus \{\un\})
          + \sum_{\zeta \in Z_{l-1}}
          \varphi_{\kappa^{-1}}^\zeta(I',\nu)$ (ie, the minimal
          un-normalized mass of new coloring at level $l-1$); and assume
          $b = \Omega_\eps(1)$.
          Consider any $I \in \IC$ with
          $\theta_\kappa^{\beta^l}(\Lambda_\kappa^{\beta^l}(I),\nu
          \setminus \{\bot\}) < \tfrac{b}{50}\cdot k$. Then, with high probability, there
          exists a set of interval--color pairs $\CD \subseteq
          \Lambda_\kappa^{\beta^l}(I) \times \nu \setminus \{ \bot \}$
          with $\mu_{\kappa'}(\CD) \geq \tfrac{b}{25} \cdot \beta^l$
          such that every $(J,\chi') \in \CD$ pair has relative
          density
          $\relden_{\kappa'}(J,\chi',\Lambda_\kappa^{\beta^l}(J))_c \in
          [1,\beta^2] \cdot n^{3\alpha}\cdot\tfrac{\totalmu}{\beta^l
            \cdot \lambda^t}$ on $\ge\tfrac{\epsilon/4}{\eta}$ distinct
          costs.
	\end{claim}

	\begin{proof}

     Let $\BC_I = \Lambda_\kappa^{\beta^l}(I)$ and hence $\mu_{\kappa}(\BC_I,\un) \geq \beta^l - 1$. 
	  For each $l$, we denote $Z_l^* = Z_l \cap [0,\beta^l \cdot
            \beta^{0.9}]$.

          Since $\mu_{\kappa}(J,\un)$
                is monotonically decreasing with level $l$
                (\textsc{AmendColoring} can only decrease it), there
                exists a set of intervals $\QC \subseteq \BC_I$ with
                $\mu_{\kappa}(\QC,\un) \geq \beta^l (1 - o(1))$, such
                that $\Lambda_{\kappa^{-1}}^{2\zeta}(\QC) \subseteq
                \BC_I$ for any $\zeta\le \beta^{l-0.1}$ (and hence $\zeta \in Z_{l-1}^*$). Also observe
                that for $\zeta \in Z_{l-1} \setminus Z_{l-1}^*$, we
                always have $\varphi_{\kappa^{-1}}^\zeta(J,\un) \leq
                T^\un(\tfrac{\beta^{2(l-1)}}{\zeta^2}) \leq
                T^\un(\beta^{-1.8}) = 0$.

		Recall $\mu$ recoloring procedure (\textsc{AmendColoring} line~\ref{alg::color_line}). Letting $m_{J} = \mu_{\kappa^{-1}}(J,\un)$ and
          $U^\zeta(J) = \tfrac{1}{k} \cdot
          \theta_{\kappa^{-1}}^{\zeta}(\Lambda_{\kappa^{-1}}^{\zeta}(J),\nu
          \setminus \{ \bot \})$, we have:
		
		$$\beta^l (1 - o(1)) \leq \mu_{\kappa}(\QC,\un) \leq \tfrac{1}{b}\sum_{J \in \QC} m_{J} \cdot \sum_{\zeta \in Z_{l-1}} \varphi_{\kappa^{-1}}^\zeta(J,\un) \leq \tfrac{1}{b} \sum_{J \in \QC} m_{J} \cdot \sum_{\zeta \in Z_{l-1}^*}  T^\un(m_{J} \cdot \tfrac{\beta^{2(l-1)}}{\zeta^2} \cdot U^\zeta(J)).$$
		\aanote{should the last be an inequality as it is a
                  min with 1?}\ns{yes. fixed}
                
		Now since $m_{J}, \tfrac{\beta^{l-1}}{\zeta} \leq 1$,
                the RHS can be further upper bounded by $\tfrac{1}{b}
                \sum_{J \in \QC}m_{J} \cdot \sum_{\zeta \in Z_{l-1}^*}
                \tfrac{\beta^{l-1}}{\zeta} \cdot
                T^\un(\tfrac{\beta^{l-1}}{\zeta}\cdot
                U^\zeta(J))$. Now for each radius $\zeta \in Z_{l-1}$ from level
                $l-1$, we decompose $\QC$ into disjoint consecutive interval balls $\QC^\zeta_i$ as follows: for level $l=1$, each $\QC^0_i$ represent the next single interval of $\QC$ starting from the left. For $l>1$, each ball $\QC^\zeta_i$ is a maximal ball 
               containing $\un$-color of mass $\leq \zeta$ at the start of
                level $l-1$ (i.e. $\mu_{\kappa^{-1}}(\QC^\zeta_i,\un)
               \leq \zeta$ for all $i$), starting at the first
               interval on the right of $\QC^\zeta_{i-1}$ \aanote{not
                 on the left? in general, on
                 the left and right is not clear...}\ns{feel free to change! I thought $x_0$ is the first character in the string and then we go the the right no? but as you prefer...}.  Also, let $W$
               be the set of indices $i$ of the obtained balls, and
               $W'$ the set of indices $i$ excluding the two smallest
               and two largest ones. 

               By switching summation we get:
		
		$$b \cdot \beta^l (1 - o(1)) \leq \sum_{\zeta \in
                  Z_{l-1}^*} \tfrac{\beta^{l-1}}{\zeta} \sum_{J \in
                  \QC} m_{J} \cdot T^\un(\tfrac{\beta^{l-1}}{\zeta}
                \cdot U^\zeta(J)) = \sum_{\zeta \in Z_{l-1}^*}
                \tfrac{\beta^{l-1}}{\zeta} \sum_{i \in W} \sum_{J \in
                  \QC^\zeta_i}m_{J} \cdot
                T^\un(\tfrac{\beta^{l-1}}{\zeta} \cdot U^\zeta(J)).$$
\aanote{the last one should be an equality?}\ns{yes. changed}

		Notice that since $U^\zeta(J) \leq 1$, then for a single $i$, we have $\sum_{J \in
                  \QC^\zeta_i}m_{J} \cdot
                T^\un(\tfrac{\beta^{l-1}}{\zeta} \cdot U^\zeta(J))
                \leq \beta^{l-1} = o(\tfrac{b}{8} \cdot \beta^l)$,
                which implies that we can replace $W$ with $W'$ in
                above: i.e., $b \cdot \beta^l (1 - o(1))\leq \sum_{\zeta \in Z_{l-1}^*}
                \tfrac{\beta^{l-1}}{\zeta} \sum_{i \in W'} \sum_{J \in
                  \QC^\zeta_i}m_{J} \cdot
                T^\un(\tfrac{\beta^{l-1}}{\zeta} \cdot U^\zeta(J))$ as
                well. Now, let $q^\zeta_i$ be the $\mu_{\kappa'}$-mass
                of  interval--color pairs $(J,\chi')\in \QC^\zeta_i \times \nu$ that are relatively sparse: where
                $\relden_{\kappa'}(J,\chi',\BC_I)_c \leq
                n^{3\alpha}\beta\cdot\tfrac{\totalmu }{\beta^{l-1}
                  \cdot \lambda^t}$ on $\ge \tfrac{s_{l-1} -
          \epsilon/4}{\eta}$ distinct costs $c$. Also let
                $q^*$ be the total $\mu_{\kappa'}$-mass of relatively sparse
                pairs in $\BC_I$ (using the same definition as
                above). By the definition of sets
                $\{\QC_i^\zeta\}_i$, we have $q^* \geq
                \sum_i q^\zeta_i$. We now invoke Claim
                \ref{cl::Q_U_bound} (1) with $\QC=\QC_i^{\zeta}$ to
                obtain that, for any $J\in \QC_i^{\zeta}$, we have $U^\zeta(J)<1.1\cdot
                \tfrac{\max\{\sum_{j \in [-2,2]} q^\zeta_{i+j},\beta^{-1.5}\zeta\}}{\beta^{l-1}}$ (noting $\Lambda_\kappa^{2\zeta}(\QC_i^{\zeta}) \subseteq \cup_{j \in [-2,2]} \QC_{i+j}^\zeta$). Hence
                $T^\un(\tfrac{\beta^{l-1}}{\zeta} \cdot U^\zeta(J)) \le
                \tfrac{1.1}{\zeta}\sum_{j \in [-2,2]} q^\zeta_{i+j}$ (noting that $T^\un(x) = 0$
                for $x < O(\beta^{-1.5})$). Also, since we defined
                $\QC^\zeta_i$ as containing a-priori $\leq \zeta$ uncolored
                mass in level $l-1$, then the quantity $\sum_{J \in
                  \QC^\zeta_i}m_{J} \cdot
                T^\un(\tfrac{\beta^{l-1}}{\zeta} \cdot U^\zeta(J))$
                can be upper bounded by $1.1\sum_{j \in [-2,2]} q^\zeta_{i+j}$.             %
           Overall, we obtain:
		
		$$b \cdot \beta^l (1 - o(1)) \leq 1.1\sum_{\zeta \in Z_{l-1}^*} \tfrac{\beta^{l-1}}{\zeta} \sum_{i \in W'} \sum_{j \in [-2,2]} q^\zeta_{i+j} \leq 6 \sum_{\zeta \in Z_{l-1}^*} \tfrac{\beta^{l-1}}{\zeta} \sum_{i \in W} q^\zeta_i \leq 6 q^* \cdot \sum_{\zeta \in Z_{l-1}^*} \tfrac{\beta^{l-1}}{\zeta}  \leq 12q^*.$$

	Hence we get that $q^* > \tfrac{b}{15} \cdot \beta^l$, meaning
        that there exists a set $\CD^* \subseteq \BC_I \times \nu \setminus \{ \bot \}$ with $\mu_{\kappa'}(\CD^*) \geq \tfrac{b}{15} \cdot \beta^l$, such that for all $(J,\chi') \in \CD^*$, we have $\relden_{\kappa'}(J,\chi',\BC_I)_c \leq
                n^{3\alpha}\beta^2 \cdot\tfrac{\totalmu }{\beta^{l}
                  \cdot \lambda^t}$ on $\ge \tfrac{s_{l-1} -
                \epsilon/4}{\eta}$ distinct costs $c$.
	On the other hand, let $U^{\beta^l}_\kappa(I) := \tfrac{1}{k}\cdot
        \theta_\kappa^{\beta^l}(\Lambda_\kappa^{\beta^l}(I),\nu
        \setminus \{\bot\})$. From the claim assumption, we have $U_\kappa^{\beta^l}(I)
        \leq \tfrac{b}{50}$, and by applying (the contrapositive of) Claim~\ref{cl::Q_U_bound}
        (2) for level $l$, we have that w.h.p, the
        $\mu_{\kappa'}$-mass in $\BC_I$ with relative sparse intervals with respect to
        level $l$ (i.e. $(J,\chi') \in B_I \times \nu \setminus \{
        \bot \}$ pairs, where $\relden_{\kappa'}(J,\chi',\BC_I)_c \le
        n^{3\alpha}\cdot\tfrac{\totalmu}{\beta^l \cdot \lambda^t}$
        on $\ge\tfrac{s_l}{\eta}$ distinct costs) is at most
        $\tfrac{b}{40} \cdot \beta^l$.
        
	The two statements can be combined to conclude that $\BC_I
        \times \nu \setminus \{ \bot \}$ contains a set of $\mu_{\kappa'}$-mass of
        at least $\left(\tfrac{b}{15} - \tfrac{b}{40}\right)\beta^l >
        \tfrac{b}{25} \cdot \beta^l$ of pairs $(J,\chi')$ with relative density
        $\relden_{\kappa'}(J,\chi',\BC_I)_c \in [1,\beta^2] \cdot
        n^{3\alpha}\cdot \tfrac{\totalmu}{\beta^l \cdot \lambda^t}$
        on $\tfrac{s_{l-1} -
          \epsilon/4}{\eta} - \tfrac{s_l}{\eta} = \tfrac{\epsilon/4}{\eta}$
        distinct costs as needed. 
		\end{proof}
		
	Finally, we prove Lemma~\ref{lm::min_color_mass}.	

	\begin{proof}[Proof of Lemma \ref{lm::min_color_mass}]
          We prove by induction on the level $l$.
          Fix coloring $\kappa$ and level $l$, and let
          $\kappa^{-1}$ be coloring $\kappa$ from the
          previous level, $l-1$. Consider each $I \in \IC$. Let $M_I = \mu_\kappa(I,\nu \setminus \{ \un \})+ \sum_{\zeta \in Z_l}\varphi_\kappa^\zeta(I,\nu)$
			and let $m_I = \mu_\kappa(I,\un)$. 
			Define
			$$b_l = \begin{cases}
			1	& l=0 \\
			\min_{I \in \IC} \mu_{\kappa^{-1}}(I,\nu \setminus \{ \un \}) + \sum_{\zeta \in Z_{l-1}}\varphi_{\kappa^{-1}}^\zeta(I,\nu)
			& \text{Otherwise}
			\end{cases}$$
                        Our induction hypothesis is
                        $b_l=\Omega(\eps)^l$. Since $b_{l+1} =
                        \min_{I \in \IC} M_I$, it suffices to prove
                        that $M_I = b_l \cdot \Omega(\epsilon)$ (whp)
                        to complete the induction. First, if $m_I \leq
                        0.5$, then $M_I \geq 0.5$ (noting
                        $\mu_\kappa(I,\nu) = 1$) and the claim is
                        immediate so we assume $m_I > 0.5$.

\vspace{2mm}

\paragraph{Base case: $l=0$.} If there exists a set of colors
$\nu^* \subseteq \nu$, s.t. $\mu_{\kappa'}(I, \nu^*) \geq
\tfrac{1}{4}$ where, for each $\chi' \in \nu^*$, we have
\aanote{shouldn't this be relative distance, in particular normalized
  by $\mu_\kappa$. and also shouldn't this be $\kappa'$ instead too?}\ns{I think you are right. Changed.}
$\relden_{\kappa'}(I,\chi')_c > n^{3\alpha}\tfrac{\totalmu}{\beta^l \cdot
  \lambda^t}$ on $\tfrac{\epsilon/4}{\eta}$ distinct costs, then we can
use Claim~\ref{clm::dense_set_imp} and obtain that, whp, 
\aanote{the second should be an equality?}\ns{yes. changed.}
$$M_I \geq \varphi^{0}_\kappa(I,\bot) = T^\cc(m_I \cdot
\phi_\kappa(I,\nu\setminus\{\un,\bot\})) \geq T^\cc(\tfrac{1}{2} \cdot \phi_\kappa(I,\nu\setminus\{\un,\bot\}))
\geq \tfrac{\epsilon}{32}.$$

 Otherwise, we use Claim \ref{cl::Q_U_bound} (2) to obtain %
$\theta_\kappa^{0}(I,\nu \setminus \{ \bot \}) \geq \tfrac{k}{5}$
and hence $T^\un\left(m_I \cdot \tfrac{1}{k} \cdot \theta_\kappa^{0}(I,\nu \setminus \{ \bot \})\right) \geq \tfrac{1}{10}$ (whp) %
which in turn implies that $M_I \geq \varphi^{0}_\kappa(I,\un) \geq \tfrac{1}{10}$.

\vspace{2mm}

\paragraph{General case: $l>0$.} Note that if %
$\theta_\kappa^{\beta^l}(\Lambda_\kappa^{\beta^l}(I),\nu \setminus \{ \bot \}) \geq \tfrac{b_l}{50} \cdot k$, then $T^\un\left(m_I \cdot \tfrac{1}{k} \cdot \theta_\kappa^{\beta^l}(\Lambda_\kappa^{\beta^l}(I),\nu \setminus \{ \bot \})\right) \geq \tfrac{1}{100} \cdot b_{l}$, and hence $M_I \geq \varphi_\kappa^{\beta^l}(I,\un) \geq \min\{\tfrac{1}{100} b_{l},1\}$.	
\aanote{formally, we have to check whether $1$ is the min, in which
  case the conclusion still hold, right?}\ns{check now please} 
Otherwise, we invoke Claim \ref{cl::density_range_set} to obtain there exists a set of pairs $\CD \subseteq \Lambda_\kappa^{\beta^{l}}(I) \times \nu \setminus \{ \bot \}$ of mass $\mu_{\kappa'}(\CD) \geq \tfrac{b_{l}}{25} \beta^l$ where $\relden_{\kappa'}(J,\chi',\Lambda_\kappa^{\beta^{l}}(I))_c \in [1,\beta^2] \cdot n^{3\alpha}\tfrac{\totalmu}{\beta^{l} \cdot \lambda^t}$ on $\tfrac{\epsilon/4}{\eta}$ distinct costs for each $(J,\chi') \in \CD$. We now use Claim~\ref{clm::dense_set_imp} and obtain that whp,
						
					$$M_I \geq \varphi^{\beta^l}_\kappa(I,\nu \setminus \{ \un,\bot \}) \geq \normo{T^\cc\left(m_I \cdot \tfrac{\phi_\kappa(\Lambda_\kappa^{\beta^{l}}(I),*)}{\beta^{l}} \right)} \ge \normo{T^\cc\left(\tfrac{1}{2} \cdot \tfrac{\phi_\kappa(\Lambda_\kappa^{\beta^{l}}(I),*)}{\beta^{l}} \right)} > \tfrac{\epsilon}{300} \cdot b_{l}$$ 
					
					as needed. We conclude that
                                        $M_I$ can be lower bounded by
                                        the recursive definition,
                                        
                                        $$b_{l+1} \geq \min\{b_l \cdot
                                        \Omega(\epsilon),1\} =
                                        \left(\Omega(\epsilon)\right)^{l+1},$$
                                         which
                                        completes the proof of the lemma.

		\end{proof}

\section{Runtime Complexity of the Interval Matching Algorithm}
\label{sec:runtime}

In this section we analyze the runtime complexity, establishing
Theorem~\ref{thm::matching_guarantee}, item \ref{it:matchingRuntime}.

First we prove the lemma bounding the size of color parts in each step
coloring.

\subsection{Sparsity of colors in colorings: proof of Lemma~\ref{lm::partition_size_imp}}\label{sec::sparsity}

Recall that Lemma~\ref{lm::partition_size_imp} shows that each color
$\chi \in \nu \setminus \{ \bot\}$ is sparse in each coloring
$\kappa'$. Specifically, recalling that $\PC_{\kappa'}^{\chi'}$
is the set of intervals $I$ with $\mu_{\kappa'}(I,\chi')>0$, we need to
prove that $|\PC_{\kappa'}^{\chi'}|\le O_\eps(\beta^5
\lambda)\cdot \tfrac{n}{\lambda^t}$.

\begin{proof}[Proof of Lemma \ref{lm::partition_size_imp}]
	Consider a color $\chi \in \nu \setminus \{ \un, \bot \}$ in coloring $\kappa$. %
	For each level $l$ and for each $j$, we add $O\left(\tfrac{\totalmu}{\lambda^t} \cdot \tfrac{%
					\mu_{\kappa'}(I,\chi')}{\widehat{d_{I,j}}}\right) = O\left(n^{\alpha} \cdot \tfrac{\totalmu}{\lambda^t} \cdot \tfrac{%
					\mu_{\kappa'}(I,\chi')}{d_{A,j}}\right)$ $\phi$-potential to each $(I,\chi') \in \NC_{c_{i,j}}^{\kappa'}(A)$, hence for level $l$ we have $\phi(\IC,\chi) = O_\eps\left(n^{\alpha} \cdot \tfrac{\totalmu}{\lambda^t} \right)$. 
We use the bound on
$\phi_\kappa(\IC,\chi)$ to bound the
total number of intervals
which ``survive the $T^\cc$
transformation'' (i.e., have
$\varphi$ potential still non-zero). First we derive, for $\l\ge1$:
\aanote{just to make sure, we don't need to consider $l=0$ here right?}\ns{correct.}

\begin{equation}\label{eq::sum_all_phi}
	\sum_{I \in \IC} \sum_{\zeta \in Z_l}
        \tfrac{\beta^{2l}}{\zeta^2} \tfrac{ \mu_\kappa(I,\un) \cdot
          \phi_\kappa(\Lambda_\kappa^\zeta(I),\chi)}{\beta^l} \leq
        \sum_{\zeta \in Z_l} \tfrac{\beta^{l}}{\zeta} \sum_{I \in \IC}
        \tfrac{ \mu_\kappa(I,\un)}{\zeta} 
        \phi_\kappa(\Lambda_\kappa^\zeta(I),\chi) 
        \stackrel{Claim~\ref{cl::lambda_ball_func}}{\leq}  \sum_{\zeta \in
          Z_l} \tfrac{\beta^{l}}{\zeta} %
        2\phi_\kappa(\IC,\chi) = O_\eps(n^{\alpha} \cdot
        \tfrac{n}{\lambda^t}).		
	\end{equation}

	Now we have that $\sum_{\zeta \in Z_l} \varphi_\kappa^\zeta
        (\IC, \chi) = \sum_{I \in \IC} \sum_{\zeta \in Z_l}
        T^\cc(\tfrac{\beta^{2l}}{\zeta^2}\tfrac{ \mu_\kappa(I,\un) \cdot
          \phi_\kappa(\Lambda_\kappa^\zeta(I),\chi)}{\beta^l})$. Since $T^\cc$ zeros out
        any input smaller than $1/\beta^{4}$, we have that
        at most $O_\eps(n\beta^5 / \lambda^t)$ intervals $I$ have
        $\sum_{\zeta \in Z_l} \varphi_\kappa^\zeta (\IC, \chi)>0$.
        Finally, one can observe $I \in \PC_{\kappa'}^{\chi'}$ only if $\varphi_{\kappa'}(I,\chi') > 0$ at some level of the previous step $t-1$, hence the bound follows.
	
	It remains to show for the color $\chi=\un$. 
		For this task, let 
              $l'$ be such that $\beta^{l'} \in [1,\beta) \cdot n^{3\alpha}\cdot \beta \cdot \tfrac{\totalmu}{\lambda^t}$.
                We consider two cases.

First, suppose the algorithm reaches level $l'$ and $\mu_\kappa(\XC,\un), \mu_\kappa(\YC,\un) \geq
\beta^{l'}$. We now assume (by induction) Lemma~\ref{lm::partition_size_imp} holds for $\kappa'$ and invoke Claim
\ref{cl::Q_U_bound}. Then we have that each $I \in \IC$ $T^\un(\tfrac{1}{k} \cdot
\theta_\kappa^{\zeta}(\Lambda_\kappa^{\zeta}(I),\nu \setminus \{ \bot
\}))$ is at most proportional to the color mass where
$\relden_\kappa(I,\chi',\QC^{+2\zeta})_c \leq n^{3\alpha} \cdot \tfrac{\totalmu
  \beta}{\beta^{l'} \lambda^t}$ for some set $\QC$,
on some of the costs. However, since, by the definition of relative
density it must be at least 1, then $\relden_\kappa(I,\chi',\QC^{+2\zeta})_c \geq 1$
and hence we obtain $T^\un(\tfrac{1}{k} \cdot \tfrac{\beta^{2l}}{\zeta^2}\cdot
\theta_\kappa^{\zeta}(\Lambda_\kappa^{\zeta}(I),\nu \setminus \{ \bot
\})) \leq T^\un(1.1\beta^{-1.5} \cdot \tfrac{1}{k} \cdot \tfrac{\beta^{2l}}{\zeta^2} \cdot
1.1 k \tfrac{\zeta}{\beta^l}) \leq T^\un(1.1 \beta^{-1.5}) = 0$ for
all $I \in \IC$ (note that Claim~\ref{cl::Q_U_bound} requires
$\delta\ge \beta^{-1.5}$). This means that all $\varphi_\kappa^\zeta (\IC, \un)=0$ and $\un$ will not appear on any
interval.

Now, consider the opposite: the algorithm breaks at level $l^* \leq
l'$, meaning that $\mu_\kappa(\XC,\un)$ or
                $\mu_\kappa(\YC,\un)$ is at most
                $\beta^{l^*} \leq \beta^{l'}$. Assume w.l.o.g $\mu_\kappa(\XC,\un)
                < \beta^{l'}$; this implies
                $\mu_\kappa(\XC,\un) < \tfrac{\totalmu
                  \beta^2 \cdot n^{3\alpha}}{\lambda^t}$\aanote{shouldn't there be a
                  $n^{3\alpha}$ too?}\ns{right!}.  		
         Next, notice that if
                $\tfrac{\mu_\kappa(\YC,\un)}{\mu_\kappa(\XC,\un)} >
                \beta$, then \textsc{Adjust-$\un$} zeros out all of $\un$
                mass. Otherwise, we have
                that $\mu_\kappa(\IC,\un) = O(\tfrac{\totalmu
                  \beta^4}{\lambda^t})$ before calling to
                \textsc{Adjust-$\un$}. Notice that
                \textsc{Adjust-$\un$} omits $\un$ on any palette
                 where $\un$ is of mass at most
                 $\Omega_\epsilon(\beta^{-1})$ (by the $T^\un$
                 transformation), hence we conclude color $\un$ is
                 part of $O_\eps(\tfrac{\totalmu \beta^5}{\lambda^t})$
                 palettes as well at the end of step $t$ for each
                 $\kappa \in \Kappa$. Hence in the next step $\hat{t} = t+1$, we get $|\PC_{\kappa'}^{\chi'}| = \tO_\eps(\tfrac{\totalmu \beta^5}{\lambda^{\hat{t}-1}})$ for all  $\chi' \in \nu \setminus \{ \bot \}$ as needed.

\end{proof}

\subsection{Algorithm runtime}

We now proceed to the main runtime analysis of the interval matching
algorithm.

We define $S_{\kappa'} = \norm{\mu_{\kappa'}(*,*)}_0$, i.e., the count
of non-zero-mass pairs $(I,\chi') \in \IC \times \nu$. We note that
from Lemma \ref{lm::partition_size_imp}, we have $S_{\kappa'} = n
\cdot \tO_\eps(\beta^{O(1)})$. The same Lemma also implies that we converge in $1/\epsilon + O(1)$ steps, since for $t>1/\epsilon + O(1)$, all $\nu \setminus \{\bot\}$ colors must be non-zero in 0 intervals, in which case we stop iterating. %

Before proceeding to the proof, we state a couple theorems for fast
evaluation of potentials; they are all proved in
Section~\ref{sec:fastSampling} (and can be obtained from known data
structures).  In particular, for sampling anchors and pivots, we use
the following data-structure.

\begin{theorem}\label{thm::ds_sampling}
Given a set of numbers $a_1,\ldots a_n\ge 0$ in a sparse
representation (i.e., as a set $S$ of $i$'s with $a_i>0$)
there exists a data structure $D$ supporting the following queries:
\begin{itemize}
\item
  Sample index $i\in[n]$ from the distribution $\{a_i/\sum_j a_j\}_i$ in time $O(\log n)$;
\item
  Given $k\ge 1$, produce a set $S\subseteq [n]$ that includes each
  $i$ with probability $\min\{ka_i,1\}$ independently.  The runtime is
  $O((1+k\cdot\sum_{i=1}^n a_i)\log n)$ in expectation.
\item
  Given an interval $[s,t]\subset [n]$, and $k\ge 1$, produce a set $S\subseteq [n]$ that includes each
  $i\in[s,t]$ with probability $\min\{ka_i,1\}$ independently.  The runtime is
  $O((1+k\cdot\sum_{i=s}^t a_i)\log^2 n)$ in expectation.
\end{itemize}

Furthermore, the preprocessing time is $O(|S|\cdot \log^{O(1)}n)$.
\end{theorem}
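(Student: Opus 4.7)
The plan is to build a balanced binary search tree $T$ (e.g.\ a red--black tree) keyed by the indices in the support $S$, augmenting each internal node $v$ with the subtree sum $W_v=\sum_{i\in v}a_i$ and subtree maximum $M_v=\max_{i\in v}a_i$. Building $T$ together with these augmentations takes $O(|S|\log n)$ time. The first bullet is then the textbook weighted-descent sampler: draw $r$ uniformly from $[0,W_{\text{root}})$ and, at each internal node, branch left if $r<W_{\text{left}}$ (subtracting $W_{\text{left}}$ from $r$ when going right), reaching a leaf in $O(\log n)$ time.

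For the second bullet, I split the support into \emph{heavy} indices ($ka_i\ge 1$, always included) and \emph{light} indices ($ka_i<1$). The heavy indices are collected by a threshold traversal that enters a subtree of $T$ only when $M_v\ge 1/k$; this visits $O((|H|+1)\log n)$ nodes, and $|H|\le k\sum_i a_i$. For the light part, I group indices by magnitude into the $O(\log n)$ classes $B_b=\{i:ka_i\in(2^{-b-1},2^{-b}]\}$ and further augment $T$ so that, for each node $v$ and each $b$, I can read off $|B_b\cap v|$ and draw a uniform element from $B_b\cap v$; this is handled by storing a small secondary structure per $(v,b)$ pair, keeping preprocessing at $|S|\,\polylog n$. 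For each class $B_b$, I sample a count $X_b\sim\mathrm{Bin}(|B_b|,2^{-b})$ in $O(1+X_b)$ expected time by a standard geometric-jump generator, then pick $X_b$ uniform elements and accept each candidate $i$ with probability $ka_i/2^{-b}$; the accepted set is exactly a Bernoulli sample with probabilities $ka_i$. The expected number of acceptances is at most $k\sum_ia_i$, each accepted sample costs $O(\log n)$ for location in $T$, and iterating over the $O(\log n)$ classes contributes only an additive overhead, yielding the claimed expected runtime.

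For the third bullet, the interval $[s,t]$ is expressed as the disjoint union of $O(\log n)$ canonical subtrees of $T$, each of which inherits its own $W_v$, $M_v$, and magnitude-class counters. Running the heavy/light procedure of bullet two independently inside each canonical subtree and taking the union produces an independent Bernoulli sample over $[s,t]$ with probabilities $\min\{ka_i,1\}$. Summing per-canonical-subtree expected runtimes absorbs one extra $\log n$ factor, yielding the $O((1+k\sum_{i=s}^t a_i)\log^2 n)$ bound. The main obstacle is the light-sampler of bullet two: one must support magnitude-class count and uniform-sample queries restricted to arbitrary subtrees with polylogarithmic overhead, without blowing up preprocessing to $|S|^{1+\Omega(1)}$. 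Once this secondary structure is in place, Wald's identity plus concentration of the geometric-jump binomial sampler turn the per-class expectation into the claimed expected runtimes, and the rest (correctness of the rejection step, independence across buckets, and $O(\log n)$ decomposition of $[s,t]$) is standard.
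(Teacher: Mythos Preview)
Your first-bullet sampler and the heavy/light split are fine, and your canonical-subtree decomposition for the third bullet is a reasonable alternative to the paper's per-dyadic-interval preprocessing. The genuine gap is in the light-index sampler: the assertion that there are only $O(\log n)$ magnitude classes $B_b$ is false in general. Take $a_i=2^{-i}$ for $i\in S=[n]$ and $k=1$: then $k\sum_i a_i<1$, so the target runtime for a second-type query is $O(\log n)$, yet there are $n$ non-empty classes $B_0,\ldots,B_{n-1}$, and merely initiating the geometric-jump generator once per non-empty class already costs $\Omega(n)$. The classes depend on the query parameter $k$ and on the full dynamic range of the $a_i$'s, neither of which the theorem bounds, so no amount of precomputed per-$(v,b)$ structure rescues the query-time bound.

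The paper avoids magnitude bucketing entirely via Poisson splitting. After peeling off the heavy indices (found by binary search among the $a_i$'s sorted by value), it draws a single count $q_r\sim\mathrm{Poi}(\Theta(k\sum_i a_i))$ at the root and recursively distributes $q_v$ to the two children as a multinomial split proportional to their subtree sums; by Poisson thinning each child receives an independent Poisson with the correctly scaled parameter, and one recurses only into children with positive count. At a light leaf $i$, the event $\{q_i\ge1\}$ has probability $1-e^{-\Theta(ka_i)}$, and a rejection step with acceptance probability $ka_i/(1-e^{-\Theta(ka_i)})\le1$ converts this to the desired independent $\mathrm{Bernoulli}(ka_i)$. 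The number of tree nodes touched is $O(q_r\log n)$ with $\mathbb{E}[q_r]=O(k\sum_i a_i)$, so the bound holds regardless of how the $a_i$'s are spread across scales. This single-Poisson-at-the-root trick is the idea your bucketing scheme is missing.
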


For fast calculation of
equations (\ref{eq::varphi_not0}) and (\ref{eq::varphi_un}), we need
data-structures that generate for each $I \in \IC$ the sums of only
the relevant $\phi, \theta$ colors over the $\Lambda$ balls (i.e., which
pass the threshold). We use the following data-structure and algorithm:

\begin{theorem}\label{thm::ds_sum}
Given a set of numbers $a_1,\ldots a_n\ge 0$ in a sparse
representation (i.e., as a set $S$ of $i$'s with $a_i>0$), there
exists a data structure $D$ supporting the following queries:
\begin{itemize}
\item
Given some $i<j$ compute $\sum_{k=i}^j a_k$. The query complexity is $O(\log n)$.
\item
Given some $i$ and $\zeta>0$, compute $j\in S, j\le i$ such that
$\sum_{k=j-1}^i a_k<\zeta\le \sum_{k=j}^i a_k$ (or output that none
exists). The query complexity is $O(\log n)$.
\end{itemize}

The preprocessing time is $O(|S|\log n)$.
\end{theorem}

\begin{theorem}\label{thm::ds_matrix}
  Suppose we are given an $m\times n$ matrix $A$, where $m\le n$,
  given in the sparse form, i.e., as a set $S$ of non-zero
  entries. Also, suppose we are given $n$ intervals
  $[s_i,t_i]\subseteq [n]$ such that both $\{s_i\}_i$ and $\{t_i\}_i$
  are non-decreasing. Given $\gamma>0$, we can find all pairs
  $(i,\chi)\in [n]\times[m]$ such that
  $\sum_{k=s_i}^{t_i}A_{\chi,k}\ge \gamma$. The runtime is $\tilde
  O\left(|S|+n+\tfrac{1}{\gamma} \cdot \sum_{\chi,i} \sum_{k=s_i}^{t_i}A_{\chi,k}\right)$.
\end{theorem}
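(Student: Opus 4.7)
The plan is to process the intervals $i=1,2,\ldots,n$ in order using a two-pointer sliding window, maintaining for each row $\chi$ the current partial sum $\sigma_\chi \triangleq \sum_{k=s_i}^{t_i} A_{\chi,k}$ and updating it incrementally as $i$ advances. First I would pre-process $S$ by grouping its entries by column: for each column $k$ with at least one non-zero entry, build a list $L_k$ of pairs $(\chi, A_{\chi,k})$. A single bucket-sort of $S$ by column yields this in $O(|S|\log n)$ time, alongside storing the arrays $(s_i)_i,(t_i)_i$. The state kept during the sweep consists of $\sigma_\chi$ stored lazily in a hash map keyed by $\chi$ (only rows ever touched appear), together with a flag $a_\chi\in\{0,1\}$ marking ``currently active'' (i.e.\ $\sigma_\chi\ge\gamma$) and an integer $\mathrm{start}_\chi$ recording the first index of the current active run.

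Now, for each $i=1,\ldots,n$, I would perform two pointer advances. On \emph{insertion}, iterate over the columns $k\in(t_{i-1},t_i]$ and, for each $(\chi,v)\in L_k$, set $\sigma_\chi\gets\sigma_\chi+v$; if this causes $\sigma_\chi$ to cross $\gamma$ upward, set $a_\chi\gets 1$ and $\mathrm{start}_\chi\gets i$. On \emph{deletion}, iterate over the columns $k\in[s_{i-1},s_i)$ and, for each $(\chi,v)\in L_k$, set $\sigma_\chi\gets\sigma_\chi-v$; if $\sigma_\chi$ now falls below $\gamma$ while $a_\chi=1$, emit the pairs $(j,\chi)$ for $j\in[\mathrm{start}_\chi,\,i-1]$ and set $a_\chi\gets 0$. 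After the main loop, any $\chi$ still having $a_\chi=1$ contributes $(j,\chi)$ for $j\in[\mathrm{start}_\chi,n]$. Correctness is immediate: within any maximal run of $i$ for which $\sigma_\chi\ge\gamma$, the value of $\sigma_\chi$ can only change through the two pointer advances above, so these two boundary rules detect exactly the endpoints of each active run.

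For runtime, monotonicity of $\{s_i\}$ and $\{t_i\}$ ensures that each column index $k\in[1,n]$ is visited at most twice (once when $t$ crosses it, once when $s$ crosses it). Thus the total work across all sliding-window updates is $O\bigl((|S|+n)\log n\bigr)$, the $\log n$ being the cost of a hash-map lookup/update for each affected $\sigma_\chi$. Every pair $(i,\chi)$ that is ever emitted certifies $\sum_{k=s_i}^{t_i}A_{\chi,k}\ge\gamma$; summing $\gamma$ over the emitted pairs and comparing against $\sum_{\chi,i}\sum_{k=s_i}^{t_i}A_{\chi,k}$ bounds the total number of emitted pairs by $\gamma^{-1}\sum_{\chi,i}\sum_{k=s_i}^{t_i}A_{\chi,k}$. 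Each pair is written out in $O(1)$ amortized time by a simple loop over $j$ within each closed run, giving the total runtime $\tO\!\bigl(|S|+n+\gamma^{-1}\sum_{\chi,i}\sum_{k=s_i}^{t_i}A_{\chi,k}\bigr)$.

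The main obstacle I anticipate is getting the charging right for the crossing-detection: a single pointer step may touch many rows at once via a long $L_k$, and each touched row could in principle force an $O(\log n)$ state update or trigger a range emission. The key point is that every crossing check is physically paired with exactly one $\sigma_\chi$ update, and every such update is in turn charged to a unique visit of a non-zero entry of $S$; this pins the total detection work to the $\tO(|S|)$ update budget. A secondary subtlety is that when an active run closes, the emission loop can be long, but its length exactly equals the number of output pairs produced, so it folds into the output-size term rather than the update term. Verifying these two amortization arguments constitutes essentially the whole non-trivial content of the proof.
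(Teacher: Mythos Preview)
Your sliding-window proof is correct and is genuinely simpler than the route the paper takes. You exploit the monotonicity of $\{s_i\}$ and $\{t_i\}$ directly: each column is inserted once and deleted once, so maintaining all row sums $\sigma_\chi$ costs $O(|S|+n)$ updates total, and the run-boundary bookkeeping correctly captures every maximal stretch where $\sigma_\chi\ge\gamma$ (the key invariant, which you should perhaps state more explicitly, is that at the end of processing step~$i$ one has $a_\chi=1$ iff $\sum_{k=s_i}^{t_i}A_{\chi,k}\ge\gamma$; your insertion-then-deletion order preserves this because insertions only raise and deletions only lower~$\sigma_\chi$---note this is where nonnegativity of $A$ is silently used, as it is in the paper's proof too).

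The paper instead proceeds by dyadic decomposition: it precomputes $\tau_{\chi,I}=\sum_{k\in I}A_{\chi,k}$ for every dyadic interval $I$, then for each ``heavy'' pair $(\chi,I)$ with $\tau_{\chi,I}\ge \gamma/(2\log n)$ it locates, via predecessor/successor queries in auxiliary BSTs, the contiguous range of indices $l$ whose interval $[s_l,t_l]$ contains $I$ as one of its dyadic pieces, and verifies each candidate $(l,\chi)$ explicitly. The charging is the same as yours---every candidate checked has window-sum at least $\gamma/(2\log n)$---but the machinery is heavier. Your two-pointer argument avoids the dyadic layer entirely and gives a cleaner amortization; the paper's approach, on the other hand, is closer in spirit to a generic ``find all heavy range sums'' primitive and would be easier to adapt if the monotonicity of the $s_i,t_i$ were relaxed.
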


Theorem \ref{thm::ds_sum} helps us calculating $\Lambda_\kappa^\zeta$
for each interval, and generating $\varphi_\kappa^\zeta(I,\un)$ while
Theorem \ref{thm::ds_matrix} is used for
$\varphi_\kappa^\zeta(I,*_{\neq \un})$. We prove these theorems in
section \ref{sec:fastSampling}.

We proceed to proving Theorem~\ref{thm::matching_guarantee}, item
\ref{it:matchingRuntime} next.

\begin{lemma}
  Clustering and assigning $\phi_\kappa$ takes $T_{\cad} \cdot \tilde{O}(n \cdot \lambda \cdot \beta^{O(1)})$ time in each step $t$ and level $l$.
\end{lemma}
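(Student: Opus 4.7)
The plan is to separately bound the cost of the $\cad$ evaluations inside \textsc{ClusterAnchor} and the cost of the \textsc{ApproxDensity} calls inside \textsc{AssignPhiPotential}, taking expectation over the anchor samples and summing over the $\lambda^t$ output colors $\chi$. Throughout, let $m_{\chi'}=\mu_{\kappa'}(\IC,\chi')$, so that $\sum_{\chi'}m_{\chi'}\le 2n$ and the sampling marginal of $\chi'$ is $m_{\chi'}/(2n)$.

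For clustering, each sampled anchor $(A,\chi')$ spends $T_{\cad}\cdot|\PC_{\kappa'}^{\chi'}|$ time in \textsc{ClusterAnchor}, and Lemma~\ref{lm::partition_size_imp} gives $|\PC_{\kappa'}^{\chi'}|\le n\cdot\tO_\eps(\beta^{5}/\lambda^{t-1})$. Bounding the $\chi'$-weighted sum by $\max_{\chi'}|\PC_{\kappa'}^{\chi'}|\cdot\sum_{\chi'}m_{\chi'}/(2n)$ yields expected per-sample cost $T_{\cad}\cdot\tO(n\beta^{5}/\lambda^{t-1})$, and multiplying by $\lambda^t$ gives $T_{\cad}\cdot\tO(n\lambda\beta^{5})$, inside the target.

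For the potential assignment, each anchor produces at most $O(1/\alpha)$ clusters $\A_j$, and for each $I\in\A_j$ the call \textsc{ApproxDensity}$(I,\IC,d_{\A_j},c_{\A_j},\mu_{\kappa'})$ takes $T_{\cad}\cdot\tO(n^{2\alpha}m_{\chi'}/\mu_{\kappa'}(\A_j,\chi'))$ by Lemma~\ref{lm::approx_density}(1), since $d_{\A_j}=n^{-2\alpha}\mu_{\kappa'}(\A_j,\chi')$. The aggregate cost of one cluster is thus $|\A_j|$ times that, and the main obstacle is that the ratio $|\A_j|/\mu_{\kappa'}(\A_j,\chi')$ admits no obvious bound: using only $|\A_j|\le|\PC_{\kappa'}^{\chi'}|$ and $\mu_{\kappa'}(\A_j,\chi')\ge\mu_{\kappa'}(A,\chi')$, the charging is dominated by an unacceptable $n^2$ term.

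The resolution will be to establish a uniform minimum-mass floor $\mu_{\kappa'}(I,\chi')\ge\beta^{-C_\eps}$ on every nonzero entry of the input coloring, by induction over the $O_\eps(1)$ steps and levels of the construction: every new entry written by \textsc{AmendColoring} or \textsc{Adjust-$\un$} first passes through a soft transformation $T^\cc$ or $T^\un$ whose nonzero output is bounded below by $\beta^{-O(1/\eps)}$, and the subsequent $\ell_1$-renormalization by $\|u_I\|_1$ loses only a polylogarithmic factor (there are only $|Z_l|=O(\log n)$ radii $\zeta$). With this floor, $\mu_{\kappa'}(\A_j,\chi')\ge|\A_j|\cdot\beta^{-C_\eps}$, the $|\A_j|$ cancels, and one cluster costs only $T_{\cad}\cdot\tO(n^{2\alpha}\beta^{O(1)}m_{\chi'})$. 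Taking expectation over $\chi'$, using $\sum_{\chi'}m_{\chi'}^2\le\max_{\chi'}m_{\chi'}\cdot\sum_{\chi'}m_{\chi'}\le\tO(n^{2}\beta^{5}/\lambda^{t-1})$, and multiplying by $\lambda^t$, produces $T_{\cad}\cdot\tO_\eps(n\lambda\beta^{O(1)})$, after absorbing $n^{2\alpha}$ into $\beta^{O(1)}$ since $\alpha=\eps^{5/\eps}\ll\eps$. Together with the clustering bound this proves the lemma; the inductive min-mass argument is the only non-routine ingredient, the remaining cost aggregations being straightforward.
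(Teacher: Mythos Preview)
Your treatment of \textsc{ClusterAnchor} matches the paper's. The gap is in the \textsc{AssignPhiPotential} part: the claimed floor $\mu_{\kappa'}(I,\chi')\ge\beta^{-C_\eps}$ does not follow from your argument, because the step ``the subsequent $\ell_1$-renormalization by $\|u_I\|_1$ loses only a polylogarithmic factor'' is unjustified. The reasoning ``there are only $|Z_l|=O(\log n)$ radii~$\zeta$'' would need $\|\varphi_\kappa^\zeta(I,*)\|_1=O(1)$ for each~$\zeta$, but for colors $\chi\in[\lambda^t]$ there is no cap on $\varphi_\kappa^\zeta(I,\chi)=T^\cc(\mu_\kappa(I,\un)\tfrac{\beta^{l}}{\zeta^{2}}\phi_\kappa(\Lambda_\kappa^\zeta(I),\chi))$: one has $T^\cc(x)=x$ for $x\ge\beta^{-3}$, and $\phi_\kappa(\Lambda_\kappa^\zeta(I),\chi)$ can be as large as $O_\eps(n^{1+2\alpha}/\lambda^t)$ for a single color (all of that anchor's potential may land inside $\Lambda_\kappa^\zeta(I)$). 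Summing over $\chi$ and~$\zeta$, $\|u_I\|_1$ can be of order $n^{1+O(\alpha)}/\beta^l$, not polylog. With a normalization factor that large, any newly created entry $\mu_{\hat\kappa}(I,\chi)$ can be $n^{-\Omega(1)}$, the induction collapses, and your bound $|\A_j|/\mu_{\kappa'}(\A_j,\chi')\le\beta^{C_\eps}$ fails. The paper proves only a \emph{lower} bound on $\|u_I\|_1$ (Lemma~\ref{lm::min_color_mass}); no upper bound is available.

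The paper avoids the floor entirely by a density-cancellation argument. It bounds $\E\bigl[|\A|/d_\A\bigr]$ directly: write $|\A|/d_\A=\sum_{(I,\chi')}\tfrac{\mathbf 1[I\in\A]}{d_\A}$, observe that $\Pr[I\in\A]\le d_{I,j_{\max}}/(2n)$ (the anchor must land in $\NC_{c_{i,j_{\max}}}(I)$), and that on a good cost $d_\A\ge n^{-3\alpha}d_{I,j_{\max}}$, so each nonzero pair $(I,\chi')$ contributes $\le n^{3\alpha}/(2n)$. Summing over the $S_{\kappa'}=n\cdot\tO_\eps(\beta^{O(1)})$ nonzero pairs gives $\E\bigl[|\A|/d_\A\bigr]\le\beta^{O(1)}$, and the rest is as in your outline. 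The point is that the probability of being clustered and the per-interval density-estimation cost are inversely related and cancel; no pointwise mass floor is needed.
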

\begin{proof}
	Fix a step $t$ and level $l$. We claim the following for the
        clustering and potential $\phi$ computation (in Alg.~\ref{alg::matching.new2}):
	\begin{enumerate}
		\item We call \textsc{ClusterAnchor} (Alg.~\ref{alg::imp.subroutines}) at most
                  $\lambda^t$ times. This is immediate from the fact each anchor can be clustered at most once.
		\item \textsc{ClusterAnchor} takes
                  $T_{\cad} \cdot n \cdot \tO(\lambda^{1-t} \cdot
                  \beta^{O(1)})$ time. Indeed, notice that we call
                  each $\textsc{ClusterAnchor}$ with $|\RC| =
                  |\PC_{\kappa'}^{\chi'}| = n \cdot
                  \tO_\eps(\lambda^{1-t} \cdot \beta^{5})$ (by Lemma
                  \ref{lm::partition_size_imp}), and we perform a
                  single $\cad$ calculation per interval $I \in \RC$.
		\item We call \textsc{AssignPhiPotential} (Alg.~\ref{alg::imp_matching.process_cluster}) at most
                  $\tfrac{1}{\alpha} \cdot \lambda^t$ times. Indeed, observe each call to \textsc{ClusterAnchor} outputs $\leq 1/\alpha$ clusters, and we perform \textsc{AssignPhiPotential} once for each output cluster.

		\item \textsc{AssignPhiPotential} takes $n \cdot
                  \tO(\lambda^{1-t} \cdot \beta^{O(1)})$ time in
                  expectation. We argue this below.
	\end{enumerate}
	
To analyze the time complexity of \textsc{AssignPhiPotential}, we need
to bound time spent estimating densities. For this task, notice that
for each $I \in \A$, we estimate $\den_{\kappa'}(I,\chi')_{\hat{c}}$
using $\den_m = d_\A=n^{-2\alpha}\mu_{\kappa'}(\A,\chi')$. Hence the
total complexity $T_{\text{APP}}(\A) = T_{\cad} \cdot \tO(|\A|\cdot \tfrac{\mu_{\kappa'}(\IC,\chi')}{d_\A} + |\A|)$. The term $T_{\cad} \cdot \tO(|\A|)$ term is dominated by the complexity of \textsc{ClusterAnchor} from above, hence we only need to bound:

$$ T_{\cad} \cdot \tO\left(|\A|\cdot \tfrac{\mu_{\kappa'}(\IC,\chi')}{d_\A}\right) = T_{\cad} \cdot \tfrac{|\A|}{d_\A} \cdot n \cdot \tO\left(\lambda^{1-t} \cdot \beta^{O(1)}\right).$$

 It remains to bound $\E\left[\tfrac{|\A|}{d_\A}\right]$.
	Fix $\chi \in [\lambda^t]$ for which we sampled an anchor
        triplet $(A,\chi'',c_i)$. For $(I,\chi',c_i) \in \IC \times
        \nu \setminus \{ \bot \} \times E_{c}$, define
        $Z_{I,\chi',\hat{c}}$ as the (random) event that $(I,\chi')$ is clustered in some $\A \in \CC_\chi$ for cost $\hat{c} \in E_\basec$:
	
	$$Z_{I,\chi',\hat{c}} = %
	\left[A \in \NC_{c_{i,j_{\max}}}(I) \cap \PC_{\kappa'}^{\chi'} \wedge \chi'' = \chi' \wedge \hat{c} = c_i\right]$$
	
	Also define $p_{I,\chi',\hat{c}} = \Pr[Z_{I,\chi',\hat{c}}]$ and %
		$\tau_{I,\chi',\hat{c}} = \max_{\A \in \CC_\chi} \tfrac{1}{d_\A} \cdot\one{Z_{I,\chi',\hat{c}}}$.
		Notice that $p_{I,\chi',\hat{c}} \leq \tfrac{d_{I,j_{\max}}}{2n}$,                where
                $d_{I,j_{\max}}=\den_{\kappa'}(I,\chi',\IC)_{c_{i,j_{\max}}}$.
		Also note that %
                we have $d_\A %
                \geq
                n^{-\alpha}\cdot d_{I,j_{\max}}$ (as
                $\NC_{c_{i,j_{\max}}}(I)\subseteq \NC_{3c}%
                (A)$)\ns{check this is indeed the case and no need for $4c$},
                and hence $\tau_{I,\chi',\hat{c}} \leq \tfrac{n^{\alpha}}{d_{I,j_{\max}}}$.
		We can finally bound the expectation:
		
		 $$\E\left[\tfrac{|\A|}{d_\A}\right] \leq \sum_{\substack{(I, \chi') \in \IC \times \nu \setminus \{ \bot \} \\ \mu_{\kappa'}(I,\chi')>0}} \tfrac{p_{I,\chi',\hat{c}}}{d_\A} \leq \sum_{\substack{(I, \chi') \in \IC \times \nu \setminus \{ \bot \} \\ \mu_{\kappa'}(I,\chi')>0}} p_{I,\chi',\hat{c}} \cdot \tau_{I,\chi',\hat{c}} \leq \tfrac{n^{\alpha}}{2n} \sum_{\substack{(I, \chi') \in \IC \times \nu \setminus \{ \bot \} \\ \mu_{\kappa'}(I,\chi')>0}} \tfrac{d_{I,j_{\max}}}{d_{I,j_{\max}}} = \tfrac{n^{\alpha}}{2n} S_{\kappa'} = \beta^{O(1)}.$$
		
We conclude the expected time spent on \textsc{AssignPhiPotential} for each cluster is $T_{\cad} \cdot \beta^{O(1)} \cdot n \cdot \tO(\lambda^{1-t} \cdot \beta^{O(1)}) = T_{\cad} \cdot n \cdot \tO(\lambda^{1-t} \cdot \beta^{O(1)})$ as needed.

To summarize the time complexity of clustering, we have for a fixed step $t$ and level $l$:
\begin{itemize}
	\item Total time spent on all calls to \textsc{ClusterAnchor} is $\lambda^t \cdot T_{\cad} \cdot n \cdot \tO(\lambda^{1-t} \cdot \beta^{O(1)}) = T_{\cad} \cdot n \cdot \tO(\lambda \beta^{O(1)}) = T_{\cad} \cdot n^{1+O(\eps)}$.
	\item Expected total time spent on all calls to
          \textsc{AssignPhiPotential} is $\tfrac{1}{\alpha} \cdot
          \lambda^t \cdot T_{\cad} \cdot n \cdot \tO(\lambda^{1-t}
          \cdot \beta^{O(1)}) = T_{\cad} \cdot n \cdot \tO(\lambda
          \beta^{O(1)}) = n^{1+O(\eps)}$, and since each call is
          independent and takes at most $T_{\cad} \cdot \tO(n)$ time, we have the bound with high probability as well.
\end{itemize}

\end{proof}
Second, we show that computing $\theta$'s for the pivots is also time efficient.

\begin{lemma}
  The algorithm \textsc{AssignThetaPotential} (Alg.~\ref{alg::imp_matching.assign_theta})
        takes $T_{\cad} \cdot n \cdot \tilde{O}(\lambda \beta^{O(1)})$
        time in each level $l$ (whp).
\end{lemma}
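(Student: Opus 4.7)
The plan is to bound the total work over all pivot pairs by (i) controlling the number of pivots sampled, (ii) bounding the per-pivot cost by exploiting the sharp threshold on Line~\ref{alg::theta_treshold} together with the sparsity Lemma~\ref{lm::partition_size_imp}, and (iii) applying the runtime guarantee of \textsc{ApproxRelativeDensity} from Lemma~\ref{lm::approx_density}.

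First I would bound $|\VC|$. Pivots are obtained by $k=\tilde O(\beta^3)$ independent subsamples of $\IC\times (\nu\setminus\{\bot\})$ where pair $(V,\chi')$ is included with probability $\mu_{\kappa'}(V,\chi')/\beta^l$. Thus $\E[|\VC|]= k\cdot 2n/\beta^l$, and by Chernoff (using $k\gg\log n$) we have $|\VC|=\tilde O(n\beta^{3-l})$ w.h.p. This also gives, for each fixed pair $(V,\chi')$, that its multiplicity $m_\VC((V,\chi'))$ is close to its expectation whenever the expectation is $\Omega(\log n)$; smaller expectations only contribute a polylog overhead. In particular, $\sum_{(V,\chi')\in\mathrm{supp}(\VC)}\mu_{\kappa'}(V,\chi')\leq \tilde O(n)$ w.h.p.

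Second, I would bound per-pivot time. Each call iterates over $|Z_l|=O(\log n)$ radii and, in the else branch, over $|E_c|=O(1/\eta)=O_\eps(1)$ costs. In that branch we invoke \textsc{ApproxRelativeDensity} with $\relden_m=n^{3\alpha}\cdot n/(\beta^l\lambda^t)$ and $\SC=\Lambda^\zeta_\kappa(V)$, where the threshold guarantees $\mu_{\kappa'}(\SC,\chi')\le \lambda\beta^{6+l}$. Using Lemma~\ref{lm::approx_density}(2) together with Lemma~\ref{lm::partition_size_imp}'s bound $\mu_{\kappa'}(\IC,\chi')\le |\PC_{\kappa'}^{\chi'}|=\tilde O_\eps(n\beta^{5}/\lambda^{t-1})$, each call costs
\[
T_{RD}\le T_{\cad}\cdot \tilde O\!\left(\tfrac{1}{\relden_m}\cdot \tfrac{\mu_{\kappa'}(\IC,\chi')}{\mu_{\kappa'}(V,\chi')}+\tfrac{\mu_{\kappa'}(\SC,\chi')}{\mu_{\kappa'}(V,\chi')}\right)=T_{\cad}\cdot \tilde O_\eps\!\left(\tfrac{\lambda\beta^{O(1)+l}}{\mu_{\kappa'}(V,\chi')}\right).
\]
The fast branch (where the threshold is violated) is $O(\log n)$ per $\zeta$ and contributes a lower-order term.

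Third, I would sum the cost over $\mathrm{supp}(\VC)$. Each unique pivot $(V,\chi')$ appears with $\E[\1[(V,\chi')\in\mathrm{supp}(\VC)]]\le k\mu_{\kappa'}(V,\chi')/\beta^l$, so the expected aggregate cost is
\[
\sum_{(V,\chi')}\tfrac{k\mu_{\kappa'}(V,\chi')}{\beta^l}\cdot |Z_l|\cdot |E_c|\cdot T_{\cad}\cdot \tilde O_\eps\!\left(\tfrac{\lambda\beta^{O(1)+l}}{\mu_{\kappa'}(V,\chi')}\right)=T_{\cad}\cdot S_{\kappa'}\cdot \tilde O_\eps(\lambda\beta^{O(1)}),
\]
and since $S_{\kappa'}=n\cdot \tilde O_\eps(\beta^{O(1)})$, this is $T_{\cad}\cdot n\cdot \tilde O(\lambda\beta^{O(1)})$, as required. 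The main obstacle is upgrading this expectation bound to a high-probability bound: the per-pivot costs are random (through both the subsampling and the internal randomness of \textsc{ApproxRelativeDensity}), but each individual call is deterministically bounded by $T_{\cad}\cdot\tilde O(n)$, so after separating the sum into ``heavy'' pairs whose expected multiplicity is $\Omega(\log n)$ (where Chernoff applies) from the ``light'' pairs (which contribute a polylog overhead w.h.p.\ by a Markov/union-bound argument over the $\tilde O(n)$ possible pivots), we obtain the claimed bound w.h.p.
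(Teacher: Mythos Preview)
Your proposal is correct and follows essentially the same approach as the paper: bound the per-pivot cost of \textsc{ApproxRelativeDensity} via the Line~\ref{alg::theta_treshold} threshold and Lemma~\ref{lm::partition_size_imp} to get $T_{\cad}\cdot\tilde O_\eps(\lambda\beta^{l+O(1)}/\mu_{\kappa'}(V,\chi'))$, then sum against the sampling probability $k\mu_{\kappa'}(V,\chi')/\beta^l$ so the $\mu_{\kappa'}(V,\chi')$ factors cancel, leaving $T_{\cad}\cdot S_{\kappa'}\cdot\tilde O_\eps(\lambda\beta^{O(1)})$. The paper's whp upgrade is terser (``sum of independent r.v.'s each bounded by $O(n)$'') while you spell out a heavy/light split; both are at the same level of rigor and rely implicitly on concentration of the internal subsampling in \textsc{ApproxDensity}.
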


\begin{proof}
	Fix step $t$ and level $l$. We notice that the main runtime
        term to estimate is the time spend on approximating the
        relative densities. For this, we note:
	\begin{enumerate}
	 \item  We only estimate densities for pairs $(V,\chi')$ when $\mu_{\kappa'}(\Lambda_\kappa^\zeta(V),\chi') \leq \lambda \cdot \beta^{l + O(1)}$ (from Line \ref{alg::theta_treshold} threshold in \textsc{AssignThetaPotential}).
		\item We sample each $(I',\chi') \in \IC \times \nu$, $k$ times, each with probability $p_{I',\chi'} := \beta^{-l} \cdot \mu_{\kappa'}(I',\chi')$, and spend
	 $\tau_{I',\chi'} :=
                  \tO\left(\tfrac{1}{\relden_m}\cdot\tfrac{\mu_{\kappa'}(\IC,\chi')}{\mu_{\kappa'}(I',\chi')}
                  +\tfrac{\mu_{\kappa'}(\Lambda_\kappa^\zeta(V),\chi')}{\mu_{\kappa'}(I',\chi')}\right)$
                  time approximating such pair (by Lemma
                  \ref{lm::approx_density}), where $\relden_m = n^{3\alpha}\cdot\tfrac{n}{\lambda^t \beta^l}$.
	\end{enumerate}

	 Combining the above, we get $\tau_{I',\chi'} = T_{\cad} \cdot \tO\left(\tfrac{\lambda^t \beta^l}{n} \cdot \tfrac{n \cdot \lambda \cdot \beta^{O(1)}}{\lambda^t} \cdot \tfrac{1}{\mu_{\kappa'}(I',\chi') }+ \tfrac{\lambda \beta^{l + O(1)}}{{\mu_{\kappa'}(I',\chi')}} \right) = T_{\cad} \cdot \tO\left(\tfrac{\lambda \beta^{l + O(1)}}{\mu_{\kappa'}(I',\chi') } \right)$.

	 Hence, the expected run-time of all pair approximation is
	 
	 \begin{align*}
\E_\VC\left[\text{Density approximation time}\right] &= \E_\VC\left[\sum_{(I',\chi') \in \VC} \tau_{I',\chi'}\right] \\
&\leq \sum_{(I',\chi') \in \IC \times \nu: p_{I',\chi'}>0} k \cdot p_{I',\chi'} \cdot \tau_{I',\chi'} \\
&= k \cdot  \sum_{(I',\chi') \in \IC \times \nu: p_{I',\chi'}>0} \beta^{-l} \cdot\mu_{\kappa'}(I',\chi') \cdot \tO\left(\tfrac{\lambda \beta^{l + O(1)}}{\mu_{\kappa'}(I',\chi') } \right)	\\
&= k \cdot  \sum_{(I',\chi') \in \IC \times \nu: p_{I',\chi'}>0} \tO\left(\lambda \cdot \beta^{O(1)}\right) \\
&= k \cdot S_{\kappa'} \cdot \tO\left(\beta^{O(1)}\right) = \tO\left(n \cdot \lambda \cdot \beta^{O(1)}\right).
 \end{align*}

Since the expectation is over sum of independent r.v., bounded by $O(n)$, we also have the bound whp.

\end{proof}

Last, we show that assigning colors using $\phi, \theta$ scores is efficient:

\begin{lemma}
	Assigning colors in \textsc{AmendColoring} (Alg.~\ref{alg::matching.color_new}) takes $\tO_\eps(n \cdot \beta^{O(1)})$ time in each level $l$.
\end{lemma}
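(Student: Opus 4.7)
The plan is to bound the runtime of each of the four computational steps in Alg.~\ref{alg::matching.color_new} separately, matching them to the data structures from Theorems~\ref{thm::ds_sampling}, \ref{thm::ds_sum}, and \ref{thm::ds_matrix}. Recall that $|Z_l| = O(\log n)$, so it suffices to handle a single $\zeta \in Z_l$ with only a logarithmic blowup.

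\emph{Computing $\Lambda$-balls.} Preprocess the vector $\mu_\kappa(\cdot,\un)$ (restricted to $\XC$ and to $\YC$ separately) using Theorem~\ref{thm::ds_sum}; the support of $\mu_\kappa(\cdot,\un)$ is trivially at most $2n$, so preprocessing costs $\tO(n)$. For each $I \in \IC$ and $\zeta \in Z_l$, two type-(2) queries (one on each side of $I$) return the boundaries of $\Lambda_\kappa^\zeta(I)$ in $\tO(1)$. Total: $\tO(n)$.

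\emph{Computing $\varphi_\kappa^\zeta(I,\un)$.} By Eqn.~\eqref{eq::varphi_un}, we need sums of the form $\theta_\kappa^\zeta(\Lambda_\kappa^\zeta(I),\nu\setminus\{\bot\})$. The sampled pivot multiset $\VC$ has total size $|\VC| = \tO_\eps(n \beta^{O(1)})$ whp (by Chernoff, since expected size is $k \cdot \sum_{(V,\chi')} \mu_{\kappa'}(V,\chi')/\beta^l \le k \cdot 2n/\beta^l$ and $k = \tO(\beta^3)$). Collapsing $\theta$ along the color axis gives a 1D array whose support is $\tO_\eps(n\beta^{O(1)})$; one ds\_sum instance per axis ($\XC$ or $\YC$) allows each interval-sum query in $\tO(1)$. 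Applying $T^\un$ pointwise is free. Total: $\tO_\eps(n\beta^{O(1)})$.

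\emph{Computing $\varphi_\kappa^\zeta(I,*_{\neq \un})$.} This is the main obstacle, since naively we could have $n \cdot \lambda^t$ non-zero color coordinates. The saving comes from the fact that $T^\cc = T^{\Theta(1/\eps)}_{1/\beta,1/\beta^3}$ zeroes out entries below $\beta^{-4}$; thus an $(I,\chi)$ pair contributes to $\varphi_\kappa^\zeta(I,*_{\neq\un})$ only when $\mu_\kappa(I,\un)\cdot\tfrac{\beta^{2l}}{\zeta^2}\cdot \phi_\kappa(\Lambda_\kappa^\zeta(I),\chi)/\beta^l \ge \beta^{-4}$. This is exactly the setting of Theorem~\ref{thm::ds_matrix}: set the matrix $A_{\chi,J} = \phi_\kappa(J,\chi)$, let the intervals be $[s_I,t_I]$ corresponding to $\Lambda_\kappa^\zeta(I)$ (which satisfy the required monotonicity, since sliding $I$ to the right shifts the $\un$-mass-defined $\Lambda$-ball to the right on each side), and set threshold $\gamma = \beta^{l-O(1)}/\zeta$. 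The runtime is $\tO(|S|+n+\gamma^{-1}\sum_{\chi,I} \sum_{J\in \Lambda^\zeta(I)} A_{\chi,J})$. The inner sum is bounded by $O(\zeta/\beta^l)\cdot \sum_{\chi,J} \phi_\kappa(J,\chi)$, and the total $\phi$-mass over all colors and intervals is $\sum_\chi \phi_\kappa(\IC,\chi) = O_\eps(n^{2\alpha} \cdot n) = \tO_\eps(n\beta^{O(1)})$ (as in Eqn.~\eqref{eq::sum_all_phi}); the support $|S|$ of $\phi$ is bounded similarly since each anchor contributes $\tO(n\beta^{O(1)}/\lambda^t)$ non-zero entries across its at most $1/\alpha$ clusters and there are $\lambda^t$ anchors. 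Hence Step 3 runs in $\tO_\eps(n\beta^{O(1)})$.

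\emph{Computing $\mu_{\hkappa}$.} Once the previous vectors are computed in sparse representation, summing and normalizing per interval takes time proportional to the total non-zero support of $\mu_\kappa(I,*_{\neq\un}) + \sum_{\zeta} \varphi_\kappa^\zeta(I,*)$, which by the preceding bounds is $\tO_\eps(n\beta^{O(1)})$. Summing the four steps and multiplying by $|Z_l| = O(\log n)$ gives the claimed $\tO_\eps(n\beta^{O(1)})$ bound per level.
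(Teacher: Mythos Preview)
Your approach mirrors the paper's --- same four steps, same data structures --- but Step~3 contains a genuine gap. The assertion that $\sum_{\chi,I}\sum_{J\in\Lambda_\kappa^\zeta(I)}\phi_\kappa(J,\chi) \le O(\zeta/\beta^l)\sum_{\chi,J}\phi_\kappa(J,\chi)$ would require each fixed $J$ to lie in at most $O(\zeta/\beta^l)$ of the balls $\Lambda_\kappa^\zeta(I)$. This is false: $\Lambda$-balls are defined by $\un$-\emph{mass}, not by interval count, so when many intervals $I$ carry tiny $\mu_\kappa(I,\un)$, a single $J$ can sit inside $\Omega(n)$ such balls. What \emph{is} true is the mass bound $\sum_{I:\,J\in\Lambda_\kappa^\zeta(I)}\mu_\kappa(I,\un)\le 2\zeta$, and this is precisely what Eqn.~\eqref{eq::sum_all_phi} exploits to get
\[
L_\phi \;\triangleq\; \sum_{I,\zeta}\tfrac{\beta^{2l}}{\zeta^2}\,\tfrac{\mu_\kappa(I,\un)\,\phi_\kappa(\Lambda_\kappa^\zeta(I),\nu)}{\beta^l}\;=\;O_\eps(n\beta).
\]
The paper therefore invokes Theorem~\ref{thm::ds_matrix} with threshold $\gamma=\beta^{-4}$ taken on the full pre-$T^\cc$ argument (which already carries the $\mu_\kappa(I,\un)$ factor), so the relevant runtime term is $L_\phi/\gamma = \tO_\eps(n\beta^{O(1)})$. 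Your threshold $\gamma=\beta^{l-O(1)}/\zeta$ on the unweighted $\phi$-sum, combined with the incorrect double-counting, does not close the bound.
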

\begin{proof}
	Let $S_\un = \norm{\mu_\kappa(*,\un)}_0$, $S_\theta = \norm{\theta_\kappa(*,*)}_0$, $S_\phi = \norm{\phi_\kappa(*,*)}_0$. We note that $S_\un \leq 2n$ and $S_\theta, S_\phi \leq S_{\kappa'} = n \cdot \tO_\eps(\beta^{O(1)})$.
	
	First, from Theorem \ref{thm::ds_sum}, the pre-processing time
        of the data structures is $\tO\left(S_\un +
        \tO(S_\theta)\right) = \tO_\eps(n \cdot \beta^{O(1)})$, and
        since we query each data structure $O(n)$ times (once for each $I \in \IC$), the total query time is $\tO(1)\cdot O(n) = \tO(n)$.
	
	Second, focusing on calculating $\varphi_\kappa^\zeta(*,*_{\neq \un})$, define $L_\phi = \sum_{I \in \IC} \sum_{\zeta \in Z_l}
        \tfrac{\beta^{2l}}{\max\{\zeta^2,1\}} \tfrac{ \mu_\kappa(I,\un) \cdot
          \phi_\kappa(\Lambda_\kappa^\zeta(I),\nu)}{\beta^l}$, and recall from Eqn. (\ref{eq::sum_all_phi}), we have $L_\phi = \lambda^t \cdot O_\eps(n^{2\alpha} \cdot \tfrac{n}{\lambda^t}) =  O_\eps(n \cdot \beta)$. 
        We invoke Theorem \ref{thm::ds_matrix} using $\gamma = \beta^{-4}$, to obtain the time spent on calculating $\varphi_\kappa^\zeta(*,*_{\neq \un})$ is $\tO\left(S_\phi + n + \tfrac{L_\phi}{\beta^{-4}}\right) = \tO_\eps(n \cdot \beta^{O(1)})$.
        
        Last, each update rule of $\mu_\kappa(I)$ using $\varphi$s takes $\tO(1)$ time, which sums up to $\tO(n)$ over all intervals.
\end{proof}

Using the above Lemmas, we prove our main complexity guarantee.
\begin{proof}[Proof of Theorem~\ref{thm::matching_guarantee}, item \ref{it:matchingRuntime}]
We conclude that the runtime of \textsc{MatchIntervals}
(Alg.~\ref{alg::matching.new2}), per one step and level, is $\tilde
O(n^{1+O(\eps)}\cdot T_\cad)$. There's is a constant number of
levels, so the same bound holds for each step. 
Since the algorithm converges in constant steps (as was shown above),
and the branching factor is $O(\log n)$,
the conclusion follows.
\end{proof}

\section{Alignment Distance Algorithm for $\cad$: Proof of Theorem~\ref{thm:cadGuarantees}}
\label{sec:CADalgo}

\newcommand{\constg}[0]{\tau}

In this section we show the algorithm for computing the $\cad_w$
distance, in particular proving Theorem~\ref{thm:cadGuarantees}. To
briefly recall the theorem, we want an algorithm for computing a
metric $\cad_w(I,J)$ on $w$-length strings $I,J$ given an oracle to a
metric $\distG_{w/\wbase}$ on $w/\wbase$-length strings, running in
time $\poly(\wbase)$. Note that we can assume that $w\ge \wbase^4$, as
otherwise we can afford to set $\cad(I,J)=\ed(I,J)$ and compute it
directly. We manage to ensure it only for ``single scale''
metrics $\cad_{w,c}$, designed for distances in the range $\approx
[c,\wbase c]$.

First, we reduce $\cad_{w,c}$, where $c\in \Sw$, to a set of fewer
$\cad$ functions, indexed $\cad_{w,t}$, where
$t\in\{\wbase,\wbase^{2},\ldots, w/\wbase^2\}$. In particular,
for given $c$, we set $t$ by rounding down $c/\constg$ to an integer
power of $\wbase$, for some (large) constant
$\constg$ (to be fixed later). When $c/\tau<\wbase$, we set $t=\wbase$.

We note that the algorithm for the ``largest distance regime'', when
$t=w/\wbase^2$, will be different from the rest, when $t\le
w/\wbase^3$. We describe each of the two algorithms separately
starting with the ``large'' distance regime, which is easier.

Henceforth, for simplicity, we set $w'=w/\wbase$, and often refer to
$\distG_{w'}$ as $\distG$. Also, for strings $I,J$ of length $w$, we
let $I_{w'},J_{w'}$ be the strings starting at the same position but
of length $w'$ only (e.g., for $I=X_{i,w}$, $I_{w'}=X_{i,w'}$).
Finally, we cap all the output distances at $w$; note that cannot
break any of the guarantees (e.g., triangle inequality, or the fact
that each $\cad_{w,c}\ge \ed_w$ as $\ed_w\le w$).

Below, to simplify lots of notation, we will use notation $k\in[m]$ to
mean $k\in\{0,1,2,m-1\}$ (which is different from how it was used in
the past sections). Also, the notation $[a,b)$ means all integers
  $\{a,...b-1\}$, whereas $[a,b]=\{a,...b\}$.

\subsection{Large distance algorithm: $\cad_{w,w/\wbase}$ metric}

For simplicity of notation, we set $\Gamma=\cad_{w,t}$ when
$t=w'=w/\wbase\ge \wbase$. 

The algorithm proceeds as follows. For
$m=\wbase^4$, the vertices are $V_{p,q}$ where $p \in [-m,m]$ and
$q\in [-m,m]$. For each node $V_{p,q}$, we add the following edges to
$H$:
\begin{itemize}
	\item $V_{p,q} \to V_{p + 1,q + 1}$ of cost $\tfrac{w/m}{w'}\distG_{w'}(I_{w'} + \tfrac{w}{m}p,J_{w'} + \tfrac{w}{m}q)$ (diagonal edges).
	\item $V_{p,q} \rightarrow V_{p,q+1}$ and $V_{p,q} \rightarrow
          V_{p+1,q}$ of cost $w/m$ (gap edges).
\end{itemize}

Now we run the shortest path from vertex $V_{-m,-m}$ to $V_{m,m}$. Define
$\Gamma$ to be 4 times that shortest path value. It is immediate
to note that this can be computed in $m^{O(1)}$ time.

\subsubsection{Lower and upper bounds on $\Gamma$}

First, we prove the lower bound. 

\begin{lemma}
 \label{lem:lbGamma}
$\Gamma(X_{i,w},Y_{j,w})\ge \ed_w(X_{i,w},Y_{j,w})$ for all $i,j$.
\end{lemma}

\begin{proof}
  Note that any path in $H$ corresponds to an alignment $A:[-m,m)\to
  [-m,m)\cup\{\bot\}$ as follows. For each $p\in[-m,m)$, consider the edge
  which increases $p$: if it's the diagonal edge $V_{p,q}\to
  V_{p+1,q+1}$ then $A(p)=q$; otherwise set $A(p)=\bot$.
  The number of all
  gap edges is then $2\cdot |\{p:A(p)=\bot\}|$. Hence, recalling the
  convention that $\ed_{w'}(X_i,Y_{f(\bot)})=w'$ for any function
  $f:\N\to \N$:
  $$
  \Gamma(X_{i,w},Y_{j,w})\ge
  \tfrac{w/m}{w'}\sum_{k\in [-m,m)}\ed_{w'}(X_{i+kw/m,w'},Y_{j+w/m\cdot
    A(k),w'})
    +2\tfrac{w}{m}\cdot|\{p:A(p)=\bot\}|.
  $$
  By averaging, there must exist some
  $\delta\in\left[\tfrac{w'}{w/m}\right]$ such that:
  $$
    \Gamma(X_{i,w},Y_{j,w})
    \ge 
  \sum_{k\in [-\wbase,\wbase)}\ed_{w'}(X_{i+w/m\cdot \delta+kw',w'},Y_{j+w/m\cdot
    A(\delta+km/\wbase),w'})
  +2\tfrac{w}{m}\cdot|\{p:A(p)=\bot\}|.
  $$
  We now lower bound this by the $\ed(X_{i+w/m\cdot \delta,w},Y_{j+w/m\cdot A(\delta+km/\wbase),w'})$. In particular, note that we can build an LCS
  between these $w$-length strings from the LCS of the $w'$-length
  strings $X_{i+w/m\cdot \delta+kw',w'}$ and $Y_{j+w/m\cdot
    A(\delta+km/\wbase),w'}$. Now, the latter strings $Y_{j+w/m\cdot
    A(\delta+km/\wbase),w'}$ may overlap, and hence some
  characters double-counted. To account for this
  overlap, let $k_1<k_2<\ldots<k_z$ be the indeces $k$ where $A(\delta+km/\wbase)\neq \bot$. Then, we can lower-bound the sum of $\ed$ of
  $w'$-strings by $\ed$ between the $2w$-length strings, minus the
  overlap as follows, where $i'\triangleq i+w/m\cdot \delta-w$ and $j'=j+w/m\cdot
    A(\delta-m)$ (if it exists, and $j'=j+i'-i$ otherwise):
  \begin{align*}
    \Gamma(X_{i,w},Y_{j,w})
  &\ge
  \ed_{2w}(X_{i',2w},Y_{j',2w})
  +2\tfrac{w}{m}\cdot|\{p:A(p)=\bot\}|
  \\
  &\quad
   -\sum_{l=1..z-1} \tfrac{w}{m}\cdot \left|A(\delta+k_{l+1}m/\wbase)-A(\delta+k_lm/\wbase)-\tfrac{w'}{w/m}\right|
  \end{align*}

Now we note that the subtracted overlap is upper bounded by the
(scaled) number of $p$ s.t. $A(p)=\bot$ (function $A$ must be
injective on the rest of $p$'s). Furthermore,
$\ed_w(X_{i,w},Y_{j,w})\le
\ed_{2w}(X_{i',2w},Y_{j',2w})+\tfrac{w}{m}\cdot|\{p:A(p)=\bot\}|$ and
hence: \aanote{double-check}
  \begin{align*}
\Gamma(X_{i,w},Y_{j,w})
&\ge
  \ed_{2w}(X_{i+w/m\cdot \delta-w,2w},Y_{j+w/m\cdot
    A(i+w/m\cdot \delta-w),2w})
  +\tfrac{w}{m}\cdot|\{p:A(p)=\bot\}|,
  \\
&\ge
  \ed_w(X_{i,w},Y_{j,w}).
  \end{align*}

\end{proof}

Now we prove the upper bound. Below, constant $C>1$ is from
Theorem~\ref{thm:cadGuarantees} hypothesis:

\begin{lemma}
  \label{lem:gammaUpperBound}
  For any fixed $i$ and any alignment $\pi$ with $\pi(i)\neq \bot$, we have
  $$\Gamma(X_{i,w},Y_{{\pi(i),w}})\le
  O\left(\tfrac{w}{\wbase^2}+\sum_{l\in[w]}\tfrac{1}{w'}\distG(X_{i+l,w'},Y_{\pi(i+l),w'})
  +
  C\cdot\sum_{l\in[w]} |\piComp(i+l+1)-\piComp(i+l)-1|\right).
  $$
\end{lemma}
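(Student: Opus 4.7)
The plan is to exhibit an explicit path in $H$ from $V_{0,0}$ to $V_{m,m}$ whose cost matches the desired bound, using a random-offset averaging argument driven by $\pi$. Let $j := \pi(i)$. I sample $\delta\in[0,w/m)$ uniformly and, writing $k_p := \delta+(w/m)p$, define $A^\delta(p) := \lfloor(\pi(i+k_p)-j)/(w/m)\rfloor$ whenever $\pi(i+k_p)\neq\bot$ and lies in $[j,j+w)$, and $A^\delta(p):=\bot$ otherwise. Strict monotonicity of $\pi$ on its support makes $A^\delta$ non-decreasing on its support, so after setting repeated values to $\bot$ (to eliminate collisions) I realize $A^\delta$ as a valid path: at each $p$ with $A^\delta(p)\neq\bot$, prepend right-gap edges to advance $q$ to $A^\delta(p)$ and then take a diagonal to $V_{p+1,A^\delta(p)+1}$; at each $p$ with $A^\delta(p)=\bot$, take a down-gap; and append right-gaps at the end to reach column $m$. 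Writing $S := \{p : A^\delta(p)\neq\bot\}$ and $d := |S|$, the path uses exactly $d$ diagonal edges and $2(m-d)$ gap edges.

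For the diagonal contribution, two applications of the triangle inequality for the metric $\distG_{w'}$ together with property~(2) telescoped along unit shifts give
\[
\distG(X_{i+(w/m)p,w'},Y_{j+(w/m)A^\delta(p),w'}) \;\le\; \distG(X_{i+k_p,w'},Y_{\pi(i+k_p),w'}) + C\delta + C(w/m),
\]
where the last two terms bound the $X$-shift (by $\delta$) and the $Y$-rounding (by $w/m$). Multiplying by $\tfrac{w/m}{w'}=\tfrac{1}{\wbase}$, summing over the $d\le m$ diagonals, and using $m\cdot(\delta+w/m)\le 2w$, the diagonal cost is at most $\tfrac{1}{\wbase}\sum_{p\in S}\distG(X_{i+k_p,w'},Y_{\pi(i+k_p),w'})+O(Cw/\wbase)$. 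Taking expectation over $\delta$ collapses the $m$ sampled terms into a uniform average over $[0,w)$:
\[
\mathbb{E}_\delta\!\left[\tfrac{1}{\wbase}\sum_{p\in S}\distG(X_{i+k_p,w'},Y_{\pi(i+k_p),w'})\right] \;=\; \tfrac{1}{w'}\sum_{k\in[w]}\distG(X_{i+k,w'},Y_{\pi(i+k),w'}),
\]
which matches the main RHS term exactly.

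The remaining task is to bound the gap cost $2(m-d)(w/m)$ against $O(w/\wbase+\sum_k\tfrac{1}{w'}\distG)$. I partition the $p\notin S$ into three buckets. For indices with $\pi(i+k_p)=\bot$, each contributes $w'$ to $\sum_k\distG$ via $\distG(X_{i+k},Y_\bot)=w'$, so in expectation these $p$'s contribute gap cost at most $2\sum_k\mathbb{1}[\pi(i+k)=\bot]\le 2\sum_k\tfrac{1}{w'}\distG$. Collisions in $A^\delta$ correspond to $\pi$ advancing by less than $w/m$ over $w/m$ positions of $i$, which by strict monotonicity forces at least that many $\bot$-indices in the intervening range, so collision cost charges into the first bucket. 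Out-of-range $p$ (those with $\pi(i+k_p)-j\notin[0,w)$) form a single monotone tail of $\pi$; I handle them by clipping $A^\delta$ to $[0,m)$ and absorbing the extra $Y$-shift as an additional $O(Cw/\wbase)$ term in the diagonal triangle-inequality step. Combining, the expected total path cost is $O(w/\wbase+\sum_k\tfrac{1}{w'}\distG)$, so some $\delta$ achieves it; multiplying by $4$ and adding $2w/\wbase$ preserves the bound, yielding the claim. The main obstacle is the careful gap-cost accounting, especially the out-of-range tail, since those indices have no directly-corresponding vertex in the $m\times m$ grid $H$; the clipping device controls this at the cost of a constant factor in the diagonal shift-cost, while the rest of the argument is a routine triangle-inequality + averaging calculation.
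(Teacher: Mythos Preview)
Your approach is close in spirit to the paper's—construct an alignment from $\pi$, bound diagonal cost by averaging over a shift—but the gap-cost accounting has a genuine hole.

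The collision charging does not go through as written. You assert that a collision (both $A^\delta(p),A^\delta(p+1)$ defined and equal) ``forces at least that many $\bot$-indices'' in $(k_p,k_{p+1})$. Strict monotonicity only gives that if $\pi$ advances by $a<w/m$ over this interval then there are $\ge w/m-a$ unmatched positions there. But $a$ can be as large as $w/m-1$: whenever $\pi(i+k_p)$ sits at the left edge of its bucket, the next sampled value $\pi(i+k_{p+1})=\pi(i+k_p)+(w/m-1)$ lands in the \emph{same} bucket, producing a collision witnessed by a single $\bot$-index. That single $\bot$ contributes $1$ to $\sum_k\tfrac{1}{w'}\distG$, while the collision costs $2(w/m)=2w'/\wbase$ in gap edges. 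So the charge is off by a factor $w/m$, which is unbounded in the large-distance regime $w'\ge\wbase$. Summing, your argument yields collision gap cost $\le 2(w/m)\cdot N_\bot$ rather than the needed $O(N_\bot)$. The out-of-range clipping has the same defect: clipping all overshooting $p$'s to $m-1$ collapses them into collisions at the last bucket, and the overshoot $\pi(i+k_p)-(j+w)$ that you absorb as ``extra $Y$-shift'' is governed by $y$-side deletions, which are not reflected in $\sum_k\tfrac{1}{w'}\distG(X_{i+k},Y_{\pi(i+k)})$ at all.

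The paper avoids both issues by replacing $\pi$ with $\piComp$ (the ``next defined value of $\pi$'') from the start. Since $\piComp$ is everywhere defined, there are no $\bot$-sampled positions; \emph{all} gap cost is collision cost, and the number of collisions is bounded not by counting nearby $\bot$'s but by the total variation $\sum_k|\piComp(i+(k{+}1)w/m)-\piComp(i+kw/m)-w/m|$ divided by $w/m$, which is then tied to $k_\pi$. This gives gap cost $O(k_\pi)$ with the right scaling in one stroke. The averaging over position shifts (your random $\delta$) is done afterward, deterministically, once this bound is already in place. If you want to salvage your route, switching from $\pi$ to $\piComp$ before defining $A^\delta$ is the missing ingredient.
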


In the above, in the last term, recall from the preliminaries that
$\piComp(i')$ is the minimum $\pi(j)$, $j\ge i'$, which is defined
($\neq \bot$). This last term should be thought of the error stemming
from $\pi$ skipping large chunks of $y$ (which, overall, can be
charged to the edit distance between $x$ and $y$).

\begin{proof}
  Defining
  $$u_\pi\triangleq\sum_{l\in[w]}\tfrac{1}{w'}\distG(X_{i+l,w'},Y_{\pi(i+l),w'})+
  C\sum_{k\in[-m,m)} |\piComp(i+(k+1)w/m)-\piComp(i+k\cdot w/m)-w/m|,$$
  we note that it is enough to prove that $\Gamma(X_{i,w},Y_{{\pi(i),w}})\le
  O(w/\wbase^2+u_\pi)$ as, by simple triangle inequality, $|\piComp(i+(k+1)w/m)-\piComp(i+k\cdot
  w/m)-w/m|\le \sum_{l\in[w/m]}|\piComp(i+k\cdot w/m+l+1)-\piComp(i+k\cdot
  w/m+l)-1|$.

  Now, let $j=\pi(i)$.  For $k\in[-m,m)$, define
  $A'(k)=\left\lfloor\tfrac{\piComp(i+kw/m)-j}{w/m}\right\rfloor$. If
  $A'(i)=A'(i-1)$ or $A'(i)\ge m$ then set $A(i)=\bot$ and
  $A(i)=A'(i)$ otherwise  (i.e., only the first copy of a sequence of
  equal numbers remains). 

  We claim that
  $$
  |\{k:A(k)=\bot\}|\le O(1+\sum_{k\in [-m,m)}
  \tfrac{|\piComp(i+(k+1)w/m)-w/m-\piComp(i+kw/m)|}{w/m})\le O(1+\tfrac{u_\pi}{w/m}),$$
  since any set of real numbers $\{r_1,\ldots r_m\}$ (for us
  $r_k=\tfrac{\piComp(i+kw/m)}{w/m}$), when rounded down, will have a
  number of duplicates bounded by $1+\sum_{k=1}^{m}
  |r_{k+1}-r_k-1|$. Furthermore, the number of distinct $A'(i)\ge 
  m$ is upper bounded by (using triangle inequality):
  $$
  1+\tfrac{m}{w}|\piComp(i+w-m)-\piComp(i)-w|
  \le
  1+\sum_{k\in [-m,m)}
  \tfrac{|\piComp(i+(k+1)w/m)-w/m-\piComp(i+kw/m)|}{w/m}.
  $$

  Now we have that for $\Gamma=\Gamma(X_{i,w},Y_{{\pi(i),w}})$:
  \begin{align*}
    \Gamma
    &\le
    4\sum_{k\in
      [-m,m), A(k)\neq \bot}\tfrac{w/m}{w'}\distG(X_{i+kw/m,w'},Y_{j+w/m\cdot
      A(k),w'}) + \tfrac{w}{m}\cdot 2|\{k:A(k)=\bot\}|
    \\
    &\le
    4\sum_{k\in
      [-m,m)}\tfrac{w/m}{w'}(\distG(X_{i+kw/m,w'},Y_{\piComp(i+kw/m),w'})+w/m) + \tfrac{w}{m}\cdot O(1+u_\pi)
    \\
    &=
    O(u_\pi)+
    4\sum_{k\in
      [-m,m)}\tfrac{w/m}{w'}\distG(X_{i+kw/m,w'},Y_{\piComp(i+kw/m),w'}),
  \end{align*}
  using the fact that we set $m\ge \wbase$.

  Now note that, for any $l\in[w/m]$, we have that:
  $$
  \distG(X_{i+kw/m,w'},Y_{\piComp(i+kw/m),w'})
  \le
  \distG(X_{i+l+kw/m,w'},Y_{\piComp(i+l+kw/m),w'})+2Cl+C|\piComp(i+(k+1)w/m)-\piComp(i+kw/m)|.
  $$
  Hence:
  \begin{align*}
    \Gamma
    &\le
    O(u_\pi)+
    4\sum_{k\in
      [-m,m)}\tfrac{1}{w/m}\sum_{l\in[w/m]}\tfrac{w/m}{w'}\left(\distG(X_{i+kw/m,w'},Y_{\piComp(i+kw/m),w'})\right)
    \\
    &\le
    O(u_\pi)+
    4\sum_{k\in
      [-m,m)}\tfrac{1}{w'}\sum_{l\in[w/m]}
    \\
   &\qquad\qquad
    \left(\distG(X_{i+l+kw/m,w'},Y_{\piComp(i+l+kw/m),w'})+2Cl+ C|\piComp(i+(k+1)w/m)-\piComp(i+kw/m)|\right)
    \\
    \vspace{2mm}
    &\le
    O(u_\pi)+
    4\sum_{l\in
      [w]}\tfrac{1}{w'}\distG(X_{i+l,w'},Y_{\piComp(i+l),w'})+\tfrac{8C}{w'}m\tfrac{w^2}{m^2}
    \\
   &\qquad\qquad
   +\tfrac{4C}{w'}\sum_{k\in[-m,m)}\tfrac{w}{m}\left(\tfrac{w}{m}+|\piComp(i+(k+1)w/m)-\piComp(i+kw/m)-\tfrac{w}{m}|\right)
    \\
        \vspace{2mm}
    &\le
    O(u_\pi)+
    4\sum_{l\in
      [w]}\tfrac{1}{w'}\distG(X_{i+l,w'},Y_{\piComp(i+l),w'})+8C\tfrac{w\wbase}{
    m}
    +4C\tfrac{w\wbase}{m}+4\tfrac{w/m}{w'}\cdot
    u_\pi
    \\
    &\le
    O(u_\pi+w/\wbase^2),
  \end{align*}
  where we use $O(C w\wbase/m)\le w/\wbase^2$ by our choice of
  $m=\wbase^4$.
\end{proof}

\subsection{Not large distance regime: $\cad_{w,t}$ for $t\le w/\wbase^3$}

To compute the $\cad_{w,t}$ distance for smaller $t$, we use a
slightly different alignment representation called {\em block
  alignment}, which maps ``grid blocks'' into $x$-axis and $y$-axis
``shifts'' (loosely speaking). We think of alignment of square
blocks of size $w'$ on $w$ by $w$ grid, defined by
coordinates $p_i,q_i$ for every square block $i \in
[\wbase]$ in the alignment. The first coordinate represents
accumulated horizontal shifts, corresponding to inter-block
insertions, and the second one represents accumulated vertical shifts,
corresponding to inter-block deletions. Such representation is
somewhat easier to handle in our case, and will be formally defined
below.

Define $\theta = t/w$.  Let $T=\wbase^3$.  To ease exposition, we use
$|\cdot|_1$ notation for the $\ell_1$ norm.
  
  \renewcommand{\normo}[1]{{|#1|_1}}

\begin{definition}[$\cad$]\label{def::cad_new}
Let $\A$ be the set of functions $A=(A_x,A_y)$ where $A_x,A_y: [-\wbase,\wbase]
\rightarrow [T]$ are non-decreasing functions with $A_x[-\wbase]=A_y[-\wbase]=0$ and
$A_x[\wbase]=A_y[\wbase]$. For $I,J\in \IC_w$, define the distance
$\cad_{w,t}(I,J)=\min\{\cad_{w,t}^*(I,J), w'\theta T\}$, where
$\cad_{w,t}^*(I,J)$ is equal to:
$$
\min_{A\in \A} \left(A_x[\wbase] + A_y[\wbase]\right)\theta w' + \sum_{k \in [-\wbase,\wbase)} \tfrac{1}{T}\sum_{\Delta \in [3T - \normo{A[k]}]} \distG \left(I_{w'} + w'(k + \theta (\Delta+ A_x[k])),J_{w'} + w'(k + \theta(\Delta + A_y[k]))\right).
$$

\end{definition}

The former term can be thought of as ``shift costs'' and the latter as ``diagonal costs''.

\subsubsection{Fast computation}

\paragraph{Algorithm.}
The vertices are $V_{k,p,q}$ where $k \in [-\wbase,\wbase]$ \ns{why
  here $[-\wbase,\wbase)$ and $[-\wbase,\wbase]$ in def?}\aanote{ok now?} and $p,q \in
[T]$, as well as final node $V^*$.
For each node $V_{k,p,q}$ we add the following edges to $H$:
\begin{itemize}
	\item $V_{k,p,q} \to V_{k+1,p,q}$ of cost $\tfrac{1}{T}\sum_{\Delta \in [3T - p - q]} \distG (I_{w'} + w'(k + \theta (\Delta+ p)),J_{w'} + w'(k + \theta(\Delta + q)))$ (diagonal edges).
	\item $V_{k,p,q} \rightarrow V_{k,p+1,q}$ and $V_{k,p,q} \rightarrow V_{k,p,q+1}$ of cost
          $\theta w'$ (gap edges).
    \item when $k=\wbase$, $V_{\wbase,p,q} \rightarrow V^*$ of cost $\theta w'\cdot |p-q|$ (target edges).
\end{itemize}

We run shortest path from $V_{-\wbase,0,0}$ to $V^*$, and output its
value, capped at $w'\theta T$ (from above).

\paragraph{Complexity.} It's immediate to see that the runtime
complexity is $\wbase^{O(1)}$ since $H$ has $O(\wbase T^2)$
edges. Note that this is also the upper bound on the number of calls to
$\distG$ distance queries.

\subsubsection{Lower bound on $\cad_{w,t}$}

We establish the lower bounds for $\cad_{w,t}$ in the following lemma.

\begin{lemma}\label{lm::cad_lb}
For all $t \in \{\gamma,\ldots w/\wbase^3\}$, intervals $I,J\in \IC_w$, we have that $\cad_{w,t}(I,J)\ge
\min\{\ed_w(I,J),w'\theta T\}$.
\end{lemma}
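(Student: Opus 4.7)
The plan is to split into two cases. If $\cad_{w,t}(I,J) = w'\theta T$, then $\cad_{w,t}(I,J) \geq \min\{\ed(I,J), w'\theta T\}$ is immediate. Otherwise $\cad_{w,t}(I,J) = \cad_{w,t}^*(I,J) < w'\theta T$ for some minimizing $A\in \A$; since the gap term $(A_x[\wbase]+A_y[\wbase])\theta w'$ appears additively in $\cad_{w,t}^*$, we deduce $A_x[\wbase] + A_y[\wbase] < T$, and thus $|A[k]|_1 < T$ for every $k \in [-\wbase,\wbase)$ by monotonicity. Consequently, for every $\Delta \in [0, 2T)$, the admissibility constraint $|A[k]|_1 \leq 3T - \Delta - 1$ holds for all $k$, so the full block sum $\sum_k \distG_k(\Delta)$ appears in the $\cad_{w,t}^*$ expression with weight $1/T$.

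For each such $\Delta$, I plan to interpret the $2\wbase$ length-$w'$ block pairs indexed by $k$ as a block-level alignment of $I^\Delta := X[p_{-\wbase}(\Delta), p_{\wbase-1}(\Delta)+w')$ to $J^\Delta := Y[q_{-\wbase}(\Delta), q_{\wbase-1}(\Delta)+w')$, where $p_k(\Delta) = i + w'(k+\theta(\Delta+A_x[k]))$ and $q_k(\Delta) = j + w'(k+\theta(\Delta+A_y[k]))$. The between-block ``gaps'' have $X$-length $w'\theta(A_x[k{+}1]{-}A_x[k])$ and $Y$-length $w'\theta(A_y[k{+}1]{-}A_y[k])$; closing them by delete/insert edits and using the optimal character alignment achieving $\ed$ inside each block pair yields, via the hypothesis $\distG_{w'} \geq \ed$,
\[
\ed(I^\Delta, J^\Delta) \;\leq\; \sum_k \ed_k(\Delta) + (A_x[\wbase]+A_y[\wbase])\theta w' \;\leq\; \sum_k \distG_k(\Delta) + (A_x[\wbase]+A_y[\wbase])\theta w'.
\]

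The principal obstacle, and main technical step, is converting these per-$\Delta$ upper bounds on $\ed(I^\Delta, J^\Delta)$ into the desired lower bound $\cad_{w,t}^*(A) \geq \ed(I,J)$. The key geometric observation is that $I^\Delta \supseteq I$ with $I$ starting at offset $w - w'\theta\Delta$ from the left end of $I^\Delta$, and $J^\Delta \supseteq J$ with $J$ at the \emph{same} offset inside $J^\Delta$; hence there is a natural identification of the $I$-portion of $I^\Delta$ with the $J$-portion of $J^\Delta$. Restricting each block-and-gap alignment of $I^\Delta$ to $J^\Delta$ to this identification produces a candidate alignment of $I$ into $J$ whose cost is bounded by $\sum_k \distG_k(\Delta) + (A_x[\wbase]+A_y[\wbase])\theta w'$ up to a boundary correction arising from matches leaking outside the common window. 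I plan to average these restricted alignments over $\Delta \in [0, 2T)$, weighted by $1/T$ from the $\cad_{w,t}^*$ definition (this range has size $\geq 2T$, beating the prefactor). Making the averaging rigorous --- in particular showing that the boundary corrections cancel on average because the $X$ and $Y$ offsets are identical and the $\Delta$-range is symmetric enough to cover all positions of $I$ uniformly --- is the hard part. The plan is to use a coverage count showing that each position $p \in [w]$ of $I$ is contained in exactly $w/t$ blocks across all $(k,\Delta)$, which (combined with the $\distG \geq \ed$ bound block-by-block and the gap term absorbing shift corrections) will imply $\ed(I,J) \leq \cad_{w,t}^*(A)$, completing the proof.
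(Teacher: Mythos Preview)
Your setup is essentially the paper's: the cap case is trivial, and for a minimizing $A$ you correctly observe that the gap term forces $\normo{A[k]}<T$, so you may swap sums and work with a fixed $\Delta$. You also correctly build the block-and-gap alignment of $I^\Delta$ to $J^\Delta$ and note that $I,J$ sit at the \emph{same} offset inside them. Where you go wrong is what comes next.

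You believe that restricting the $I^\Delta\to J^\Delta$ alignment to the $I,J$ window incurs a ``boundary correction'' that must be averaged away via a coverage count over $(k,\Delta)$. This is the mis-step. With the paper's LCS-based definition $\ed_n(u,v)=n-\mathrm{LCS}(u,v)$, equal offsets together with equal lengths already give the exact inequality
\[
\ed(I,J)\;\le\;\ed(I^\Delta,J^\Delta)
\]
with no correction term. Indeed, if $I'=P\,I\,S$, $J'=P'\,J\,S'$ with $|P|=|P'|$ and $|S|=|S'|$, then in any common subsequence of $I',J'$ at most $|P|$ matched pairs have either coordinate in the $P/P'$ prefix and at most $|S|$ in the $S/S'$ suffix (monotonicity forces these to be initial, respectively final, blocks of indices), so $\mathrm{LCS}(I,J)\ge \mathrm{LCS}(I',J')-|P|-|S|$, i.e., $\ed(I,J)\le\ed(I',J')$. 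To use this you need $|I^\Delta|=|J^\Delta|$; your definition stops at $A_x[\wbase-1],A_y[\wbase-1]$, which need not agree. Extend both strings out to $A_x[\wbase]=A_y[\wbase]$ (the last gap increment is already paid for in the $\normo{A[\wbase]}\theta w'$ term), and the lengths match.

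Once you have $\ed(I,J)\le\sum_k\distG_k(\Delta)+\normo{A[\wbase]}\theta w'$ for every admissible $\Delta$, you are done by picking a single $\Delta$: the paper introduces $d_t$ (same expression but $\Delta$ restricted to $[T]$), notes $d_t\le\cad_{w,t}^*$, swaps the $k$ and $\Delta$ sums, and observes that since $d_t$ is the $\Delta$-average of the per-$\Delta$ block sums plus the gap, some $\Delta$ has per-$\Delta$ sum at most $d_t$ minus the gap; for that $\Delta$ the displayed inequality yields $\ed(I,J)\le d_t\le\cad_{w,t}^*$. No averaging over $\Delta$, no coverage count. Your proposed coverage argument (``each position of $I$ lies in exactly $w/t$ blocks'') does not obviously lower-bound $\sum_{k,\Delta}\distG_k(\Delta)$ by a multiple of $\ed(I,J)$, because the block alignment $A$ need not be compatible with the optimal $\ed$-alignment of $I$ to $J$; it is unclear how you would make that step rigorous, and in any case it is unnecessary.
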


For both the above, as well as for the upper bound shown later, we
define distances:
\begin{align}
d_t(I,J) &\triangleq \min_A \sum_{k \in [-\wbase,\wbase)}
  \tfrac{1}{T}\sum_{\Delta \in [T]} \distG (I_{w'} + w'(k + \theta (\Delta+
  A_x[k])), J_{w'} + w'(k + \theta(\Delta + A_y[k]))) + \normo{A[\wbase]}\theta w'.	\\
D_t(I, J) &\triangleq \min_A \sum_{k \in [-\wbase,\wbase)} \tfrac{1}{T} \sum_{\Delta \in [3T]} \distG (I_{w'} + w'(k + \theta (\Delta+ A_x[k])),J_{w'} + w'(k + \theta(\Delta + A_y[k]))) + \normo{A[\wbase]}\theta w'.
\label{eqn:defDt}
\end{align}

Note that $d_{t}(I,J) \leq \cad_{w,t}^*(I,J) \leq D_{t}(I,J)$, hence
it is enough to prove Lemma \ref{lm::cad_lb} for $d_t$.  We will use
$D_t$ later for the upper bound, in Lemma \ref{lm::cad_ub}. Note that
while each of $d_t, D_t$ lower/upper bounds $\cad_{w,t}$, those may
not satisfy triangle inequality over $\IC$, hence we use
Def.~\ref{def::cad_new} for $\cad$.

\begin{proof}[Proof of Lemma~\ref{lm::cad_lb}]
Without loss of generality, assume that $I=X_i=X_{i,w}$ (i.e., it is
an interval from $x$, starting at position $i$) and $J=Y_j=Y_{j,w}$ (the cases when
$I,J$ are both from $x$ or both from $y$ are treated in exactly the
same manner).
By the above, it is enough to prove that $d_t(X_{i},Y_{j})\ge
\ed(X_{i},Y_{j})$. Note that we can rewrite:
$$
d_t(X_{i},Y_{j}) = \min_A \tfrac{1}{T}\sum_{\Delta \in [T]} \sum_{k \in [-\wbase,\wbase)}  \distG (X_{i} + w'(k + \theta (\Delta+ A_x[k])),Y_{j} + w'(k + \theta(\Delta + A_y[k]))) + \normo{A[\wbase]}\theta w',
$$
and hence, for the minimizing $A$, there's some $\Delta\in [T]$ such that
\begin{align*}
  d_t(X_{i},Y_{j})
  &\ge
  \sum_{k \in [-\wbase,\wbase)}  \distG (X_{i} + w'(k + \theta (\Delta+ A_x[k])),Y_{j} + w'(k + \theta(\Delta + A_y[k]))) + \normo{A[\wbase]}\theta w'.
  \\
  &=
  \sum_{k \in [-\wbase,\wbase)}  (\distG (X_{i} +w'\theta\Delta+ w'(k + \theta\cdot
  A_x[k]),Y_{j}+w'\theta\Delta + w'(k + \theta\cdot A_y[k]))
  \\
  &\qquad\qquad\quad+
  (A_x[k+1]-A_x[k]+A_y[k+1]-A_y[k])\cdot\theta w').
\end{align*}

Using that $\distG\ge \ed$, we get, using the notation
$i'\triangleq i+w'\theta\Delta\le i+w$ and $j'\triangleq
j+w'\theta\Delta\le j+w$:
\begin{align*}
d_t(I,J)
&\ge
\sum_{k \in [-\wbase,\wbase)}  \ed(X_{i'+ w'(k + \theta\cdot
    A_x[k])},Y_{j' + w'(k + \theta\cdot A_y[k])})+
  w'\theta\cdot (A_x[k+1]-A_x[k]+A_y[k+1]-A_y[k])
  \\
  &\ge \ed(X[i'-w:i'+w+w'\theta A_x[\wbase]],Y[j'-w:j'+w+w'\theta A_y[\wbase]])\ge \ed(X[i:i+w], Y[j:j+w]),
\end{align*}
where we also used the fact that $A_x[\wbase]=A_y[\wbase]$.

In conclusion: $\cad_{w,c}(I,J)=
\min\{\cad_{w,c}^*(I,J),w'\theta T\}\ge
\min\{d_t(I,J),w'\theta T\}\ge \min\{\ed(I,J),w'\theta T\}$.
\end{proof}

\subsection{$\cad_w$ align-approximates $\ed$}

Recall that $\cad_w(X_i,Y_j)=\sum_{c\in \Sw}
c\cdot\one{\cad_{w,c}(X_i,Y_j)\ge c}$. Hence the lower bound, that
$\cad_w\ge \ed_w$ point-wise follows immediately from Lemmas~\ref{lem:lbGamma}
and  \ref{lm::cad_lb} (for the latter, note that $w'\theta T\ge
\tfrac{c}{\tau \wbase^2}T\ge c\sqrt{\wbase}$). In particular for $c^*\in S_w$ such that $1\le c^*\le
\ed_w(X_i,Y_j)<2c^*$, we have $\cad_w(X_i,Y_j)=\sum_{c\in S_w:c\le c^*}
c\ge 2c^*-1\ge \ed_w(X_i,Y_j)$ (the case of $\ed_w(X_i,Y_j)=0$ is immediate).

We now prove the upper bound on
$\cad_w$, where $C$ is the constant from the
Theorem~\ref{thm:cadGuarantees} hypothesis.
  Fix the alignment $\pi\in \Pi$ that
  minimizes the cost $k_\pi=\sum_{i\in [n]}
  \tfrac{1}{w'}\distG_{w'}(X_{i,w'},Y_{\pi(i),w'})$. Note that the
  hypothesis of the theorem implies that $k_\pi\le C\cdot
\ed(x,y)$.
We prove the following lemma.

  \begin{lemma}\label{lm::cad_ub}
$ \sum_{i\in [n]}\tfrac{1}{w}\cad_w(X_{i,w},Y_{\pi(i),w})\le O(C)\cdot 
    \sum_{i\in[n]}\tfrac{1}{w'}\distG_{w'}(X_{i,w'},Y_{\pi(i),w'}).
$
\end{lemma}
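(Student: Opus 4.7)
The plan is to establish a scale-free pointwise bound $\cad_w(X_{i,w}, Y_{\pi(i),w}) \le O(e_i)$, where $e_i$ is a local quantity whose sum over $i$ matches the RHS via Fubini. This avoids a logarithmic blowup when aggregating across the $|\Sw| = O(\log w)$ scales. Expand $\cad_w(X_{i,w}, Y_{\pi(i),w}) = \sum_{c \in \Sw} c \cdot \1[\cad_{w,c}(X_{i,w}, Y_{\pi(i),w}) > c]$, and for each scale $c \in \Sw$ with associated $t = t(c)$ I would bound $\cad_{w,c} \le D_t$ via Def.~\ref{def::cad_new} and Eqn.~\eqref{eqn:defDt} (invoking Lemma~\ref{lem:gammaUpperBound} separately for the large-distance regime $t = w/\wbase$). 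To bound $D_t$, I would construct a witness alignment $A_i \in \A$ driven by the displacement function $\delta_j := \pi(j) - j$: set $A_{i,x}[k]$ and $A_{i,y}[k]$ to accumulate the rounded decreases and increases of $\delta/t$ from position $i$ to $i + w'k$, so $t(A_{i,y}[k] - A_{i,x}[k]) \approx \delta_{i+w'k} - \delta_i$ and $\normo{A_i[\wbase]}$ equals (up to rounding) the total variation of $\delta/t$ over the window; when this would force $\normo{A_i[\wbase]} > T = \wbase^2$, I would fall back on the cap $w'\theta T$ from Def.~\ref{def::cad_new} and absorb the contribution into $O(\ed(x,y))$.

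Using this alignment, for each $(k, \Delta)$ let $p_{k,\Delta} := i + w'(k + \theta(\Delta + A_{i,x}[k]))$; a direct calculation shows the $Y$-coordinate $\pi(i) + w'(k + \theta(\Delta + A_{i,y}[k]))$ differs from $\pi(p_{k,\Delta})$ by at most $|\delta_{p_{k,\Delta}} - \delta_{i+w'k}| + O(t)$. Combined with the hypothesis $\distG(Y_{j,w'}, Y_{j+1,w'}) \le C$ and triangle inequality, this yields
\[
\distG\bigl(X_{p_{k,\Delta}, w'}, Y_{\pi(i) + w'(k + \theta(\Delta + A_{i,y}[k])), w'}\bigr) \le \distG(X_{p_{k,\Delta}, w'}, Y_{\pi(p_{k,\Delta}), w'}) + O(C)\bigl(|\delta_{p_{k,\Delta}} - \delta_{i+w'k}| + t\bigr).
\]
Aggregating over $k \in [-\wbase, \wbase)$, averaging over $\Delta \in [3T]$ with weight $1/T$, and adding the shift penalty $t\normo{A_i[\wbase]}$ yields $D_t(X_{i,w}, Y_{\pi(i),w}) \le e_i$, where $e_i$ depends only on the local $\distG$-sum over a window of size $O(w)$ around $i$ and on the local variation $V_i$ of $\delta$ in that window, and is essentially scale-independent.

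Since $e_i$ does not depend strongly on $c$, $\cad_{w,c} \le O(e_i)$ for all $c$, so only scales $c = O(e_i)$ contribute to $\cad_w(X_{i,w}, Y_{\pi(i),w})$, and the resulting geometric sum gives $\cad_w(X_{i,w}, Y_{\pi(i),w}) \le O(e_i)$. Finally, I would sum over $i$ and apply Fubini: for each fixed $(k, \Delta)$, the map $i \mapsto p_{k,\Delta}(i)$ is $O(1)$-to-$1$ and each position $i'$ appears in the ``window of $i$'' for $O(w)$ source indices $i$, so $\sum_i e_i \le O\bigl(w \cdot \sum_{i'} \tfrac{1}{w'}\distG(X_{i', w'}, Y_{\pi(i'), w'}) \cdot w' + w \cdot \ed(x,y)\bigr)$; dividing by $w$ produces the claim. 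The main obstacle will be to show that the grid-rounding error $O(t)$ per inner term aggregates into $O(\ed(x,y))$ rather than the crude $O(tn\wbase/w)$ bound. This requires exploiting that rounding error only manifests at the $k$'s where $\delta/t$ crosses an integer boundary, yielding a per-$i$ contribution of $O(t\cdot \normo{A_i[\wbase]}) = O(V_i)$ instead of $O(t\wbase)$. Secondary issues include handling $i$ with $\pi(i) = \bot$ (contributing $w$ each via the convention $\cad_w(X_i, Y_\bot) = w$, totalling at most $w\cdot \ed(x,y)$) and the boundary cases where $V_i > T$ which trigger the cap $w'\theta T$ and are likewise absorbed into the $O(\ed(x,y))$ term.
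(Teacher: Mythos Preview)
Your overall architecture --- bound $\cad_{w,c}$ by $D_t$ via a displacement-driven witness alignment, deduce that only scales $c\le O(e_i)$ contribute, then Fubini --- is the paper's. The genuine gap is your treatment of the rounding error. First a calibration: in Def.~\ref{def::cad_new} the grid step is $w'\theta=t/\wbase$, not $t$, so the relevant relation is $(t/\wbase)\bigl(A_{i,y}[k]-A_{i,x}[k]\bigr)\approx\delta_{i+w'k}-\delta_i$; the per-$(k,\Delta)$ rounding error is then $O(t/\wbase)$ and totals $O(t)$ over the $2\wbase$ values of $k$. Your proposed refinement --- that this error ``only manifests at the $k$'s where $\delta$ crosses an integer boundary, yielding $O(V_i)$'' --- is incorrect: the gap $\bigl|\delta_{i+w'k}-\delta_i-(t/\wbase)(A_y[k]-A_x[k])\bigr|$ is generically $\Theta(t/\wbase)$ at \emph{every} $k$, not just at increments. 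Concretely, if $\delta$ jumps once by $t/(2\wbase)$ and is otherwise constant, then $A_x=A_y\equiv 0$, $V_i=t/(2\wbase)$, yet the per-$k$ error equals $t/(2\wbase)$ for all $k\ge 1$, giving total $\Theta(t)=\wbase\cdot V_i$, not $O(V_i)$.

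The correct way to dispose of the residual $O(t)$ (and what the paper does) is not to make it scale-free but to absorb it into the threshold: since $t\le c/\constg$ by construction, one has $D_t(X_i,Y_{\pi(i)})\le e_i+O(t)\le e_i+c/2$ for $\constg$ large enough, so $\cad_{w,c}>c$ forces $c<2e_i$. This yields exactly your ``only scales $c=O(e_i)$ contribute'' conclusion, but the mechanism is the threshold, not a pointwise bound on $D_t$; the paper packages it as $\sum_c c\cdot\one{\cad_{w,c}>c}\le O(1)\sum_t[D_t-\constg t/2]^+$ (Eqn.~\eqref{eqn:cadWub}). Two smaller issues you should also address: (i) $\delta_j=\pi(j)-j$ is undefined at interior positions with $\pi(j)=\bot$, not just at the outer index --- the paper works with $\piComp$ throughout and tracks the drift via $\ed_\pi(i-w,3w)$; (ii) your Fubini claim that $i\mapsto p_{k,\Delta}(i)$ is $O(1)$-to-$1$ is delicate because $A_{i,x}[k]$ itself depends on $i$ --- the paper sidesteps this by absorbing $A_x^{i,t}[k]\in[0,T]$ into the $\Delta$-range (extending $[3T]$ to $[4T]$) before reindexing.
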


Note that this implies the desired upper bound of $O(C^2)\cdot
\ed(x,y)$ as required for
$\cad_w$ to $O(C^2)$-align-approximate $\ed$.
  
\begin{proof}
We first introduce a bit of useful notation, $\ed_\pi(i,w)$, which,
intuitively, is the edit distance of the alignment $\pi$ from
$[i,i+w)$ to $[\piComp(i),\piComp(i+w))$:

\begin{align*}
\ed_\pi^i(i,w) &\triangleq 
\left|\{i'\in[i,i+w)\mid \pi(i')=\bot\}\right|. \\
\ed_\pi^j(i,w) &\triangleq 
\left|\{j\in[\piComp(i),\piComp(i+w)) \mid \pi^{-1}(j) = \bot \}\right|. \\
\ed_\pi(i,w) &\triangleq \ed_\pi^i(i,w)+\ed_\pi^j(i,w).
\end{align*}

We rewrite $\cad_w$ in terms of $\cad_{w,t}^*$ and $\Gamma$ as
follows, where $\constg$ is a constant from the start of the section,
still to be determined.  Recall that $\one{\cad_{w,c}(I,J)\ge
  c}=\one{\cad_{w,t}(I,J)\ge c}$, where $t$ is the rounding down of
$c/\constg$ to a power of $\wbase$ \ns{we slightly changed that def
  right?}\aanote{right, but only relevant for small regime. here's
  more of a informal statement}. Considering the ``large'' and
``not large'' cases for $t$ separately, we can further rewrite the
``large'' case as:
$$
\sum_{\substack{c\in \Sw\\c/\constg\ge w/\wbase^2}}
c\cdot \one{\cad_{w,c}(I,J)\ge c}\le 2\cdot\Gamma(I,J)\cdot
\one{\Gamma(I,J)\ge w/\wbase^2\cdot \constg}.
$$
Hence, using Lemma~\ref{lem:gammaUpperBound}, for $\constg$
at least twice the implicit constant from the lemma (in front of $w/\wbase^2$), we obtain that:
$$
\sum_{\substack{c\in \Sw\\c/ \constg\ge w/\wbase}}
c\cdot \one{\cad_{w,c}(X_{i,w},Y_{\pi(i),w})\ge c}\le
O\left(\sum_{k\in[w]}\tfrac{1}{w'}\distG(X_{i+k,w'},Y_{\pi(i+k),w'})+  C\sum_{k\in[w]} |\piComp(i+k+1)-\piComp(i+k)-1|\right).
$$
In particular, by summing over all $i$, and noting that each
$\distG(X_{i,w'},Y_{\pi(i),w'})$ appears at most $w=w'\wbase$ times
(and same with the absolute difference terms):
\begin{align}
\label{eqn:contributionDlarge}
  \sum_i\sum_{\substack{c\in \Sw\\c/\constg\ge w/\wbase}}
  c\cdot \one{\cad_{w,c}(X_{i,w},Y_{\pi(i),w})\ge c}
  &\le
w\cdot
O\left(\sum_{i\in[n]}\tfrac{1}{w'}\distG(X_{i,w'},Y_{\pi(i),w'})+C|\piComp(i+k+1)-\piComp(i+k)-1|\right)
\\
&\le
w\cdot
O\left((1+C)\sum_{i\in[n]}\tfrac{1}{w'}\distG(X_{i,w'},Y_{\pi(i),w'})\right),
\end{align}
as $\distG(X_{i,w'},Y_{\pi(i),w'})=w'$ whenever $\pi(i)=\bot$ \ns{maybe change the $O()$ notation to constant 2, to make clear use of geometric series?}.

We now upper bound the ``not large'' regime, which is significantly
more involved. For fixed $c$ with $c/\constg< w/\wbase$, recall that
$t$ is the
round-down of $c/\constg$; except if $c/\constg<\wbase$, when we set $t=\wbase$.
Hence we have
$\one{\cad_{w,c}(I,J)\ge c}=\one{\cad_{w,t}(I,J)\ge c}\le
\one{\cad_{w,t}^*(I,J)\ge c}\le \one{D_t(I,J)\ge t\constg}+\one{D_\gamma(I,J)\ge c}$, where $D_t$ is
as defined in Eqn.~\eqref{eqn:defDt}.
We can bound:
\begin{equation}
\label{eqn:cadWub}
\sum_{\substack{c\in \Sw\\c/\constg< w/\wbase^2}}
c\cdot \one{\cad_{w,c}(I,J)\ge c}
\le
O\left(\max_{t:D_t(I,J)\ge t\constg}D_t(I,J)+D_{\wbase}(I,J)\right)
\le
O\left(\sum_{t}\left[D_t(I,J)-\constg t/2\cdot \one{t>\wbase}\right]^+\right),
\end{equation}
where $[x]^+=\max\{0,x\}$, and recalling that there are $O(\log_\wbase
n)=O(1)$ different possible $t$'s considered. %

We upper bound $D_t(X_{i,w},Y_{j,w})$ for any fixed $t$ and indeces
$i\in[n]$ and $j=\piComp(i)$, by exhibiting a convenient choice for function
$A$. 
We define $A_x[k]$'s, for fixed $i$ and $t$. For $k=-\wbase$,
$A_x[k]=0$ by definition. For $k\in(-\wbase,\wbase)$, define real $\delta_k\ge
0$ to be the smallest such that $\ed_\pi^i(i+w'(k-1+\theta
A_x[k-1]),w'+w'\theta \delta_k)= w'\theta \delta_k$ (infinity if it
doesn't exist). Set
$A_x[k]=\min\{A_x[k-1]+\lceil \delta_k\rceil,T\}$. Then, if
$A_x[k]=T$, then set $A_y[k]=T$, as well as the subsequent $A_x,A_y$.
Otherwise, let $A^*_y[k]$ to be the unique integer such that
\begin{equation}
  \label{eqn:AxAyRelation}
\piComp(i+w'(k+\theta A_x[k]))
\in \piComp(i)+w'(k+\theta A_y^*[k])+[0,w'\theta).
\end{equation}

and set $A_y[k] = \min\{A^*_y[k],T\}$.
Note that for $t=\wbase$, we have that $w'\theta=1$ and hence $\piComp(i+w'(k+\theta A_x[k]))
=\piComp(i)+w'(k+\theta A_y[k])$.

Finally, set $A_x[\wbase]=A_y[\wbase]=\max\{A_x[\wbase-1],A_x[\wbase-1]\}$.

\begin{claim}
Functions $A_x,A_y$ are non-decreasing and
  $\normo{A[\wbase]}\le
O(\tfrac{\ed_\pi(i-w,3w)}{w'\theta}+\wbase)$. Futhermore, when
$t=\wbase$, we have that $\normo{A[\wbase]}\le
O(\tfrac{\ed_\pi(i-w,3w)}{w'\theta})$.
\end{claim}
\begin{proof}
While $A_x$ is non-decreasing by construction, we need to prove that
$A_y$ is non-decreasing. We have that, as long as $A_k[x]<T$, for $i_k=i+w'(k+\theta A_x[k])$
and $i_{k-1}=i+w'(k-1+\theta A_x[k-1])$, and noting that
$i_k-i_{k-1}=w'+w'\theta \lceil \delta_k\rceil$:
\begin{align*}
\piComp(i_k)
-
\piComp(i_{k-1})
&\ge
i_k-i_{k-1}
-\ed_\pi^i(i_{k-1},i_k-i_{k-1})
\\
&\ge
w'+w'\theta \lceil \delta_k\rceil-w'\theta \lceil \delta_k\rceil
\\
&=
w'.
\end{align*}

Hence $A_y[k]$ must also be non-decreasing.

To prove the bound on $\normo{A}$, we first note that the bound is
immediate if $\ed_\pi(i-w,3w)\ge \wbase^2 t/10$: then $\normo{A}\le
2T=2\wbase^3=2\tfrac{\wbase^2 t}{w' \theta}$.
Suppose for the rest that $\ed_\pi(i-w,3w)\le
\tfrac{\wbase^2}{10}t$ (and hence $\ed_\pi^i(i-w,3w)$ as well). We now
show by induction on $k$ that $A_x[k]\le T/5$ and $\delta_k\le T$ ---
satisfied for $k=-\wbase$ by definition. As for inductive step:
\begin{align}
  \ed_\pi^i(i-w,w+w'(k+\theta A_x[k]))
  \nonumber
&=
\sum_{l=-\wbase+1}^{k} \ed_\pi^i(i+w'(l-1+\theta A_x[l-1]),w'+w'\theta
\lceil \delta_l\rceil)
  \nonumber
\\
&\ge
\sum_{l=-\wbase+1}^{k} w'\theta
\delta_l
  \nonumber
\\
&\ge
\sum_{l=-\wbase+1}^{k} w'\theta
\lceil\delta_l\rceil-w'\theta
  \nonumber
\\
&=w'\theta(A_x[k]-\wbase-k),
\label{eqn:Axkk}
\end{align}

and thus $A_x[k]\le \tfrac{\ed_\pi^i(i-w,w+w'(k+\theta
  A_x[k]))}{w'\theta}+\wbase+k\le \wbase^3/10+2\wbase\le T/5$. This also
implies $w'\theta A_x[k]\le t/\wbase\cdot T/5\le t\wbase^2/5<w'$. Now we
also show that $\delta$'s are
$<T$. Indeed, define $f(\delta)\triangleq \ed_\pi^i(i+w'(k-1+\theta
A_x[k-1]),w'+w'\theta \delta)$, and note that $f(0),f(T/10)\le
\ed_\pi^i(i-w,3w)\le t\wbase^2/10$. At the same time the function
$g(\delta)=w'\theta \delta$ grows from 0 to $g(T/10)=w'\theta T/10\ge
t\wbase^2/10$. Hence the non-decreasing functions $f,g$ must intersect
somewhere at $\delta\in[0,T/10]$ (with $w'\theta \delta$ an integer).

From the above we have that $A_x[\wbase-1]\le
\tfrac{\ed_\pi^i(i-w,3w)}{w'\theta}+\wbase$. As for $A_y[\wbase-1]$,
we have that:
$$
A_y[\wbase-1]\le \tfrac{\piComp(i+w+w'\theta A_x[\wbase-1])
-(j+w-w')}{w'\theta}\le \tfrac{\piComp(i+w)+w'\theta A_x[\wbase-1]
  -\piComp(j+w)+\ed_\pi^j(i+w-w',w)}{w'\theta}\le O(\tfrac{\ed_\pi(i-w,3w)}{w'\theta}+\wbase).
$$
The conclusion follows since
$|A[\wbase]|_1=2\max\{A_x[\wbase-1],A_y[\wbase-1]\}$.

In the case when $t=\wbase$, we note that $w'\theta=1$, and hence it
is enough to consider $\delta_k$ to be integers only. Thus in
Eqn.~\eqref{eqn:Axkk} we don't need the negative term (that comes only
from rounding $\delta_k$), and hence $A_x[\wbase-1], A_y[\wbase-1]\le
O(\tfrac{\ed_\pi^i(i-w,3w)}{w'\theta})$, from which we obtain the conclusion.
\end{proof}

Since $A$ depends on $i$ and $t$, we denote it as $A^{i,t}$ below. For
each $i\in[n]$, we also define $t_i\in \{\wbase,\ldots w/\wbase^2\}$ as the
round-down of $\ed_\pi(i-w,3w)$ (i.e., $t_i\le \ed_\pi(i-w,3w)<
\wbase t_i$), unless $\ed_\pi(i-w,3w)<\wbase$, in which case $t_i=\wbase$.
Also let $\theta_i$ be the $\theta$ from $D_{t_i}$,
i.e., $\theta_i=t_i/w$.

By the claim above, we have that for any $i,t$:
\begin{equation}
  \label{eqn:AitBnd}
  w'\tfrac{t}{w}\cdot |A^{i,t}[\wbase]|_1\le O(\ed_\pi(i-w,3w)+t\cdot \one{t>\wbase}).
\end{equation}
We also observe that
\begin{equation}
  \label{eqn:sumED}
  \tfrac{1}{w}\sum_i
  \ed_\pi(i-w,3w)
  =\tfrac{1}{w}\sum_{j\in[3w]}\sum_{i=0}^{n/w-3}\ed_\pi(iw-w+j,3w)
  \le \tfrac{3w}{w}\cdot O\left(|\{i\in[n]\mid \pi(i)=\bot\}|\right)\le O(k_\pi).
\end{equation}

Hence, for $\constg$ sufficiently large constant, and
$t$ ranging over $\{\wbase,\ldots,
w/\wbase^2\}$, we can bound the total
``not-large'' contribution, $\sum_{i\in [n]}\tfrac{1}{w}\sum_{\substack{c\in \Sw\\c/\constg< w/\wbase}}
c\cdot \one{\cad_{w,c}(X_{i},Y_{\pi(i)})>c}$ as follows:
\begin{align*}
&\stackrel{\eqref{eqn:cadWub}}{\le} 
2\cdot|\{i:\pi(i)=\bot\}|+
\tfrac{1}{w}  \sum_{i\in[n]} 
\sum_t\left[D_t(X_{i},Y_{\piComp(i)})-\constg t/2\cdot \one{t>\wbase}\right]^+
\\
&\stackrel{\eqref{eqn:defDt}}{\le} 
2k_\pi+
\sum_{\substack{i\in[n]\\t}} \tfrac{1}{w}\left[\tfrac{1}{T}\sum_{\substack{k\in[-\wbase,\wbase)\\\Delta\in[3T]}} \distG (X_{i + w'(k + \theta_t
(\Delta+ A_x^{i,t}[k]))},Y_{\piComp(i) + w'(k + \theta_t(\Delta +
  A_y^{i,t}[k]))})+w'\theta_t|A^{i,t}[\wbase]|_1-\tfrac{\constg
      t\cdot \one{t>\wbase}}{2}\right]^+  
  \\
&\overset{\eqref{eqn:AitBnd}}{\le}
O(k_\pi)+
\tfrac{1}{w}\sum_{\substack{i\in[n]\\t}}\left[\tfrac{1}{T}\sum_{\substack{k\in[-\wbase,\wbase)\\\Delta\in[3T]}}
  \distG (X_{i + w'(k + \theta_t
(\Delta+ A_x^{i,t}[k]))},Y_{\piComp(i) + w'(k + \theta_t(\Delta +
  A_y^{i,t}[k]))})+O(\ed_\pi(i-w,3w))-\tfrac{\constg t\cdot \one{t>\wbase}}{3}\right]^+
\\
&\overset{\eqref{eqn:sumED}}{\le}
O(k_\pi)+
O(k_\pi)+\tfrac{1}{w}\sum_{\substack{i\in[n]\\t}}\left[\tfrac{1}{T}\sum_{\substack{k\in[-\wbase,\wbase)\\\Delta\in[3T]}}
  \distG (X_{i+w'\theta_t \Delta +
  w'(k + \theta_t A_x^{i,t}[k])},Y_{\piComp(i) + w'\theta_t \Delta +w'(k + \theta_t
    A_y^{i,t}[k])})-\tfrac{\constg t\cdot \one{t>\wbase}}{3}\right]^+
\\
&\overset{\eqref{eqn:AxAyRelation}}{\le}
O(k_\pi)+\tfrac{1}{w}\sum_{\substack{i\in[n]\\t}}\left[\tfrac{1}{T}\sum_{\substack{k\in[-\wbase,\wbase)\\\Delta\in[3T]}}
   \distG (X_{w'\theta_t \Delta +i+
  w'(k + \theta_t A_x^{i,t}[k])},Y_{w'\theta_t \Delta +\piComp(i+
  w'(k + \theta_t A_x^{i,t}[k]))})+\left(6\wbase\cdot Cw'\theta_t-\tfrac{\constg t}{3}\right)\cdot\one{t>\wbase}\right]^+
\\
&\le
O(k_\pi)+\tfrac{1}{wT}
\sum_{\stackrel{i\in[n]}{t}}\sum_{\substack{k\in[-\wbase,\wbase)\\\Delta\in[3T]}}  
  \distG (X_{w'\theta_t \Delta +i+
  w'(k + \theta_t A_x^{i,t}[k])},Y_{\piComp(w'\theta_t \Delta +i+
  w'(k + \theta_t A_x^{i,t}[k]))})
\\
&\quad+
\tfrac{O(C)}{wT}\sum_{\stackrel{i\in[n]}{t}}
\sum_{\substack{k\in[-\wbase,\wbase)\\\Delta\in[3T]}}
  \left|\piComp(w'\theta_t \Delta +i+
  w'(k + \theta_t A_x^{i,t}[k]))-(w'\theta_t \Delta+\piComp(i+
  w'(k + \theta_t A_x^{i,t}[k])))\right|,
\end{align*}

where the last two inequalities are due to triangle inequality for
$\distG$ and $\distG(Y_j,Y_{j+1})\le C$, as well as for $\tau$
satisfying $6\wbase\cdot Cw'\theta_t\le\tfrac{\constg t}{3}$, i.e.,
$\tau\ge 18C$. To estimate the last term,
we note that, again since $w'\theta T\le w'$:
\aanote{not quite true since $\piComp$ can be a lot to the right}
\begin{align*}
  S
  &\triangleq
  \tfrac{1}{wT}
    \sum_{\substack{i\in[n]\\t}}
\sum_{\substack{k\in[-\wbase,\wbase)\\\Delta\in[3T]}}
  \left|\piComp(w'\theta_t \Delta +i+
  w'(k + \theta_t A_x^{i,t}[k]))-(w'\theta_t \Delta+\piComp(i+
  w'(k + \theta_t A_x^{i,t}[k])))\right|
  \\
  &\le
  \tfrac{1}{wT}
      \sum_{\substack{i\in[n]\\t}}\sum_{\substack{k\in[-\wbase,\wbase)\\\Delta\in[3T]}}
\ed_\pi^j(i+w'(k+\theta_t A_x^{i,t}[k]),w'\theta_t\Delta).
  \\
  &\le
  \tfrac{1}{wT}\sum_{\Delta\in[3T]}
  \sum_{i\in[n],t}
2\cdot\ed_\pi(i-w,3w).
\\
&\le
  O(\log_\wbase n\cdot\ed_\pi(1,n))
  \\
  &\le O(k_\pi).
\end{align*}

Finally, overall we have, using that $A_x^{i,t}[k]\in [0,T]$ and hence
we can absorb it into $\Delta$-summation, now over $\Delta\in[4T]$
(the main reason we are using $\Delta$ to start with):
\begin{align*}
\sum_{i\in [n]}\tfrac{1}{w}\sum_{\substack{c\in \Sw\\c/\constg< w/\wbase}}
c\cdot \one{\cad_{w,c}(X_{i},Y_{\pi(i)})>c}
&\le
O(Ck_\pi)+\tfrac{1}{w'}\sum_{t}\sum_{\substack{k\in[-\wbase,\wbase)\\\Delta\in[4T]}} \tfrac{1}{\wbase T}\sum_{i\in[n]}
    \distG (X_{i+
  w'k + \tfrac{t}{\wbase} \Delta},Y_{\piComp(i+
  w'k + \tfrac{t}{\wbase} \Delta)}),
\\
&\le
O(Ck_\pi)+\tfrac{1}{w'}\sum_{t} 8\sum_i \distG(X_i,Y_{\pi(i)})
\\
&\le
O(\log_\wbase n\cdot Ck_\pi)
\\
&=
O(Ck_\pi).
\end{align*}

We complete the proof of Lemma~\ref{lm::cad_ub}, combining the bound
for ``large'' distance regime (Eqn.~\eqref{eqn:contributionDlarge},
scaled by $1/w$), and the above ``not large'' distance regime, we
obtain that:
$$\sum_{i\in [n]}\tfrac{1}{w}\cad_w(X_{i,w},Y_{\pi(i),w})\le
O(Ck_\pi)\le O(C)\cdot 
    \sum_{i\in[n]}\tfrac{1}{w'}\distG_{w'}(X_{i,w'},Y_{\pi(i),w'}),
    $$
    which completes the proof of Lemma~\ref{lm::cad_ub}.
\end{proof}

\subsection{$\cad_{w,c}$ are metrics}

Finally, we prove that $\cad_{w,c}$ is a metric for each $c$. Given
how we reduce them to $\Gamma,\cad_{w,t}$, it is enough to prove
metricity for the latter two.

\paragraph{$\Gamma$ metric.}
Identity follows by definition. Symmetry follows from the fact that
the graph $H$ is symmetric.

It remains to prove the triangle inequality. Consider three intervals
$I,J,K$. We want to prove that:
$$
\Gamma(I,K)\le \Gamma(I,J)+\Gamma(J,K).
$$

Note that, for the pair $(I,J)$, the optimal path in graph $H$
corresponds to an alignment $A_{IJ}:[-m,m)\to [-m,m)\cup\{\bot\}$. In
particular, for each $p\in[-m,m)$, consider the edge which increases $p$:
if it's the diagonal edge $V_{p,q}\to V_{p+1,q+1}$ then $A_{IJ}(p)=q$;
otherwise $A_{IJ}(p)=\bot$. The number of used gap edges is then $2\cdot
|\{p:A_{IJ}(p)=\bot\}|$. Hence:
  $$
  \Gamma(I,J)= 4
  \tfrac{w/m}{w'}\sum_{\substack{k\in [-m,m)\\A_{IJ}(i+\tfrac{w}{m}k)\neq \bot}}\distG(I_{w'}+\tfrac{w}{m}k,J_{w'}+\tfrac{w}{m}\cdot A_{IJ}(k))+8\cdot |\{p:A_{IJ}(p)=\bot\}|+2w/\wbase^2.
  $$

  Similarly we can extract $A_{JK}$. We now define an alignment $A_{IK}$
  from $I$ to $K$ as $A_{IK}(p)=A_{JK}(A_{IJ}(p))$ (and $\bot$ if either
  functions has value $\bot$). Then we have that:
  \begin{align*}
  \Gamma(I,K)&\le 4
  \tfrac{w/m}{w'}\sum_{\substack{k\in [-m,m) \\ A_{IK}(i+\tfrac{w}{m}k)\neq \bot}}\distG(I_{w'}+\tfrac{w}{m}k,K_{w'}+\tfrac{w}{m}\cdot A_{IK}(k))+8\cdot |\{p:A_{IK}(p)=\bot\}|+2w/\wbase^2
  \\
  &\le4\tfrac{w/m}{w'}\sum_{\substack{k\in [-m,m)\\ A_{IK}(i+\tfrac{w}{m}k)\neq \bot}}\distG(I_{w'}+\tfrac{w}{m}k,J_{w'}+\tfrac{w}{m}\cdot A_{IJ}(k))+8\cdot |\{p:A_{IK}(p)=\bot\}|
  \\
  &+
  4\tfrac{w/m}{w'}\sum_{\substack{k\in [-m,m)\\ A_{IK}(i+\tfrac{w}{m}k)\neq \bot}}\distG(J_{w'}+\tfrac{w}{m}\cdot A_{IJ}(k),K_{w'}+\tfrac{w}{m}\cdot A_{IK}(k))+2w/\wbase^2
  \\
  &\le4\tfrac{w/m}{w'}\sum_{\substack{k\in [-m,m)\\ A_{IJ}(k)\neq \bot}}\distG(I_{w'}+\tfrac{w}{m}k,J_{w'}+\tfrac{w}{m}\cdot A_{IJ}(k))+8\cdot |\{p:A_{IJ}(p)=\bot\}|+2w/\wbase^2
  \\
  &+
  4\tfrac{w/m}{w'}\sum_{\substack{k\in [-m,m) \\ A_{IK}(i+\tfrac{w}{m}k)\neq
      \bot}}\distG(J_{w'}+\tfrac{w}{m}\cdot A_{IJ}(k),K_{w'}+\tfrac{w}{m}\cdot
  A_{JK}(A_{IJ}(k)))+8\cdot |\{p:A_{JK}(p)=\bot\}|
  \\
  &\le \Gamma(I,J)+\Gamma(J,K).
  \end{align*}
  
\paragraph{$\cad_{w,t}$ metric.}
Identity and Symmetry are trivial (for identity, we use
$A=(\vec{0},\vec{0})$ and for symmetry we switch the coordinates of
$A$).

To show triangle inequality, fix $I, J, K \in \IC$. We prove that
$\cad_{w,c}(I,K) \leq \cad_{w,c}(I,J) + \cad_{w,c}(J,K)$. Fix $A, A'$
which minimize $\cad_{w,c}^*(I,J),\cad_{w,c}^*(J,K)$. Define $A'' = A +
A'$ (i.e. $A''_x[l] = A_x[l] + A'_x[l]$ and $A''_y[l] = A_y[l] +
A'_t[l]$). Note that if we obtain $A''_x[l]\ge T$ or $A''_y[l]\ge T$, then we
already have that $\cad_{w,c}(I,J) + \cad_{w,c}(J,K)\ge w' \theta T\ge \cad_{w,c}
(I,K)$, and hence we can assume that $A''_X[l],A''_y[l]<T$ below. Last, we use the shorthand $\normo{A} = \normo{A[\wbase]}$, meaning the total number of shifts of $A$.

\begin{align*}
	\cad_{w,c}^*(I,K) \leq& \sum_{l \in [-\wbase,\wbase)} \tfrac{1}{T} \sum_{\Delta \in [3T - \normo{A''[l]}]} \distG (I_{w'} + w'(l + \theta (\Delta+ A''_x[l])),K_{w'} + w'(l + \theta(\Delta + A''_y[l]))) + \normo{A''}\theta w' \\
		\leq& \sum_{l} \tfrac{1}{T} \sum_{\Delta \in [3T -
                    \normo{A''[l]}]} \Big( \distG (I_{w'} + w'(l + \theta (\Delta+ A''_x[l])),J_{w'} + w'(l + \theta(\Delta + A'_x[l]+A_y[l]))) \\
		&\qquad+ \distG (J_{w'} + w'(l + \theta(\Delta + A'_x[l]+A_y[l])),K_{w'} + w'(l + \theta (\Delta+ A''_y[l])))\Big) + \normo{A''}\theta w'\\
		=& \sum_{l} 
		\tfrac{1}{T} \sum_{\Delta \in [3T - \normo{A''[l]}]+A'_x[l]} \distG (I_{w'} + w'(l + \theta (\Delta + A''_x[l] - A'_x[l])),J_{w'} + w'(l + \theta(\Delta + A_y[l]))) \\
		&+ \tfrac{1}{T}\sum_{\Delta \in [3T - \normo{A''[l]}]+A_y[l]}\distG (J_{w'} + w'(l + \theta (\Delta + A'_x[l])),K_{w'} + w'(l + \theta(\Delta + A''_y[l]-A_y[l]))) + \normo{A''}\theta w'  \\
		\leq& \sum_{l} \tfrac{1}{T}\sum_{\Delta \in [3T - \normo{A[l]}]} \distG (I_{w'} + w'(l + \theta (\Delta + A_x[l])),J_{w'} + w'(l + \theta(\Delta + A_y[l]))) \\
		\qquad &+ \tfrac{1}{T}\sum_{\Delta \in [3T - \normo{A'[l]}]}\distG (J_{w'} + w'(l + \theta (\Delta + A'_x[l])),K_{w'} + w'(l + \theta(\Delta + A'_y[l]))) +\normo{A}\theta w'+\normo{A'}\theta w' \\
		\leq& \cad_{w,c}^*(I,J) + \cad_{w,c}^*(J,K)
\end{align*}

where the 2nd step is triangle inequality of $(\IC,\distG)$, 3rd step is
change of variables, 4th step is triangle inequality of
$(\N^2,\ell_1)$, and last step is by definition. The conclusion follows
from the fact that all $\cad^*$'s are thresholded at the same threshold.

\section{Fast Data Structures}
\label{sec:fastSampling}

In this section, all reals have $O(\log^2 n)$ of precision (in fixed
point representation). 

\begin{theorem}[Theorem \ref{thm::ds_sum}, restated]
Given a set of numbers $a_1,\ldots a_n\ge 0$ in a sparse
representation (i.e., as a set $S$ of $i$'s with $a_i>0$), there exists a data structure $D$ supporting the following queries:
\begin{itemize}
\item
Given some $i<j$ compute $\sum_{k=i}^j a_k$. The query complexity is $O(\log n)$.
\item
Given some $i$ and $\zeta>0$, compute $j\in S, j\le i$ such that
$\sum_{k=j-1}^i a_k<\zeta\le \sum_{k=j}^i a_k$ (or output that none
exists). The query complexity is $O(\log n)$.
\end{itemize}

The preprocessing time is $O(|S|\log n)$.
\end{theorem}
\begin{proof}
  First of all, at preprocessing time, we can compute all the prefix
  sums $\sigma_i=\sum_{k=1}^i a_k$ for $i\in S$. Then we build a predecessor
  data structure~\cite{CLRS} on $S$ as well as on the reals $\sigma_i$ for $i\in
  S$. Then the first type of queries is answered by two predecessor queries
  (and taking their difference). The second type of queries is
  answered by a single query to the (second) predecessor data
  structure. The preprocessing and query runtime follow immediately
  from standard predecessor data structures.

\end{proof}

\begin{theorem}[Theorem \ref{thm::ds_sampling}, restated]
Given a set of numbers $a_1,\ldots a_n\ge 0$ in a sparse
representation (i.e., as a set $S$ of $i$'s with $a_i>0$)
there exists a data structure $D$ supporting the following queries:
\begin{itemize}
\item
  Sample index $i\in[n]$ from the distribution $\{a_i/\sum_j a_j\}_i$ in time $O(\log n)$;
\item
  Given $k\ge 1$, produce a set $S\subseteq [n]$ that includes each
  $i$ with probability $\min\{ka_i,1\}$ independently.  The runtime is
  $O((1+k\cdot\sum_{i=1}^n a_i)\log n)$ in expectation.
\item
  Given an interval $[s,t]\subset [n]$, and $k\ge 1$, produce a set $S\subseteq [n]$ that includes each
  $i\in[s,t]$ with probability $\min\{ka_i,1\}$ independently.  The runtime is
  $O((1+k\cdot\sum_{i=s}^t a_i)\log^2 n)$ in expectation.
\end{itemize}

Furthermore, the preprocessing time is $O(|S|\cdot \log^{O(1)}n)$.
\end{theorem}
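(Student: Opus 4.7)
The plan is to build on the balanced binary search tree $\mathcal{T}$ from Theorem~\ref{thm::ds_sum} on the sparse support of $\{a_i\}$, augmenting each internal node $v$ with the subtree sum $\sigma_v = \sum_{i\in\text{subtree}(v)} a_i$; this preprocessing takes $O(|S|\log n)$ time, within the allowed $O(|S|\log^{O(1)}n)$ budget. For Part~(1), I would implement weighted single-element sampling by the classical top-down random walk: at each internal node $v$ with children $l,r$, recurse into $l$ with probability $\sigma_l/\sigma_v$ and into $r$ otherwise. The reached leaf has probability exactly $a_i/\sigma_{\text{root}}$, and the walk runs in $O(\log n)$ time since the tree has depth $O(\log n)$.

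For Part~(2), the plan is a lazy recursive descent of $\mathcal{T}$ that prunes subtrees unlikely to contribute any selected leaf. Writing $P_v = 1-\prod_{i\in\text{subtree}(v)}(1-\min(ka_i,1))$ for the probability that $v$'s subtree contains at least one selected leaf (note $P_v\le \min(k\sigma_v,1)$), the algorithm $\textsc{Sample}(v)$ works as follows: at a leaf $i$, include $i$ with probability $\min(ka_i,1)$; at an internal $v$, independently for each child $c\in\{l,r\}$ recurse into $c$ with probability $P_c$, and conditioned on recursing, sample from $c$'s subtree conditioned on ``at least one selection occurs''. Mutual independence of the leaf-inclusion events then follows by induction on subtree size because the left- and right-subtree samples use disjoint randomness. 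The expected number of visited internal nodes is $\sum_v \Pr[\text{visit } v] \le \sum_v P_v$, which per tree level is $O(1+k\sum_i a_i)$, yielding total expected runtime $O((1+k\sum_i a_i)\log n)$.

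For Part~(3), I would decompose $[s,t]$ into $O(\log n)$ canonical subtrees of $\mathcal{T}$ contained in $[s,t]$ (standard segment-tree decomposition) and run the Part~(2) procedure on each; the samples combine to give the correct product distribution by independence across disjoint subtrees. Summing the Part~(2) cost $O((1+k\sigma_v)\log n)$ over the $O(\log n)$ canonical subtrees yields $O((1+k\sum_{i\in[s,t]} a_i)\log^2 n)$, the extra logarithmic factor coming from the $O(\log n)$ canonical pieces each bearing an additive $O(\log n)$ cost.

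The main obstacle I expect is that $P_v$ depends on the query parameter $k$ and thus cannot be precomputed, and ``sampling conditioned on at least one selection'' is not itself a primitive we can execute without knowing $P_v$. I would address this by (i) using the easily computed bound $P_v\le\min(k\sigma_v,1)$ to conservatively decide pruning, and (ii) replacing the conditional recursive call with an unconditional one coupled with a compensating acceptance coin at each visited leaf, exploiting the fact that $\min(ka_i,1)$ is bounded above by the product of the upper-bound descent probabilities along the root-to-$i$ path, so a valid acceptance probability in $[0,1]$ exists. Verifying that this re-coupling preserves genuine mutual independence across leaves (and not just the correct marginal inclusion probabilities) is the most delicate technical step of the argument.
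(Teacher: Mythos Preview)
Parts~(1) and~(3) of your plan are fine and close to the paper's (the paper binary-searches on prefix sums for~(1), and builds a separate copy of the structure on every dyadic interval for~(3); your single-tree canonical decomposition is a clean alternative). The gap is in Part~(2): both halves of the fix you propose fail.

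The inequality you rely on is false. Take $a_1=\cdots=a_4=\tfrac14$ and $k=1$: the descent factors $\min(k\sigma_c,1)$ along a root-to-leaf path are $\tfrac12$ and $\tfrac14$, so the reach probability is $\tfrac18$, yet $ka_i=\tfrac14$; the acceptance coin would need bias~$2$. In general, once $k\sigma_v<1$ the product shrinks geometrically in depth while $ka_i$ does not, so no valid leaf correction exists.

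Independently of that, the ``unconditional recursion plus leaf coin'' variant cannot give mutual independence. If $i,j$ lie in the same child subtree~$c$, entered with some probability $q_c<1$, then even with acceptance coins set for the right marginals one gets $\Pr[i\in S,\,j\in S]=q_c\cdot \tfrac{p_i}{q_c}\cdot\tfrac{p_j}{q_c}=p_ip_j/q_c\neq p_ip_j$. Your original $P_c$-with-conditioning scheme is correct exactly because ``enter with probability $P_c$, then sample conditioned on at least one hit'' reproduces the unconditional product law; dropping the conditioning destroys this, which is why you (rightly) flagged independence as the delicate step.

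The paper resolves both issues with a Poisson trick: draw $q_{\mathrm{root}}\sim\Poi(\Theta(k\sigma_{\mathrm{root}}))$ and split it down the tree by Poisson thinning, so each leaf~$i$ receives an independent $\Poi(\Theta(ka_i))$ count; recurse only into children with positive count. Indices with $a_i\ge 1/(2k)$ are handled directly; for the remaining leaves, include~$i$ when its count is positive, corrected by the factor $ka_i/(1-e^{-2ka_i})$, which is at most~$1$ precisely because $ka_i\le\tfrac12$. Independence is free from Poisson thinning, the correction is provably valid, and the expected number of visited nodes is $O((1+k\sum_i a_i)\log n)$ since that bounds $\E[q_{\mathrm{root}}]\cdot\text{depth}$.
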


We note that similar data structures were also developed in
\cite{bringmann2014sampling}; we include our proof below for
completeness of the algorithm.

\begin{proof}
Let $m=|S|$. First, note that we can assume wlog that $m = n$. When $m<n$, we pre-process vector $b \in \R^m$ consisting of the non-zero entries in $a$ in an increasing order. Then, for the third query type, define the semi-monotone function $\Phi: [n+1] \rightarrow [m+1]-1$, which maps each $i \in [n+1]$ to the number of non-zero entries whose index is strictly smaller than $i$ in $a$. For that task, we build during preprocessing a BST to calculate $\Phi(i)$ (in additive time $\tO(m)$), and when queried on interval $[s,t]$, we use interval $[\Phi(s)+1,\Phi(t+1)]$ instead (with $O(\log n)$ additive query overhead).

At preprocessing, precompute all partial sums $p_i=\sum_{j=1}^i a_j$ for
all $i\in[n]$. Let $p_0=0$. Also, assuming $n$ is a power of 2
(otherwise we can pad to nearest power), we build a balanced binary
search tree with each leaf corresponding to an index $i\in[n]$. Each
node $v$ of the tree stores the sum of leafs in the subtree,
$\sigma_v$. Also, sort all $a_i$ in increasing order.

\paragraph{Query of the first type.} Pick a random uniform $r\in[0,1]$
(say, to $O(\log^2)$ bits of precision) and
perform a binary search on $r\cdot \sum_j a_j$ in the set $p_0,p_1,\ldots p_n$. If $p_i\le
r\sum_j a_j\le p_{i+1}$, then output $i+1$. Note that the probability that output
is $i$ is precisely $a_i/\sum_j a_j$.

\paragraph{Query of the second type.} The algorithm works as
follows. First, using the sorted list, find all $a_i$'s such that
$a_i\ge 0.5/k$ (i.e., the ones that are included with set $S$ with
probability $\ge 1/2$). Call this set $L$. Sample each $i\in L$ into
set $S$ accordingly (taking $O(|L|)$ time).

Start at the root $r$, generate a integer $q_r$ from the
Poisson distribution with expectation $2k\sigma_r$. Then, we proceed
recursively as follows: for a node $v$ with integer $q_v$, for children with
sums $\sigma, \sigma'$, pick random integers $q,q'$ such that: 1)
$q+q'=q_v$, and 2) assuming that $q_v$ is from distribution
$\Poi(2k\sigma_v)$, then $q,q'$ are independently from
$\Poi(2k\sigma),\Poi(2k\sigma')$ respectively. Recurse into each child
with $>0$ integer $q$. At a leaf $i\in[n]\setminus L$, if $q_i>0$ (and hence
$q_i\ge 1$), then include $i$ into set $S$ with probability
$\tfrac{a_ik}{1-e^{-2a_i/k}}$. (Indeces in $L$ are ignored here.)

We now briefly argue that the above produces the desired set. First of
all note that the procedure on the tree generates $q_i$ from $\Poi(a_i
k)$ distribution independently (using Poission's distribution
property). We note that
$\tfrac{1-e^{-2x}}{x}=2-2(2x)/2!+2(2x)^2/3!-...\ge 1$ for $x\le 1/2$,
hence the probability $\tfrac{a_ik}{1-e^{-2a_i/k}}$ is indeed less
than 1 for $i\in [n]\setminus L$. For such $i$, we indeed include it
into $S$ with probability
$$
\Pr[i\in S]=\Pr[q_i\ge1]\cdot \tfrac{a_ik}{1-e^{-2a_i/k}}=
a_ik<1.
$$

We now bound the query time. First, we note that $L$ has size at most
$2k\cdot \sum_{i=1}^n a_i$ (since each included $i$ satisfies $a_i\ge1/2k$). Second, we note that $\E[q_r]=O(k)$,
and the tree procedure has runtime at most $O(\log n)$ factor of the
number of leafs with $q_i>0$, of which there are exactly $q_r$. Hence
runtime is $O(k\log n)$ in expectation.

\paragraph{Query of the third type.} We augment the data structure
$D$ from above as follows. Recall the dyadic intervals on $[n]$ are intervals
of the form $[j2^i+1,(j+1)2^i]$, where $i\in\{0,\ldots \log n\}$ and
$j\in\{0,\ldots n/2^i-1\}$. For each dyadic interval $I=[u,v]$, we build a
data structure $D_I$ on the numbers $a_u,\ldots a_v$. For a query
interval $[s,t]$, we decompose $[s,t]$ into $O(\log n)$ dyadic
intervals, and perform query of the second type in each of these.
\end{proof}

\begin{theorem}[Theorem \ref{thm::ds_matrix}, restated]
  Suppose we are given an $m\times n$ matrix $A$, where $m\le n$,
  given in the sparse form, i.e., as a set $S$ of non-zero
  entries. Also, suppose we are given $n$ intervals
  $[s_i,t_i]\subseteq [n]$ such that both $\{s_i\}_i$ and $\{t_i\}_i$
  are non-decreasing. Given $\gamma>0$, we can find all pairs
  $(i,\chi)\in [n]\times[m]$ such that
  $\sum_{k=s_i}^{t_i}A_{\chi,k}\ge \gamma$. The runtime is $\tilde
  O\left(|S|+n+\tfrac{\sum_{\chi,i} \sum_{k=s_i}^{t_i}A_{\chi,k}}{\gamma}\right)$.
\end{theorem}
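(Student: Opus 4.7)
The plan is to reduce to a per-row sliding-window problem by exploiting monotonicity of $\{s_i\}$ and $\{t_i\}$ to convert each non-zero entry into a single ``activity interval'' along the $i$-axis. Concretely, for a fixed column $k$, the condition $k\in[s_i,t_i]$ is equivalent to $t_i\ge k$ and $s_i\le k$; since $\{s_i\},\{t_i\}$ are non-decreasing, the set of $i$'s satisfying both is a contiguous (possibly empty) interval $[\beta_k,\alpha_k]\subseteq[n]$, where $\beta_k$ is the smallest $i$ with $t_i\ge k$ and $\alpha_k$ is the largest $i$ with $s_i\le k$. I would precompute $\beta_k,\alpha_k$ for every column $k$ appearing as the column of some non-zero in $S$, via two binary searches each, for a total of $O(|S|\log n+n)$ time (the $n$ term is for scanning the input arrays $\{s_i\},\{t_i\}$ once and building binary-search indices over them).

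Next, I would group the non-zeros by row. For each row $\chi$ with non-empty support, every non-zero $A_{\chi,k}$ generates two sweep events: ``add $A_{\chi,k}$ at time $\beta_k$'' and ``remove $A_{\chi,k}$ at time $\alpha_k+1$''. Sorting the events per row takes $\sum_\chi O(|S_\chi|\log|S_\chi|)=\tilde O(|S|)$ total, and we then process events in increasing order of $i$, maintaining the running value $\sigma(\chi,i)=\sum_{k=s_i}^{t_i}A_{\chi,k}$. Because $\sigma(\chi,\cdot)$ is piecewise constant, changing only at event points, we can in one linear pass identify the maximal sub-intervals $[a,b]\subseteq[n]$ on which $\sigma(\chi,i)\ge\gamma$, and for each such sub-interval enumerate every integer $i\in[a,b]$ and emit $(i,\chi)$.

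For the runtime, preprocessing and sorting are $\tilde O(|S|+n)$. The per-row sweep does $O(|S_\chi|)$ work plus the cost of enumerating output pairs. The crucial bound is on the total number of emitted pairs:
\[
\gamma\cdot|\{(i,\chi)\text{ emitted}\}|
\;\le\;\sum_{(i,\chi)\text{ emitted}}\sigma(\chi,i)
\;\le\;\sum_{\chi,i}\sum_{k=s_i}^{t_i}A_{\chi,k},
\]
because every emitted pair has $\sigma(\chi,i)\ge\gamma$ and contributes non-negatively to the global sum. Hence the number of emitted pairs, and thus the total enumeration cost, is at most $\sum_{\chi,i}\sum_{k=s_i}^{t_i}A_{\chi,k}/\gamma$, yielding the claimed runtime $\tilde O\bigl(|S|+n+\sum_{\chi,i}\sum_{k=s_i}^{t_i}A_{\chi,k}/\gamma\bigr)$.

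The only mildly subtle point (the ``main obstacle'') is enumeration cost control: one must emit each pair in $O(1)$ amortized time from the sliding-window sweep without touching any $(i,\chi)$ whose sum is below $\gamma$. Handling this cleanly relies on noting that $\sigma(\chi,\cdot)$ changes only at the $O(|S_\chi|)$ events, so crossings of the $\gamma$ threshold can be detected in $O(1)$ per event, and each maximal ``above-threshold'' stretch $[a,b]$ is then flushed by iterating $i=a,a+1,\dots,b$ at cost $O(b-a+1)$, which is precisely the number of pairs emitted from that stretch. This matches the output-size bound above and closes the analysis.
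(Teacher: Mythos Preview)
Your proposal is correct and takes a genuinely different, more elementary route than the paper. The paper proceeds via dyadic decomposition: it precomputes row-sums $\tau_{\chi,I}=\sum_{k\in I}A_{\chi,k}$ over all dyadic intervals $I$ (at most $|S|\log n$ non-zero such pairs), enumerates the ``heavy'' pairs with $\tau_{\chi,I}\ge\gamma/(2\log n)$, and for each heavy $(\chi,I)$ retrieves the set of query indices $l$ whose dyadic decomposition includes $I$, verifying each candidate $(l,\chi)$ by recomputing its full sum. The runtime bound follows because every verified candidate already satisfies $\sum_{k=s_l}^{t_l}A_{\chi,k}\ge\gamma/(2\log n)$, and each $(l,\chi)$ can appear as a candidate at most $O(\log n)$ times. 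Your approach instead exploits the monotonicity of $\{s_i\},\{t_i\}$ more directly, converting each non-zero entry into a single contiguous activity interval on the $i$-axis and running a per-row event sweep that maintains the exact window sum; above-threshold stretches are then flushed and charged to the output. This avoids the dyadic machinery and the $O(\log n)$ overcounting of candidates (both are absorbed by the $\tilde O$, so the asymptotics match). The paper's dyadic approach has the incidental advantage that it reuses infrastructure already built for the neighboring Theorems~\ref{thm::ds_sampling} and~\ref{thm::ds_sum}; your sweep-line argument is self-contained and cleaner for this statement in isolation.
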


\begin{proof}
Suppose $n$ is a power of 2; otherwise, just pad with zero's until the
closest power of 2.  Recall the dyadic intervals on $[n]$ are
intervals of the form $[j2^i+1,(j+1)2^i]$, where $i\in\{0,\ldots \log
n\}$ and $j\in\{0,\ldots n/2^i-1\}$. We call $i$ to be the level of
the dyadic interval.

First, for each dyadic interval $I$, we store two sorted lists $S_I,
T_I$. For each given interval $[s_i,t_i]$, we decompose it into at
most $2\log n$ dyadic intervals $I_1,I_2,\ldots$. For each such dyadic
interval $I_k$, we add $s_i$ to $S_{I_k}$ and $t_i$ to $T_{I_k}$. Note
that this takes $O(n\log n)$ time.

We also store a dictionary data structure (e.g., a BST) storing, for
each pair $(\chi, I)$, where $\chi\in[m]$ and $I$ a dyadic interval, the
quantity $\tau_{\chi, I}=\sum_{k\in I} A_{\chi,k}$. In particular,
we store only the {\em non-zero} $\tau_{\chi, I}$. We can compute
this efficiently as follows: 1) initialize an empty dictionary; 2)
enumerate through all $(\chi, k)\in S$ and add $A_{\chi,k}$ to all
$\tau_{\chi,I}$ where $k\in I$; 3) if some $\tau_{\chi,I}$ is not
yet in the dictionary, create an entry for it first. Note that the
runtime is $O(|S|\log n)$.

The rest of the algorithm proceeds as follows. Using the dictionary
structure, find all pairs $(\chi, I)$ of $\chi\in[m]$ and dyadic interval 
$I=[u,v]$ such that $\tau_{\chi,I}\ge \tfrac{\gamma}{2\log n}$. For each
one of them, using
$S_I$, let $s_i\in S_I$ be the max $s_i\in S_I$. Similarly, let $t_j\in
T_I$ be the min. For each interval
index $l$ such that $j\le l\le i$ and $s_l\in S_I$ (equivalently
$t_l\in T_I$), we compute $\sum_{k=s_l}^{t_l} A_{\chi,k}$ (using the dyadic
interval sums) in $O(\log n)$ time. We output $(l,\chi)$ if the resulting sum
is $\ge \gamma$. At the end we remove the duplicate $(l,\chi)$ if multiple
copies have been output.

We now argue correctness and runtime. Consider an interval $[s_l,t_l]$
with mass at least $\gamma$ for some $\chi\in[m]$. Then it can be
decomposed into $\le 2\log n$ dyadic intervals, at least one of which,
say $I=[u,v]$, has to have mass $\gamma_{\chi,I}\ge
\tfrac{\gamma}{2\log n}$. Hence, when we consider the pair $(\chi, I)$
where the dyadic interval $I=[u,v]$, we will have that $s_l\le s_i\le
u$ and $v\le t_j\le t_i$ (by the definition of $i,j$). Hence we will
output the pair $(l,\chi)$.

Let's argue runtime. First of all, the number of pairs $(\chi,I)$ with
non-zero $\tau$ is at most $|S|\cdot\log n$ as each $A_{\chi,i}$ can
contribute to that many pairs $(\chi,I)$. Hence enumerating the
dictionary takes $O(|S|\cdot \log n)$ time. For each pair $(\chi,I)$
with $\tau_{\chi,I}>0$, and each interval index $l\in[n]$ that we end up checking, we
have that $\sum_{k=s_l}^{t_l} A_{\chi,k}\ge \gamma/2\log
n$. Furthermore any such $(l,\chi)$ can be considered only $O(\log
n)$ times (as this is how many times $l$ appears in the data
structures $S_I,T_I$ over all dyadic intervals $I$). Hence the total
number of pairs $(l,\chi)$ for which we estimate the precise mass is
upper bounded by:
$$
O(\log n)\cdot\sum_\chi \sum_l\tfrac{\sum_{k=s_l}^{t_l} A_{\chi,k}}{\gamma/2\log
n}.
$$
Hence total time spend in this phase is $O(\log^3
n)\tfrac{\sum_{\chi,i} \sum_{k=s_i}^{t_i}A_{\chi,k}}{\gamma}$. This
completes the analysis.

\end{proof}

\bibliography{main,bibfile}
\bibliographystyle{alpha}

\end{document}